\def\senbun#1(#2)#3({\@senbun(#2)(}
\def\@senbun(#1,#2)(#3,#4){%
   \@tempdima#1\p@ \advance\@tempdima#3\p@
   \divide\@tempdima\tw@
   \@tempdimb#2\p@ \advance\@tempdimb#4\p@
   \divide\@tempdimb\tw@
   \edef\@senbun@temp{\noexpand\qbezier(#1,#2)%
      (\strip@pt\@tempdima,\strip@pt\@tempdimb)(#3,#4)}%
   \@senbun@temp}
\newcommand{\ASY}{{\sc Asynch}}
\newcommand{\FSY}{{\sc Fsynch}}
\newcommand{\SSY}{{\sc Ssynch}}
\newcommand{\RSY}{{\sc Rsynch}}
\newcommand{\RR}{\textsc{RoundRobin}\xspace}
\newcommand{\clight}{\textit{Light}}
\newcommand{\Look}{\mathit{Look}\xspace}
\newcommand{\Compute}{\mathit{Compute}\xspace}
\newcommand{\Move}{\mathit{Move}\xspace}
\newcommand{\LCM}{\mathit{LCM}\xspace}
\newcommand{\LU}{{\mathcal{LUMI}}} 
\newcommand{\LUM}{{\mathcal{LUMI}}} 
\newcommand{\FS}{{\mathcal{FST\!A}}} 
\newcommand{\FC}{{\mathcal{FCOM}}} 
\newcommand{\OB}{{\mathcal{OBLOT}}} 
\newcommand{\LUMI}{{\mathcal{L}}} 
\newcommand{\FSTA}{{\mathcal{FS}}} 
\newcommand{\FCOM}{{\mathcal{FC}}} 
\newcommand{\OBLOT}{{\mathcal{O}}}
\newcommand{\N}{{\rm I\kern-.22em N}} 
\newcommand{\Z}{{\sf Z\kern-.42em Z}} 
\newcommand{\R}{{\rm I\kern-.22em R}}
\newcommand{\LK}{{\mathit{Look}}}
\newcommand{\CP}{{\mathit{Comp}}}
\newcommand{\M}{{\mathit{Move}}}
\newcommand{\LC}{{\mathit{LC}}}
\newcommand{\CM}{{\mathit{CM}}}
\newcommand{\T}{t}
\newcommand{\calC}{\mathcal{C}}
\newcommand{\calR}{\mathcal{R}}
\newcommand{\calA}{\mathcal{A}}
\newcommand{\calS}{\mathcal{S}}
\newcounter{Codeline}
\newtheorem{Definition}{Definition}
\title{Beyond Pairwise Comparisons: Unveiling Structural Landscape of Mobile Robot Models\thanks{This work was supported in part by JSPS KAKENHI Grant Number~25K03079.
}}
\titlerunning{Beyond Pairwise Comparisons}
\author{Shota Naito\orcidID{0009-0007-5834-4294} 
\and
Tsukasa Ninomiya\orcidID{0009-0003-3962-745X}
 \and
Koichi Wada\orcidID{0000-0002-5351-1459}
}
\authorrunning{S. Naito et al.}
\institute{Hosei University, Tokyo, Japan \\
\email{shota.naito.5x@stu.hosei.ac.jp, \\
tsukasa.ninomiya.4w@stu.hosei.ac.jp, wada@hosei.ac.jp}
}
\begin{document}

\maketitle
\begin{abstract}
Understanding the computational power of mobile robot systems is a fundamental challenge in distributed computing. While prior work has focused on pairwise separations between models, we explore how robot capabilities, light observability, and scheduler synchrony interact in more complex ways.

We first show that the \textbf{Exponential Times Expansion} (\textit{ETE}) problem is solvable only in the strongest model—\textit{fully-synchronous} robots with full mutual lights ($\LU^{F}$). We then introduce the \textbf{Hexagonal Edge Traversal} (\textit{HET}) and \textit{TAR($d$)*} problems to demonstrate how internal memory and lights interact with synchrony: under weak synchrony, internal memory alone is insufficient, while full synchrony can substitute for both lights and memory.

In the \textit{asynchronous} setting, we classify problems such as \textit{LP--MLCv}, \textit{VEC}, and \textit{ZCC} to show fine-grained separations between $\FS$ and $\FC$ robots. We also analyze \textbf{Vertex Traversal Rendezvous} (\textit{VTR}) and \textbf{Leave Place Convergence} (\textit{LP--Cv}), illustrating the limitations of internal memory in symmetric settings.

These results extend the known separation map of 14 canonical robot models, revealing structural phenomena only visible through higher-order comparisons. Our work provides new impossibility criteria and deepens the understanding of how observability, memory, and synchrony collectively shape the computational power of mobile robots.
\end{abstract}


\section{Introduction}
\subsection{Background and Motivation}

The computational power of autonomous mobile robots operating through $\Look$-$\Compute$-$\Move$ ($\LCM$) cycles has been a central topic in distributed computing. Robots are modeled as anonymous, uniform, disoriented points in the Euclidean plane. In each cycle, a robot observes the configuration ($\Look$), computes a destination ($\Compute$), and moves accordingly ($\Move$). These agents can solve distributed tasks collectively, despite their simplicity.

The weakest standard model, $\OB$, assumes robots are oblivious (no persistent memory) and silent (no explicit communication). Since its introduction~\cite{SY}, this model has been extensively studied for basic coordination tasks such as Gathering~\cite{AP,AOSY,BDT,CDN,CFPS,CP,FPSW05,ISKIDWY,SY}, Pattern Formation~\cite{FPSW08,FYOKY,SY,YS,YUKY}, and Flocking~\cite{CG,GP,SIW}. The limitations imposed by obliviousness and silence have led to extensive investigation of stronger models.
A widely studied enhancement is the luminous robot model, $\LU$~\cite{DFPSY}, where robots are equipped with a visible, persistent light (i.e., a constant-size memory) used for communication and state retention. The lights can be updated during $\Compute$ and are visible to others, enabling both memory and communication in each cycle.

To understand the individual roles of memory and communication, two submodels of $\LU$ have been introduced: $\FS$ (persistent memory without communication), and $\FC$ (communication without memory)~\cite{apdcm,BFKPSW-IandC,OWD}. Studying these four models—$\OB$, $\FS$, $\FC$, and $\LU$—clarifies the contribution of internal robot capabilities to problem solvability.

External assumptions such as synchrony and activation schedulers also play a crucial role. The \emph{semi-synchronous} (\SSY) model~\cite{SY} activates an arbitrary non-empty subset of robots in each round, who perform one atomic $\LCM$ cycle. Special cases include the \emph{fully-synchronous} (\FSY) model, where all robots are activated every round, and the \emph{restricted synchronous} (\RSY) model~\cite{BFKPSW-IandC}, where activations in consecutive rounds are disjoint.

In contrast, the \emph{asynchronous} (\ASY) model~\cite{FPSW99} introduces arbitrary but finite delays between phases of each robot's cycle, under a fair adversarial scheduler. This model captures high uncertainty and has proven computationally weaker.

Weaker variants of \ASY\ have been studied by assuming partial atomicity: for example, \textbf{$LC$-atomic} and \textbf{$CM$-atomic} models~\cite{OWD}. These restrict delays to only parts of the cycle and allow finer-grained modeling.

\begin{figure}[H]
    \centering
    \includegraphics[width=0.75\linewidth]{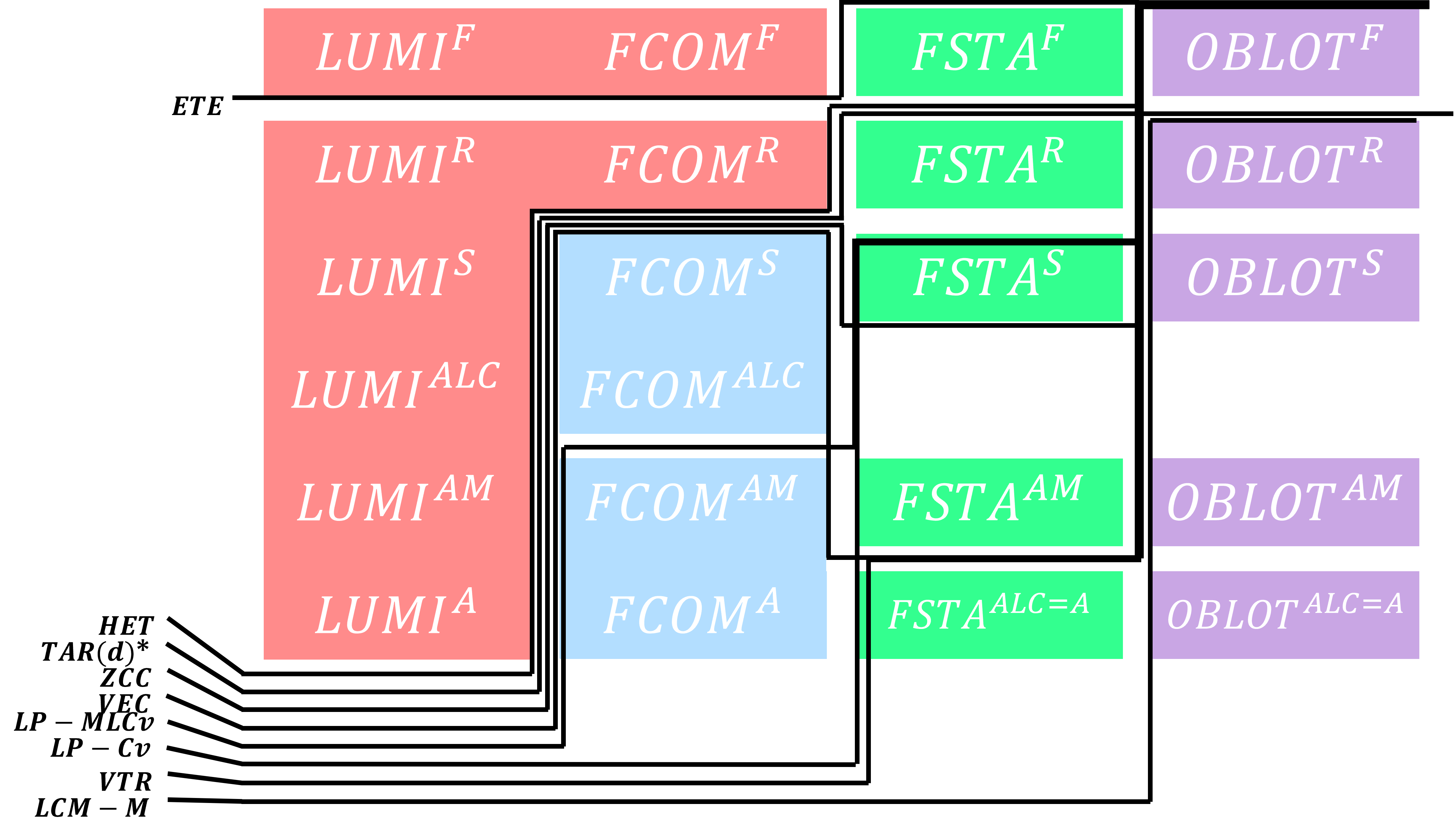}
    \caption{Hierarchy of 14 robot models combining robot capabilities and scheduler classes. Also this figure captures newly established separations.
    }
    \label{fig:boundary-line}
\end{figure}
\vspace{-0.5cm}
These internal models and external schedulers define a two-dimensional space of 14 canonical robot models. Previous work has explored their pairwise computational relationships by proving \emph{separation} (i.e., a problem solvable in one model but not the other) and identifying \emph{equivalence} or \emph{orthogonality} between some models~\cite{DFPSY,BFKPSW-IandC,FSSW23}.

\medskip
\subsection{Our Contributions}

This paper goes beyond pairwise comparisons and explores \emph{triadic relationships}, revealing new structural properties in the model hierarchy.

\begin{itemize}
    \item In Section~\ref{Sep-Synch-Scheduler}, we show that some problems are solvable only in the strongest model $\LU^F$ (equivalent to $\FC^F$), but not in either $\FS^F$ or $\LU^R$, highlighting the role of synchrony in enabling luminous computation.
    We also introduce the problems $\mathit{HET}$ and $\mathit{TAR}(d)^*$, which separate intermediate models such as $\FC^S$ and $\FS^R$ from weaker models like $\OB^F$.
    
    \item In Section~\ref{sep-asynch-scheduler}, we show that some problems require both memory and communication under \textit{asynchronous} or \textit{semi-synchronous} settings, while stronger synchrony can render these capabilities unnecessary.
\end{itemize}

These results reveal asymmetries and complex trade-offs between memory, communication, and synchrony that are not captured by pairwise analysis. They underscore the need for a global landscape perspective on distributed robot models.

\section{Models and Preliminaries}\label{sec:model}
\subsection{Robots}

We consider a set $R_n = \{ r_0, \ldots, r_{n-1} \}$ of $n > 1$ mobile computational entities, called {\em robots}, which move and operate in the Euclidean plane $\mathbb{R}^2$, where they are viewed as points.  
We simply write $R$ instead of $R_n$, except when it is necessary to emphasize the number of robots.
The robots are {\em autonomous} (i.e., they operate without central control or external supervision),  
{\em identical} (i.e., they are indistinguishable by appearance and do not have unique identifiers), and  
{\em homogeneous} (i.e., they execute the same program).  
Throughout this paper, we denote the position of a robot $r \in R$ at time $t \in \mathbb{R}_{\ge 0}$ by $r(t) \in \mathbb{R}^2$.

Each robot is equipped with a local coordinate system (in which it is always at its origin), and it is able
to observe the positions of the other robots in its local coordinate system.
The robots are {\em disoriented} (i.e., that is, there might not be consistency between the coordinate systems of different robots at the same time, or of the same robot at different times \footnote{The disorientation is said to be {\em static} if it is assumed that each local coordinate system always remains the same.}).
The robots however have {\em chirality}; that is, they agree on the same circular orientation of the plane (e.g., ``clockwise'' direction). 

A robot is endowed with motorial and computational capabilities. When active, a robot performs a $\LK$-$\Compute$-$\M{}$ ($\LCM$) cycle:
\begin{enumerate}
\item $\LK$: The robot obtains an instantaneous snapshot of the positions occupied by the robots, expressed in its local coordinate system. It cannot detect whether multiple robots are at the same location, nor how many (i.e., there is no {\em multiplicity detection}).
\item $\Compute$: The robot executes its built-in deterministic algorithm, identical for all robots, using the snapshot as input. The result of the computation is a destination point.
\item $\M$: The robot moves continuously in a straight line until it reaches the computed destination (i.e., movements are {\em rigid}).\footnote{
Movements are said to be \emph{non-rigid} if they may be interrupted by the adversary. 
}
If the destination is the current location, the robot stays still.
\end{enumerate}

In the standard model, $\OB$, the robots are also {\em silent}: 
they have no means of direct
communication of information to other robots; 
furthermore, they are {\em oblivious}: at the start of a cycle, a robot has no
memory of observations and computations performed in previous cycles.

In the other common model, $\LUM$,
each robot $r$ is equipped with a persistent register $\clight[r]$, called a {\em light},
whose value—referred to as its {\em color}—is drawn from a constant-sized set $C$ and is visible to all robots.
The color can be set by $r$ at the $\Compute$ operation, and it is not automatically reset at the end of a cycle.
In $\LUM$, the $\LK$ operation returns a colored snapshot; that is, it returns the set of distinct pairs $(\textit{position}, \textit{color})$
representing the states of the other robots at that time.
Note that if $|C|=1$, this case corresponds to the $\OB$ model.



 Two submodels of $\LU$ have been defined and investigated, $\FS$ and $\FC$, each offering only one
 of its two capabilities, persistent memory and 
direct means of communication, respectively.  
In $\FS$, a robot can only see the color of its own lights; thus, the color merely encodes an internal state. Therefore, robots are {\em silent}, as in $\OB$, but they are {\em finite-state}. 
In $\FC$, a robot can only see the colors of the lights of the other robots;
thus, it can communicate the color of its own lights to others,
but it cannot remember its own state (i.e., its own $k$ colors).
Hence, robots are enabled with {\em  finite-communication}  but are {\em oblivious}. 


\subsection{Schedulers,  Events}

In this subsection, we define several schedulers---\FSY, \RSY, \SSY, \ASY, 
and several subclasses of \ASY
---in terms of the activation schedule of the robots and the duration of their $\LCM$ cycles.  
These schedulers impose different constraints on the adversary, as we shall see,  
but they all share a common constraint called \emph{fairness}:  
every robot must perform the $\LK$, $\Compute$, and $\M$ operations infinitely often.

The time it takes to complete a cycle is assumed to be finite and the operations $\LK$
 and  $\Compute$  are  assumed to be instantaneous. 

We assume that each of the $\LK$, $\Compute$, and $\M$ operations completes in finite time.  
Moreover, the $\LK$ and $\Compute$ operations are assumed to be instantaneous.  
In the literature, the $\Compute$ operation has also been modeled as having some nonzero duration,  
at the end of which a robot changes its color (in the $\LU$ and $\FC$ models).  
However, we can assume without loss of generality that the $\Compute$ operation is instantaneous:  
since the robot’s color always changes exactly at the end of the $\Compute$ operation,  
its duration has no effect on the subsequent execution.

For any robot $r \in R$ and $i \in \mathbb{Z}_{> 0}$, let  
$t_L(r,i)$, $t_C(r,i)$, $t_B(r,i)$, and $t_E(r,i) \in \mathbb{R}_{\ge 0}$  
denote the times at which $r$ performs its $i$-th Look, Compute, Move-begin, and Move-end operations, respectively.  
These satisfy:  
$t_L(r,i) < t_C(r,i) < t_B(r,i) < t_E(r,i) < t_L(r,i+1)$.  
If $r$ decides to move from $p_B$ to $p_E$ at $t_C(r,i)$, it moves continuously along $[p_B, p_E]$ during $[t_B(r,i), t_E(r,i)]$, with $r(t_B(r,i)) = p_B$, $r(t_E(r,i)) = p_E$, and variable speed: for any $t_1 < t_2$ in $[t_B(r,i), t_E(r,i)]$,  
$|p_B - r(t_1)| < |p_E - r(t_2)|$ if $p_B \ne p_E$.

In the $\LU$ and $\FC$ models, if $r$ changes its light from $c_1$ to $c_2$ at time $t$ and another robot $s$ performs Compute at $t$, then $s$ observes $c_2$.





In the the {\em synchronous} setting (\SSY), also called {\em semi-synchronous} and first studied in \cite{SY},  time is divided into discrete
intervals, called {\em rounds}; in each round, a non-empty set of robots is activated and they simultaneously perform
a single  $\LK$-$\Compute$-$\M$ cycle in perfect synchronization. 
The particular  synchronous setting, where every robot is activated in every round
is called {\em fully-synchronous} (\FSY).

We define \RSY\ as the semi-synchronous scheduler in which,
after an optional finite prefix of fully synchronous rounds,
the remaining rounds activate non-empty subsets of robots 
with the constraint that any two consecutive subsets are disjoint.


In the {\em asynchronous} setting (\ASY), first studied in \cite{FPSW99}, we do not have any assumption---except for the fairness mentioned before---on the timing of each $\LK$, $\Compute$, and $\M$ operation and the duration of each $\M$ operation. 

$\T_i < \T_{i+1}, i\in \N$.  

In the rest of this subsection, we introduce subclasses of \ASY,  
classified according to the level of atomicity of the $\LK$, $\Compute$, and $\M$ operations.

\begin{itemize}
\item \textbf{$M$-atomic-\ASY}:  
Under this scheduler, each robot performs its $\M$ operation atomically; that is, no robot obtains a snapshot while any other robot is performing its $\M$ operation~\cite{DKKOW19,OWD}.  
Formally, this scheduler satisfies the following condition:  
$$ \forall r, s \in R,\ \forall i, j \in \mathbb{Z}_{> 0}: \ t_L(r,i) \notin [t_B(s,j), t_E(s,j)]. $$

\item $\CM$-{\bf atomic}-\ASY:  
Under this scheduler, each robot performs its consecutive $\Compute$ and $\M$ operations atomically;  
that is, no robot obtains a snapshot during the interval between the $\Compute$ and $\M$ operations of any other robot. 
Formally, this scheduler satisfies the following condition:  
$$ \forall r, s \in R,\ \forall i, j \in \mathbb{Z}_{> 0}: \ t_L(r,i) \notin [t_C(s,j), t_E(s,j)]. $$

\item $\LC$-{\bf atomic}-\ASY:  
Under this scheduler, each robot performs its consecutive $\LK$ and $\Compute$ operations atomically;  
that is, no robot obtains a snapshot during the interval between the $\LK$ and $\Compute$ operations of any other robot~\cite{DKKOW19,OWD}.  
Formally, this scheduler satisfies the following condition:  
$$ \forall r, s \in R,\ \forall i, j \in \mathbb{Z}_{> 0}: \ t_L(r,i) \notin (t_L(s,j), t_C(s,j)]. $$
Note that the interval $(t_L(s,j), t_C(s,j)]$ is left-open; hence, this scheduler may allow two or more robots to obtain snapshots simultaneously.

\item $\LCM$-{\bf atomic}-\ASY:
Under this scheduler, each robot performs its consecutive $\LK$, $\Compute$, and $\M$ operations atomically; that is, no robot obtains a snapshot during the interval between the $\LK$, $\Compute$, $\M$ operations of any other robot.  
\end{itemize}



It can be proved that \SSY\ and $\LCM$-{\bf atomic}-\ASY\ are equivalent\cite{FSSW23}.




 In the following, for simplicity of notation, we shall use the symbols 
 ${F}$, ${S}$, ${A}$, ${A_M}$, ${A_{CM}}$, and ${A_{LC}}$ to 
 denote the schedulers \FSY, 
\SSY, \ASY, $M$-{\bf atomic}-\ASY, $\CM$-{\bf atomic}-\ASY, and $\LC$-{\bf atomic}-\ASY, respectively.

Throughout the paper, we assume variable disorientation, chirality and rigidity, unless explicitly stated otherwise.

\subsection{Problems and Computational Relationships}
 Let ${\cal M} = \{\LU, \FC,\FS,\OB \}$  be the set of  models under investigation and 
${\cal S}= \{ F, S, A, A_{LC}, \allowbreak A_{M}, A_{CM}\}$ be the set of  schedulers 
under consideration.

 A {\em configuration} ${\cal C}(\T)$ at time $\T$ is the multiset of the $n$ pairs 
 $(r_i(\T), c_i(\T))$, where $(r_i(\T))$ is the location of robot $r_i$ at time $\T$,
 expressed in a fixed global coordinate system (unknown to the robots), and
 $c_i(\T)$ is the  color of its light at time $\T$. 

A \emph{configuration} $\calC$ of $n$ robots is a function $\gamma: \calR_n \to \mathbb{R}^2 \times C$ that specifies the location and color of each robot.  
Each location is given with respect to a fixed global coordinate system (which is unknown to the robots).  
Recall that the color set $C$ is a singleton (i.e., $|C| = 1$) in the $\OB$ model; in this case, the colors carry no information.  
A configuration that specifies only the locations of the robots, omitting their colors, is called a \emph{geometric configuration} or \emph{placement}; it is defined as a function $\gamma: \calR_n \to \mathbb{R}^2$.

A problem to be solved (or task to be performed) is described by a set of 
{\em temporal geometric predicates} 
 which implicitly define the {\em valid}  initial, intermediate, and (if existing) 
terminal. 
A terminal configuration is one in which, once reached, the robots no longer move. 
placements, 
as well as restrictions (if any) on the size  $n$ of the set $R$ of robots. 

Given a model $M \in {\cal M}$ and  a scheduler $K\in  {\cal S}$, we denote by
$M(K)$,
 the set of problems solvable 
by robots in $M$ 
under adversarial scheduler $K$.
Let $M_1, M_2\in{\cal M}$ and $K_1, K_2\in{\cal S}$.
\begin{itemize} 
\item We say that  model $M_1$  under scheduler  $K_1$
is {\em computationally not less powerful than} 
model $M_2$ under $K_2$, denoted by
$M_{1}^{K_1} \geq M_{2}^{K_2}$,
 if $M_{1}(K_1) \supseteq M_{2}(K_2)$.

\item We say that 
 $M_1$  under  $K_1$
is {\em computationally more powerful than} 
$M_2$ under $K_2$,  
denoted by
$M_{1}^{K_1} >  M_{2}^{K_2}$,
if 
$M_{1}^{K_1} \geq M_{2}^{K_2}$ 
and
$(M_{1}(K_1) \setminus M_{2}(K_2))  \neq \emptyset$.

\item We say that  $M_1$  under  $K_1$
and  $M_2$  under  $K_2$, are {\em computationally equivalent }, denoted by  
$M_{1}^{K_1} \equiv  M_{2}^{K_2}$,
if $M_{1}^{K_1} \geq M_{2}^{K_2}$ and $M_{2}^{K_2} \geq M_{1}^{K_1}$.

\item Finally, we say that 
$K_1$
$K_2$, are {\em computationally orthogonal} (or {\em incomparable}), denoted by  
$M_{1}^{K_1} \bot  M_{2}^{K_2}$, 
if 
 $(M_{1}(K_1) \setminus M_{2}(K_2))  \neq \emptyset$
 and 
 $(M_{2}(K_2) \setminus M_{1}(K_1))  \neq \emptyset$.

 \end{itemize}

\section{Known Separation Landscape}
 Let ${\cal M} = \{\LU, \FC,\FS,\OB \}$  be the set of  models under investigation and 
${\cal S}= \{ F, S, A, A_{LC}, \allowbreak A_{M}, A_{CM}\}$ be the set of  schedulers 
under consideration.

 From the definitions, 
 
\begin{theorem}
    For any $M\in{\cal M}$ and any  $K\in{\cal S}$:
    \begin{description}
        \item[(1)] $M^{F} \geq M^{S} \geq M^{A_{LC}} \geq M^{A}$  
        \item[(2)] $M^{F} \geq M^{S} \geq M^{A_{CM}} \geq M^{A_{M}} \geq M^{A}$
        \item[(3)] $\LUM^{K} \geq \FS^{K} \geq \OB^{K}$
        \item[(4)] $\LUM^{K} \geq \FC^{K} \geq \OB^{K}$
    \end{description}
\end{theorem}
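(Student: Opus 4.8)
The plan is to derive all four items directly from the definitions: each asserted relation $M_1^{K_1} \ge M_2^{K_2}$ unfolds to the set inclusion $M_2(K_2) \subseteq M_1(K_1)$, and each such inclusion holds either because one scheduler constrains the adversary at least as much as another, or because one robot model is a syntactic specialization of another.

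For parts (1) and (2) the key observation is monotonicity in the set of admissible executions: if every execution permitted by a scheduler $K$ is also permitted by a scheduler $K'$, then an algorithm correct on all $K'$-executions issued from a problem's valid initial placements is in particular correct on all $K$-executions, so $M(K) \subseteq M(K')$ for every $M \in {\cal M}$. I would therefore verify the execution-set inclusions underlying the two chains. That every fully-synchronous execution is also semi-synchronous is immediate, since activating all robots is one of the admissible activation patterns of \SSY. For the atomicity levels, each of $A_{LC}$, $A_{CM}$, $A_M$ is a forbidden-interval condition of the form $t_L(r,i)\notin I(s,j)$, with $A_M$ using $[t_B(s,j),t_E(s,j)]$, $A_{CM}$ using the superset $[t_C(s,j),t_E(s,j)]$, and plain \ASY\ imposing no such condition; hence every $A_{CM}$-execution is an $A_M$-execution, every $A_M$-execution and every $A_{LC}$-execution is an \ASY-execution, giving $M^{A_{CM}}\ge M^{A_M}\ge M^{A}$ and $M^{A_{LC}}\ge M^{A}$. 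The two remaining links, $M^{S}\ge M^{A_{LC}}$ and $M^{S}\ge M^{A_{CM}}$, I would obtain from the stated equivalence of \SSY\ with $\LCM$-atomic-\ASY\ \cite{FSSW23}: an $\LCM$-atomic execution forbids any snapshot inside another robot's entire $\LK$--$\Compute$--$\M$ cycle, hence in particular inside its $\LK$--$\Compute$ and $\Compute$--$\M$ sub-intervals, so it is simultaneously $A_{LC}$ and $A_{CM}$.

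For parts (3) and (4) I would argue that $\OB$, $\FS$, and $\FC$ algorithms are restricted forms of $\LUM$ algorithms run under the same scheduler, with unchanged sets of valid configurations, so their solvable-problem sets are nested. An $\OB$ algorithm is exactly a $\LUM$ (equivalently $\FS$ or $\FC$) algorithm whose colour set is a singleton, which gives $\FS^{K}\ge\OB^{K}$ and $\FC^{K}\ge\OB^{K}$. An $\FS$ algorithm is a $\LUM$ algorithm whose $\Compute$ step never consults the colours reported for other robots, and an $\FC$ algorithm is a $\LUM$ algorithm whose $\Compute$ step never consults the acting robot's own colour; in either case a $\LUM$ robot can run the algorithm verbatim, since a robot is always free to ignore information it is allowed to see, which gives $\LUM^{K}\ge\FS^{K}$ and $\LUM^{K}\ge\FC^{K}$.

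I do not anticipate a genuine obstacle: the whole statement is a matter of unwinding the definitions and invoking the already-cited equivalence between \SSY\ and $\LCM$-atomic-\ASY. The one point deserving explicit care is the reading of ``$P$ is solvable in $M$ under $K$'' as the existence of an algorithm correct on \emph{every} $K$-admissible execution starting from $P$'s valid initial placements; once that is made precise, shrinking the admissible-execution set (parts 1--2) or restricting how the algorithm uses its light (parts 3--4) can only preserve solvability, and each individual inequality becomes a one-line consequence.
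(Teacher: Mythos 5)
Your unfolding of the definitions is correct and takes essentially the same route as the paper, which asserts this theorem without further argument (``From the definitions''): scheduler monotonicity via inclusion of admissible execution sets for (1)--(2), the cited equivalence of \SSY\ with $\LCM$-atomic-\ASY\ for the links $M^{S}\ge M^{A_{LC}}$ and $M^{S}\ge M^{A_{CM}}$, and viewing $\OB$, $\FS$, $\FC$ algorithms as restricted $\LUM$ algorithms for (3)--(4). One small slip: in your general monotonicity statement the concluding inclusion should be $M(K')\subseteq M(K)$ (restricting the adversary's executions can only enlarge the set of solvable problems), not $M(K)\subseteq M(K')$ --- though every concrete application you then make uses the correct direction, so the argument itself stands.
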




As for $X \in \{\FS,\OB \}$, since  robots cannot observe the colors of the other robots, we have $X^{A_{CM}} \equiv X^{A_{M}}$ 
and  
$X^{A_{LC}} \equiv X^{A}$. 

Let us also recall the following equivalences: 
\begin{theorem} \label{th:equality}
    \begin{description}
        \item
        \item[(1)] $\LU^{F} \equiv \FC^{F}$~\cite{FSW19},
        \item[(2)] $\LU^{R} \equiv \LU^{S} \equiv \LU^{A} \equiv \FC^{R}$~\cite{DFPSY,BFKPSW-IandC},\label{th:eq-2} 
        \item[(3)] $\FC^{S} \equiv \FC^{A_{LC}}$~\cite{FSSW23}, and
        \item[(4)] $\FC^{A_{CM}} \equiv \FC^{A_{M}} \equiv \FC^{A}$~\cite{FSSW23}.
    \end{description}
\end{theorem}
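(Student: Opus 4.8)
All four equivalences are already established in the literature, so the plan is to assemble them from known simulations rather than to reprove them from scratch. The recurring pattern is that each equivalence splits into a ``free'' direction, which follows from the containments of the preceding theorem (and the obvious fact that a more restricted scheduler never decreases solvability), and one substantive direction, proved by a simulation showing that a weaker resource — communication without persistent memory, or a weaker/more restricted scheduler — can reproduce the behaviour of a stronger one. I would organize the write-up around these four simulations.

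For part~(1), $\LU^{F}\geq\FC^{F}$ is immediate from part~(4) of the preceding theorem with $K=F$, so it suffices to exhibit a simulation witnessing $\FC^{F}\geq\LU^{F}$. Following~\cite{FSW19}, the idea is that under full synchrony a finite-communication robot can recover the constant amount of state that a luminous robot would keep in its own light: since every robot is activated in every round and no move is ever pending, the configuration seen by a robot during $\Look$ is a deterministic function of the previous configuration, so the bit(s) a $\LU$ robot reads off its own light can instead be encoded redundantly in the colours (and, if needed, the relative positions) of the whole configuration and read back from the other robots. One then checks that a single round of the $\LU$ algorithm is simulated by a bounded number of $\FC^{F}$ rounds and that the simulation preserves the validity of initial, intermediate, and terminal placements.

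For part~(2), I would chain four simulations. The containments $\LU^{F}\geq\LU^{R}\geq\LU^{S}\geq\LU^{A}$ — the last two from part~(2) of the preceding theorem, and $\LU^{R}\geq\LU^{S}$ because the $R$-schedules form a subclass of the $S$-schedules — give all the easy inequalities among $\LU^{R},\LU^{S},\LU^{A}$. The first substantive step is $\LU^{A}\geq\LU^{S}$: the classical light-based synchronizer of~\cite{DFPSY} lets asynchronous luminous robots emulate a semi-synchronous execution by using colours as handshake flags, so that no robot ever acts on a stale snapshot. The second is $\LU^{S}\geq\LU^{R}$, i.e.\ that the restricted scheduler buys luminous robots nothing, which is the argument of~\cite{BFKPSW-IandC}. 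Finally, for $\FC^{R}\equiv\LU^{R}$, the inclusion $\LU^{R}\geq\FC^{R}$ is free by the same definitional argument as part~(4), and the reverse $\FC^{R}\geq\LU^{R}$ uses that under $R$ consecutive activation sets are disjoint: this gives a communicating robot enough ``temporal room'' to reconstruct its own previous colour from what it broadcast, thereby emulating persistent memory. For parts~(3) and~(4) the plan is to invoke the scheduler-interleaving analysis of~\cite{FSSW23}: one shows that for $\FC$ robots the atomicity restriction $A_{LC}$ adds nothing beyond $S$ (giving $\FC^{S}\equiv\FC^{A_{LC}}$) and that $A_{CM}$, $A_{M}$ and plain $A$ are mutually simulable for $\FC$ (giving the last chain), in each case by taking an arbitrary admissible schedule on one side and producing an equivalent one on the other that an oblivious $\FC$ algorithm cannot distinguish — it is precisely the obliviousness of $\FC$ robots that makes these coarser schedulers harmless — while respecting the ``light changes at $\Compute$'' visibility rule fixed in Section~\ref{sec:model}.

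The main obstacle in every part is the ``communication-simulates-memory'' direction (the $\FC\geq\LU$ inequalities in~(1) and~(2), and the analogous moves in~(3)--(4)): an oblivious robot must deduce its own past colour purely from external observations, which works only because the relevant scheduler — full synchrony, the disjointness constraint of $R$, or the atomicity of the $\ASY$-subclasses — rules out exactly the interleavings in which such a deduction would fail. Identifying those interleavings as the sole obstruction, and checking that the simulations remain correct under the standing assumptions of variable disorientation, chirality, and rigidity, is where the real work lies and is exactly what the cited papers carry out; here we simply collect their conclusions.
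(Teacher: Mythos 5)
Your proposal is consistent with the paper, which offers no proof of this theorem at all: it merely recalls these equivalences as known results from the cited works (\cite{FSW19}, \cite{DFPSY,BFKPSW-IandC}, \cite{FSSW23}), exactly the sources your sketch relies on. Your outline of the nontrivial simulation directions matches the arguments in those references, so the approach is essentially the same as the paper's (cite and collect), just with more expository detail.
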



Considering the equivalences above, we treat equivalent configurations as a single model represented by the strongest variant whenever possible.
However, when there is a widely used conventional name for a weaker equivalent, we follow the standard notation for clarity.
For example, although $A_{CM}$ and $A_{M}$ are equivalent for all robot models, we use the more familiar $A_{M}$ to denote both.
Similarly, when models are equal, only the strongest variant is counted in our classification, which reduces the total distinct configurations to 14, that is,
$\LU^F$, $\FS^F$, $\OB^F$,
$\LU^R$, $\FS^R$, $\OB^R$,
$\FC^S$, $\FS^S$, $\OB^S$,
$\FC^{A_M}$, $\FS^{A_M}$, $\OB^{A_M}$,
$\FS^A$, and $\OB^A$.

{\bf Table}~\ref{table1} summarizes all known results on the pairwise relations between models~\cite{FSW19,BFKPSW-IandC,FSSW23},
clearly separated into the cases for \SSY\ and stronger schedulers (\textit{synchronous}) and \SSY\ and weaker schedulers (\textit{asynchronous}).
Following these tables, we present a unified inclusion and separation diagram (Separation Map) that visually represents these results (Fig.~\ref{fig:separation-map}) .

\begin{table}[h!]
\caption{
Each cell indicates the known relation between two configurations:
\textbf{$>$} means that the left configuration strictly separates the right one,
with the concrete problem given in parentheses.
\textbf{$\bot$} indicates orthogonality, again with the separation problem noted.
\textbf{$\gets$} means the relation trivially follows from an inclusion by definition.
The top row and leftmost column 
uses abbreviations:
$\LU = \LUMI$, $\FC = \FCOM$, $\FS = \FSTA$, $\OB = \OBLOT$.
Problems used for separation are described in Sections~\ref{sec:start} to~\ref{sec:end}.
The first table shows pairs for \SSY\ and stronger schedulers (\textit{synchronous} part),
and the second table covers \SSY\ and weaker schedulers (\textit{asynchronous} part).
}

    \centering\tiny{
    \renewcommand{\arraystretch}{1.2}
    \begin{tabular}{|c|c|c|c|c|c|c|c|c|c|}
        \hline
         & $\LUMI^F$ & $\FSTA^F$ & $\OBLOT^F$ & $\LUMI^R$ & $\FSTA^R$ & $\OBLOT^R$ & $\FCOM^S$ & $\FSTA^S$ & $\OBLOT^S$\\
        \hline
        $\LUMI^F$ & - & \makecell{$>$ \\ (\textit{CYC})} & \makecell{$>$ \\ (\textit{CGE})} & \makecell{$>$ \\ ($\gets$)} & \makecell{$>$ \\ ($\gets$)} & \makecell{$>$ \\ ($\gets$)} & \makecell{$>$ \\ ($\gets$)} & \makecell{$>$ \\ ($\gets$)} & \makecell{$>$ \\ ($\gets$)}\\
        \hline
        $\FSTA^F$ & - & - & \makecell{$>$ \\ (\textit{CGE})} & \makecell{$\bot$ \\ (\textit{CGE, CYC})} & \makecell{$>$ \\ (\textit{CGE})} & \makecell{$>$ \\ ($\gets$)} & \makecell{$\bot$ \\ (\textit{CGE, CYC})} & \makecell{$>$ \\ (\textit{OSP})} & \makecell{$>$ \\ ($\gets$)}\\
        \hline
        $\OBLOT^F$ & - & - & - & \makecell{$\bot$ \\ (\textit{CGE*, CYC})} & \makecell{$\bot$ \\ (\textit{CGE*, OSP})} & \makecell{$>$ \\ (\textit{CGE*})} & \makecell{$\bot$ \\ (\textit{CGE*, CYC})} & \makecell{$\bot$ \\ (\textit{CGE*, TAR(d)})} & \makecell{$>$ \\ (\textit{CGE*})}\\
        \hline
        $\LUMI^R$ & - & - & - & - & \makecell{$>$ \\ (\textit{CYC})} & \makecell{$>$ \\ ($\gets$)} & \makecell{$>$ \\ (\textit{OSP})} & \makecell{$>$ \\ (\textit{CYC})} & \makecell{$>$ \\ ($\gets$)}\\
        \hline
        $\FSTA^R$ & - & - & - & - & - & \makecell{$>$ \\ (\textit{OSP})} & \makecell{$\bot$ \\ (\textit{OSP, CYC})} & \makecell{$>$ \\ (\textit{OSP})} & \makecell{$>$ \\ ($\gets$)}\\
        \hline
        $\OBLOT^R$ & - & - & - & - & - & - & \makecell{$\bot$ \\ (\textit{SRO, CYC})} & \makecell{$\bot$ \\ (\textit{SRO, TAR(d)})} & \makecell{$>$ \\ (\textit{SRO})}\\
        \hline
        $\FCOM^S$ & - & - & - & - & - & - & - & \makecell{$\bot$ \\ (\textit{CYC, TAR(d)})} & \makecell{$>$ \\ (\textit{CYC})}\\
        \hline
        $\FSTA^S$ & - & - & - & - & - & - & - & - & \makecell{$>$ \\ (\textit{TAR(d)})}\\
        \hline
        $\OBLOT^S$ & - & - & - & - & - & - & - & - & -\\
        \hline
    \end{tabular}}
    \label{table1}
\end{table}

\begin{table}[h!]
    \centering\tiny{
    \renewcommand{\arraystretch}{1.2}
    \begin{tabular}{|c|c|c|c|c|c|c|c|c|c|}
        \hline
         & $\LUMI^S$ & $\FCOM^S$ & $\FSTA^S$ & $\OBLOT^S$ & $\FCOM^{A_M}$ & $\FSTA^{A_M}$ & $\OBLOT^{A_M}$ & $\FSTA^A$ & $\OBLOT^A$\\
        \hline
        $\LUMI^S$ & - & \makecell{$>$ \\ (\textit{OSP})} & \makecell{$>$ \\ (\textit{OSP})} & \makecell{$>$ \\ ($\gets$)} & \makecell{$>$ \\ ($\gets$)} & \makecell{$>$ \\ ($\gets$)} & \makecell{$>$ \\ ($\gets$)} & \makecell{$>$ \\ ($\gets$)} & \makecell{$>$ \\ ($\gets$)}\\
        \hline
        $\FCOM^S$ & - & - & \makecell{$\bot$ \\ (\textit{TAR(d), CYC})} & \makecell{$>$ \\ (\textit{CYC})} & \makecell{$>$ \\ (\textit{RDV})} & \makecell{$\bot$ \\ (\textit{TAR(d), CYC})} & \makecell{$>$ \\ (\textit{CYC})} & \makecell{$\bot$ \\ (\textit{SM, CYC})} & \makecell{$>$ \\ ($CYC$)}\\
        \hline
        $\FSTA^S$ & - & - & - & \makecell{$>$ \\ (\textit{TAR(d)})} & \makecell{$\bot$ \\ (\textit{TAR(d), CYC})} & \makecell{$>$ \\ (\textit{MLCv})} & \makecell{$>$ \\ ($\gets$)} & \makecell{$>$ \\ (\textit{TAR(d)})} & \makecell{$>$ \\ ($\gets$)}\\
        \hline
        $\OBLOT^S$ & - & - & - & - & \makecell{$\bot$ \\ (\textit{MLCv, CYC})} & \makecell{$\bot$ \\ (\textit{TAR(d), MLCv})} & \makecell{$>$ \\ (\textit{MLCv})} & \makecell{$\bot$ \\ (\textit{SM, MLCv})} & \makecell{$>$ \\ (\textit{MLCv})}\\
        \hline
        $\FCOM^{A_M}$ & - & - & - & - & - & \makecell{$\bot$ \\ (\textit{TAR(d), CYC})} & \makecell{$>$ \\ (\textit{CYC})} & \makecell{$\bot$ \\ (\textit{SM, CYC})} & \makecell{$>$ \\ (\textit{CYC})}\\
        \hline
        $\FSTA^{A_M}$ & - & - & - & - & - & - & \makecell{$>$ \\ (\textit{TAR(d)})} & \makecell{$>$ \\ (\textit{TAR(d)})} & \makecell{$>$ \\ ($\gets$)}\\
        \hline
        $\OBLOT^{A_M}$ & - & - & - & - & - & - & - & \makecell{$\bot$ \\ (\textit{SM, TF})} & \makecell{$>$ \\ (\textit{TF})}\\
        \hline
        $\FSTA^A$ & - & - & - & - & - & - & - & - & \makecell{$>$ \\ (\textit{SM})}\\
        \hline
        $\OBLOT^A$ & - & - & - & - & - & - & - & - & -\\
        \hline
    \end{tabular}}
\end{table}
\clearpage
\begin{figure}[H]
        \centering \vspace{-3cm}
        \includegraphics[width=120mm]{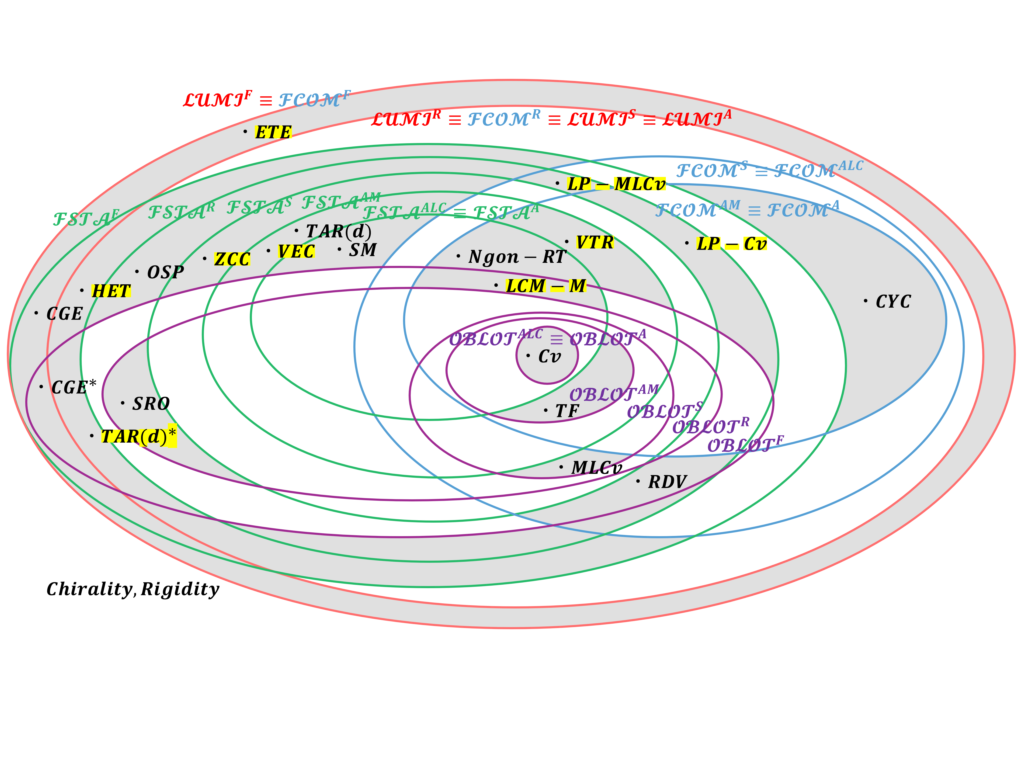}
\caption{Classification of computational problems based on their solvability across 14 canonical robot models.
    Each nested ellipse corresponds to a model class, with outer ellipses representing more powerful models.
    A dot represents a problem, and its placement indicates the set of models in which it is solvable.
    Problems highlighted in \textbf{yellow with bold labels} (e.g., \textit{ETE}, \textit{HET}, \textit{TAR(d)*}, etc.) are newly introduced in this paper.
    The shaded gray region represents model classes for which this paper identifies problems that separate them from strictly weaker models.
    That is, for each shaded region, at least one problem has been found that is solvable within that region but unsolvable in all strictly inner models.
    The remaining white areas correspond to model combinations for which the existence of such separating problems is still unknown, thus constituting open questions.}\label{fig:separation-map}
\end{figure}

\subsection{\textit{RDV} (Rendezvous)\cite{FPS}}\label{sec:start}
    Two robots are assembled at a single point that is not predetermined from an arbitrary initial configuration.

    \begin{lemma}
        $\textit{RDV} \notin \mathcal{FSTA}^{S}$.
        \label{lem:RDV_FSTA}
    \end{lemma}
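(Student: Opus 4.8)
The plan is to show that two robots in $\FS^{S}$ (finite-state, silent, semi-synchronous) cannot always gather, by exhibiting an adversarial schedule that preserves a symmetry invariant forever. The key observation is that with $n=2$ robots that are anonymous, disoriented, silent, and finite-state, the only information a robot gets from a \Look\ is the position of the other robot relative to itself (a single vector), and the only internal information it carries is its light color from a constant-size set. I would begin by setting up a symmetric initial configuration: place the two robots at distinct points with light colors chosen so that, up to the local coordinate systems, the two robots are in \emph{indistinguishable} states. Because the robots have chirality but are otherwise disoriented, one can always choose the two local frames so that each robot sees exactly the same colored snapshot (the other robot at the same relative position with the same color).

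Next I would run the \SSY\ scheduler in \emph{fully synchronous} fashion on this configuration and argue by induction on rounds that symmetry is maintained. In each round both robots are activated simultaneously, execute the same deterministic algorithm on identical inputs, hence compute the same destination vector (in their respective frames) and the same new color. The geometric claim is that if both robots apply the same rigid motion relative to their symmetric frames, the resulting configuration is again symmetric: either they both move toward each other along the connecting segment (possibly swapping sides but staying on the line, so still a symmetric 2-point configuration with equal colors), or they move off the line in a point-symmetric way about the midpoint. In all cases the midpoint is preserved and the two robots remain at equal distance from it with equal colors — so the configuration never collapses to a single point. The one case to rule out is that they both move exactly to the common midpoint: but if the algorithm ever does this from a symmetric configuration, I would instead use a scaling/translation argument (two robots, rigid moves, the algorithm's output is a function of a single relative vector) to derive a contradiction, or more simply observe that an adversary can also choose the activation to be non-simultaneous — activating only one robot — which immediately breaks any ``both jump to the midpoint'' strategy while keeping the configuration non-gathered, and then restore symmetry afterwards. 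I would formalize the invariant as: after every round the configuration is point-symmetric about a fixed center and no robot sits at that center.

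The main obstacle is handling the finite-state memory carefully: unlike pure $\OB$, a robot in $\FS$ can use its light to count or to break ties across rounds, so I must make sure the symmetry invariant includes the light colors and that the adversarial schedule never lets the two robots' colors diverge. Since both robots start with the same color and, under fully synchronous activation on symmetric inputs, update to the same color, the color component of the invariant is preserved for free; the real work is purely geometric — showing the two-point point-symmetry is closed under a simultaneously-applied common rigid move, and excluding the degenerate midpoint-collision case. I would close by noting this is exactly the classical impossibility of rendezvous for two oblivious/finite-state semi-synchronous robots (cf.~\cite{SY,FPS}), adapted to record that finite memory does not help because the symmetry argument carries the light colors along unchanged.
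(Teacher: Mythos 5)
Your proposal is correct in spirit and takes essentially the same route as the paper: an adversary that keeps the two robots in a point-symmetric, equal-light configuration under simultaneous activation, lets them swap when the computed destination is the other robot's position, and falls back to activating one robot at a time when the destination is the midpoint --- which is exactly the paper's three-case strategy. The only place you are looser than the paper is the ``restore symmetry afterwards'' step: you should state explicitly that after the single activation the adversary next activates the lagging robot alone, and that the normalized (distance-one) view forces it through the same state transition and the same relative move, so equal colors and distinct positions are re-established before any further simultaneous rounds; the paper's Case~2 handles this alternation in a similarly informal way.
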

    
    \begin{proof}
        The proof is by contradiction. Assume that a deterministic algorithm $\mathcal{A}$ exists that solves the rendezvous problem in finite time under the given conditions.
        
        To achieve rendezvous, the two robots, $r$ and $q$, must break symmetry and take different (asymmetric) actions at some point. However, an adversary can always prevent this using a three-pronged strategy.
        
        \begin{description}
            \item[Case 1: The destination is the other robot's position.]
            The adversary activates both robots r and q simultaneously. Using its power of Variable Disorientation, the adversary ensures that robot r moves to the initial position of q, while q moves to the initial position of r. This forces the robots to simply swap positions. The adversary can repeat this strategy indefinitely, causing the robots to perpetually swap locations without ever meeting.
            
            \item[Case 2: The destination is the midpoint.]
            The adversary activates the robots one at a time. For instance, it activates only $r$. It uses Variable Disorientation to force $r$ to move precisely to the midpoint of their current positions. In the subsequent round, the adversary activates only $q$, which is then forced to move to the new midpoint. Since the FSTA model implies that a robot's action depends on its internal state, activating a single robot ensures its state changes, compelling it to move. This process can be repeated, and while the distance between the robots decreases, it never becomes zero as they perpetually move towards a constantly shifting target.
            
            \item[Case 3: The destination is neither the other robot's position nor the midpoint.]
            The adversary can activate both robots simultaneously. In either case, it uses Variable Disorientation to enforce symmetric movements relative to the system's center. For any move computed by $r$, the adversary forces $q$ to execute a perfectly corresponding symmetric move. This preserves the global symmetry of the configuration, ensuring the robots never meet.
        \end{description}
            
        In all possible cases generated by any deterministic algorithm $\mathcal{A}$, the adversary has a guaranteed strategy to prevent the robots from rendezvous. This contradicts the initial assumption that algorithm $\mathcal{A}$ can solve the rendezvous problem. Consequently, the system's configuration will either continue as Case 3, or transition to Case 1 or 2, thereby proving that a solution to this problem is impossible.
        
        Hence, $\textit{RDV} \notin \mathcal{FSTA}^{S}$ holds.
    \end{proof}

\subsection{\textit{CYC} (Cyclic Circles)\cite{BFKPSW-IandC}}
    Let $n \geq 3$, $k = 2^{n-1}$, and $d : \mathbb{N} \to \mathbb{R}$ is a non-invertible function. The problem is to form a cyclic sequence of patterns $C, C_0, C, C_1, C, C_2, \ldots, C, C_{k-1}$ where $C$ is a pattern of $n$ robots forming a “circle” occupied by $n-1$ robots with one in the center, and $C_i$ (for $0 \leq i \leq k-1$) is a configuration where the $n-1$ circle robots are in the exact same position, but the center robot occupies a point at distance $d(i)$ from the center on the radius connecting the center to the missing robot position on the circle. In other words, the central robot moves to the designed position at distance $d(i)$ and comes back to the center, and the process repeats after the $2^{n-1}$ configurations $C_i$ have been formed.
    
\subsection{\textit{CGE} (Center of Gravity Expansion)\cite{BFKPSW-IandC}}
    Let $R$ be a set of robots with $|R| \geq 2$. Let $P = \{(x_1, y_1), (x_2, y_2), \ldots, (x_n, y_n)\}$ be the set of their positions and let $c = (c_x, c_y)$ be the coordinates of the \textit{CoG} (Center of Gravity) of $P$ at time $t = 0$. The \textit{CGE} problem requires each robot $r_i \in R$ to move from its initial position $(x_i, y_i)$ directly to $(f(x_i, c_x), f(y_i, c_y))$ where $f(a, b) = \lfloor 2a - b \rfloor$, away from the Center of Gravity of the initial configuration so that each robot doubles its initial distance from it and no longer moves.
    
\subsection{\textit{TAR($d$)} (Triangle Rotation)\cite{FSW19}}
    Let $a, b, c$ be three robots forming a triangle $ABC$, let $C$ be the circumscribed circle, and let d be a value known to the three robots. The \textit{TAR($d$)} problem requires the robots to move so to form a new triangle $A'B'C'$ with circumscribed circle C, and where $dis(A, A') = dis(B, B') = dis(C, C') = d$.
    
\subsection{\textit{OSP} (Oscillating Points)\cite{FSW19}}
    Two robots, $a$ and $b$, initially in distinct locations, alternately come closer and move further from each other. More precisely, let $d(t)$ denote the distance of the two robots at time $t$. The \textit{OSP} problem requires the two robots, starting from an arbitrary distance $d(t_0) > 0$ at time $t_0$, to move so that there exists a monotonically increasing infinite sequence time instant $t_0, t_1, t_2, \ldots$ such that:
    \begin{enumerate}
        \item $d(t_{2i+1}) < d(t_{2i}), \quad \text{and} \quad \forall h', h'' \in [t_{2i}, t_{2i+1}], \; h' < h'' \implies d(h'') \leq d(h')$,
        \item $d(t_{2i}) > d(t_{2i-1}), \quad \text{and} \quad \forall h', h'' \in [t_{2i-1}, t_{2i}], \; h' < h'' \implies d(h'') \geq d(h')$.
    \end{enumerate}

\subsection{\textit{Cv} (Convergence)\cite{FSSW23-arxiv}}
    Require robots to approach an arbitrary distance without reaching a common location.
    
\subsection{\textit{MLCv} (Monotone Line Convergence)\cite{FSSW23-arxiv}}
    The two robots, $r$ and $q$ must solve the Collisionless Line Convergence problem without ever increasing the distance between them. In other words, an algorithm solves \textit{MLCv} iff it satisfies the following predicate:
    \begin{align*}
        \textit{MLCv} \equiv 
        \big[ &\{\exists \ell \in \mathbb{R}^2, \, \forall \epsilon \geq 0, \, \exists T \geq 0, \, \forall t \geq T : 
            |r(t) - \ell| + |q(t) - \ell| \leq \epsilon \}, \\
        &\text{and } \{\forall t \geq 0 : r(t), q(t) \in \overline{r(0) q(0)} \}, \\
        &\text{and } \{\forall t \geq 0 : 
            \text{dis}(r(0), r(t)) \leq \text{dis}(r(0), q(t)), \, 
            \text{dis}(q(0), q(t)) \leq \text{dis}(q(0), r(t)) \}, \\
        &\text{and } \{\forall t' \geq t : |r(t') - q(t')| \leq |r(t) - q(t)| \} \big].
    \end{align*}
    
\subsection{\textit{TF} (Trapezoid Formation)\cite{FSSW23-arxiv}}
    Consider a set of four robots, $R={a,b,c,d}$ whose initial configuration forms a convex quadrilateral $Q=(ABCD)=(a(0)b(0)c(0)d(0))$ with one side, say $\overline{CD}$, longer than all others. The task is to transform $Q$ into a trapezoid $T$, subject to the following conditions:\\
    \noindent
    (1) If $Q$ is a trapezoid, the configuration must stay unchanged (Fig.\ref{fig:tf}(1)); i.e.,
    \begin{align*}
        \textit{TF1} \equiv 
        \big[ &\text{Trapezoid}(ABCD) \implies \{ \forall t > 0, r \in \{a, b, c, d\} : r(t) = r(0) \} \big].
    \end{align*}
    \noindent
    (2) Otherwise, without loss of generality, let $A$ be farther than $B$ from $CD$. Let $Y(A)$ (resp., $Y(B)$) denote the perpendicular lines from $A'$ (resp., $B'$) to $CD$ meeting $CD$ in $A'$ (resp. $B'$), and let $\alpha$ be the smallest angle between $\angle BAA'$ and $\angle ABB'$.\\
    \noindent
    (2.1) If $\alpha \geq \pi / 4$ then the robots must form the trapezoid shown in Fig.\ref{fig:tf}(2), where the location of $a$ is a translation of its initial one on the line $Y(A)$, and that of all other robots is unchanged; specifically,
    \begin{align*}
        \textit{TF2.1} \equiv 
        \big[ &(\alpha \geq \pi / 4) \implies \big\{ \forall t \geq 0, r \in \{b, c, d\} : r(t) = r(0), a(t) \in Y(A) \big\} \\
        &\text{and } \big\{ \exists t > 0 : \forall t' \geq t, \{ \overline{a(t') b(t')} \parallel \overline{CD} \} \text{ and } \{a(t') = a(t)\} \big\} \big].
    \end{align*}
    \noindent
    (2.2) If instead $\alpha < \pi / 4$ then the robots must form the trapezoid shown in Fig.\ref{fig:tf}(2), where the location of all robots but $b$ is unchanged, and that of $b$ is a translation of its initial one on the line $Y(B)$;
    \begin{align*}
        \textit{TF2.2} \equiv 
        \big[ &(\alpha \geq \pi / 4) \implies \big\{ \forall t \geq 0, r \in \{a, c, d\} : r(t) = r(0), b(t) \in Y(B) \big\} \\
        &\text{and } \big\{ \exists t > 0 : \forall t' \geq t, \{ \overline{a(t') b(t')} \parallel \overline{CD} \} \text{ and } \{b(t') = b(t)\} \big\} \big].
    \end{align*}
    \begin{figure}[H]
        \centering
        \includegraphics[width=7.5cm]{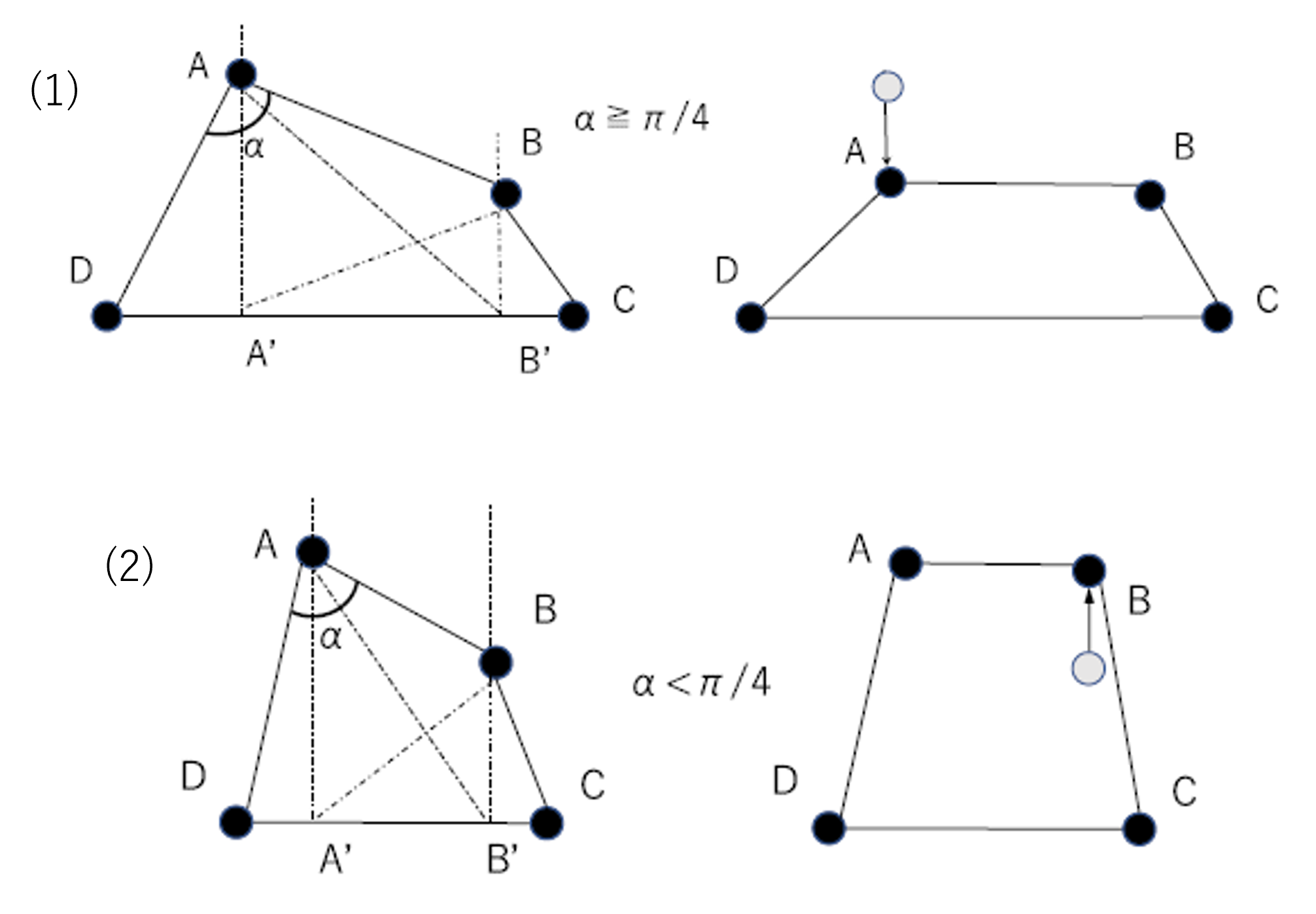}
        \caption{Movement of the \textit{TF} problem~\cite{FSSW23}}
        \label{fig:tf}
    \end{figure}

\subsection{\textit{SM} (Single Move)\cite{FSSW23-arxiv}}\label{sec:end}
    Let $a, b$ be two robot on distinct locations $a(0), b(0)$ where $r(t)$ denotes the position of $r \in \{a, b\}$ at time $t \geq 0$. The problem \textit{SM} requires each of the two robots to make exactly one non-zero movement. In other words, an algorithm solves \textit{SM} iff it satisfies the following predicate:
    
    \begin{align*}
        \textit{SM} \equiv 
        \forall r \in \{a, b\}, \exists t_1, t_2 : \left( 
        \begin{array}{l}
        r(t_1) \neq r(t_2) \land \left( 0 \leq \forall t \leq t_1 : r(t) = r(t_1) \right)
        \land \left( \forall t \geq t_2 : r(t_2) = r(t) \right) \\
        \land \left( \forall t, t' : t_1 \leq t < t' \leq t_2 \rightarrow r(t) \neq r(t') \right)
        \end{array}
        \right)
    \end{align*}

\section{Separation under Synchronous Schedulers}\label{Sep-Synch-Scheduler}
This section presents the key separation results for \SSY\ and stronger schedulers.
In particular, we introduce three new problems—\textit{ETE} (Exponential Times Expansion), 
\textit{HET} (Hexagonal Edge Traversal), and \textit{TAR($d$)*} (Infinite Triangle  Rotation  with parameter $d$)—each designed to reveal non-trivial gaps that arise when varying the degree of synchrony or capability.


Fig.~\ref{fig:boundary-line} illustrates the known separation landscape for \SSY\ and stronger schedulers,
highlighting where the \textit{ETE}, \textit{HET}, and \textit{TAR($d$)*} problems contribute to revealing strict separations 
or orthogonal relations that cannot be derived by trivial inclusion alone.

The figure visualizes which pairs of models are distinguished by each problem,
and where trivial inclusions hold due to capability or scheduler hierarchy.
New separation results introduced in this paper are marked explicitly in the diagram. 
In the figure, each line connects two models such that the upper model can solve the given problem while the lower model cannot. 
Thus, the vertical placement encodes solvability: higher nodes represent stronger models.
  \subsection{The Most Difficult Problem (ETE)}
  \label{arc:ETE}
\begin{Definition}
    \textit{ETE} (Exponential Times Expansion)
    \\Let $n \geq 3$, $k = 2^{n-1}$, and $d: \mathbb{N} \to \mathbb{R}$ is a non-invertible, meaning that for a given $i$, $d(i)$ is uniquely determined, but the inverse is not.The problem is to construct a sequence of patterns $C_0, C_1, \ldots, C_{k}$, each formed at respective times $t_0 < t_1 < \cdots < t_{k}$. Pattern $C_0$ is a circle formed by $n - 1$ robots, with one robot located at the center (see Figure~\ref{fig:ete}). Let $R $ be the set of robots excluding the central one $r_0$, and let $g_i = (g_x, g_y)$ be the coordinates of the center of gravity at time $t_i$. For each $i \neq 0$, pattern $C_i$ is formed by moving each robot $r \in R $ from position $(x, y)$ to directly position $(f(x, g_x), f(y, g_y))$, where the function $f(a, b)$ is defined as follows:
    \[
    f(a, b) = \left\lfloor b + d(i) \cdot (a - b) \right\rfloor
    \]
    In other words, each robot is scaled by a factor of $d(i)$ from the center of gravity.
\end{Definition}

\begin{figure}[H]
    \centering
    \includegraphics[width=120mm]{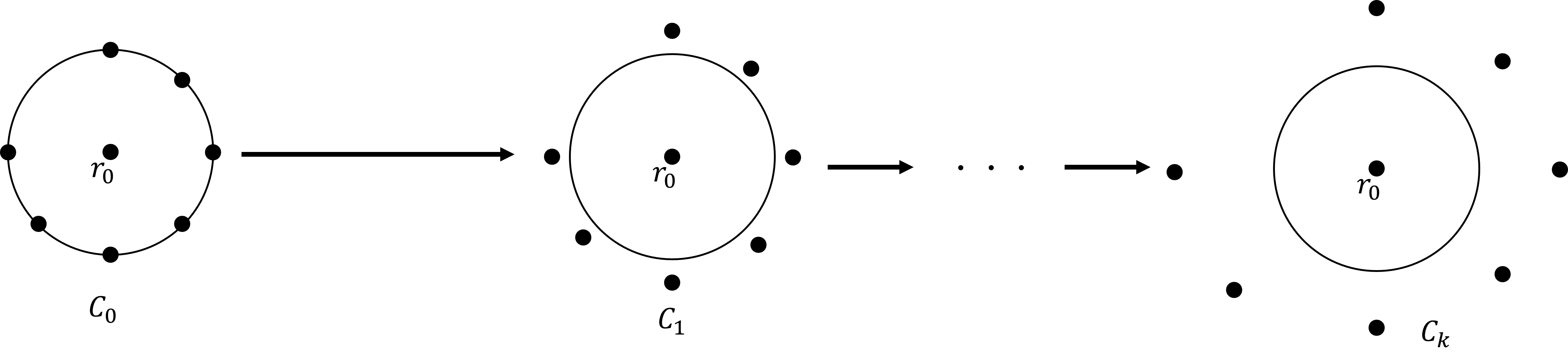}
    \caption{Movement of the \textit{ETE} problem.}
    \label{fig:ete}
\end{figure}

\begin{lemma}
    For a set of points $P = \{(x_i, y_i) \mid 1 \leq i \leq n\}$ there exist infinite sets  of  points $ Q_i \quad (1 \leq i \leq n)$ such that $(x_i, y_i) \in Q_i \quad (1 \leq i \leq n)$ and $ \forall (x_i', y_i') \in Q_i \quad $ : $f(x_i, g_x) = f(x', g_x')$, and $\quad f(y_i, g_y) = f(y', g_y')$, where $(c_x', c_y')$ is the coordinates of the \textit{ETE} of $\{(x_i', y_i') \mid 1 \leq i \leq n\}$.
    \label{lem:3}
\end{lemma}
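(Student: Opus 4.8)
The plan is to show that the floor function $f(a,b) = \lfloor b + d(i)(a-b)\rfloor$ collapses many distinct real inputs to the same integer output, and that this collapsing can be arranged consistently across all $n$ robots while simultaneously preserving the center of gravity computation enough that the scaled target is unchanged. First I would fix the set $P = \{(x_i, y_i)\}$ and its center of gravity $g = (g_x, g_y)$, and compute the target point $p_i = (f(x_i, g_x), f(y_i, g_y))$ for each robot. The core observation is that $f$ is piecewise constant: for the $x$-coordinate of robot $i$, the value $f(x_i, g_x)$ stays fixed as long as $g_x + d(i)(x_i - g_x)$ stays within the half-open unit interval $[\,\lfloor g_x + d(i)(x_i-g_x)\rfloor,\ \lfloor g_x + d(i)(x_i-g_x)\rfloor + 1\,)$. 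So there is a small open box $B_i$ around $(x_i, y_i)$ such that perturbing the robot's position within $B_i$ — while keeping $g$ fixed — does not change $f$ applied to that robot.

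The subtlety is that moving the robots also moves the center of gravity, so I cannot treat $g$ as fixed. The plan is to handle this by a two-step perturbation argument: (1) choose perturbations $(x_i', y_i')$ of the robot positions that are small enough that the new center of gravity $g' = (g_x', g_y')$ still lies in a neighborhood on which, jointly with the perturbed $x_i'$, the quantity $g_x' + d(i)(x_i' - g_x')$ remains in the same unit interval as before — this is possible because $g' \to g$ continuously as the perturbations shrink, and the condition ``$g_x' + d(i)(x_i'-g_x')$ lies in the interior of a fixed half-open interval'' is an open condition satisfied at the unperturbed point whenever the unperturbed value is not an integer. (2) If some unperturbed value $g_x + d(i)(x_i - g_x)$ happens to be exactly an integer, I would first apply one tiny generic perturbation to push all such values strictly into the interior of their intervals, noting that this can be done without changing any floor value (move slightly in the direction that increases the fractional part), and then proceed. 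Since each $Q_i$ must be an infinite set, I would take $Q_i$ to be (the intersection with a suitable product neighborhood of) a full-dimensional open set — e.g. scale the whole perturbation by any $t \in (0, \varepsilon)$ for a sufficiently small $\varepsilon$, giving a one-parameter (hence infinite) family, or more simply take $Q_i$ to be an open box of radius $\delta$ around $(x_i,y_i)$ for $\delta$ small enough that every choice within $\prod_j Q_j$ keeps all $2n$ relevant quantities inside their unit intervals.

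The key steps in order: (a) define $u_i = g_x + d(i)(x_i - g_x)$ and $v_i = g_y + d(i)(y_i - g_y)$, and observe $f(x_i,g_x) = \lfloor u_i\rfloor$, $f(y_i,g_y) = \lfloor v_i\rfloor$; (b) after a harmless preliminary perturbation, assume no $u_i, v_i$ is an integer, so each lies in an open interval $(m, m+1)$; (c) express the perturbed quantities $u_i', v_i'$ as continuous (in fact affine-linear) functions of the perturbation vector, using that $g_x' = \frac{1}{n}\sum_j x_j'$; (d) invoke continuity / openness to get $\delta > 0$ such that all perturbations of size $< \delta$ keep each $u_i', v_i'$ in its open interval $(m_i, m_i+1)$; (e) set $Q_i$ to be the $\delta$-box around $(x_i, y_i)$ — an infinite set containing $(x_i, y_i)$ — and conclude that for all choices in $\prod_i Q_i$ the floor values, hence the targets $f(x_i', g_x')$ and $f(y_i', g_y')$, agree with the originals. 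I expect the main obstacle to be the bookkeeping in step (c)–(d): one must verify that the finitely many linear constraints ``$u_i'\in(m_i,m_i+1)$, $v_i'\in(m_i,m_i+1)$ for all $i$'' cut out a nonempty open set containing the original point, which is immediate once the strict-interior (non-integer) condition of step (b) is secured, so the real work is just making the preliminary perturbation in step (b) rigorous — showing we can simultaneously push all integer-valued $u_i, v_i$ off the integers without disturbing any already-correct floor value. (Here I read ``\textit{ETE} of a point set'' in the lemma statement as its center of gravity, consistent with the problem definition.)
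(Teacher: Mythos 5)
Your overall strategy---a soft continuity/openness argument producing full-dimensional boxes of admissible perturbations---is genuinely different from the paper's proof, which is an explicit computation: the paper perturbs \emph{every} robot by the same vector $(\varepsilon,\varepsilon')$, so that $g_x'=g_x+\varepsilon$ while the scaled term $d(i)\,(x_i'-g_x')=d(i)\,(x_i-g_x)$ is exactly unchanged; the argument of the floor therefore increases by exactly $\varepsilon$, and choosing $0<\varepsilon$ below the slack $A_i+1-\bigl(g_x+d(i)(x_i-g_x)\bigr)$ (strictly, below its minimum over $i$) keeps every floor equal to $A_i$, and likewise in $y$. In the generic case where no $u_i=g_x+d(i)(x_i-g_x)$ and no $v_i$ is an integer, your steps (a), (c), (d), (e) are sound---each $u_i'$ is an affine function of the perturbation vector, so a small product box works---and they even yield a larger ($2n$-parameter) family than the paper's two-parameter one.

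The gap is your step (b), exactly the point you flag. A ``tiny generic perturbation'' does not do what you need: at a boundary index ($u_i\in\mathbb{Z}$) a generic perturbation can push $u_i'$ strictly below the integer, changing the floor; what is required is a single perturbation direction along which every boundary $u_i'$ (and $v_i'$) strictly increases while no other value crosses its upper integer, and since each $u_i'$ depends on \emph{all} coordinates through $g'$, the existence of such a common direction is not automatic and is not argued. It does exist, and exhibiting it is precisely the paper's device: the uniform translation, under which $u_i'-u_i=(1-d(i))\,\bar w+d(i)\,w_i$ evaluates to $1$ at $w=(1,\dots,1)$ irrespective of $d(i)$, so all values move up together. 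Note also that, as written, your step (e) conflicts with step (b): if some $u_i$ is an integer, no full box centered at the \emph{original} configuration can work, so the box must be taken around the shifted configuration and membership $(x_i,y_i)\in Q_i$ recovered by adjoining the original point (or by using the one-sided uniform-shift family, as the paper does, for which the boundary case is harmless because the shift only increases each $u_i$). With that single observation supplied, your argument closes and in fact subsumes the paper's.
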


\begin{proof}
    Given $(x_i, y_i) \in P$ 
    \[A_i = f(x_i, g_x) = \lfloor g_x + d(i)\cdot(x_i - g_x) \rfloor\]
    and
    \[B_i = f(y_i, g_y) = \lfloor g_y + d(i)\cdot(y_i - g_y) \rfloor\]
    that is :
    \[d(i)\cdot(x_i - g_x) = A_i + 1 - \delta \quad (0 < \delta \leq 1)\]
    \[d(i)\cdot(y_i - g_y) = B_i + 1 - \delta' \quad (0 < \delta' \leq 1)\]
    Consider the points $(x_i', y_i') \quad (1 \leq i \leq n)$ with $x_i' = x_i + \varepsilon \quad (0 < \varepsilon < \delta)$ and, $y_i' = y_i + \varepsilon' \quad (0 < \varepsilon' < \delta')$. We now show that $f(x_i, g_x) = f(x', g_x')$ and, $f(y_i, g_y) = f(y', g_y')$.
    
    \noindent Let us first compute $g_x'$ :
    \[ g_x' = \frac{1}{n} \sum_{i=1}^n (g_x + \varepsilon) = g_x + \varepsilon\] 
    Then
    {\small
    \[f(x_i', g_x') = \left\lfloor (g_x + \varepsilon) + d(i)\cdot \big((x_i + \varepsilon) - (g_x + \varepsilon)\big) \right\rfloor = \left\lfloor d(i)\cdot(x_i - g_x) + (g_x +\varepsilon)\right\rfloor\]
    }
    Since, by definition, $d(i)\cdot(x_i - g_x) + g_x = A_i + 1 - \delta$ and $\varepsilon < \delta$, 
    $d(i)\cdot(x_i - g_x) + (g_x + \varepsilon) < A_i + 1$.
    
    \noindent It follows that:\[ A_i \leq d(i)\cdot(x_i - g_x) + (g_x + \varepsilon) < A_i + 1\]
    and thus\[f(x_i', g_x') = A_i = f(x_i, g_x)\]
    With a similar reasoning, it follows that \[f(y_i', g_y') = B_i = f(y_i, g_y)\]
    Since this holds for an infinite number of values of $\varepsilon$ and $\varepsilon'$ the lemma follows.
\end{proof}

\begin{lemma}
    Let $n \geq 3$. $\textit{ETE} \notin \LU^{F'}$, where $F'$is any synchronous scheduler such that the first activation does not contain all robots.
    \label{lem:ETEnotinluF}
\end{lemma}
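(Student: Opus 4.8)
The plan is to argue by contradiction along the standard line used for impossibility results under semi-synchronous adversaries with non-total first activations, leveraging Lemma~\ref{lem:3} to defeat the floor function that \textit{ETE} relies on. Suppose some algorithm $\mathcal{A}$ solves \textit{ETE} in $\LU^{F'}$, where the first round $F'$ activates a proper non-empty subset $S \subsetneq R$ of the robots (possibly also missing the central robot). The key observation is that each target pattern $C_i$ is obtained from $C_{i-1}$ by the map $f(\cdot,\cdot)$, which takes a floor; Lemma~\ref{lem:3} says that from the exact configuration $C_{i-1}$ there is a whole infinite family $Q_i$ of perturbed point sets that are mapped by $f$ to the \emph{same} integer-coordinate target. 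Since robots are disoriented (variable disorientation), the adversary has the freedom, when presenting snapshots, to make a robot believe it is at a perturbed position; the point is that the deterministic output of $\mathcal{A}$ on the true snapshot and on a perturbed-but-$f$-equivalent snapshot must agree on the \emph{rounded} destination but the adversary controls where the robot actually lands within the tolerance.

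The core of the argument is the first round. Because $F'$ does not activate all robots, at least one robot $r^\ast \in R \setminus S$ does not move in round $1$, while the robots in $S$ compute and (if $\mathcal{A}$ is to make progress toward $C_1$) move. First I would handle the degenerate case: if $\mathcal{A}$ instructs the robots in $S$ not to move in round $1$ no matter what, then the configuration is unchanged, and the adversary can keep activating exactly the same proper subset forever; since the robots are oblivious-up-to-light and their lights carry only $O(1)$ information, a fairness-respecting schedule that still never advances the configuration past $C_0$ can be constructed, contradicting that $C_1,\dots,C_k$ are ever formed with the required timing $t_0 < t_1 < \cdots$. So we may assume some robot in $S$ moves. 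Now the adversary uses Lemma~\ref{lem:3}: it chooses the local coordinate systems of the robots in $S$ so that their snapshots correspond to a perturbed configuration $P' \in \prod Q_i$ rather than to the actual $C_0$; by the lemma, the intended rounded destinations are unchanged, but the true geometric positions reached after $\Move$ form a configuration that is \emph{not} the configuration the algorithm "thinks" it produced. In particular the center of gravity has shifted by the same $\varepsilon$ in a direction the adversary picks, and — this is the crux — since $r^\ast$ did not move, the resulting placement is no longer a scaled copy of $C_0$ about \emph{any} common center, so it is neither $C_0$ nor a valid $C_i$; yet because the non-invertibility of $d$ and the floor make the target set $C_i$ coincide for the perturbed family, the algorithm cannot locally distinguish which $C_i$ (if any) it is supposed to be heading toward on the next activation.

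From there the contradiction is packaged as follows. Either the configuration after round $1$ already fails to be any legal pattern in the \textit{ETE} sequence (impossible, since a correct algorithm must pass through exactly the $C_i$'s at times $t_i$ and nothing is specified for intermediate times except being en route), or, if one insists intermediate configurations are permitted, I would iterate: the adversary repeatedly activates proper subsets and repeatedly applies the Lemma~\ref{lem:3} perturbation so that the center of gravity drifts without bound (each step adds a $\varepsilon$ the adversary may take, say, all of the same sign), so the robots never settle into the required circle-with-center pattern $C_k$, or they pass through it at the wrong time, or the "non-invertible $d$" prevents the robots from ever agreeing on which index $i$ they are realizing. In all branches $\mathcal{A}$ fails, so no such algorithm exists, giving $\textit{ETE} \notin \LU^{F'}$.

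The main obstacle I anticipate is the bookkeeping of \emph{which} configurations are declared valid at non-special times: \textit{ETE} as defined pins down configurations only at $t_0 < \cdots < t_k$, so the impossibility hinges on showing the adversary can force the $t_i$-configurations themselves to be wrong (or mistimed), not merely the transients. Making the Lemma~\ref{lem:3} perturbation compound cleanly across rounds — keeping the perturbed family nonempty after the center of gravity has already drifted, and ensuring the robot $r^\ast$ that stayed put genuinely breaks the scaling relation rather than accidentally landing back on a valid pattern — is where the argument needs the most care, and it is exactly where the hypothesis "first activation does not contain all robots" is used: in $\LU^F$ (true full synchrony) the robots in $S = R$ all move together and the configuration can remain an exact scaled copy, which is why the result is stated only for the strictly weaker $F'$.
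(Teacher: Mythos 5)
Your proposal does not match the mechanism the paper actually uses, and the mechanism you substitute has a genuine gap. The paper's proof is a clean two-scenario indistinguishability argument: it fixes two \emph{legitimate, physically distinct} initial configurations on a line (scenario A with robots at integer abscissae $c_1,\dots,c_n$ summing to zero, and scenario B in which $r_1,\dots,r_{n-1}$ are shifted by a small $\varepsilon$ while $r_n$ is not, so that the two centers of gravity differ). The scheduler activates only $r_1,\dots,r_k$ ($k<n$) in the first round; by Lemma~\ref{lem:3} the floor in $f$ absorbs the $\varepsilon$-shift, so the activated robots land at the \emph{same} points in both scenarios, and the configuration (including colors) observed by the still-unmoved robot $r_n$ at the next activation is identical in A and B. But the correct destination of $r_n$ depends on the original center of gravity, which differs between A and B, so the deterministic algorithm must fail in one of the two scenarios. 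Note that disorientation plays no role here, and the hypothesis ``the first activation does not contain all robots'' is used exactly to have a robot whose required output still depends on information (the true CoG) that the floor has already destroyed.

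Your argument instead has the adversary use variable disorientation to make the \emph{activated} robots perceive a perturbed configuration and then claims ``the adversary controls where the robot actually lands within the tolerance,'' so that the true positions differ from what the algorithm ``thinks'' it produced. That adversary power does not exist in this model: movements are rigid, a robot lands exactly at the destination it computed, and the destination computed from a snapshot expressed in a (rotated/scaled) local frame is mapped back to the plane through that same frame, so perceived and actual geometry cannot be decoupled the way you need. Moreover, the Lemma~\ref{lem:3} perturbation you invoke shifts \emph{all} points uniformly, which is invisible in relative coordinates and does not by itself create two scenarios demanding different behavior; the paper's trick is precisely to perturb only a proper subset of the robots so that the stationary robot's correct destination changes. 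Finally, your fallback branches (the post-round-1 placement is ``not a valid $C_i$,'' or the CoG drifts unboundedly) do not yield a contradiction with the specification, since \textit{ETE} constrains configurations only at the times $t_0<\dots<t_k$ — an obstacle you yourself flag but do not overcome. Without the explicit A-versus-B construction and the focus on the non-activated robot's destination, the contradiction is not established.
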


\begin{proof}
    Let $k$ ($1 \leq k < n$) be the number of activated robots in the first round. By contradiction, suppose that the problem is solvable from an arbitrary initial configuration. Consider two scenarios with $R = \{r_i \mid 1 \leq i \leq n\}$.

    \noindent In scenario A, the global coordinates of the robots are given by $r_i = (c_i, 0)$ for $1 \leq i \leq n$, where $c_i$ ($1 \leq i \leq n-1$) are distinct negative integers, $c_n$ is a natural number, and
    \[
    \sum_{i=1}^{n} c_i = 0,
    \]
    thus the center of gravity is at $(0, 0)$.
    
    \noindent In scenario B, the global coordinates of the robots are $r_i = (c_i + \varepsilon, 0)$ for $1 \leq i \leq n-1$, and $r_n = (c_n, 0)$, where $c_i$ ($1 \leq i \leq n-1$) are distinct negative integers, $c_n$ is a natural number, and
    \[
    \sum_{i=1}^{n} c_i = 0, \quad \text{with } 1 > \varepsilon > 0,
    \]
    so the center of gravity is at $\left( \frac{(n-1)\varepsilon}{n}, 0 \right)$.
    
    \noindent Consider an execution $\varepsilon$ where the scheduler activates $r_1, r_2, \dots, r_k$ at time $t_0$, with $1 \leq k \leq n-1$. In this execution, in scenario A, each $r_i$ ($1 \leq i \leq k$) moves to
    \[
    (f(x_i, g_x), f(y_i, g_y)) = (f(c_i, 0), f(0, 0)) = (d(i) \cdot c_i, 0),
    \]
    possibly changing color. The new configuration at time $t_0 + 1$ is
    \[
    r_i = (d(i) \cdot c_i, 0) \quad (1 \leq i \leq k), \quad r_i = (c_i, 0) \quad (k+1 \leq i \leq n-1), \quad r_n = (c_n, 0).
    \]
    
    \noindent Observe that the same configuration would result from the same execution in scenario B, where the center of gravity is $\left( \frac{(n-1)\varepsilon}{n}, 0 \right)$. Each $r_i$ ($1 \leq i \leq k$) moves to
    \[
    \left(f\left(c_i + \varepsilon, \frac{(n-1)\varepsilon}{n}\right), f(0, 0)\right)
    = \left( \left\lfloor \frac{(n-1)\varepsilon}{n} + d(i) \cdot \left(c_i + \left(1 - \frac{n-1}{n}\right)\varepsilon \right) \right\rfloor, 0 \right),
    \]
    assuming $\frac{(n-1)\varepsilon}{n} < 1$.
    
    \noindent Thus, the new configuration at time $t_0 + 1$ is:
    \[
    r_i = \left( \left\lfloor \frac{(n-1)\varepsilon}{n} + d(i) \cdot \left(c_i + \left(1 - \frac{n-1}{n}\right)\varepsilon \right) \right\rfloor, 0 \right) \quad (1 \leq i \leq k),
    \]
    \[
    r_i = (c_i + \varepsilon, 0) \quad (k+1 \leq i \leq n-1), \quad r_n = (c_n, 0).
    \]
    
    \noindent At time $t_0 + 1$, let the scheduler activate $R_{t_0+1}$ (possibly all of $R$), and let $r \in R$. Robot $r_n$ may be able to determine that $r_i$ ($1 \leq i \leq k$) have reached their destinations by observing their colors; i.e., it may know their positions were computed using the target function $f$. However, by \textbf{Lemma~\ref{lem:3}}, the inverse $f^{-1}$ corresponds to an infinite set of possible origins, which includes both $(c_i, 0)$ and $(c_i + \varepsilon, 0)$ for $1 \leq i \leq k$.
    
    \noindent In scenario A, the correct destination for $r_n$ would be:
    \[
    (f(c_n, 0), f(0, 0)) = (d(i) \cdot c_n, 0).
    \]
    
    \noindent In scenario B, to solve the problem, $r_n$ should instead move to:
    \[
    \left(f\left(c_n + \varepsilon, \frac{(n-1)\varepsilon}{n}\right), f(0, 0)\right)
    = \left( \left\lfloor \frac{(n-1)\varepsilon}{n} + d(i) \cdot \left(c_n + \left(1 - \frac{n-1}{n}\right)\varepsilon \right) \right\rfloor, 0 \right).
    \]
    
    \noindent Since robot $r_n$ observes the same configuration at time $t_0 + 1$ in both scenarios, it cannot distinguish between them. This leads to a contradiction.
\end{proof}

\begin{lemma}
    $\textit{ETE} \notin \FS^{F}$.
    \label{lem:2}
\end{lemma}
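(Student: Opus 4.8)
The plan is to prove $\textit{ETE} \notin \FS^F$ by exploiting the crucial weakness of $\FS$ robots: they cannot observe the colors (internal states) of other robots, so the only information flowing between robots is purely geometric. Since the scheduler is $\FSY$, all $n$ robots are activated in every round, so we cannot reuse Lemma~\ref{lem:ETEnotinluF} directly (that lemma required the first activation to miss a robot). Instead I would set up two indistinguishable scenarios that differ only in data that is invisible to the robots under $\FS^F$.

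First I would recall that in $\FS^F$, at time $t_0$ every robot $r$ looks and sees only the geometric placement of the others together with its \emph{own} color. Starting from the valid initial configuration $C_0$ (the circle with a center robot), all robots have some fixed initial color, so the computation in the first round depends only on the geometric placement. Here the key is the floor function in $f(a,b)=\lfloor b + d(i)\cdot(a-b)\rfloor$: by Lemma~\ref{lem:3}, there is an infinite family of placements $Q$, all arising from tiny perturbations $\varepsilon,\varepsilon'$ of the coordinates, that are mapped by $f$ to exactly the \emph{same} integer target under the \emph{same} value of $d(i)$. So I would build Scenario~A with placement $P$ and Scenario~B with a perturbed placement $P'\in Q$ (chosen, as in the proof of Lemma~\ref{lem:ETEnotinluF}, so that the center of gravity shifts by a controlled amount), such that applying $f$ with index $i=1$ yields the identical integer configuration $C_1$ in both scenarios. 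Because the robots are disoriented and have chirality only, an adversary can also align the local coordinate systems so that the snapshot at $C_0$ looks the same; more simply, since $C_0$ in the two scenarios is related by a small translation/scaling the adversary fixes each robot's frame so the Look at $t_0$ is byte-for-byte identical, hence the colors set at the end of round $1$ are identical in A and B.

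Next, at time $t_1$ the configuration is the integer pattern $C_1$, which by construction is identical in both scenarios, and every robot's color is also identical. So the Look at $t_1$ returns the same colored snapshot in both scenarios. But to correctly produce $C_2$ (scaling from the \emph{current} center of gravity by $d(2)$), a robot would need to know the true real-valued positions — equivalently it would need to invert $f$ to recover whether the underlying placement was $P$ or $P'$, which by Lemma~\ref{lem:3} is impossible since both lie in the same fiber. I would make this bite by choosing the perturbation so that the correct $C_2$ differs between the two scenarios: since the center of gravity of $R$ is $(0,0)$ in A but $\bigl(\tfrac{(n-1)\varepsilon}{n},0\bigr)$ in B, and $d(2)\neq d(1)$ (or more carefully, $d(2)(c_i - g_x) + g_x$ lands strictly inside different unit intervals for the two $g_x$ values), the integer target $f$ computed in round $2$ is genuinely different in A and B. Since the robots cannot distinguish the two histories — identical initial Look, identical colors, identical intermediate configuration $C_1$ — they must make the same move in round $2$, which is correct in at most one scenario. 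This contradicts solvability, so $\textit{ETE}\notin\FS^F$.

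The main obstacle I anticipate is making the round-2 discrepancy airtight: I must pick the integer coordinates $c_i$, the natural number $c_n$, and the perturbations $\varepsilon,\varepsilon'$ so that \emph{simultaneously} (i) the round-1 images under $f$ coincide exactly in A and B (a floor-alignment condition, handled by Lemma~\ref{lem:3}'s $\varepsilon<\delta$ bound), and (ii) the round-2 images under $f$, computed with the \emph{shifted} centers of gravity, genuinely differ for at least one robot. Condition (ii) requires checking that $g_x + d(2)(c_i - g_x)$ and $(g_x+\varepsilon) + d(2)(c_i - (g_x+\varepsilon)) = d(2)(c_i-g_x) + g_x + \varepsilon(1-d(2))$ straddle an integer, i.e.\ that the $\varepsilon(1-d(2))$ offset crosses a lattice point — which I can force by an appropriate initial choice of the $c_i$ relative to the fractional part of $d(2)\cdot c_i$, exactly as the $\delta$-argument in Lemma~\ref{lem:3} was used in reverse. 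The rest (disorientation allowing identical snapshots, colors being geometry-determined in $\FS$) is routine once this arithmetic lemma is pinned down.
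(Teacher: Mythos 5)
Your construction breaks down at the step where you claim the round-2 requirements differ between scenarios A and B. In \textit{ETE}, the pattern $C_i$ is obtained by scaling the \emph{current} configuration about the \emph{current} center of gravity by $d(i)$ (and flooring); nothing in the specification refers back to the initial real-valued positions once $C_1$ has been formed. So once your two scenarios collapse onto the same integer configuration at time $t_1$, the correct continuation from $t_1$ onward is identical in both: the center of gravity used at step 2 is that of $C_1$, which is directly observable and common to A and B, not the initial $(0,0)$ versus $\bigl(\tfrac{(n-1)\varepsilon}{n},0\bigr)$ that your "straddle an integer" condition is built on. This is precisely why the paper confines the perturbation/floor argument of Lemma~\ref{lem:ETEnotinluF} to schedulers whose first activation misses some robot: only a robot that has \emph{not yet} moved still needs the pre-floor data (the true initial center of gravity) that Lemma~\ref{lem:3} shows is destroyed. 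Under \FSY\ every robot takes its snapshot of the exact configuration before each flooring, so the non-invertibility of $f$ never bites. (There is also a secondary problem: the time-$t_1$ configurations in your scenarios are not literally identical — the unperturbed robot, or the central robot, sits at different positions relative to the common integer points — and variable disorientation cannot make two non-similar point sets produce identical snapshots, so even the indistinguishability at $t_1$ would need repair.)

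More fundamentally, no argument that reaches a contradiction within the first constantly many rounds can prove $\textit{ETE}\notin\FS^F$: an $\FS$ robot can record the first few step indices in its own light, and under \FSY\ all the geometry needed for each step is visible in the current snapshot, so the first $O(1)$ expansions pose no difficulty. The actual obstruction — and the route the paper takes — is a memory/pigeonhole argument over the whole execution: the robots must pass through $k=2^{n-1}$ stages whose only distinguishing datum is the index $i$ (the adversary can normalize every observed distance to $1$ by variable disorientation, and $d$ is non-invertible, so the geometry cannot encode $i$), while an $\FS$ robot has only a constant number of colors and cannot see the lights of others, hence cannot maintain — even in distributed form, as the $\FC^F$ algorithm of Lemma~\ref{lem:4} does — the $(n-1)$-bit counter needed to select $d(i)$. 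Your proposal would need to be reworked along these lines rather than as a two-scenario perturbation at the start of the execution.
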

\begin{proof}
    To reach a contradiction, assume that there exists an algorithm $\calA$ that solves the \textit{ETE} problem. The robots following this algorithm $\calA$ must form the sequence of configuration patterns $C_0, C_1, \dots, C_{k}$. Assuming \textit{Variable Disorientation}, the adversary can force the observed distances in the snapshot taken during each robot’s $\LK$ operation to always appear as 1. In this case, there will exist configurations $C_i = C_j$ for some $i \neq j$. Thus, there will be at least two different ways to proceed to the next configuration, but the robots have no means to distinguish between them. Therefore, a number of colors equal to the number of configuration patterns would be required. However, to distinguish other robots and recognize $d(i)$, a logarithmic amount of memory is necessary. There is no way to make a decision using only a constant amount of memory. Since robots can only use a constant number of colors, no robot $r$ can recognize $d(i)$ using only its own light. As a result, the robots cannot move to the positions $(f(x, g_x), f(y, g_y))$, defined by $f(a, b) = \left\lfloor b + d(i) \cdot (a - b) \right\rfloor$, and cannot form the configuration patterns $C_0, C_1, \dots, C_{k}$. This contradicts our initial assumption.
\end{proof}

\begin{lemma}
    For $n \geq 3$, $\textit{ETE} \in \FC^{F}$.
    \label{lem:4}
\end{lemma}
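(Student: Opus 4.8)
The plan is to give an explicit fully‑synchronous algorithm; it is cleanest to present it in $\LU^F$, which equals $\FC^F$ by Theorem~\ref{th:equality}(1), so this suffices. The guiding observation is that, although every light ranges over a fixed constant‑size palette, the $n$ robots jointly expose $\Theta(n)$ bits of state that are visible to everyone — exactly the resource needed to name the $2^{n-1}+1$ patterns $C_0,\dots,C_k$ with $k=2^{n-1}$, and exactly what $\FS^F$ cannot provide (Lemma~\ref{lem:2}), where a robot sees only its own $O(1)$‑size light.

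First I would fix the roles and a canonical labelling that is stable along the whole run. In $C_0$ one robot occupies the centre of the circle spanned by the other $n-1$; call it $r_0$ and the rest the \emph{ring robots}. The circle of $C_0$ carries a symmetry‑breaking feature — the distinguished ``missing slot'' on the regular circle, exactly as in the pattern $C$ of \textit{CYC} — so, using it together with chirality, order the ring robots $s_1,\dots,s_{n-1}$ clockwise starting from the gap. Every prescribed step sends each ring robot $r$ from its current position $(x,y)$ to $(\lfloor g_x+d(i)(x-g_x)\rfloor,\lfloor g_y+d(i)(y-g_y)\rfloor)$, i.e.\ a scaling about the common centre of gravity $g$ of the ring robots followed by coordinatewise flooring; hence $r_0$ never moves and stays at the centre, the ring robots stay near a scaled copy of the original circle keeping the same gap, and $r_0$, the ring, and the order $s_1,\dots,s_{n-1}$ can be re‑derived from every reachable configuration. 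I then encode a counter $i\in\{0,\dots,k{+}1\}$ in binary over the lights: $s_j$ stores bit $j-1$ of $i$ for $1\le j\le n-1$ and $r_0$ stores the top bit, giving $n$ bits, which suffices since $k+1\le 2^n$; with the standard convention that lights start in a designated colour read as $0$, the counter starts at $0$ (one extra round handles the general case).

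The main loop is then short, and this is where full synchrony is indispensable: in every round all robots act, so the whole step is atomic and no robot ever observes a half‑completed round of scalings (cf.\ the infinite ambiguity of preimages in Lemma~\ref{lem:3}) or a half‑completed increment. In the round with counter value $i$, each robot takes its snapshot, identifies $r_0$ and the $s_j$ geometrically, decodes $i$ from the colours, and evaluates the algorithm‑known function $d$ at $i$. If $i=0$ nobody moves ($C_0$ is already present); if $1\le i\le k$, each ring robot $s_j$ computes $g$ from the visible ring positions and moves to its floored scaled image, thereby forming $C_i$; if $i=k{+}1$ every robot halts, yielding a terminal configuration. Simultaneously (when $i\le k$), each $s_j$ overwrites its colour with bit $j-1$ of $i+1$ and $r_0$ overwrites its colour with the top bit of $i+1$, both computed locally from the known value $i$. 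The rounds thus form $C_0,C_1,\dots,C_k$ at successive times $t_0<t_1<\dots<t_k$, and correctness of each pattern is immediate from the definition of \textit{ETE}.

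The step I expect to take the most care is the geometric bookkeeping, not the loop: proving that after flooring the ring robots stay unambiguously distinguishable from $r_0$ and keep the same canonical order $s_1,\dots,s_{n-1}$ — so that the per‑robot bit assignment is invariant for all $k$ rounds — in particular handling rounds where $d(i)$ is small (the ring nearly collapses) or where flooring drops two ring robots onto one point, since the model forbids multiplicity detection and the labelling must not rely on it; should such degeneracies be impossible to exclude outright, one re‑derives the labelling from the colours (i.e.\ from the counter) rather than from positions. The remaining pieces — computing $g$, the local binary increment, and the halting test — are routine, and none of the adversarial scheduling phenomena used in Lemmas~\ref{lem:ETEnotinluF} and~\ref{lem:2} can arise under \FSY.
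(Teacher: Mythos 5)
Your proposal reaches the right conclusion but by a genuinely different route from the paper. You prove $\textit{ETE}\in\LU^{F}$ with a single global binary counter distributed one bit per robot (decoded by everyone each round, incremented locally, all robots moving in the same round), and then transfer the result to $\FC^{F}$ via Theorem~\ref{th:equality}(1). The paper instead builds the algorithm \emph{directly} in $\FC^{F}$, and the substance of its proof is precisely the mechanism you bypass: since an $\FC$ robot cannot see its own light, each robot's counter bit is mirrored in a neighbour's light ($suc_b$, read from the predecessor), and the increment is carried out as a sequential full-adder sweep coordinated by $initial$/$final$ status lights over several rounds, with the scaling move executed only once all robots are in the $final$ state. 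Your reduction buys a much simpler algorithm and correctness argument; what it gives up is self-containedness, because it leans on the nontrivial direction $\FC^{F}\ge\LU^{F}$ of the cited equivalence as a black box — and since \textit{ETE} constrains trajectories (robots must move ``directly'' to the scaled positions, with no auxiliary motions between patterns), one should at least remark that the simulation underlying that equivalence does not introduce extra movements; the paper's direct construction avoids this issue entirely.

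Two further caveats. First, your fallback for maintaining the canonical labelling — ``re-derive the labelling from the colours rather than from positions'' — is circular as stated: the colours are single bits of the counter, so they do not identify robots, and decoding the counter already presupposes knowing which robot carries which bit. If the geometric identification (centre robot, gap, clockwise order) ever became ambiguous, the colour information could not restore it. That said, this geometric bookkeeping is glossed over in the paper's own proof as well (it simply assumes the indices $r_0,\dots,r_{n-1}$ and the pred/suc relation remain well defined), so it is not a gap relative to the paper, and you at least flag it explicitly. Second, your counter uses $n$ bits including one held by $r_0$, while the paper uses the $n-1$ ring bits; this is immaterial, but if you keep $r_0$ as a bit-holder you must make sure its role as the geometric reference (the non-moving centre) is still what identifies it, not its colour.
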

\begin{proof}
    Each robot $r_i$ has 4 lights, $r_i.\text{status}$, $r_i.b$, $r_i.c$, $r_i.suc_b$. Let
    \[Status(t) = \left( r_0.status(t), \dots, r_{n-1}.status(t) \right)\]
    \[B_{value}(t) = \left( r_0.b(t), r_1.b(t), \dots, r_{n-1}.b(t) \right)\]
    \[B_{carry}(t) = \left( r_0.c(t), r_1.c(t), \dots, r_{n-1}.c(t) \right)\]
    \[B_{suc}(t) = \left( r_0.suc_b(t), r_1.suc_b(t), \dots, r_{n-1}.suc_b(t) \right)\]
    The configuration $K(t)$ at time $t$ is defined by
    \[K(t) = \left( X(t), Status(t), B_{value}(t), B_{carry}(t), B_{suc}(t) \right)\]
    Here, we define the following elements:
    \begin{itemize}
        \item $X(t)$: the set of positions of the $n$ robots at time $t$.
        \item $Status(t)$: distinguishes between $C$, $C_i$, and $C_{i+1}$.
        \item $B_{value}(t)$, $B_{carry}(t)$: encode the binary representation of $i$.
        \item $B_{suc}(t)$: a copy of the light $b$ from the succeeding robot on the circle.
    \end{itemize}
    Each robot can derive its own value of $b$ by observing the light $suc_b$ of its predecessor.
    At the initial state, the system is defined as follows:
    \[Status(0) = (\textit{initial}, \dots, \textit{initial})\]
    \[B_{value}(0) = (0, 0, \dots, 0)\]
    \[B_{carry}(0) = (0, 0, \dots, 0)\]
    \[B_{suc}(0) = (0, 0, \dots, 0)\]
    \textbf{Algorithm 1} operates as follows:
    \begin{itemize}
        \item In the case of robot $r_0$:
        \begin{itemize}
            \item If all other robots and the neighboring robot's $suc$ are in the $final$ state:\\
            Set $r_0.status$ to $initial$.
            
            \item Otherwise:\\
            $r_0$ prepares the next configuration. Specifically, it sets $(r_0.b, r_0.c, r_0.suc_b)$ to $(0, 1, r_1.b)$. Here, setting $r_0.c = 1$ prepares the increment of the binary value stored in $B_{value}(t)[1 \dots n-1]$. Finally, set $r_0.status$ to $final$.
        \end{itemize}
    
        \item In the case of robot $r_i \ (i \neq 0)$:
        \begin{itemize}
            \item If all other robots and the neighboring robot's $suc$ are in the $final$ state:\\
            Move to the "final position" calculated from $B_{value}(t)[1 \dots n-1]$. Since rigidity is not assumed, $r_i$ continues moving until it reaches its final position. Once the destination is reached, the pattern $C_p$ is formed, where $p$ is the integer represented by $B_{value}(t)[1 \dots n-1]$. Then, set $r_i.status$ to $initial$.
    
            \item If all preceding robots $r_0$ through $r_{i-1}$ are in the $initial$ state, and all succeeding robots $r_{i+1}$ through $r_{n-1}$ are in the $final$ state:\\
            A binary increment is performed sequentially, similar to a full-adder operation. For example, $r_1$ computes its least significant bit $r_1.b$ and the carry to the next digit $r_1.c$ using $r_0.c$ and $r_0.suc_b$ (i.e., $r_1.b$), and then sets $r_1.status$ to $final$. The status configuration becomes $Status(0) = (r_0.status=final, r_1.status=final, r_2.status=initial, \dots, r_{n-1}.status=initial)$, and as the computation progresses, each $r_i.status$ is sequentially updated.
        \end{itemize}
    \end{itemize}
    It is easy to verify that this algorithm works correctly in the \FSY\ scheduler.
\end{proof}

\begin{algorithm}
    \caption{Alg \textit{ETE} for robot $r$}
    \label{alg:ETE_FCOM_F}
    \textbf{Assumptions:} $\FC$, \FSY \\
    \textbf{Light:} $r.status \in \{initial, final\}$, initially $initial$ \\
    
    $Phase Look:$ Each robot $r_i$ observes the positions and lights of other robots.  Note that due to $\FC$, $r_i$ cannot observe its own light.\\
    
    $Phase Compute:$
    \begin{algorithmic}[1]
        \State $r.suc.status \gets suc(r).status$
        \If {$i = 0$}
            \If {$\forall \rho \neq r, \rho.status = initial$ and $\rho.suc.status = initial$}
                \State $r.status \gets final$
                \State$r.des \gets$ point obtained by scaling the distance from center of gravity by $d((r_{n-1}.b), (r_{n-2}.b), \dots, (r_1.b))$ and taking the floor.
            \Else
                \State $r_0.status \gets initial$
                \State $r_0.b \gets 0$
                \State $r_0.c \gets 1$
                \State $r_0.suc_b \gets r_1.b$
            \EndIf
        \Else
        \If {$\forall \rho \neq r, \rho.status = initial$ and $\rho.suc.status = initial$}
            \State $r.status \gets final$
            \State$r.des \gets$ point obtained by scaling the distance from center of gravity by $d((r_{n-1}.b), (r_{n-2}.b), \dots, (r_1.b))$ and taking the floor.
            \ElsIf {$\forall (0 \leq j \leq i-1) : (r_j.status = initial)$ and $\forall (i+1 \leq j \leq n-1) : (r_j.status = final)$}
                \State $r_i.b \gets pred(r_i).c \oplus pred(r_i).suc_b$
                \State $r_i.c \gets pred(r_i).c \cdot pred(r_i).suc_b$
                \State $r_i.suc_b \gets pred(r_i).b$
                \State $r_i.status \gets initial$
            \EndIf
        \EndIf
    \end{algorithmic}

    \vspace{1em}
    $Phase Move:$\\
    Move to $r_i.des$.
\end{algorithm}

\clearpage
\begin{figure}[H]
    \centering
    \includegraphics[width=12cm]{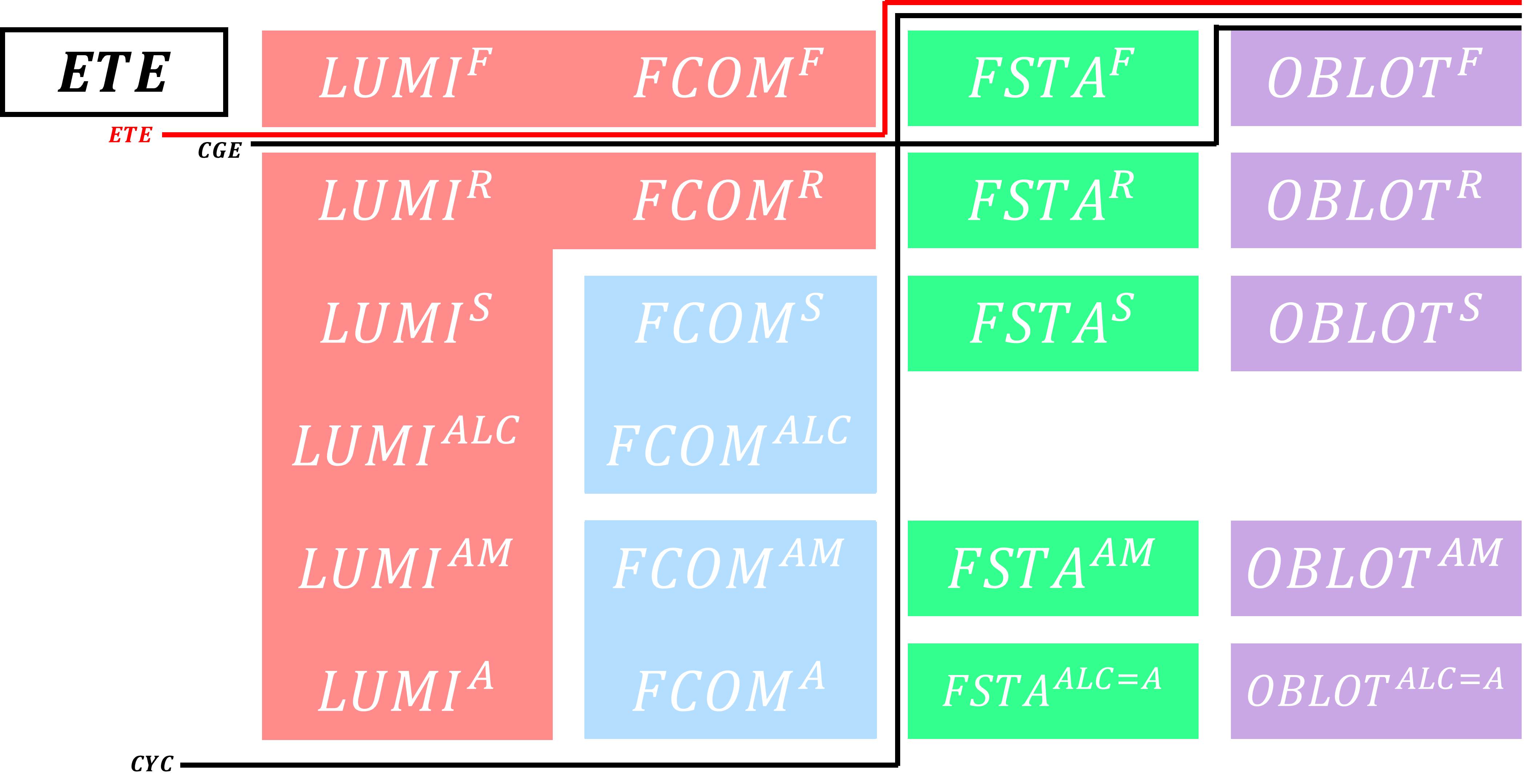}
    \caption{Boundary of ETE}
    \label{fig:ETE_boundary}
\end{figure}
\begin{theorem}
    \textit{ETE} is solved in $\LU^F$, but not in any of $\FS^F$ or $\LU^R$.
    \label{th:etet}
\end{theorem}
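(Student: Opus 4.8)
The statement is a corollary of Lemmas~\ref{lem:ETEnotinluF}, \ref{lem:2}, and~\ref{lem:4} together with the equivalences of Theorem~\ref{th:equality}, so the plan is mostly to assemble these pieces, taking care of one scheduler subtlety. For the positive part, I would invoke Lemma~\ref{lem:4}: for $n \geq 3$, Algorithm~\ref{alg:ETE_FCOM_F} witnesses $\textit{ETE} \in \FC^F$; since $\LU^F \geq \FC^F$ holds by definition (indeed $\LU^F \equiv \FC^F$ by Theorem~\ref{th:equality}(1)), this yields $\textit{ETE} \in \LU^F$.

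For the impossibility in $\FS^F$ I would simply quote Lemma~\ref{lem:2}. For the impossibility in $\LU^R$, the point requiring care is that a restricted-synchronous execution may begin with a finite fully-synchronous prefix, so Lemma~\ref{lem:ETEnotinluF}, whose hypothesis is that the \emph{first} activation misses some robot, does not literally apply to the $\RSY$ scheduler. Instead I would route through the equivalence $\LU^R \equiv \LU^S$ from Theorem~\ref{th:equality}(2): since $n \geq 3 \geq 2$, there is an admissible (fair) semi-synchronous schedule whose first round activates a proper non-empty subset of the robots; this schedule is of the form $F'$ required by Lemma~\ref{lem:ETEnotinluF}, so no $\LU$ algorithm solves \textit{ETE} against it. Hence $\textit{ETE} \notin \LU^S$, and therefore $\textit{ETE} \notin \LU^R$.

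The verification is essentially bookkeeping; the only delicate step is the $\LU^R$ case, where one must pass through the equivalence $\LU^R \equiv \LU^S$ rather than apply Lemma~\ref{lem:ETEnotinluF} to $\RSY$ directly, and check that the ``first round misses a robot'' schedule is a legitimate semi-synchronous adversary (it is, as $\SSY$ only demands a non-empty activation each round plus fairness). Everything else follows immediately from the cited lemmas.
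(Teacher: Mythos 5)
Your proposal is correct and takes essentially the same route as the paper, which presents Theorem~\ref{th:etet} as a direct assembly of Lemma~\ref{lem:4} (with $\LU^F \equiv \FC^F$ from Theorem~\ref{th:equality}(1)), Lemma~\ref{lem:2}, and Lemma~\ref{lem:ETEnotinluF}. Your detour through $\LU^R \equiv \LU^S$ for the $\LU^R$ case is a harmless, slightly more careful way of handling the optional fully-synchronous prefix in \RSY; it could also be argued directly, since that prefix is optional and the \RSY\ adversary may start with a round missing a robot.
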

  
  \subsection{Scheduler-Capability Interplay (HET)}
  \label{arc:HET}
\begin{Definition}
    \textit{HET} (Hexagonal Edge Traversal)
        \\Two robots, $r$ and $q$, are initially placed on opposite vertices of a regular hexagon. 
        Each robot, when activated, must traverse exactly once to a vertex along an adjacent edge,
        reach a designated diagonal vertex, and then return to its initial position.
        
        Let the vertices of the hexagon be denoted by $V = \{v_0, v_1, v_2, v_3, v_4, v_5\}$.
        Suppose robot $r$ starts at vertex $v_i$ and robot $q$ starts at vertex $v_{(i+3) \bmod 6}$.
        
        There exist two strictly increasing infinite time sequences 
        \[
        0 \leq t_0 < t_1 < t_2 < \dots,\quad 0 \leq T_0 < T_1 < T_2 < \dots
        \]
        such that for all $j \geq 0$, the time indices satisfy the following cyclic condition:
        \[
        t_0 = t_{13j \bmod 13},\quad t_1 = t_{13j+1 \bmod 13},\quad \ldots, \quad
        T_0 = T_{13j \bmod 13},\quad T_1 = T_{13j+1 \bmod 13}, \quad \ldots
        \]
        
        The traversal operation must guarantee that each robot visits its designated diagonal vertex exactly once per cycle and then returns to its original vertex,
        without deviating from the predefined path.
        \clearpage
        \begin{align*}
            \textit{HET} \equiv 
            \forall r : \quad & (t_0 \leq \forall t \leq t_1 : r(t) = v_i) \land (t_2 \leq \forall t \leq t_3 : r(t) = v_{i-1 \bmod 6}) \\
            & \land (\exists t,t' : t_1 \leq t < t'\leq t_2 \rightarrow r(t),r(t') \in \overline{v_i v_{i-1 \bmod 6}}, r(t) \ne r(t'), \\
            & \text{dis}(r(t), v_i) < \text{dis}(r(t'), v_i), \text{dis}(r(t), v_{i-1 \bmod 6}) > \text{dis}(r(t'), v_{i-1 \bmod 6}))\\
            & \land (t_4 \leq \forall t \leq t_5 : r(t) = v_i) \\
            & \land (\exists t,t' : t_3 \leq t < t'\leq t_4 \rightarrow r(t),r(t') \in \overline{v_{i-1 \bmod 6} v_i}, r(t) \ne r(t')), \\
            & \text{dis}(r(t), v_i) > \text{dis}(r(t'), v_i), \text{dis}(r(t), v_{i-1 \bmod 6}) < \text{dis}(r(t'), v_{i-1 \bmod 6}))\\
            & \land (t_6 \leq \forall t \leq t_7 : r(t) = v_{i+1 \bmod 6}) \\
            & \land (\exists t,t' : t_5 \leq t < t'\leq t_6 \rightarrow r(t),r(t') \in \overline{v_i v_{i+1 \bmod 6}}, r(t) \ne r(t')), \\
            & \text{dis}(r(t), v_i) < \text{dis}(r(t'), v_i), \text{dis}(r(t), v_{i+1 \bmod 6}) > \text{dis}(r(t'), v_{i+1 \bmod 6}))\\
            & \land (t_8 \leq \forall t \leq t_9 : r(t) = v_i) \\
            & \land (\exists t,t' : t_7 \leq t < t'\leq t_8 \rightarrow r(t),r(t') \in \overline{v_{i+1 \bmod 6} v_i}, r(t) \ne r(t')),\\
            & \text{dis}(r(t), v_i) > \text{dis}(r(t'), v_i), \text{dis}(r(t), v_{i+1 \bmod 6}) < \text{dis}(r(t'), v_{i+1 \bmod 6}))\\
        \end{align*}
        \begin{align*}
            \textit{HET} \equiv 
            \forall q : \quad & (T_0 \leq \forall T \leq T_1 : q(T) = v_{i+3 \bmod 6}) \land (T_2 \leq \forall T \leq T_3 : q(T) = v_{i+2 \bmod 6}) \\
            & \land (\exists T,T' : T_1 \leq T < T'\leq T_2 \rightarrow q(T),q(T') \in \overline{v_{i+3 \bmod 6} v_{i+2 \bmod 6}}, q(T) \ne q(T'), \\
            & \text{dis}(q(T), v_{i+3 \bmod 6}) < \text{dis}(q(T'), v_{i+3 \bmod 6}), \text{dis}(q(T), v_{i+2 \bmod 6}) > \text{dis}(q(T'), v_{i+2 \bmod 6}))\\
            & \land (T_4 \leq \forall T \leq T_5 : q(T) = v_{i+3 \bmod 6}) \\
            & \land (\exists T,T' : T_3 \leq T < T'\leq T_4 \rightarrow q(T),q(T') \in \overline{v_{i+2 \bmod 6} v_{i+3 \bmod 6}}, q(T) \ne q(T')), \\
            & \text{dis}(q(T), v_{i+3 \bmod 6}) > \text{dis}(q(T'), v_{i+3 \bmod 6}), \text{dis}(q(T), v_{i+2 \bmod 6}) < \text{dis}(q(T'), v_{i+2 \bmod 6}))\\
            & \land (T_6 \leq \forall T \leq T_7 : q(T) = v_{i+4 \bmod 6}) \\
            & \land (\exists T,T' : T_5 \leq T < T'\leq T_6 \rightarrow q(T),q(T') \in \overline{v_{i+3 \bmod 6} v_{i+4 \bmod 6}}, q(T) \ne q(T')), \\
            & \text{dis}(q(T), v_{i+3 \bmod 6}) < \text{dis}(q(T'), v_{i+3 \bmod 6}), \text{dis}(q(T), v_{i+4 \bmod 6}) > \text{dis}(q(T'), v_{i+4 \bmod 6}))\\
            & \land (T_8 \leq \forall T \leq T_9 : q(T) = v_{i+3 \bmod 6}) \\
            & \land (\exists T,T' : T_7 \leq T < T'\leq T_8 \rightarrow q(T),q(T') \in \overline{v_{i+4 \bmod 6} v_{i+3 \bmod 6}}, q(T) \ne q(T')),\\
            & \text{dis}(q(T), v_{i+3 \bmod 6}) > \text{dis}(q(T'), v_{i+3 \bmod 6}), \text{dis}(q(T), v_{i+4 \bmod 6}) < \text{dis}(q(T'), v_{i+4 \bmod 6}))\\
        \end{align*}
\end{Definition}

\clearpage
\begin{figure}[H]
    \centering
    \includegraphics[width=100mm]{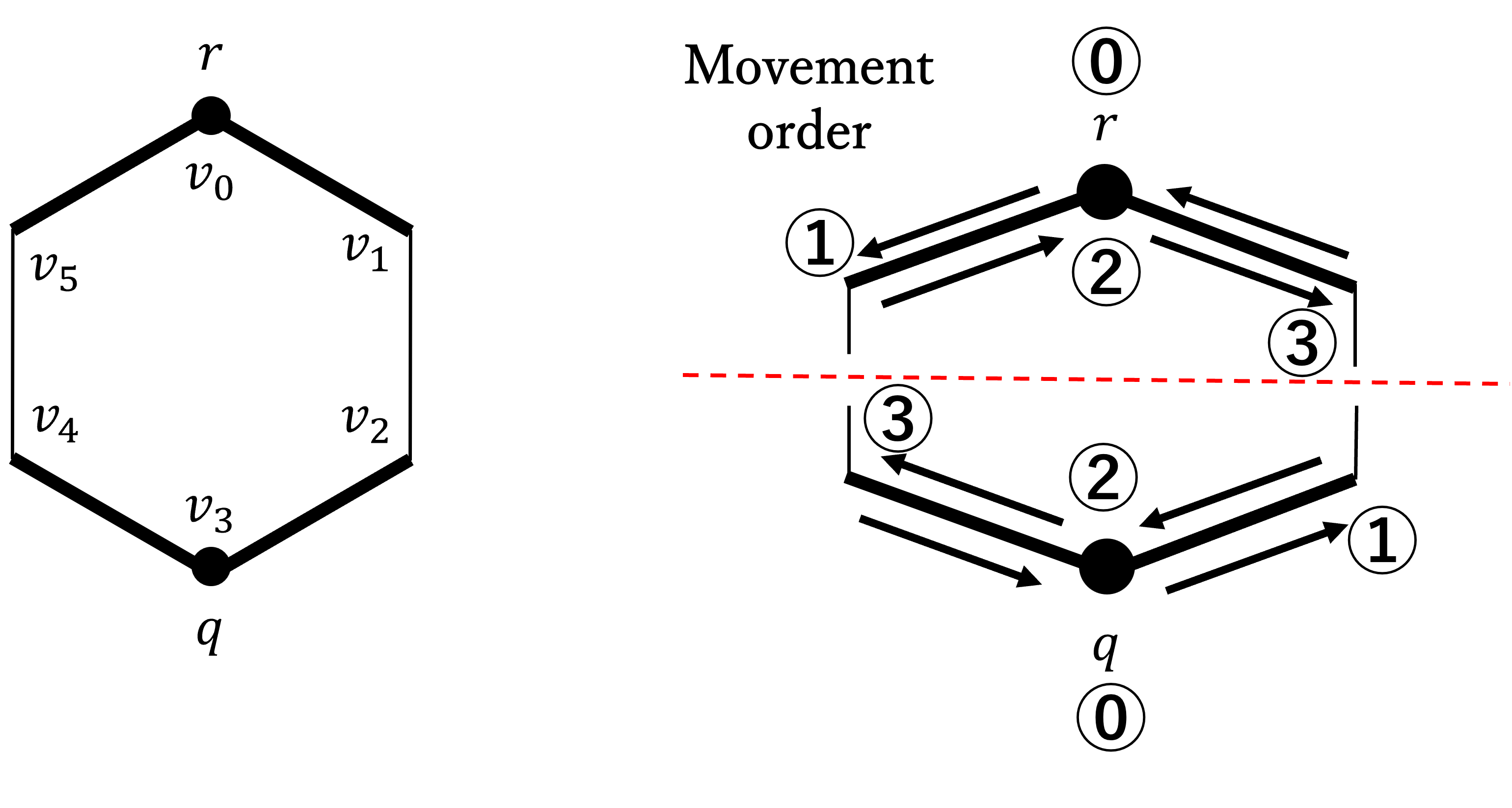}
    \caption{Movement of the \textit{HET} problem.}
\end{figure}

\begin{lemma}
    $\textit{HET} \in \FS^{F}$.
    \label{lem:5}
\end{lemma}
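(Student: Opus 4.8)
The plan is to exploit two features of $\FS^{F}$ that are jointly available here: under \FSY\ both robots are activated in every round and therefore act in lockstep, and under $\FS$ each robot carries a persistent constant-size internal state. I would use the state to run a ``phase clock''.

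\textbf{Phase clock.} Each robot stores in its light a phase index $\phi$ drawn from a fixed finite set $\{0,1,\dots,k-1\}$, where $k$ is the constant length of the periodic schedule prescribed by the \textit{HET} specification (a small constant, of the order of the $13$ appearing in the definition); both robots start with $\phi=0$. At each activation a robot reads $\phi$, performs the action assigned to phase $\phi$, and resets its light to $\phi+1\bmod k$. Since the two robots begin in the same state and, being in \FSY, are activated simultaneously in every round, their phase indices remain equal forever, so the whole execution is deterministic and lockstep. Only a constant number of colours is used, hence this is a legal $\FS$ algorithm.

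\textbf{Invariant and geometric reconstruction.} The schedule is designed so that at the start of every round the two robots occupy a pair of antipodal vertices $u,u'$ of the original regular hexagon (and, during ``move'' rounds, they traverse the two antipodal incident edges). A robot sitting at $u$ and observing the other robot at $u'$ recovers the hexagon centre $c=\tfrac12(u+u')$ and the distance $s=\tfrac12\,\mathrm{dis}(u,u')=|u-c|$, which for a regular hexagon is the circumradius and coincides with the side length. Using \emph{chirality} it identifies the two hexagon neighbours of $u$ as $c+\rho(u-c)$ and $c+\rho^{-1}(u-c)$, where $\rho$ is the rotation about $c$ by $60^{\circ}$ in the globally agreed orientation. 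Hence, from $\phi$ alone a robot knows whether the current round is a ``stay'', a ``hop to the clockwise neighbour'', or a ``hop to the counter-clockwise neighbour'', and it can compute the exact destination point despite variable disorientation.

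\textbf{Schedule, correctness, and the main obstacle.} One \textit{HET} cycle is realised by the round sequence: stay at $v_i$; hop to the clockwise neighbour $v_{i-1}$; stay; hop back to $v_i$; stay; hop to the counter-clockwise neighbour $v_{i+1}$; stay; hop back to $v_i$; stay --- with the remaining phases filled by ``stay'' rounds so that the period matches the one fixed by the specification; $q$ runs the identical program, so its trajectory is the point-reflection of $r$'s through $c$. Since movements are rigid and straight, each hop traverses exactly one edge, monotonically and exactly once per cycle, matching the $\mathrm{dis}(\cdot,\cdot)$-monotonicity conjuncts of the predicate; each ``stay'' round of \FSY\ produces a positive-length interval of rest at the current vertex, supplying the intervals $[t_0,t_1],[t_2,t_3],\dots$ for $r$ and $[T_0,T_1],[T_2,T_3],\dots$ for $q$; and after $k$ rounds both positions and both internal states return to their initial values, so the execution is genuinely periodic. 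Checking the remaining conjuncts is then routine, giving $\textit{HET}\in\FS^{F}$. The step I expect to be the main obstacle is establishing and then relying on the antipodality invariant: one must verify that $q$'s trajectory is exactly the antipodal image of $r$'s, so that a single snapshot determines the whole hexagon at every decision time, which rests on the lockstep behaviour of \FSY\ together with the symmetry of the initial placement and on the chirality bookkeeping that makes ``clockwise neighbour'' unambiguous; one should also note that in \FSY\ a hop behaves as an atomic one-round move, so no robot ever observes the other in the interior of an edge --- precisely the property that weaker schedulers fail to provide.
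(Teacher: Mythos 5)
Your proposal is correct and follows essentially the same route as the paper: the paper's Algorithm~\ref{alg:HET_FSTA_F} is exactly a constant-size phase clock (four states cycling $0\to1\to2\to3\to0$) kept in lockstep by \FSY, with each robot reconstructing the hexagon from the antipodal position of the other robot together with its own stored phase and chirality. Your additional ``stay'' rounds and the explicit antipodality/reconstruction argument are just a more detailed spelling-out of the same idea.
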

\begin{proof}
    The problem is solved by \textbf{Algorithm~\ref{alg:HET_FSTA_F}}. Initially, robots $r$ and $q$ are located at configuration $v_0$ and $v_3$, positioned on opposite vertices along a diagonal of a regular hexagon. At this time, both robots $r$ and $q$ have $r.status = 0$. When $r.status = 0$, it is updated to $r.status = 1$. Then, robot $r$ moves along the edge to the $v_5$, while robot $q$ moves along the edge to the $v_2$. Next, when $r.status = 1$, it is updated to $r.status = 2$, and both robots $r$ and $q$ return along the path they came from to the same positions. Then, when $r.status = 2$, it is updated to $r.status = 3$. In this state, robot $r$ moves along the edge to the $v_1$, and robot $q$ moves along the edge to the $v_4$. Finally, when $r.status = 3$, it is updated to $r.status = 0$, and both robots $r$ and $q$ return along the path they came from to the same positions. This cycle is repeated. Each robot can infer the other’s position from the color of its light and also maintains a local record of its own light. Therefore, it always knows which vertex it currently occupies. As a result, regardless of when robots $r$ and $q$ are activated, they can recognize that the other is on the opposite vertex. Even as they move along two adjacent edges from the initial configuration, they can precisely determine each other’s positions and continue to recognize the structure of the regular hexagon.
\end{proof}

\begin{algorithm}
    \caption{Algorithm for \textit{HET} Problem}
    \label{alg:HET_FSTA_F}
    \textbf{Assumptions:} $\FS$, \FSY \\
    \textbf{Lights:} $r.status \in \{0, 1, 2, 3\},$ initial value is $0$. \\
    
    $Phase Look:$ Robot $r$ observes its own position and light, as well as the positions of other robots. However, due to $\FS$, it cannot see the lights of other robots.\\

    $Phase Compute:$
    \begin{algorithmic}[1]
        \If {$r.status = 0$}
            \State $r.status \gets 1$
            \State $r.des \gets 1$
        \ElsIf {$r.status = 1$}
            \State $r.status \gets 2$
            \State $r.des \gets 2$
        \ElsIf {$r.status = 2$}
            \State $r.status \gets 3$
            \State $r.des \gets 3$
        \ElsIf {$r.status = 3$}
            \State $r.status \gets 0$
                \State $r.des \gets 0$
        \EndIf
    \end{algorithmic}
    
    \vspace{1em}
    $Phase Move:$\\
    Move to $r_i.des$.
\end{algorithm}
    
\begin{lemma}
    $\textit{HET} \notin \FS^{R}$.
    \label{lem:6}
\end{lemma}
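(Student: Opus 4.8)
The plan is to argue by contradiction: suppose an $\FS$ algorithm $\calA$ solves \textit{HET} under the scheduler $R$, and construct an $R$-admissible schedule, together with an admissible sequence of local coordinate systems, on which $\calA$ must violate the \textit{HET} predicate. First I would pin down what $R$ looks like for two robots: among the non-empty subsets $\{r\},\{q\},\{r,q\}$ the only disjoint pair is $\{r\},\{q\}$, so every $R$-schedule is a finite (possibly empty) fully synchronous prefix followed by a strict alternation $\{r\},\{q\},\{r\},\{q\},\dots$. During the prefix the adversary fixes $q$'s frame to be $r$'s frame rotated by $\pi$; then the two robots see congruent snapshots, make identical colour updates, and move point-symmetrically about the hexagon centre, so the configuration stays point-symmetric and the two colours stay equal throughout the prefix. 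This is exactly the invariant underlying the $\FS^F$ solution of Lemma~\ref{lem:5}: while both robots are activated together they remain at antipodal hexagon vertices, so each robot can reconstruct the whole hexagon, and hence its next target vertex, from the single geometric reference that the other robot provides, even under variable disorientation.

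The heart of the argument is that the forced alternation of $R$ makes this invariant unrecoverable for a constant-memory robot. After the first singleton round the two robots sit at non-antipodal vertices; since an activated robot sees exactly one other robot and the adversary may rotate its frame arbitrarily (chirality is the only thing it cannot flip), the robot's entire input collapses to the pair (own colour, distance to the other robot), and its destination is equivariant, i.e. a fixed displacement measured relative to the direction towards the other robot. In a correct run each robot occupies only $v_i,v_{i-1\bmod 6},v_{i+1\bmod 6}$ (resp. $v_{i+3},v_{i+2},v_{i+4}$), so that distance always lies in $\{\rho,\rho\sqrt3,2\rho\}$ for the circumradius $\rho$. The obstruction is the coincidence that the two hexagon vertices at graph-distance $2$ from a fixed vertex are at the same Euclidean distance $\rho\sqrt3$ from it, while the two regular hexagons compatible with ``I am at a vertex and the other robot is at a graph-distance-$2$ vertex'' are mirror images through the line joining the two robots; chirality alone cannot select the correct one, since doing so would require knowing whether the other robot is at the ``clockwise'' or the ``counter-clockwise'' graph-distance-$2$ vertex, and that information is neither in the snapshot nor communicable (an $\FS$ robot cannot see the other's colour). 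Using the existential time sequences $t_j,T_j$ and the ``traverse exactly once along an adjacent edge to the designated diagonal vertex and return'' clause, I would tune the prefix length, the first mover, and the per-round frames to obtain two executions of $\calA$ that agree with \textit{HET} up to some round and in which the activated robot carries the same colour (the one a correct run attaches to ``at home, about to begin the traversal'') and sees the other robot at distance $\rho\sqrt3$, but in which the underlying configurations differ exactly by such a mirror reflection. Then $\calA$ outputs the same displacement relative to the other robot in both, whereas \textit{HET} forces a move to a vertex in one execution and to its mirror image in the other, a contradiction; hence $\textit{HET}\notin\FS^R$.

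The step I expect to be the main obstacle is precisely the last one: certifying that both mirror-related configurations are genuinely reachable along \textit{HET}-consistent prefixes of executions of the \emph{same} algorithm, so that a correct $\calA$ is actually forced through them, and that this cannot be evaded by an algorithm using more colours than the four in the solution of Lemma~\ref{lem:5}. Here I would track, along the strict alternation, the pair (phase of $r$, phase of $q$), show that it is essentially determined by the round at which the adversary leaves the fully synchronous prefix, and verify that over the two candidate schedules the activated robot must commit to its ``start the traversal'' colour before any information distinguishing the two underlying configurations could reach it. This is exactly the point where persistent memory without communication and without a common orientation, under a scheduler that never lets the two robots act simultaneously, turns out to be strictly weaker than full synchrony.
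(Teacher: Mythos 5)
There is a genuine gap, and it sits exactly where you predicted, but it is worse than an obstacle: the indistinguishability you need cannot be forced under the restriction you impose on the adversary. You explicitly renounce the rescaling power of variable disorientation (``the robot's entire input collapses to the pair (own colour, distance to the other robot)''), i.e. you keep the observed distance honest and build the contradiction on the coincidence that $v_{i+1}$ and $v_{i-1}$ (resp.\ $v_1$ and $v_5$ as seen from $v_3$) lie at the same distance $\rho\sqrt{3}$, with the two compatible hexagons being mirror images. But your own structural observation about \RSY\ defeats this: with two robots, after the finite fully synchronous prefix the schedule is a strict alternation of singletons, so at every activation the other robot's activation count equals the observer's count or exceeds it by exactly one. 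An $\FS$ robot that advances a mod-$4$ phase counter (one light with four colours) once per activation therefore knows the other robot's phase up to two consecutive values, and for every such pair the two candidate configuration shapes have \emph{different} inter-robot distances ($2\rho$ versus $\rho\sqrt{3}$): e.g.\ when $q$'s counter says ``at home after the first return,'' $r$ is either at home (distance $2\rho$) or at $v_1$ (distance $\rho\sqrt{3}$), never at $v_5$. The mirror pair $\{v_1,v_5\}$ corresponds to phases of $r$ that differ by two, which the alternation never allows while $q$'s own colour is the same; so with honest distances this phase-tracking algorithm always reconstructs the hexagon (own phase $+$ distance $+$ chirality), and your adversary never gets to present it with the two mirror-related situations under the same colour. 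In other words, under rotation-only disorientation the lemma's conclusion is not even forced, so no amount of tuning of prefix length, first mover, and frames can complete your last step.

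The paper's proof uses precisely the ingredient you discarded: variable disorientation lets the adversary rescale the local unit at every $\LK$, so every snapshot can be made to show the single other robot at distance $1$. Then, once the alternation desynchronizes the robots, a configuration with $r$ at $v_5$ and $q$ at $v_3$ is indistinguishable (to an $\FS$ robot, which cannot see the other's light) from the initial antipodal configuration: the scale $\rho$ and hence the placement of the hexagon can no longer be recovered from the single observed point, regardless of what the robot's own light records, and the robot is driven to a point off the required vertex set. If you want to repair your argument, replace the mirror/chirality coincidence by this scale ambiguity (antipodal distance $2\rho$ versus $\rho\sqrt{3}$ both normalized to $1$); the equidistance of the two graph-distance-$2$ vertices is then unnecessary, and the phase-coupling analysis you sketch in your last paragraph is no longer enough for the algorithm to escape, because distance no longer disambiguates the two shapes consistent with its counter.
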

\begin{proof}
    Assume that there exists an algorithm $\calA$ that solves the \textit{HET} problem under an arbitrary \RSY\ schedule $\calS$. We then construct another \RSY\ schedule $\calS'$ under which algorithm $\calA$ fails to solve \textit{HET}. In order for $\calA$ to solve \textit{HET} under schedule $\calS$, robots $r$ and $q$ must be activated in full synchrony up to some round $i$, and the problem must be solved based on \textbf{Lemma~\ref{lem:5}}. Next, we define schedule $\calS'$ such that it behaves identically to $\calS$ until round $i$, and after that, for a finite number of rounds, it activates only one of the robots, either $r$ or $q$. In such a case, when robots begin to be activated in an alternating (asynchronous) fashion, robot $r$ cannot determine its current position or how it should move. Suppose that during its first activation, robot $r$ moves from the initial position $v_0$ to $v_5$. In the next activation, robot $q$, located at $v_3$, observes robot $r$ at $v_5$ via a snapshot. However, assuming \textit{Variable Disorientation}, an adversary can force the perceived distance in each snapshot during the $\LK$ operation to always appear as $1$. As a result, the configuration in which $r$ is at $v_5$ and $q$ is at $v_3$ becomes indistinguishable from the initial configuration where the robots are placed on opposite vertices of the regular hexagon. Thus, in schedule $\calS'$, if either $r$ or $q$ is activated for a finite number of rounds, the configuration at time $t_L$ becomes indistinguishable from the initial one, making it difficult to correctly reconstruct the hexagonal structure. This leads to an invalid configuration. Since \textit{HET} cannot be solved by algorithm $\calA$ under the valid \RSY\ schedule $\calS'$, this contradicts the assumption that $\calA$ solves \textit{HET} under any \RSY\ schedule. Therefore, we conclude that no valid algorithm exists that solves \textit{HET} under $\FS$ with a \RSY\ scheduler.
\end{proof}

\begin{lemma}
    $\textit{HET} \notin \OB^{F}$.
    \label{lem:7}
\end{lemma}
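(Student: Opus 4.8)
The plan is to show that $\textit{HET} \notin \OB^F$ by exploiting the fact that oblivious robots under the fully-synchronous scheduler are deterministic functions of the current geometric configuration alone, so they cannot implement the \textbf{counter} needed to cycle through the $13$ distinct phases of \textit{HET}. Recall that in $\OB$ there is no persistent light and no memory, and under \FSY\ both robots are always activated simultaneously, so the entire execution is determined by the initial placement: at every round the pair of destinations is a fixed function of the current pair of positions (and, crucially, the robots are disoriented, so the function must be equivariant under similarities and under chirality-preserving isometries).

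The key steps, in order, are: (i) Observe that \textit{HET} requires the configuration to pass through at least two \emph{geometrically congruent but temporally distinct} configurations — e.g.\ both robots sit at their home diagonal $\{v_i, v_{i+3}\}$ at the end of the first ``return'' sub-traversal (around time $t_4 = t_5$), and again at the end of the second return (around $t_8 = t_9$), and then the cycle repeats. In the snapshot, using \emph{Variable Disorientation}, the adversary normalises the inter-robot distance to $1$ in every \Look, so these two configurations — as well as the true initial configuration — are \emph{indistinguishable} to both robots. (ii) Argue that a deterministic oblivious algorithm must therefore compute the \emph{same} destination pair from each of these indistinguishable snapshots; hence from the home diagonal the robots must always make the same first move (say, both step toward $v_{i-1}$ resp.\ $v_{i+2}$). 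But \textit{HET} demands that after the first return they instead head toward $v_{i+1}$ resp.\ $v_{i+4}$ — a \emph{different} edge — which an oblivious robot has no way to select, since it cannot remember whether it has already done the $v_{i-1}$-excursion. (iii) Formalise this as a standard indistinguishability/contradiction argument: assume $\calA \in \OB^F$ solves \textit{HET}; run $\calA$ from the prescribed initial hexagon configuration; by the predicate the execution must reach a home-diagonal configuration $\calC_1$ at time $t_4$ having just completed the $v_{i-1}$ excursion, and later a home-diagonal configuration $\calC_2$ at time $t_8$ having just completed the $v_{i+1}$ excursion; but $\calA$'s \Compute\ on the (identical, distance-normalised) snapshots at $\calC_1$ and $\calC_2$ returns identical destinations, so the subsequent excursion is the same at both — contradicting that one goes to $v_{i-1}$ and the other to $v_{i+1}$. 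One must also handle the degenerate branch where the prescribed first destination from the home diagonal is a ``symmetric'' move (toward the far vertex or the common center), which — as in the $\textit{RDV}$ and earlier $\textit{ETE}$ lemmas — the adversary can exploit via chirality-respecting symmetry to keep the two robots forever in a symmetric configuration, never executing the required asymmetric edge-traversals.

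The main obstacle I expect is step (ii)/(iii): pinning down precisely \emph{which} configurations must recur and arguing rigorously that they are snapshot-indistinguishable. Because the hexagon geometry changes as a robot slides along an edge, I need to be careful that the recurring configurations are genuinely congruent (same inter-robot distance, same relative angle given chirality), not merely ``similar up to the path the robot took''. The cleanest route is to fix attention on the moments when \emph{both} robots are simultaneously at their original diagonal vertices — the predicate guarantees infinitely many such synchronised moments (at $t_4=t_5$, $t_8=t_9$, and then cyclically) — and note that at each such moment the geometric configuration is exactly $\{v_i, v_{i+3}\}$, literally the initial configuration up to scaling, so all these snapshots coincide. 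Then the oblivious determinism of $\OB^F$ forces a single fixed continuation, whereas \textit{HET} requires two different continuations (the $v_{i-1}$-excursion and the $v_{i+1}$-excursion), giving the contradiction. A minor additional point is justifying that the ``first return'' really does bring $r$ back exactly to $v_i$ (rather than overshooting) — this follows from the predicate's clause $t_4 \le \forall t \le t_5 : r(t) = v_i$, so I can take it as given.
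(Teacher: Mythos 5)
Your proposal is correct and rests on exactly the same mechanism as the paper's (much terser) proof: the adversary exploits \emph{Variable Disorientation} to normalise every snapshot (distance $1$), so the oblivious robots cannot tell which vertex or phase of the cycle they are in, and obliviousness then forces identical responses at configurations where \textit{HET} demands different excursions ($v_{i-1}$ versus $v_{i+1}$). Your write-up is in fact more explicit than the paper's four-sentence argument; the only point to tighten is that the simultaneous home-diagonal moments follow not from the predicate itself (the sequences $t_j$ and $T_j$ are uncoupled) but from the central symmetry of the execution induced by your normalising adversary under \FSY.
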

\begin{proof}
    Consider two oblivious robots $r$ and $q$. The impossibility arises from the assumption of \textit{Variable Disorientation}, where, regardless of full synchrony, an adversary can force the observed distance in each snapshot taken during a $\LK$ operation to always appear as 1. In such a case, at time $t_L$, neither robot can determine which vertex they or the other occupies. As a result, they are unable to accurately perceive each other’s position or move along the edges while recognizing the structure of the regular hexagon.
\end{proof}

\begin{lemma}
    $\textit{HET} \notin \FC^{S}$.
    \label{lem:8}
\end{lemma}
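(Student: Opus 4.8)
The plan is to produce a single fair \SSY\ schedule on which no algorithm can solve \textit{HET}, combining the desynchronisation argument used for $\FS^{R}$ in Lemma~\ref{lem:6} with the fact that an $\FC$ robot keeps no memory of its own state. The schedule activates the two robots strictly alternately, one robot per round, starting from the initial placement; it is fair, and since consecutive activation sets are disjoint it is in fact an \RSY\ schedule, so failure here is already more than enough.

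First I would record what a snapshot can convey when $n=2$ and disorientation is variable: a robot sees the other robot as a single vector that the adversary may rotate and rescale at will, together with that robot's colour (in $\FC$ the robot does not see its own colour). Hence an algorithm $\calA$ for $\FC^{S}$ is, in effect, a map from the observed colour to a displacement expressed relative to the other robot, together with the colour the robot then adopts --- the same informational poverty exploited in the $\OB^{F}$ case (Lemma~\ref{lem:7}), now with communication added back. Since $\calA$ must solve \textit{HET} under \emph{every} \SSY\ schedule, it must in particular solve it under the schedule that activates both robots in every round; a routine induction shows that this execution stays point-symmetric about the midpoint of the two starting vertices, with both robots sharing a common colour in each round, so its colour sequence is eventually periodic. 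If no colour occurring along it triggered a non-zero move, the robots would never move and \textit{HET} would fail; so some colour $c^{*}$ does, and the displacement $\calA$ prescribes on seeing $c^{*}$ is calibrated precisely so that, when both robots sit at their synchronised positions and carry colour $c^{*}$, the move lands on the vertex the traversal demands.

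Now I would run the alternating schedule. While both robots are still at their starting vertices the colour pair evolves deterministically; being drawn from a finite set it cycles, and either no colour on the cycle triggers a move --- in which case the robots never leave $v_{i}$ and $v_{(i+3)\bmod 6}$ and the \textit{HET} predicate, which forces non-trivial traversals, fails outright --- or at some round one robot, say $r$, makes its first non-zero move in reaction to the other's current colour. At that instant $r$'s partner is still at its starting vertex, so $r$ (and, one round later, $q$ reacting in turn) is made to apply a displacement calibrated for the synchronised configuration to the real, desynchronised one; comparing the two relative offsets shows the move necessarily misses the intended vertex of the hexagon $H$ fixed by the initial placement. From that point on the configuration is corrupted, and, crucially, because the robots are oblivious and --- with only two of them --- have no geometric landmark against which to re-anchor $H$, every colour they subsequently exchange keeps asserting a synchronisation that the geometry no longer reflects; the discrepancy is never corrected, $r$ never again occupies its home vertex $v_{i}$, and $\calA$ fails on this fair execution --- a contradiction, so $\textit{HET}\notin\FC^{S}$.

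The step I expect to be the real obstacle is the last one: making rigorous the claim that the corruption is irreparable. One has to argue that, once the two robots are displaced from the positions their colours encode, no continuation --- whatever colours are later broadcast --- can simultaneously restore the correct hexagon $H$ and the correct vertex labelling, using that an $\FC$ robot has no internal state with which to detect the accumulated offset and that, for $n=2$ under variable disorientation, the only information a robot ever receives is a colour, so the faulty execution stays indistinguishable, at every robot and at every round, from a legitimate one. A secondary point worth spelling out carefully is the calibration of $\calA$'s reaction to the move-triggering colour under full synchrony, which is what certifies that the desynchronised move is genuinely wrong rather than accidentally harmless.
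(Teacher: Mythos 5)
There is a fatal structural flaw in your choice of adversarial schedule. Your entire argument runs on the strictly alternating schedule, which, as you yourself note, is an \RSY\ schedule (consecutive activation sets are disjoint). But showing that \emph{every} $\FC$ algorithm fails on an \RSY\ schedule would prove $\textit{HET}\notin\FC^{R}$, and that is false within this paper's framework: by Lemma~\ref{lem:9}, $\textit{HET}\in\LU^{R}$, and by Theorem~\ref{th:equality}(2), $\FC^{R}\equiv\LU^{R}$, so there is an $\FC$ algorithm that solves \textit{HET} under every \RSY\ schedule, in particular under your alternating one. So the claim you need at the end — that the first desynchronised move is necessarily wrong and the ``corruption is irreparable'' — cannot be made rigorous, because it is simply not true for that algorithm. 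Your ``so failure here is already more than enough'' has the logic inverted: failure on an \RSY\ schedule is not a stronger conclusion you get for free, it is a stronger (and false) statement that the alternating schedule cannot deliver. The step you flagged as the obstacle is exactly where it breaks, and the reason is the mechanism behind the $\FC^{R}\equiv\LU^{R}$ equivalence: under alternation, the partner's light \emph{can} serve as the activated robot's missing memory (each robot writes its state, the other reads and echoes it back before the first is activated again), so the ``desynchronised'' execution is not indistinguishable from a legitimate one — the colour a robot receives can encode precisely the accumulated offset you claim it can never detect.

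The paper's proof avoids this by exploiting what \SSY\ allows and \RSY\ forbids: it activates the \emph{same} robot $r$ in consecutive rounds (finitely many times, preserving fairness). Since an $\FC$ robot cannot read its own light and the inactive robot $q$'s light never changes while only $r$ is scheduled, variable disorientation (rescaling the observed distance to $1$) makes $r$'s successive snapshots identical to its initial view, so $r$ repeats a move from a vertex it no longer occupies and the traversal predicate is violated. If you want to salvage your write-up, replace the alternating schedule by such a ``same robot twice in a row'' schedule; the correct intuition is that the separation between $\FC^{S}$ and $\FC^{R}\,(\equiv\LU^{R})$ lives exactly in those consecutive re-activations, so any impossibility proof for $\FC^{S}$ must use them.
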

\begin{proof}
    Assume that there exists an algorithm $\calA$ that solves the \textit{HET} problem under an arbitrary \SSY\ schedule $\calS$. Next, we construct another \SSY\ schedule $\calS'$ under which the algorithm $\calA$ fails to solve \textit{HET}. For $\calA$ to solve \textit{HET} under schedule $\calS$, robots $r$ and $q$ must be activated alternately up to some round $i$, accurately perceive each other’s positions, and move along the edges while maintaining correct recognition of the regular hexagonal structure. Now, consider a scheduler $\calS'$ that activates only robot $r$ for a finite number of rounds. In its first activation, robot $r$ moves from its initial position $v_0$ to $v_5$. On the next activation, robot $r$, now located at $v_5$, takes a snapshot and observes robot $q$ located at $v_3$. However, under the assumption of \textit{Variable Disorientation}, the adversary can force the observed distance in the snapshot taken during each $\LK$ operation to appear as $1$. As a result, the configuration in which $r$ is at $v_5$ and $q$ at $v_3$ becomes indistinguishable from the initial configuration where they are placed on opposite vertices. Thus, if either $r$ or $q$ is kept active for $x$ rounds under schedule $\calS'$, the robots cannot distinguish the configuration at time $t_L$ from the initial one, making it difficult to correctly reconstruct the hexagonal structure. This results in an invalid computation. Although $\calS'$ remains a fair scheduler and $x$ is finite (ensuring that other robots are eventually reactivated), algorithm $\calA$ fails to solve \textit{HET} under this valid \SSY\ schedule $\calS'$. This contradicts the assumption that $\calA$ solves \textit{HET} under any \SSY\ schedule. Therefore, we conclude that there exists no valid algorithm that solves \textit{HET} under $\FC^{S}$ and a \SSY\ scheduler.
\end{proof}

\begin{lemma}
    $\textit{HET} \in \LU^R$.
    \label{lem:9}
\end{lemma}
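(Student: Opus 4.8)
The plan is to exhibit an explicit algorithm in the $\LU$ model that works under the \RSY\ scheduler, and to argue its correctness. Recall that \RSY\ allows an optional finite prefix of fully synchronous rounds, followed by rounds in which consecutive activation sets are disjoint; since $n = 2$, this means that after the prefix the two robots are activated strictly alternately (at most one robot per round, and never the same robot twice in a row). The key observation is that, because robots in $\LU$ can see each other's colors \emph{and} remember their own, we can use the light to encode both the current phase of the $13$-step cycle and enough positional bookkeeping that a robot, upon activation, always knows which hexagon vertex it occupies and which vertex the partner occupies — exactly the information that was lost under \textit{Variable Disorientation} in the $\FS^R$ and $\OB^F$ impossibility arguments.

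Concretely, I would give each robot a light $r.\mathit{status}$ ranging over the $13$ cyclic time-indices of the \textit{HET} specification (or a suitable constant-size set encoding ``at $v_i$'', ``mid-edge toward $v_{i-1}$'', ``at $v_{i-1}$'', etc., together with a bit recording whether it is the $r$-role or $q$-role robot, which can be fixed by the initial colors or by chirality). On activation, a robot reads its own $\mathit{status}$, reads the partner's $\mathit{status}$, reconstructs the geometric configuration from the pair (the two colors together pin down, up to the global symmetry the problem allows, where both robots sit on the hexagon), computes the next vertex along the prescribed path, updates $\mathit{status}$ to the next cyclic value, and moves. Because the two robots act on a shared $13$-phase counter that each increments only on its own turn, and because activations alternate after the prefix, the two counters stay within one step of each other; a robot that is about to move can read the partner's color to confirm the partner has ``caught up'' to the expected phase before proceeding, so the prescribed order of traversals (the cyclic time sequences $t_0 < t_1 < \cdots$ and $T_0 < T_1 < \cdots$) is respected. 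The optional fully-synchronous prefix is harmless: if both robots are activated together they both advance their counters together, which is just the $\FS$ behaviour of Lemma~\ref{lem:5}, and once the scheduler switches to strict alternation the lights carry over all needed state.

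The main obstacle I expect is \emph{not} the existence of the algorithm but the care needed in two places: (i) showing that the pair of colors genuinely determines the geometric configuration despite \textit{Variable Disorientation} — here the point is that in $\LU$ a robot's own persisted light already tells it ``I am at $v_5$ heading home,'' so it does not need to measure distances, and the partner's light tells it the partner's location on the same hexagon; (ii) verifying that under \emph{every} \RSY\ schedule (arbitrary finite prefix, arbitrary alternation thereafter, with the fairness guarantee) the robots never desynchronize in a way that violates the ordering predicates of \textit{HET}, in particular that no robot starts the next traversal before the other has completed the current return-to-vertex step. This is handled by a ``wait for the partner's color'' guard in the \Compute\ step, and the correctness proof amounts to an induction over the $13$ phases showing that the guard is eventually satisfied (by fairness) and that when it is, the move taken matches the next clause of the \textit{HET} predicate. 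I would close by noting this also gives $\textit{HET} \in \LU^S$ and hence, with Lemmas~\ref{lem:5}--\ref{lem:8}, the full separation picture for \textit{HET}.
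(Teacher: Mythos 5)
Your proposal is correct and follows essentially the same route as the paper: the paper's Algorithm for $\LU$ under \RSY\ also uses a constant-size cyclic phase light (four values, one per traversal segment) with exactly the guard you describe — a robot advances only when the partner's color equals its own phase or is one step ahead — so the lights replace the metric information destroyed by variable disorientation and keep the two robots within one phase of each other under any \RSY\ activation pattern. The only cosmetic difference is that the paper encodes the cycle with $4$ colors rather than your $13$-index counter, which changes nothing in the argument.
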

\begin{proof}
    The problem is solved by \textbf{Algorithm~\ref{alg:HET_LUMI_R}}. In the initial state, robots $r$ and $q$ are located at the initial configuration $v_0$ and $v_3$, positioned on opposite vertices (i.e., along a diagonal) of a regular hexagon. At this time, both robots have their light set to $0$. If a robot's light is $0$ and it observes that the other robot's light is $0$ or $1$, robot $r$ moves along the edge to the $v_5$, and robot $q$ moves along the edge to the $v_2$. Next, if a robot’s light is $1$ and it observes that the other robot’s light is $1$ or $2$, both robots return along the paths they came and move back to the same position as in the initial configuration. Then, if a robot’s light is $2$ and it observes that the other robot’s light is $2$ or $3$, robot $r$ moves along the edge to the $v_1$, and robot $q$ moves along the edge to the $v_4$. Finally, if a robot’s light is $3$ and it observes that the other robot’s light is $3$ or $0$, both robots return along the paths they came and move back to the same position as in the initial configuration. This cycle is then repeated. Each robot can infer the other’s position by reading its light, and since it also records the light locally, it can always determine its own vertex and the position of the other robot. Therefore, even if robots $r$ and $q$ are activated asynchronously, they can always recognize that the other is located along the diagonal. As a result, even when moving along the two adjacent edges from the initial configuration, they can accurately determine each other’s position and consistently recognize the regular hexagonal structure.This is also the case where chirality is undefined.
\end{proof}

\begin{algorithm}
    \caption{Algorithm for \textit{HET} Problem}
    \label{alg:HET_LUMI_R}
    \textbf{Assumptions:} $\LU$, \RSY \\
    \textbf{Lights:} $r.status \in \{0, 1, 2, 3\},$ initial value is $0$.\\
    
    $Phase Look:$ Robot $r$ observes its own position and light, as well as the positions of other robots. Note that due to $\LU$, it can see the lights of other robots.\\
    
    $Phase Compute:$
    \begin{algorithmic}[1]
        \If {$r.status = 0$}
            \If {$\forall \rho \neq r, \rho.status = 0$ \textbf{or} $\forall \rho \neq r, \rho.status = 1$}
                \State $r.status \gets 1$
                \State $r.des \gets 1$
            \EndIf
        \ElsIf {$r.status = 1$}
            \If {$\forall \rho \neq r, \rho.status = 1$ \textbf{or} $\forall \rho \neq r, \rho.status = 2$}
                \State $r.status \gets 2$
                \State $r.des \gets 2$
            \EndIf
        \ElsIf {$r.status = 2$}
            \If {$\forall \rho \neq r, \rho.status = 2$ \textbf{or} $\forall \rho \neq r, \rho.status = 3$}
                \State $r.status \gets 3$
                \State $r.des \gets 3$
            \EndIf
        \ElsIf {$r.status = 3$}
            \If {$\forall \rho \neq r, \rho.status = 3$ \textbf{or} $\forall \rho \neq r, \rho.status = 0$}
                \State $r.status \gets 0$
                \State $r.des \gets 0$
            \EndIf
        \EndIf
    \end{algorithmic}
    
    \vspace{1em}
    $Phase Move:$\\
    Move to $r_i.des$.
\end{algorithm}

\clearpage
\begin{figure}[H]
    \centering
    \includegraphics[width=12cm]{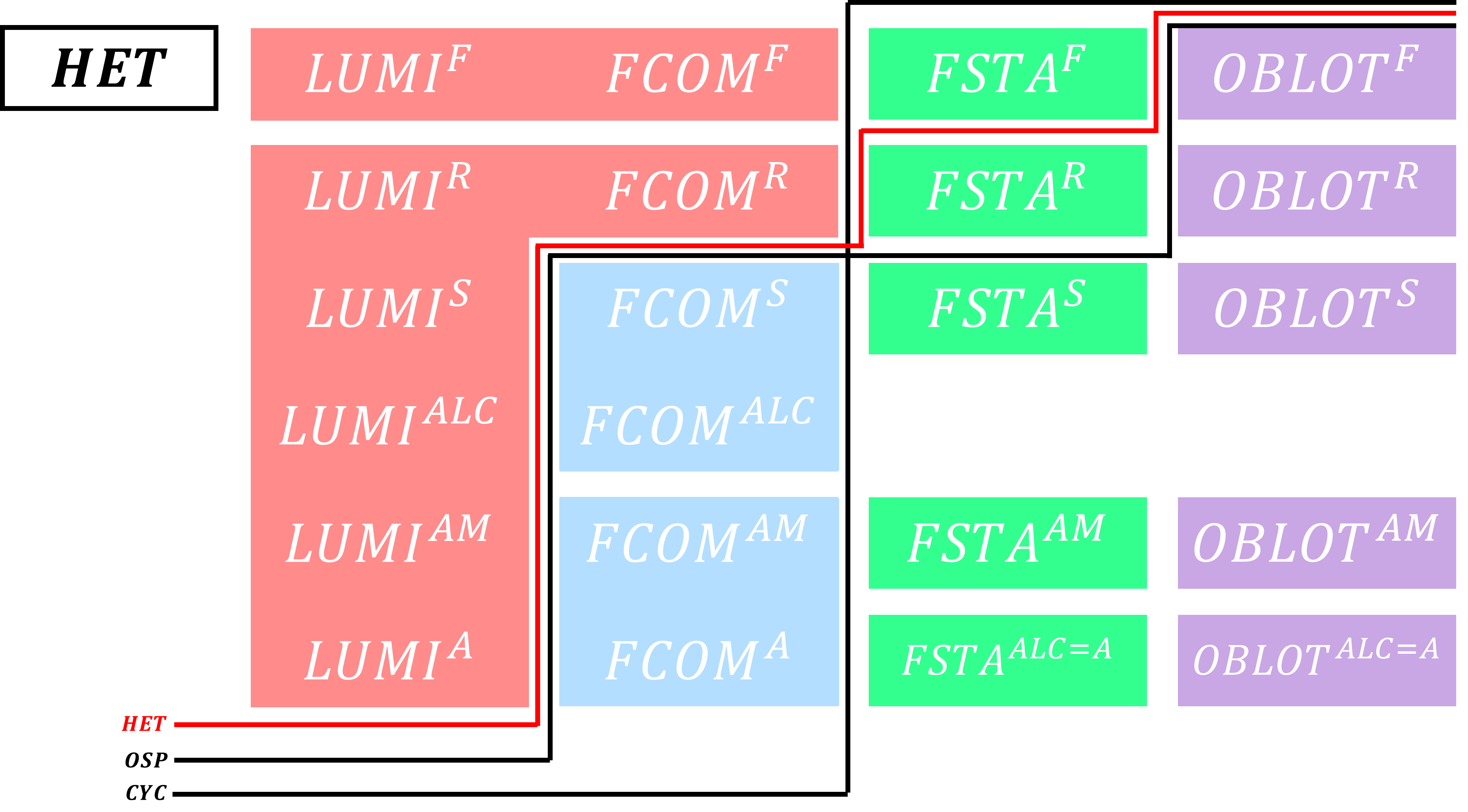}
    \caption{Boundary of HET}
    \label{fig:HET_boundary}
\end{figure}

\begin{theorem}
    \textit{HET} is solved in both of $\FS^{F}$ and $\LU^{R}$, but not in any of $\FS^{R}$, $\OB^{F}$ or $\FC^{S}$.
    \label{th:hett}
\end{theorem}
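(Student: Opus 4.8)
The statement is exactly the conjunction of the five preceding lemmas, so the plan is to assemble them rather than to prove anything new. For the positive half I would invoke Lemma~\ref{lem:5}, which gives $\textit{HET}\in\FS^{F}$ via Algorithm~\ref{alg:HET_FSTA_F}: under \FSY\ the two robots advance the common counter $r.status$ in lockstep, and because each robot records its own light it always knows which of the legs of the $13$-step cycle it is on, hence which hexagon vertex it occupies, so the prescribed path is reconstructed without ever having to read off a distance. For the $\LU^{R}$ half I would invoke Lemma~\ref{lem:9} (Algorithm~\ref{alg:HET_LUMI_R}): under \RSY\ activations need not be in lockstep, but letting each robot also see the other's light and using the guard ``my light is $k$ and the other's is $k$ or $k{+}1$'' tolerates the at-most-one-step lag between the two counters, so the cycle is still traversed correctly (and, as noted there, even without chirality). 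Combining the two yields $\textit{HET}\in\FS^{F}$ and $\textit{HET}\in\LU^{R}$.

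For the three impossibilities I would cite Lemma~\ref{lem:7} for $\OB^{F}$, Lemma~\ref{lem:6} for $\FS^{R}$, and Lemma~\ref{lem:8} for $\FC^{S}$. The common mechanism behind all three is Variable Disorientation: since a snapshot is determined only up to similarity, the adversary can rescale every \Look\ so that the observed pairwise distance is always $1$, which makes a configuration in which $r$ has already walked along one edge (say $v_0\to v_5$) while $q$ still sits at the opposite vertex indistinguishable from the initial ``opposite vertices of a regular hexagon'' configuration. In $\OB^{F}$ an oblivious robot then cannot even tell which vertex it occupies, so it cannot decide its next move; in $\FS^{R}$ and $\FC^{S}$ one starts from a hypothetical correct execution under an arbitrary \RSY\ (resp.\ \SSY) schedule $\mathcal S$ — which, to succeed, must pass through a prefix of fully (resp.\ alternately) synchronous rounds — and then switches to a schedule $\mathcal S'$ that activates a single robot for a finite stretch; the indistinguishability above forces that robot into an inconsistent move, and since $\mathcal S'$ is still a valid fair \RSY\ (resp.\ \SSY) schedule this contradicts the assumed correctness.

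Since all the technical content already resides in Lemmas~\ref{lem:5}--\ref{lem:9}, I do not expect a genuine obstacle in the theorem's proof itself; the only care needed in the write-up is bookkeeping, namely confirming that $\FS^{R}$, $\OB^{F}$, and $\FC^{S}$ are indeed the intended weaker models to exclude (so the claim ``$\FS^{F}$ and $\LU^{R}$ both solve it, none of these three does'' is stated for mutually non-dominating models), and that the schedules $\mathcal S'$ built in Lemmas~\ref{lem:6} and~\ref{lem:8} remain fair and respect the disjointness (resp.\ arbitrary-nonempty-subset) constraints of their scheduler classes. The one conceptual point worth flagging — that a partially completed hexagon traversal is geometrically indistinguishable from the start configuration under Variable Disorientation, so only persistent \emph{and} communicable state kept in sync by the scheduler can track the traversal — is precisely what those lemmas already discharge.
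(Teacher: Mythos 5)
Your proposal matches the paper exactly: Theorem~\ref{th:hett} is stated there as a direct consequence of Lemmas~\ref{lem:5}--\ref{lem:9}, with no additional argument beyond citing them, which is precisely your assembly. The per-lemma summaries you give (FSYNC lockstep counter for $\FS^{F}$, lag-tolerant light guard for $\LU^{R}$, and the Variable-Disorientation indistinguishability plus adversarial schedule switch for the three impossibilities) faithfully reflect the paper's proofs, so no gap remains.
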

  
  \subsection{Synchrony Can Substitute Capability Only under FSYNCH (TAR(d)*)}
  \label{arc:TAR}
\begin{Definition}
    \textit{TAR($d$)*} (Infinite Triangle Rotation with parameter $d$)
    \\Let $a, b, c$ be three robots forming a triangle $ABC$, let $\calC$ be the circumscribed circle, and let $d$ be a value known to the three robots. The \textit{TAR(d)*} problem requires the robots to move so to form a new triangle $A'B'C'$ with circumscribed circle $\calC$, and where $dis(A, A') = dis(B, B') = dis(C, C') = d$. We then construct a new triangle $A''B''C''$ such that its circumcircle is denoted by $\calC$, and it satisfies the following condition : $dis(A', A'') = dis(B', B'') = dis(C', C'') = d$. This process is required to be repeated indefinitely.
\end{Definition}

\begin{figure}[H]
    \centering
    \includegraphics[width=12cm]{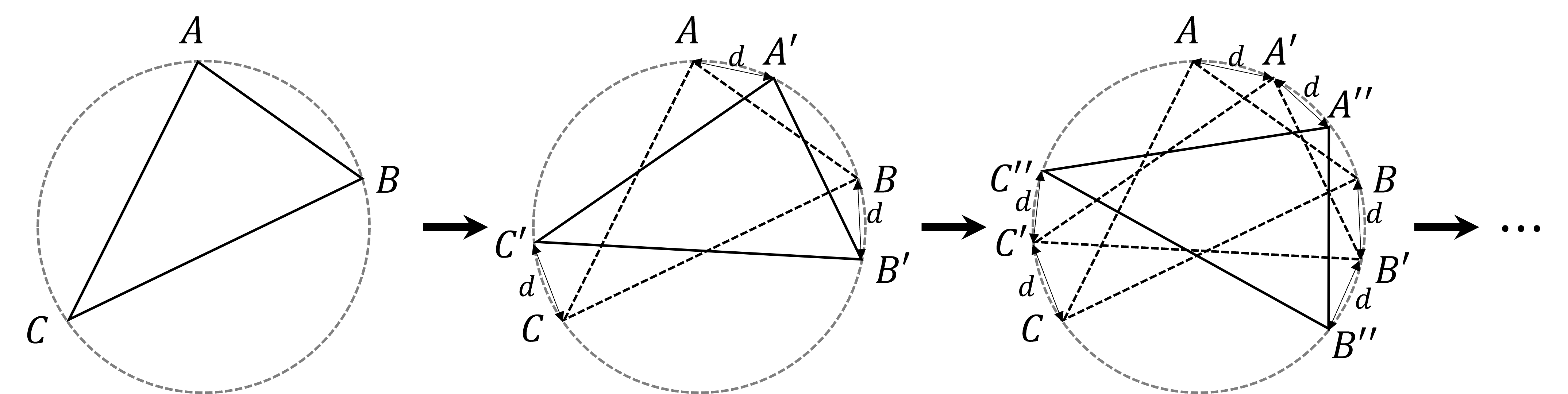}
    \caption{\textit{Movement of the TAR($d$)*} problem.}
    \label{fig:TAR}
\end{figure}
\begin{lemma}
    $\textit{TAR($d$)*} \in \OB^{F}$
    \label{lem:10}
\end{lemma}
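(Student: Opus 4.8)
The plan is to exhibit a single, memoryless rule for the three robots and to argue that under \FSY{} its simultaneous execution amounts to one rigid rotation of the triangle about its circumcenter; such a rotation keeps the circumscribed circle $\calC$ fixed while displacing every robot by exactly the chord length $d$, and since the image is again a (congruent) triangle inscribed in $\calC$, the construction can be iterated forever. Obliviousness will not be an obstacle precisely because each round is handled in exactly the same way, from the snapshot alone.

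Concretely, the algorithm I would use is the following. When activated, a robot takes a snapshot containing the three positions $A,B,C$ (its own and the other two). Since the robots form a triangle, these points are not collinear, so the robot can compute the unique circumscribed circle $\calC$: its center $O$ and radius $R$. Using chirality it fixes the clockwise sense of rotation, and—interpreting $d$ in the common metric, as the problem statement intends, and assuming as usual that $d$ is feasible, $0<d\le 2R$—it lets $\theta\in(0,\pi]$ be the angle determined by $d=2R\sin(\theta/2)$ (the restriction $\theta\le\pi$ makes the target point unique). Its destination is the image of its current position under the clockwise rotation $\rho_{O,\theta}$ about $O$, to which it then moves rigidly.

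The correctness argument then proceeds in three steps. First, under \FSY{} all three robots execute the rule in the same round from the same configuration, and each performs the \emph{same} geometric operation, a clockwise rotation by $\theta$ about $O$; hence the net effect on the configuration is exactly $\rho_{O,\theta}$. Second, $\rho_{O,\theta}$ maps $\calC$ onto itself, so the new triangle $A'B'C'$ has circumscribed circle $\calC$; moreover $A$ and $A'=\rho_{O,\theta}(A)$ lie on $\calC$ and subtend central angle $\theta$, so $\mathrm{dis}(A,A')=2R\sin(\theta/2)=d$, and likewise $\mathrm{dis}(B,B')=\mathrm{dis}(C,C')=d$. Third, $A'B'C'$ is congruent to $ABC$, hence again a non-degenerate triangle inscribed in $\calC$, so the hypotheses are reproduced exactly and the very same analysis yields $A''B''C''$ in the next round, and so on indefinitely. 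This is precisely the behaviour demanded by \textit{TAR($d$)*}.

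The one delicate point—and the only place where anything could go wrong—is step two: $\rho_{O,\theta}$ is obtained as the composition of the three individual moves \emph{only because they occur simultaneously}. If one robot moved while another stayed, the circumcenter would shift and the requirement that the new triangle have the \emph{same} circumscribed circle $\calC$ would be violated; this is exactly why the simple solution above is tied to \FSY{} (and, consistently with the title of this subsection, is expected to fail under weaker schedulers). I would also remark that the robots' disagreement on the orientation of their local frames is harmless, since $O$, $R$, $\theta$, and the clockwise rotation by $\theta$ about $O$ are all defined intrinsically from the observed triangle together with chirality. Beyond making these invariance statements precise, I do not anticipate any real difficulty.
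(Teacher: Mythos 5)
Your proposal is correct and follows essentially the same approach as the paper: exploit full synchrony so that all three oblivious robots simultaneously perform the same snapshot-determined move of length $d$, which preserves the circumscribed circle and reproduces the hypotheses for the next round. Your write-up merely makes explicit (via the rotation about the circumcenter by the angle $\theta$ with $d=2R\sin(\theta/2)$) the common move that the paper's brief proof leaves implicit.
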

\begin{proof}
    Even if a robot cannot see its own light or that of other robots, since they operate under full synchrony, all robots move simultaneously by a distance of $d$. Therefore, it is possible to form a new triangle whose circumscribed circle is $\calC$.
\end{proof}

\begin{lemma}
    $\textit{TAR($d$)*} \notin \FS^{R}$
    \label{lem:11}
\end{lemma}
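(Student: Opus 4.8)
The plan is to argue by contradiction, following the template of the impossibility proof for \textit{HET} under $\FS^{R}$ (Lemma~\ref{lem:6}); the crucial difference is that the obstruction now comes from the robots' inability to record the circumscribed circle $\calC$ (equivalently, the last completed triangle) in a constant-size light, rather than from a two-point configuration collapsing under rescaling. Assume some deterministic algorithm $\calA$ solves \textit{TAR($d$)*} in $\FS$ under every \RSY\ schedule. Fix a \emph{generic} (scalene) initial triangle $T_{0}$ inscribed in $\calC$, and choose $d$ so that the corresponding rotation arc $\delta$ is small and $\delta/\pi$ is irrational. Then the completed triangles $T_{0},T_{1},T_{2},\dots$ obtained by rotating all three robots by one step each are pairwise congruent, hence present the \emph{same} shape to any observer, whereas any triangle obtained from some $T_{j}$ by rotating only one or two of the robots by $\delta$ is a distorted triangle whose shape differs from that of $T_{j}$. (Under \FSY\ the configuration is always one of the congruent $T_{j}$, which is why \textit{TAR($d$)*} is trivially solvable even by $\OB^{F}$, cf. Lemma~\ref{lem:10}; the whole point is that \RSY\ destroys this invariant.)

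Next I would fix the adversarial schedule. Let $\calS'$ coincide with a schedule under which $\calA$ is assumed to work for a finite prefix (which, as in Lemma~\ref{lem:6}, may be taken fully synchronous), and then let $\calS'$ activate the robots one at a time in the cyclic pattern $\{r_{0}\},\{r_{1}\},\{r_{2}\},\{r_{0}\},\dots$, which is a legal (consecutive subsets disjoint, fair) \RSY\ schedule. Under $\calS'$ the robots necessarily desynchronize: once $r_{0}$ first performs its rotation the configuration becomes a distorted triangle, and thereafter it alternates between ``partially rotated'' distorted triangles and ``completed'' triangles $T_{j}$. Because $\calA$ must also be correct on the purely fully synchronous sub-case of \RSY, a robot activated at a completed triangle is obliged to perform its $\delta$-rotation so as to keep producing the required sequence, whereas a robot that has already rotated in the current cycle must instead wait for the laggard; so the two situations demand different moves.

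The heart of the argument --- and the step I expect to be the main obstacle --- is to prove that no robot can tell these two situations apart from its local data. A robot sees only a triangle inscribed in $\calC$, modulo the adversary's variable disorientation (up to similarity, chirality fixed), together with its own constant-size light, and it has no way to store the geometry of the last completed triangle, which is continuous information. I would make this precise by running $\calA$ on a \emph{second} instance whose initial triangle $\widetilde{T}_{0}$ is chosen to have exactly the shape of one of the distorted triangles occurring in the first execution, with the robot labels matched to the corresponding vertices; driving this second instance through one full cycle yields a completed triangle $\widetilde{T}_{1}$ of the same shape, seen from the same vertex, and --- after choosing how long each instance runs and, if needed, invoking pigeonhole on the finitely many light values (constant memory forces repetition along an infinite execution) --- leaving the distinguished robot with the same light as in the first execution. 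That robot is then handed identical (snapshot, light) pairs in the two executions, yet is required to rotate in one and to wait in the other, contradicting the determinism of $\calA$ and hence establishing that \textit{TAR($d$)*} is not solvable in $\FS^{R}$.
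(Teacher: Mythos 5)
Your overall strategy---an indistinguishability argument exploiting that an $\FS$ robot can neither see the others' lights nor store the shape of the last completed triangle---is in the same spirit as the paper's argument, and your cyclic single-activation suffix is indeed a legal \RSY\ schedule. But the way you close the contradiction has two genuine gaps. First, the dichotomy ``required to rotate in one execution, required to wait in the other'' is not forced at the level of a single activation: \textit{TAR($d$)*} constrains only the sequence of \emph{formed} triangles, so a robot that rotates a second time while the laggard is still idle has not yet violated anything (it could in principle return later), and conversely correctness in your second instance only forces the distinguished robot to rotate \emph{eventually}, not at the particular activation you single out. Hence identical (snapshot, light) pairs at one instant do not immediately contradict determinism; you would need a persistent argument showing that after the forced wrong move the next required triangle can never be formed, not a one-shot comparison. (A minor further point: $d$ is the problem's parameter, so genericity of the rotation arc should be arranged by choosing the initial triangle and its circumradius, not by choosing $d$.)

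Second, and more seriously, the light-matching step does not work as proposed. Your two instances start from triangles of different shapes, so the distinguished robot's local histories differ from its very first $\LK$; a deterministic $\FS$ algorithm is an arbitrary function of (snapshot, own light) and may drive the light along entirely different trajectories in the two executions---what the robot cannot do is know which of the two shapes is ``the completed one,'' but it can perfectly well distinguish the two shapes inside a single snapshot and update its light differently. Pigeonhole along one infinite execution only yields recurrence of \emph{some} light value; it does not yield coincidence with the particular value arising at the critical configuration of the other execution, and the standard route to equal lights---identical complete local histories---is unavailable precisely because the initial shapes differ. The paper sidesteps this: it stays within a single instance and, after the synchronous prefix, alternates $r$ and $q$ while starving the third robot for a long finite time, arguing that a robot that sees no lights and remembers no shapes cannot tell whether the others have taken their step, so it either advances too far or has no basis for deciding when to resume. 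Your reinterpretation of a distorted mid-cycle triangle as a fresh initial configuration is an appealing alternative, but to make it rigorous you must either match entire local views across the two executions (all robots, all times, using the adversary's freedom in disorientation and scheduling) or supply a different mechanism than pigeonhole to equalize the lights at the critical step.
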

\begin{proof}
    Assume that there exists an algorithm $\calA$ that solves \textit{TAR(d)*} under any \RSY\ schedule $\calS$. Next, we construct another \RSY\ schedule $\calS'$ under which algorithm $\calA$ fails to solve \textit{TAR(d)*}. In order for algorithm $\calA$ to solve \textit{TAR($d$)*} under $\calS$, the three robots must be activated in perfect synchrony up to some round $i$, during which the problem is solved. We define $\calS'$ to be identical to $\calS$ up to round $i$, but for a finite number of rounds thereafter, $\calS'$ activates only one of the robots $r$ or $q$. Although $r$ and $q$ can recognize that they have moved and changed their light colors, they cannot determine whether the other robot has also changed its color or state, since it may not have been activated. Moreover, since the third robot is not activated during this time, it is impossible to form a new triangle. If there were a guarantee that the three robots are activated in turn, such as under a round-robin schedule, it would be possible to eventually form a new triangle. However, since no such guarantee exists in this configuration, it becomes difficult to construct a new triangle based solely on snapshots. Therefore, under schedule $\calS'$, if either robot $r$ or $q$ is activated for a finite number of rounds, it is difficult to form a new triangle using only snapshots, resulting in an invalid configuration. In other words, \textit{TAR(d)*} cannot be solved by algorithm $\calA$ under the valid \RSY\ schedule $\calS'$. This contradicts the assumption that algorithm $\calA$ solves \textit{TAR(d)*} under any \RSY\ schedule. Thus, we conclude that no valid algorithm exists for solving \textit{TAR(d)*} under \RSY\ schedules in the $\FS$ model.
\end{proof}

\begin{lemma}
    $\textit{TAR($d$)*} \notin \FC^{S}$.
    \label{lem:12}
\end{lemma}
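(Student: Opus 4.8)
The plan is to argue by contradiction, in the spirit of the impossibility proofs for \textit{HET} in $\FC^{S}$ (Lemma~\ref{lem:8}) and for \textit{TAR($d$)*} in $\FS^{R}$ (Lemma~\ref{lem:11}), and to exploit the structural weakness of $\FC$ that $\FS$ does not share: a robot sees every light except its own. By Lemma~\ref{lem:10} the problem is solvable in $\OB^{F}$, so whatever obstruction arises must be traceable entirely to the semi-synchronous scheduler. Suppose some algorithm $\calA$ solves \textit{TAR($d$)*} in $\FC^{S}$ under every \SSY\ schedule, starting from an arbitrary initial triangle $ABC$ inscribed in a circle $\calC$. I would have the adversary use a schedule $\calS'$ that, for a finite but arbitrarily long prefix of rounds, activates one robot only, say $a$, and resumes fair activation of all three only afterwards. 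Throughout the prefix $b$ and $c$ are dormant, so their lights stay at their initial colors; and since $a$ is an $\FC$ robot it never observes its own light. Hence, at each of its activations, the only information $a$ obtains beyond the two fixed colors of $b$ and $c$ is its geometric snapshot, so its action is a function of that snapshot alone. Using variable disorientation---in particular the freedom to rescale distances, exactly as in the proofs of Lemmas~\ref{lem:6}--\ref{lem:8}---the adversary hands $a$ at each activation a local frame in which the triangle it currently forms with $b$ and $c$ is presented as a valid ``fresh'' starting triangle, so $a$ performs the same action every time.

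I would then split on that action. If $\calA$ ever has $a$ make a non-zero move at such an activation, then it does so at every activation of the prefix, so during the whole prefix the positions of $b$ and $c$ never change while $a$ keeps being displaced. Because the specification pins every rotation to the robots' positions at the start of the step---it demands a time $t_1$ at which $\mathrm{dis}(A(t_0),A(t_1)) = \mathrm{dis}(B(t_0),B(t_1)) = \mathrm{dis}(C(t_0),C(t_1)) = d$---one shows, choosing the initial triangle suitably, that once $a$ has been displaced more than once without $b$ and $c$ moving in step, the first demanded triangle $T_1$ can no longer ever be realised; this already contradicts the correctness of $\calA$ on the fair \SSY\ schedule $\calS'$. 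If instead $\calA$ never moves the unique active robot when it is shown a fresh-looking triangle with the two others at their initial colors, then I would invoke symmetry: by variable disorientation the adversary can equip all three robots, in a single round, with local frames in which each sees exactly such a ``do not move'' snapshot (and the other two at their initial colors). In that round no robot moves; its only effect is that robots may overwrite their own lights, but in $\FC$ none of them can read the result, so the adversary iterates and no rotated triangle is ever formed---again contradicting the specification, which requires infinitely many rotations. Either way we reach a contradiction, hence $\textit{TAR($d$)*} \notin \FC^{S}$.

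The step I expect to be the real obstacle is making both branches watertight against protocols that try to exploit the one capability $\FC$ grants over $\OB$: visibility of the \emph{other} robots' lights. The point to drive home is that keeping $b$ and $c$ dormant neutralises that channel in both directions---nothing $a$ writes to its own light can reach $a$ (it is in $\FC$) or, for the duration of $\calS'$, anyone else, while $b$ and $c$'s lights only ever convey to $a$ their stale initial value---so during $\calS'$ the active robot is effectively oblivious and mute, reacting to geometry alone, which is what powers the indistinguishability argument. The more delicate piece, shared with Lemmas~\ref{lem:6}--\ref{lem:8}, is quantitative: in the ``move'' branch one must check that the adversary can pick an initial triangle (and exploit the scaling freedom of variable disorientation) so that the repeated lone moves of $a$ provably preclude ever attaining $\mathrm{dis}(A(t_0),A(t_1)) = d$ together with the matching conditions for $b$ and $c$; and in the ``stay'' branch one must exhibit an explicit geometric configuration together with a consistent assignment of disoriented local frames to $a$, $b$, $c$ under which each robot's view is a ``do not move'' view.
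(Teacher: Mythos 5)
Your ``move'' branch is essentially the paper's own argument: the paper runs a fully synchronous execution up to a round $k$ at which a robot $r$'s computed action is a move, then deviates by activating only $r$; since $r$ cannot read its own light, the dormant robots' lights are frozen, and variable disorientation re-presents the resulting (scaled) triangle as a fresh one, $r$ repeats its move while the others stand still, contradicting the specification. Up to where the lone-activation phase starts, that is exactly your first case, and it is the right core idea.

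The genuine gap is your second branch. You claim that if the lone-activated robot stays put when shown a fresh triangle with the others at their initial colors, the adversary can hand all three robots ``do not move'' views in one fully synchronous round and then \emph{iterate}, because ``in $\FC$ none of them can read the result'' of the light updates. This misreads the model: in $\FC$ a robot cannot read its \emph{own} light, but every other robot can. After one such round the colors may have changed and are visible to the other two robots, so in the next round the views are no longer ``others at initial colors,'' your branch-2 hypothesis says nothing about them, and the algorithm may legitimately use those visible color changes to coordinate the next rotation—this visibility is precisely what separates $\FC$ from $\OB$, and neutralising it is the whole difficulty of the lemma. Hence the iteration does not go through and the ``stay'' case is not closed. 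The paper avoids needing this case at all by placing the deviation at a synchronous round where the chosen robot's action is already a non-zero move (such a round exists because the synchronous execution must perform infinitely many rotations); adopting that choice would let you drop branch 2 entirely. A smaller soft spot in your branch 1: after the lone robot moves, the observed triangle generally changes \emph{shape}, not merely scale, so disorientation (rotation and rescaling, with chirality) does not by itself make successive snapshots identical, and the claim that ``$a$ performs the same action every time'' needs justification; the paper finesses the same point with an explicit, if informal, assumption that the new configuration again looks like a scaled acute triangle.
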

\begin{proof}
    We prove by contradiction. Assume that there exists a correct solution protocol $\calA$ in $\FC^{S}$. Let the initial configuration $C_0$ be such that robots $a$, $b$, and $c$ form an acute triangle $ABC$, where $AB \neq d$, $BC \neq d$, and $CA \neq d$, and all lights are initially set to $off$. Consider an execution $\varepsilon$ in which all robots are activated simultaneously in every round, starting from $C_0$ and producing a sequence of configurations $C_0, C_1, \dots$. Suppose that there exists a round in which at least one robot is not activated simultaneously with the others. Let $r$ be a robot that operates asynchronously for the first time in round $k$, after having observed the configuration $C_{k-1}$. Now consider another execution $\varepsilon'$, which proceeds identically to $\varepsilon$ for the first $k-1$ rounds, but in round $k$, only robot $r$ is activated. In this case, $r$ changes its light color and moves to a new position. Assume that the resulting configuration forms a scaled triangle. If the previous configuration was also an acute triangle, then the externally observable situation becomes indistinguishable from the previous one, and thus robot $r$ may repeat the same move again. That is, the robot continues to move without gaining any new information from its previous position, and consequently, the solution to \textit{TAR(d)*} cannot be obtained. Therefore, the \textit{TAR($d$)*} problem cannot be solved by robots operating under the $\FC^{S}$ model.
\end{proof}

\begin{lemma}
    $\textit{TAR(d)*} \in \LU^{R}$.
    \label{lem:13}
\end{lemma}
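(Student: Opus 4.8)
The plan is to give an explicit $\LU$ algorithm that works under every \RSY\ schedule using only a constant number of colors. Give each robot a light $r.\mathit{col}\in\{0,1,2\}$, read modulo $3$, meant to record the number of completed $d$-rotations of $r$ modulo $3$; initially all lights are $0$ and the robots sit at the initial triangle $T_0$. Because an $\LU$ robot sees the other lights and remembers its own, an activated robot compares the three counters and sorts the situation into exactly one of three cases: (i) all three counters are equal, so the current placement is one of the target triangles $T_m$; (ii) some other counter equals $r.\mathit{col}+1\bmod 3$, so $r$ is a ``laggard''; or (iii) some other counter equals $r.\mathit{col}-1\bmod 3$, so $r$ is ``advanced''. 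In cases (i) and (ii), $r$ performs one $d$-step clockwise along the circumscribed circle $\calC$ and sets $r.\mathit{col}\gets r.\mathit{col}+1\bmod 3$; in case (iii) it stays put.

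First I would establish, by induction over rounds, the invariant that at every round boundary the three rotation counts pairwise differ by at most $1$. This makes the three cases exhaustive and mutually exclusive (the colors present form a singleton or $\{j,j+1\bmod 3\}$) and, crucially, lets each robot read off from the colors which of two consecutive ``eras'' ($m$ or $m+1$) every robot currently occupies. Checking that each admissible \RSY\ activation preserves the invariant is a short case analysis: from a complete triangle $T_m$, activating one robot creates a single advanced robot, activating two yields $T_{m+1}$ at once, and (possible only in the optional \FSY\ prefix) activating all three yields $T_{m+1}$; and from a split state, the robot that just moved ahead belonged to the previous activation set, so by the \RSY\ disjointness constraint it cannot move again in the next round, while any already-advanced robot that is reactivated merely idles by case (iii) — hence the only way a split state changes is that a trailing robot advances, and the split never widens.

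The step with real content is showing that a laggard activated in a split configuration can still compute its destination, even though what it sees is not a valid triangle. Here the invariant does the work: the era-$m$ robots occupy vertices of $T_m$ and the era-$(m{+}1)$ robots occupy vertices of $T_{m+1}$, all of which lie on the original circle $\calC$; the robot's own light together with the others' lights tells it which of the observed points are the advanced ones; and three distinct points on a circle determine that circle. So the robot reconstructs $\calC$ in its local frame and moves to the unique point of $\calC$ lying a $d$-step clockwise from its current position, i.e., to its vertex of $T_{m+1}$. (As is already implicit for \textit{TAR($d$)} and in Lemma~\ref{lem:10}, we take $d$ small enough that every inscribed triangle stays non-degenerate and no two of the relevant observed points ever coincide, so this reconstruction is always well defined.) Finally, fairness forces every laggard to be activated infinitely often, so each split is eventually closed into some $T_{m+1}$, and a later activation then reopens a split toward $T_{m+2}$; thus $T_0,T_1,T_2,\dots$ are produced in order at a strictly increasing sequence of times, as \textit{TAR($d$)*} requires.

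I expect the main obstacle to be the conjunction of the last two steps: arranging the color discipline so that the width-$1$ invariant provably survives \emph{every} \RSY\ activation pattern — in particular ruling out an advanced robot ever ``lapping'' a laggard, which is exactly where \RSY\ disjointness is indispensable and where the analogous $\FS^{R}$ attempt fails (cf. Lemma~\ref{lem:11}) — while simultaneously guaranteeing that in each mixed intermediate placement the color-labeled snapshot pins down $\calC$ uniquely, so the destination is well defined.
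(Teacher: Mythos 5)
Your proposal is correct and takes essentially the same route as the paper: the paper's Algorithm~\ref{alg:TARd*_LUMI_R} uses exactly this mod-$3$ status light with the rule ``advance when all others share your status or some other robot is one ahead, otherwise idle,'' and your width-$1$ invariant, circle-reconstruction step in split configurations, and fairness argument simply make explicit what the paper's brief proof leaves implicit. One minor slip that does not affect the argument: from a complete triangle $T_m$, activating two robots does not yield $T_{m+1}$ at once but a split state with counters $\{m, m+1, m+1\}$ and one laggard, which your invariant and the eventual-closing argument already cover.
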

\begin{proof}
     The problem is solved by \textbf{Algorithm~\ref{alg:TARd*_LUMI_R}}. Initially, each robot has $r.status = 0$. If a robot finds that its own $r.status$ is the same as that of all other robots, or that the $r.status$ of every other robot is equal to its own status plus one, then it updates its status to $r.status + 1 \bmod 3$, moves a predetermined distance $d$, and repeats this cycle. Each robot can determine the status of the others from their lights and also stores the light states locally, which allows it to always know both its own and others’ statuses. Therefore, regardless of when they are activated, the robots can continue to form new triangles whose circumscribed circle is $\calC$ indefinitely.
\end{proof}

\begin{algorithm}
    \caption{Alg \textit{TAR($d$)*} for robot $r$}
    \label{alg:TARd*_LUMI_R}
    \textbf{Assumptions:} $\LU$,\RSY \\
    \textbf{Light:} $r.status \in \{0, 1, 2, 3\},$ initial value is $0$.\\
    
    $Phase Look:$ The robot can see its own lights and the lights of other robots.\\
    
    $Phase Compute:$
    \begin{algorithmic}[1]
        \If {$(\forall \rho \neq r,\, \rho.status = r.status)\ \lor\ (\exists \rho \neq r,\, \rho.status = r.status + 1)$}
            \State $r.status \gets (r.status + 1) \bmod 3$
            \State $r.des \gets d$
        \EndIf
    \end{algorithmic}

    \vspace{1em}
    $Phase Move:$\\
    Move to $r_i.des$.
\end{algorithm}
\clearpage
\begin{figure}[H]
    \centering
    \includegraphics[width=12cm]{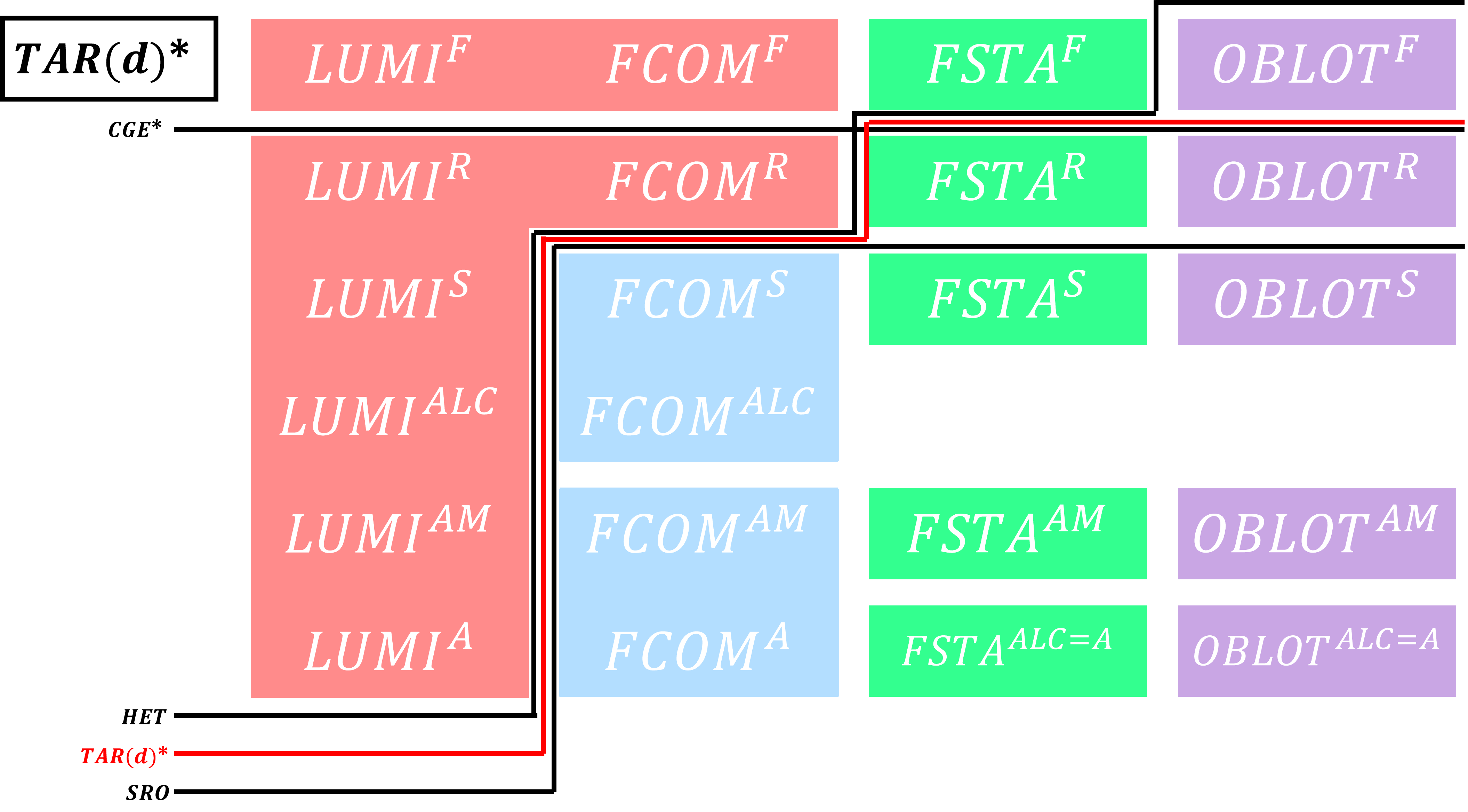}
    \caption{Boundary of \textit{TAR($d$)*}}
    \label{fig:TAR_boundary}
\end{figure}

\begin{theorem}
    \textit{TAR($d$)*} is solved in both of $\OB^{F}$ and $\LU^{R}$, but not in any of $\FS^{R}$ or $\FC^{S}$.
    \label{th:tart}
\end{theorem}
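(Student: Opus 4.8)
The plan is to obtain Theorem~\ref{th:tart} directly as the conjunction of the four lemmas proved in this subsection. The two membership claims come from Lemma~\ref{lem:10} ($\textit{TAR}(d)^* \in \OB^{F}$) and Lemma~\ref{lem:13} ($\textit{TAR}(d)^* \in \LU^{R}$); the two impossibility claims come from Lemma~\ref{lem:11} ($\textit{TAR}(d)^* \notin \FS^{R}$) and Lemma~\ref{lem:12} ($\textit{TAR}(d)^* \notin \FC^{S}$). So the write-up of the theorem itself is just a one-line assembly: cite Lemmas~\ref{lem:10} and~\ref{lem:13} for ``solved in both $\OB^{F}$ and $\LU^{R}$'', and Lemmas~\ref{lem:11} and~\ref{lem:12} for ``not in any of $\FS^{R}$ or $\FC^{S}$''. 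Nothing else is needed, since the four models named in the statement are exactly the ones covered by these lemmas.

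Where the real argument lives is in the lemmas. For the positive side I would just point to the explicit protocols: under \FSY\ (Lemma~\ref{lem:10}) every robot advances by $d$ along the common circumcircle $\calC$ in lockstep, so neither memory nor communication is required; under $\LU^{R}$ (Lemma~\ref{lem:13}) a light taking values in $\{0,1,2,3\}$ used modulo~$3$ lets each robot check that it and all others are ``in phase'' (all equal, or the others one step ahead) before moving distance $d$ and incrementing, which is robust to the consecutive-round disjointness of \RSY\ because the locally stored light plus the visible lights always reconstruct the global phase. These correctness checks are routine and are carried out by Algorithm~\ref{alg:TARd*_LUMI_R}.

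The main obstacle is the two impossibility results, and I expect both to follow the same adversarial template. Given a putative solver $\calA$ under an arbitrary scheduler in the relevant class, run the execution up to the round $i$ where $\calA$'s correctness forces a synchronized step, then switch to a schedule $\calS'$ that agrees with the original through round $i$ but thereafter activates only one robot for finitely many rounds — legal for both \RSY\ (consecutive activations stay disjoint) and \SSY\ (each activation set is non-empty), while still being fair. One then argues that, under variable disorientation, the adversary can rescale each snapshot so that observed inter-robot distances are unchanged, making the configuration after the partial move indistinguishable from an earlier one: in $\FS^{R}$ the non-activated robots cannot have updated their internal state, and in $\FC^{S}$ the single active robot is oblivious and cannot recall that it already moved, so it either repeats a move it has already performed or leaves the third robot stranded, and the required infinite sequence of distinct triangles with common circumcircle $\calC$ cannot be produced — contradicting solvability. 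The delicate point to verify is precisely that $\calS'$ remains fair and respects the structural constraints of \RSY\ and \SSY, which is exactly the place where the gap between $F$ and $R/S$ is exploited, and which is why full synchrony (Lemma~\ref{lem:10}) can substitute for both lights and memory while weaker synchrony cannot.
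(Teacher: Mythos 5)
Your proposal matches the paper exactly: Theorem~\ref{th:tart} is stated there without a separate proof, being precisely the conjunction of Lemma~\ref{lem:10}, Lemma~\ref{lem:13}, Lemma~\ref{lem:11}, and Lemma~\ref{lem:12}, which is the assembly you describe. Your sketches of the underlying lemma arguments (lockstep motion under \FSY, the mod-3 phase lights under \RSY, and the adversarial single-robot activation schedules for the two impossibilities) also track the paper's own proofs of those lemmas.
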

  
\section{Separation under Asynchronous Schedulers}\label{sep-asynch-scheduler}
This section addresses separation results that involve asynchronous and partially synchronous schedulers.

Specifically, Section~\ref{sec:5-1} and Section~\ref{sec:5-2} present problems that can be solved by $\LU^A$, but cannot be solved by $\OB$ under full synchrony ($\OB^{F}$).
These problems illustrate how $\FC$ and $\FS$ models depend critically on their scheduler assumptions 
and reveal non-trivial trade-offs between internal memory, external lights, and the degree of synchrony.

In contrast, Section~\ref{sec:5-3} focuses on a unique problem that is solvable by any model other than $\OB$ under the weakest scheduler (\ASY),
but remains unsolvable by $\OB$ even under \RSY.
It becomes solvable for $\OB$ only when perfect synchrony (\FSY) is assumed.

These results together clarify how the separation landscape extends beyond synchronous settings
and how asynchronous schedulers highlight the minimal capabilities required for solving coordination problems.
  \subsection{Fully Asynchronous Mutual Light Trade-Off Problems (LP-MLCv, VEC, ZCC)}\label{sec:5-1}

The diagram summarizes three representative problems—\textit{LP–MLCv}, \textit{VEC}, and \textit{ZCC}—that each demonstrate how specific trade-offs between internal memory, external lights, and scheduler strength determine solvability.

Each arrow or line in the figure shows which pairs of configurations are strictly separated by these problems,
clarifying how $\LU^{A}$ can solve them, while $\OB^{F}$ cannot,
and revealing which combinations of $\FC$ and $\FS$ depend critically on the scheduling assumptions.

  \label{arc:LP-MLCv}
\begin{Definition}
    \textit{LP--MLCv} (Leave the Place--MLCv)
    \\Two robots $r$ and $q$ are placed at arbitrary distinct points. Each robot moves away from the other along the line segment connecting $r$ and $q$ (referred to as \textit{LP}). Subsequently, $r$ and $q$ must solve the collision-free linear convergence problem (referred to as \textit{MLCv}) without ever increasing the distance between them. When performing \textit{MLCv} after completing \textit{LP}, the robots must have moved away from their initial positions. Assuming the robot located in the negative direction is $r$ and the robot in the positive direction is $q$, the \textit{LP--MLCv} problem is defined by the following logical formula.
    \begin{align*}
        \textit{LP--MLCv} \equiv 
        \big[ &\exists T \geq 0 : \big[\{ \forall t \geq 0 : r(0), q(0) \in \overline{r(t) q(t)} \}\\
            &\land \{r(T) \neq r(0), \, q(T) \neq q(0) \} \\
            &\land \{\forall t \leq T : r(t) - r(0) \leq 0, \, q(t) - q(0) \geq 0 \}\big]\\
            &\land \big[ \{\exists l \in \mathbb{R}^2, \forall \epsilon \geq 0, \exists T' \geq T, \forall t \geq T' : 
                |r(t) - l| + |q(t) - l| \leq \epsilon \}\\
            &\land \{\forall t \geq T : r(t), q(t) \in \overline{r(T) q(T)} \}\\
            &\land \{\forall t \geq T : 
                \text{dis}(r(T), r(t)) \leq \text{dis}(r(T), q(t)), \, 
                \text{dis}(q(T), q(t)) \leq \text{dis}(q(T), r(t)) \}\\
            &\land \{\forall t \geq T : q(t) - r(t) \leq q(T) - r(T)\} \big] \big]
    \end{align*}
\end{Definition}
    
\begin{figure}[H]
    \centering
    \includegraphics[width=7.5cm]{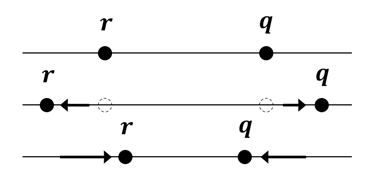}
    \caption{Movement of the \textit{LP--MLCv} problem.}
    \label{fig:lp}
\end{figure}

\begin{lemma}
    $\textit{LP--MLCv} \notin \OB^{F}$.
    \label{lem:14}
\end{lemma}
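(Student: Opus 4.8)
The plan is to argue by contradiction: assume there is an algorithm $\calA$ solving \textit{LP--MLCv} in $\OB^F$, and derive a contradiction by exploiting the combination of obliviousness, variable disorientation, and the fact that in $\OB$ a robot's move depends only on the geometric snapshot (which, for two robots, is just the single observed inter-robot distance, which the adversary can rescale to $1$ at every $\LK$). The key tension to exploit is that \textit{LP--MLCv} forces two \emph{incompatible} behaviors from the same snapshot: in the ``leave the place'' phase the robots must strictly \emph{increase} their mutual distance, while in the \textit{MLCv} phase they must never increase it and must converge. Since an oblivious robot under full synchrony sees the same (rescaled) configuration in both phases, it has no way to tell which phase it is in.

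\textbf{Key steps.} First I would note that with $n=2$ and no multiplicity detection, the only geometric information in a snapshot is the position of the other robot relative to the observer's (arbitrary, adversarially chosen) local frame; under variable disorientation the adversary may present this as the unit vector at distance $1$, so effectively every snapshot is the \emph{same} up to the adversary's choice of local frame orientation and handedness. Second, I would observe that because both robots are activated in every round (\FSY) and run the identical deterministic program on identical (rescaled) snapshots, the adversary can choose the two local frames so that the robots' computed displacements are mirror images about the midpoint — preserving the symmetry $r(t)$ and $q(t)$ remain reflections of each other across the perpendicular bisector of $\overline{r(0)q(0)}$ — OR the adversary can choose frames making them translate in the same world direction; the point is the \emph{magnitude} $\lambda$ by which each robot scales its distance from the midpoint is a fixed function of the (unit) snapshot, hence a constant across all rounds. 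Third, case analysis on this constant $\lambda$: if $\lambda > 1$ the mutual distance strictly increases every round forever, so the robots never converge and \textit{MLCv} fails; if $\lambda < 1$ the distance strictly decreases from round $1$, so the \emph{LP} requirement ($r(t)-r(0)\le 0$ with $r(T)\neq r(0)$ achieved by moving \emph{away}) is violated — the robots moved toward each other, not away — or if they move along the line but inward, the predicate's ``$q(t)-r(t)\le q(T)-r(T)$ after the separation'' versus the required initial expansion cannot both hold; if $\lambda = 1$ no robot ever moves, so $r(T)=r(0)$, violating the LP clause $r(T)\neq r(0)$. Finally, I would also rule out non-scaling behaviors (moving off the line, or to the other robot's position) using the line-invariance clause $\forall t: r(t),q(t)\in\overline{r(0)q(0)}$ and the Case-1-style swap argument from Lemma~\ref{lem:RDV_FSTA}: a move onto the other robot's position lets the adversary swap them, and a generic off-bisector move is forced symmetric, so in every case the required monotone expansion-then-convergence cannot be realized.

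\textbf{Main obstacle.} The delicate point is handling the \emph{transition} between the LP phase and the MLCv phase. The definition only requires that \emph{there exists} a time $T$ after which the convergence predicate holds, so the algorithm is free to ``expand then contract'' — and one must show an oblivious robot cannot detect when to switch. The crux is that after the expansion the configuration is still just ``two robots at some distance,'' which the adversary rescales to $1$ — \emph{exactly} the snapshot seen in the initial round — so whatever displacement the algorithm computes there, it must be the \emph{same} displacement it computed in round $1$. If that displacement expands, it expands forever (no convergence); if it contracts, it already contracted in round $1$ (no expansion, LP fails); if it is zero, nothing ever happens. Pinning this down rigorously — i.e., proving the snapshot-to-displacement map is genuinely a single fixed function and that the adversary can always realize the rescaling consistently with a valid \FSY\ schedule and with the line-preservation constraint — is the technical heart of the argument, and it mirrors the three-case structure already used in Lemma~\ref{lem:RDV_FSTA}.
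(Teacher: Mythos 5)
Your proof is correct and rests on the same core idea as the paper's (much briefer) argument: an oblivious robot's decision depends only on the current snapshot, which---after the adversary exploits variable disorientation to rescale the observed distance to $1$---is identical before and after the \textit{LP} phase, so the robot cannot tell whether to perform \textit{LP} or \textit{MLCv}, and no single fixed response can both expand away from the initial positions and then monotonically contract and converge. Your rescaling-plus-case-analysis ($\lambda>1$, $\lambda<1$, $\lambda=1$, off-line/swap moves) is simply a more rigorous elaboration of what the paper states in two sentences, so no further comparison is needed.
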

\begin{proof}
    In the $\OB$ model, robots are \textit{oblivious} and thus base their decisions solely on information from the current snapshot. This constraint makes it impossible for an activated robot to distinguish whether it or the other robot has already completed the \textit{LP} action. Consequently, the robot cannot correctly select whether to perform \textit{LP} or \textit{MLCv} as its next action.
\end{proof}

\begin{lemma}
    $\textit{LP--MLCv} \notin \FS^S$.
    \label{lem:15}
\end{lemma}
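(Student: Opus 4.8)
Proof proposal.

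The plan is to argue by contradiction, in the same spirit as Lemma~\ref{lem:14} but exploiting the weakness of \SSY\ (one-at-a-time activations) rather than obliviousness, and using the same ``finite prefix, then arbitrary fair extension'' device as Lemmas~\ref{lem:6} and~\ref{lem:8}. Suppose some algorithm $\calA$ solves \textit{LP--MLCv} in $\FS$ under every fair \SSY\ schedule. The key observation is that in $\FS$ a robot sees only its own light, so it can record whether \emph{it} has already performed its \textit{LP} move but has no way to learn whether the \emph{other} robot has; moreover, with two robots and \textit{Variable Disorientation} the adversary can present to any activated robot a snapshot in which the other robot always occupies the same point (say $(1,0)$), at unit distance. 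Hence, whenever a robot is fed such an uninformative snapshot, both its computed destination and its new light are functions of its current light only; since the light takes finitely many values, the resulting sequence of states, and with it the sequence of moves, is eventually periodic. A correct $\calA$ must keep both robots on the line $\overline{r(0)q(0)}$ (collinearity clause), so the $i$-th move of such a robot sends it from its current position $x$ to $x+t_i(y-x)$ on that line, where $y$ is the other robot's position and $t_i\in\mathbb{R}$ is fixed by the $i$-th state in the sequence.

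Next I would set up the one-dimensional bookkeeping. Put coordinates on the line with $r(0)=0$, $q(0)=D_0>0$, and let $\delta=x_q-x_r$ be the signed gap. An activation that (i) uses the uninformative snapshot and (ii) is the $k$-th activation of that particular robot multiplies $\delta$ by $(1-t_k)$, irrespective of which of $r,q$ moved. Therefore, after $r$ has been activated $u$ times and $q$ has been activated $v$ times, in any interleaving, all with uninformative snapshots, one has $\delta=D_0\cdot\prod_{i=1}^{u}(1-t_i)\cdot\prod_{i=1}^{v}(1-t_i)$. Two side remarks: no $t_i$ equals $1$, for a robot using such a $t_i$ lands exactly on the other, violating collision-freeness of the convergence phase; and a \textit{LP} move is precisely one moving a robot away from the other, i.e.\ one with $t_i<0$, so at least one $t_i$ is negative.

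Let $\rho$ be the product of $(1-t_i)$ over one full period of the eventually-periodic multiplier sequence. If $|\rho|\ge 1$, then $|\prod_{i=1}^{k}(1-t_i)|$ is bounded below by a positive constant $c$ for all $k$ (finitely many partial-period products, none zero). On the fair schedule that activates $r$ and $q$ alternately with uninformative snapshots, this gives $|\delta|\ge D_0c^2>0$ for all time, so the two robots never come within distance $D_0c^2$ and the convergence demanded by \textit{LP--MLCv} fails — contradiction. If instead $|\rho|<1$, activate \emph{only} $r$, keeping $q$ at $q(0)$; then $|\delta|=D_0\,|\prod_{i=1}^{u}(1-t_i)|\to 0$, hence $x_r\to D_0$, so there is a first activation index $j^{*}$ with $x_r>0$, i.e.\ $r$ has then crossed $r(0)$ toward $q$. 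Let $\calS'$ activate only $r$ for its first $j^{*}$ rounds and then extend to any fair schedule. In the resulting execution, at the end of round $j^{*}$ we have $r(t)-r(0)>0$ while $q$ has not moved at all, so $q(t)=q(0)$ up to that time. No transition time $T$ witnesses the predicate: if $T$ is at or after round $j^{*}$ then the clause $\forall t\le T:\ r(t)-r(0)\le 0$ fails at $t=j^{*}$; if $T$ precedes round $j^{*}$ then the clause $q(T)\ne q(0)$ fails. This contradicts the correctness of $\calA$ on the fair schedule $\calS'$, so $\textit{LP--MLCv}\notin\FS^{S}$.

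The step I expect to be the main obstacle is precisely making this ``one robot alone'' intuition legitimate: activating $r$ forever is not a fair schedule, so the argument must be routed through a schedule with a \emph{finite} decisive prefix (the first $j^{*}$ rounds, or enough rounds to enter the periodic phase in the $|\rho|\ge1$ case) followed by an arbitrary fair continuation, and one must check that the contradiction is already forced within that prefix. A secondary point of care is the possibility that $\calA$ ``overshoots'' ($t_i>1$), so that $r$ and $q$ swap order along the line and $\delta$ changes sign: the multiplicative identity for $\delta$ survives if one tracks $|\delta|$, and the side-preservation clauses ($\mathrm{dis}(r(T),r(t))\le\mathrm{dis}(r(T),q(t))$ and its symmetric counterpart) together with the monotonicity clause $q(t)-r(t)\le q(T)-r(T)$ are what rule out such swaps and unbounded growth after $T$ — these should be invoked explicitly to close the case analysis.
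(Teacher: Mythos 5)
Your proposal is correct in substance, but it reaches the contradiction by a genuinely different route than the paper. The paper's proof stays at the level of the algorithm's qualitative decision: after $r$ has done its \textit{LP} move, the adversary (via variable disorientation and solo activation of $r$) keeps $r$'s view indistinguishable, and then argues a dichotomy on what $\calA$ prescribes — if $r$ ever initiates the approach it can be fooled into doing so before $q$ has left its place, and if it never does it stays put forever and convergence fails. You replace this with a quantitative finite-state analysis: under uninformative snapshots an $\FS$ robot's move parameters $t_1,t_2,\dots$ form an eventually periodic sequence, the signed gap evolves multiplicatively as $\prod(1-t_i)$, and the case split on the period product $|\rho|$ yields either a uniform positive lower bound on the gap under alternating activation (killing convergence) or, under a finite solo prefix for $r$, a forced crossing of $r(0)$ before $q$ has moved, for which you correctly identify the exact predicate clauses that no witness $T$ can satisfy. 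What your version buys is rigor at precisely the points where the paper is loose: the paper's first horn ("$r$ initiates \textit{MLCv} prematurely") is not by itself a violation of the formal predicate, whereas your crossing of $r(0)$ is, and your $|\rho|\ge 1$ horn covers not only "stay still forever" but any non-contracting periodic behavior; what it costs is the extra bookkeeping, plus two small details you should state explicitly rather than leave implicit — that a destination off the line through the two robots already violates the collinearity clauses in the constructed execution, and that $t_i\ge 1$ is excluded because in a sequential schedule the mover passes through (or lands on) the other robot, which the collision-freeness requirement forbids. With those two sentences added, your argument is a complete and, if anything, tighter proof than the paper's.
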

\begin{proof}
    Suppose there exists an algorithm $\calA$ that solves \textit{LP--MLCv} under the $\FS^S$ model.  
    Consider the case where, in the \textit{LP} phase, a robot $r$ that has finished moving away from a robot $q$ is activated in round $i$.  
    In round $i$, the possible actions for $r$ are either to approach $q$ in order to perform \textit{MLCv}, or to remain stationary until $q$ completes the LP phase.  
    Assume that round $j$ is the first round in which $r$ performs \textit{MLCv}. By this round, both $r$ and $q$ must have completed the LP phase. Now, consider a round $k$ before $q$ has performed its LP phase. Due to variable disorientation, the adversary can activate only $r$ from round $k$ until round $j$ to bring the system to an identical state from $r$'s perspective. In this scenario, $r$ would initiate \textit{MLCv} without $q$ having performed the LP phase, which means algorithm $\calA$ is incorrect.
    Here, the robot $r$ under the $\FS$ model cannot determine with only a finite amount of memory whether $q$ is still located at its initial position or has moved away from it.  
    Therefore, if algorithm $\calA$ dictates that $r$ chooses to approach $q$, there is a possibility that $r$ executes this action even if $q$ has remained at its initial position.  
    On the other hand, if algorithm $\calA$ instructs $r$ to stay still, $r$ will never observe that $q$ has moved away from the initial position, and thus $r$ will remain stationary forever from round $i$ onward, regardless of how many times it is activated.  
    These behaviors contradict the problem definition, and thus contradict the correctness of $\calA$.
\end{proof}

\begin{lemma}
    $\textit{LP--MLCv} \in \FS^R$.
    \label{lem:16}
\end{lemma}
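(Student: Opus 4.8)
The plan is to exhibit an explicit two‑color algorithm and to verify it using the rigid structure that \RSY\ forces when $n=2$. The first step is to record the key structural fact: an \RSY\ schedule over two robots consists of a finite (possibly empty) prefix of fully synchronous rounds, followed by a tail in which every round activates exactly one robot and the two robots strictly alternate, $r,q,r,q,\dots$ (or $q,r,q,r,\dots$); this is forced because any two consecutive tail subsets must be non‑empty and disjoint, and with only two robots the sole possibility is alternating singletons. The algorithm gives each robot a light with values $A$ (initial, ``\textit{LP} not yet performed'') and $B$ (``\textit{LP} done; perform \textit{MLCv}''). On activation with value $A$ the robot moves along the line directly away from the other robot, say to the point that doubles the current inter‑robot distance, and then sets its light to $B$; on activation with value $B$ the robot keeps its light at $B$ and moves to the point one quarter of the way from its current position toward the current midpoint of the segment joining it to the other robot.

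The second step is to verify the convergence part under the temporary assumption that both robots have reached value $B$. Each status‑$B$ move shrinks the inter‑robot gap by a fixed multiplicative factor, keeps the gap strictly positive (so the robots never coincide or cross), moves $r$ monotonically toward $q$ but never past the current midpoint, and is monotone along its straight‑line trajectory, so the distance is non‑increasing throughout the move. Since all such moves stay inside $\overline{r(T)q(T)}$, where $T$ is the instant the second robot enters status $B$, the line‑containment and ``own‑side'' predicates of \textit{MLCv} hold; fairness then forces the gap to $0$, giving convergence. This argument is insensitive to whether the two robots act simultaneously (in the prefix) or alternately (in the tail), since ``move toward the current midpoint'' is well defined and safe in both.

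The heart of the proof is the \textit{LP}/\textit{MLCv} hand‑off: I would show that the first round $j$ in which any robot executes a status‑$B$ move occurs only after \emph{both} robots have performed their (unique) outward status‑$A$ move, so one may legitimately take the switch time $T$ to be the end of round $j-1$. This is proved by case analysis on the round $i<j$ at which the robot $r$ acting in round $j$ first set its light to $B$. If $i$ is a prefix round it must be round $1$ (a prefix round after round $1$ would already find $r$ at $B$), in which both robots move outward together; if $i$ is the first tail round, that round is the singleton $\{r\}$, forcing round $i+1$ to be $\{q\}$ and hence $q$ to move outward at round $i+1<j$; and if $i$ is a later tail round, then the prefix is empty (a robot performs its outward move after round $1$ only in the empty‑prefix case), so round $1$ is a singleton, and it cannot be $\{r\}$, hence it is $\{q\}$ and $q$ moved outward at round $1$. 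In every case both robots have left their initial positions by the end of round $j-1$, with $r$ having moved only toward the negative side and $q$ only toward the positive side, which is exactly the \textit{LP} part of the predicate; the \textit{MLCv} part then follows from the previous paragraph.

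The main obstacle is that, unlike in the \RSY\ analyses of \textit{HET} or \textit{TAR($d$)*}, a \emph{positive} result is required, so one cannot simply invoke an indistinguishability argument — one must certify that the algorithm never lets a robot begin converging before the other has moved away. Because an \FS\ robot sees neither the other robot's light nor, under variable disorientation and being a point robot with no positional memory, any distance it could compare across snapshots, this certification cannot rely on anything the robot observes; it must be carried entirely by the scheduling structure, exploiting that between two activations of the same robot in the tail the other robot is activated exactly once, and that a non‑empty prefix makes both robots act in lockstep at round $1$. Getting this bookkeeping exactly right — the prefix may be empty, the first tail round need not be disjoint from the last prefix round, and outward moves occur after round $1$ only in the empty‑prefix case — is the delicate part, together with choosing the status‑$B$ move so that non‑increase of distance, non‑crossing, containment in $\overline{r(T)q(T)}$, and convergence all hold at once.
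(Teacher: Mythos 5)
Your proposal is correct and follows essentially the same route as the paper: a two-color $\FS$ algorithm (color $A$ = move away and switch to $B$, color $B$ = move toward the other), with correctness resting on the \RSY\ property that by any robot's second activation the other robot has already been activated once and hence completed its \textit{LP} move. Your treatment is in fact more careful than the paper's brief argument—the explicit prefix/alternation case analysis, and the conservative ``quarter of the way to the midpoint'' convergence step, which avoids the simultaneous-move meeting at the midpoint that the paper's ``half the distance toward the other'' rule permits during a fully synchronous prefix—but the decomposition and key scheduling lemma are the same.
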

\begin{proof}
    The problem is solved by \textbf{Algorithm \ref{alg:LPMLCv_FSTA_R}}.  
    By the definition of the \RSY\ scheduler, when either robot is activated for the second time, the other robot has been activated at least once.  
    In other words, under the $\FS^R$ model, an algorithm solving the \textit{LP--MLCv} problem only needs to recognize whether the activated robot has previously performed the \textit{LP} action. This is also the case where chirality and rigidity are undefined.
\end{proof}
\begin{algorithm}
    \caption{Alg \textit{LP--MLCv} for robot $r$}
    \label{alg:LPMLCv_FSTA_R}
    \textbf{Assumptions:} $\FS$, \RSY \\
    \textbf{Light:} $r.\text{color} \in \{A, B\}$, initial value is $A$\\
    
    $Phase Look:$ $other.\text{pos} \gets$ position of the other robot in $r$'s local coordinate system.\\
    
    $Phase Compute:$
    \begin{algorithmic}[1]
        \State \textbf{case} $r.\text{color}$ \textbf{of}
        \State \hspace{1em} $A$:
        \State \hspace{2em} $r.\text{color} \gets B$
        \State \hspace{2em} \parbox[t]{0.8\linewidth}{$r.\text{des} \gets$ the point opposite to $other.\text{pos}$ at distance $|\text{other.pos}|/2$ from $r$}
        \State \hspace{1em} $B$:
        \State \hspace{2em} \parbox[t]{0.8\linewidth}{$r.\text{des} \gets$ the point toward $other.\text{pos}$ at distance $|\text{other.pos}|/2$ from $r$}
    \end{algorithmic}

    \vspace{1em}
    $Phase Move:$\\
    Move to $r_i.des$.
\end{algorithm}

\begin{lemma}
    $\textit{LP--MLCv} \in \FC^S$.
    \label{lem:17}
\end{lemma}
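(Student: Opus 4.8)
The plan is to exhibit an explicit $\FC$ algorithm running under the \SSY\ scheduler (the problem is a two‑robot problem) and to verify that its executions satisfy the \textit{LP--MLCv} predicate. The point of departure is the reason \textit{LP--MLCv} fails in $\FS^{S}$ (Lemma~\ref{lem:15}): an activated robot cannot tell whether the \emph{other} robot has already performed \textit{LP}. In $\FC$ this can be broadcast through the partner's light; the subtlety is that a robot still cannot read its \emph{own} light, so the two lights must encode a handshake that simultaneously (i) tells a robot whether the partner has done \textit{LP}, and (ii) forces a robot to act correctly even though it does not directly know whether \emph{it itself} has already done \textit{LP}.

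Concretely, I would use the color set $\{0,1,2,3\}$ with common initial color $0$, and let an activated robot $r$ that reads the partner's color $v$ behave as follows: if $v=0$, set $r$'s color to $1$ and make an \textit{LP} move (slide away from the partner along their common line, say by half the current separation); if $v=1$, set the color to $2$ and again make an \textit{LP} move; if $v=2$, set the color to $3$ and stay put; if $v=3$, set the color to $3$ and make one \textit{MLCv} move (slide toward the partner by a quarter of the current separation -- any contraction factor strictly below $1/2$ works, so that two robots activated in the same round neither collide nor swap order). The $\Look$ and $\Move$ phases are standard; the color count and the step fractions are the only design choices, tuned for the correctness argument.

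The correctness argument I intend to run has three claims. (1) Every robot makes at least one \textit{LP} move: at a robot's first activation the partner's color must be $0$ or $1$, since colors $2,3$ are produced only after the partner has seen \emph{our} color be $\ge 1$, which is impossible before our first activation; in both cases the rule orders an \textit{LP} move. (2) No \textit{LP} move ever occurs in a round that contains an \textit{MLCv} move, nor in any later round: if $r$ performs \textit{MLCv} in round $\rho$ it saw color $3$, which forces $r$'s own color to be $\ge 2$ at the start of $\rho$; hence at the start of $\rho$ both colors are $\ge 2$, so nobody sees the other at color $0$ or $1$, i.e. no \textit{LP} move in $\rho$; and after $\rho$ both colors equal $3$, so the same holds thereafter. (3) By fairness and monotonicity of colors both colors eventually stabilize at $3$, after which every activation yields an \textit{MLCv} move; with (2) this gives a last \textit{LP} round strictly preceding the first \textit{MLCv} round. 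Setting $T$ to the end of the last \textit{LP} round: for $t\le T$ only outward and null moves occur, so $r(t)$ never crosses $r(0)$ and $q(t)$ never crosses $q(0)$, both initial positions stay in $\overline{r(t)q(t)}$, and by (1) both robots have strictly moved by time $T$ -- the \textit{LP} part of the predicate; for $t\ge T$ only \textit{MLCv} and null moves occur, so the robots stay inside $\overline{r(T)q(T)}$, never collide or swap order, their separation is non‑increasing and, since every \textit{MLCv} move shrinks it by a factor $\le 3/4$ and each robot is activated infinitely often, tends to $0$, so both converge monotonically to a common point -- the \textit{MLCv} part.

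I expect claim (2) to be the real obstacle: a naive two‑ or three‑color protocol still lets a robot begin \textit{MLCv} in the very round in which its partner performs its last \textit{LP}, which is fatal -- the partner's outward step momentarily enlarges the separation (breaking monotonicity), and a robot that has so far made only a tiny \textit{LP} move could overshoot $r(0)$ with its first \textit{MLCv} step. The ``buffer'' color $2$ together with the stay‑move is exactly what serializes the two phases; verifying that it does so against the adversary's freedom to activate a single robot arbitrarily many times is the crux. Everything else -- well‑definedness of the moves, the no‑collision/no‑crossing invariant on the line, and the geometric‑series contraction estimate for convergence -- is routine bookkeeping.
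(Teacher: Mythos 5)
Your proposal is correct and follows the same general route as the paper—an explicit $\FC$ handshake algorithm under \SSY\ in which a robot broadcasts via its light whether it has already performed the \textit{LP} step, plus a phase-serialization argument—but your protocol is genuinely different in two ways, and both differences matter. The paper's Algorithm~\ref{alg:LPMLCv_FCOM_S} uses three colors $A,B,C$ with ``see $A$ or $B$: move away, see $C$: move toward (by half the separation)'', and its proof never addresses the case you single out as the crux: from the reachable color pair $(B,C)$ a round in which both robots are activated makes one robot execute its first \textit{MLCv} step while the other simultaneously executes an \textit{LP} step. In that execution the \textit{MLCv} robot's half-distance step can carry it past its initial position and the \textit{LP} robot exits $\overline{r(T)q(T)}$ for every admissible choice of $T$, so the predicate fails; moreover, from $(C,C)$ a fully synchronous round of half-distance toward-moves makes the two robots meet, contradicting the collision-freeness required of the \textit{MLCv} phase. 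Your fourth ``buffer'' color with a stay action provably excludes any round containing both an \textit{LP} and an \textit{MLCv} move (your claim (2), which is sound once color monotonicity is checked—and it does hold, since no rule ever writes color $0$ or lets a color regress given that a partner's color $\geq 2$ certifies one's own color $\geq 1$ forever), and your contraction factor $1/4<1/2$ removes the simultaneous-collision issue; claims (1) and (3) then give a valid $T$ and geometric convergence. So your argument is not merely an alternative: it repairs exactly the weaknesses of the paper's three-color construction, at the modest cost of one extra color and a slightly slower convergence rate.
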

\begin{proof}
    The problem is solved by \textbf{Algorithm \ref{alg:LPMLCv_FCOM_S}}.  
    By using three-colored lights, a robot can determine whether the other has moved away from its initial position during the $LP$ phase. Specifically, each robot makes decisions as follows:
    
    \begin{enumerate}[label=(\arabic*)]
        \item If the other robot's light is $A$, it means the other robot has never been activated, and this robot itself may have been activated once before or never activated at all.  
        Therefore, this robot changes its own light to $q$ and moves away from the other robot.
        
        \item If the other robot's light is $B$, it means the other robot has been activated at least once before, and this robot itself may have been activated once before or never activated.  
        Therefore, this robot changes its own light to $C$ and moves away from the other robot.
        
        \item If the other robot's light is $C$, it means both robots have been activated at least once before.  
        Therefore, this robot changes its own light to $C$ and performs the \textit{MLCv} movement.
    \end{enumerate}
    
    In \textit{MLCv}, the process (3) is simply repeated, and prior research has shown that $\textit{MLCv} \in \FC^S$. This is also the case where chirality and rigidity are undefined.
\end{proof}
\begin{algorithm}
    \caption{Alg \textit{LP--MLCv} for robot $r$}
    \label{alg:LPMLCv_FCOM_S}
    \textbf{Assumptions:} $\FC$, \SSY \\
    \textbf{Light:} $r.\text{color}, \text{other.color} \in \{A, B, C\}$, initial value is $A$\\
    
    $Phase Look:$ $other.\text{pos} \gets$ position of the other robot in $r$'s local coordinate system.\\
    
    $Phase Compute:$
    \begin{algorithmic}[1]
        \State \textbf{case} $\text{other.color}$ \textbf{of}
        \State \hspace{1em} $A$:
        \State \hspace{2em} $r.\text{color} \gets B$
        \State \hspace{2em} \parbox[t]{0.8\linewidth}{$r.\text{des} \gets$ the point in the opposite direction of $other.\text{pos}$ at a distance of $|\text{other.pos}|/2$ from $r$}
        \State \hspace{1em} $B$:
        \State \hspace{2em} $r.\text{color} \gets C$
        \State \hspace{2em} \parbox[t]{0.8\linewidth}{$r.\text{des} \gets$ the point in the opposite direction of $other.\text{pos}$ at a distance of $|\text{other.pos}|/2$ from $r$}
        \State \hspace{1em} $C$:
        \State \hspace{2em} $r.\text{color} \gets C$
        \State \hspace{2em} \parbox[t]{0.8\linewidth}{$r.\text{des} \gets$ the point toward $other.\text{pos}$ at a distance of $|\text{other.pos}|/2$ from $r$}
    \end{algorithmic}

    \vspace{1em}
    $Phase Move:$\\
    Move to $r_i.des$.
\end{algorithm}

\begin{lemma}
    $\textit{LP--MLCv} \notin \FC^A$.
    \label{lem:18}
\end{lemma}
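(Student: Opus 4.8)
The plan is to argue by contradiction, using the characterisation $\FC^{A}\equiv\FC^{A_{M}}$ from Theorem~\ref{th:equality}(4): it suffices to defeat every candidate algorithm $\calA$ under a fair $A_{M}$-atomic schedule with adversarially chosen local coordinate systems. Two structural facts drive the construction. First, in $\FC$ a robot cannot read its own light and keeps no other memory, so it cannot by itself tell whether \emph{it} has already performed its \textit{LP} displacement; it must decide between continuing \textit{LP} and starting \textit{MLCv} purely from the colour currently shown by the other robot. Second, in the $A_{M}$-atomic model a robot's light is updated at its (instantaneous) \emph{Compute}, whereas its \emph{Move} may begin arbitrarily later, and other robots may take snapshots in that gap; hence a robot can advertise its new colour while still sitting at its pre-\emph{Move} position. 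Finally, with only two robots on a line and the adversary free to re-normalise every observed distance, $\calA$'s output at an activation depends only on the colour of the other robot, and among the finitely many colours there are only finitely many possible move amplitudes; let $\beta>0$ denote the smallest positive fraction of the observed distance that $\calA$ ever moves.

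Concretely, I would start from a generic placement with $r(0)<q(0)$ in global coordinates, run $\calA$ under a benign round schedule up to the activation at which $r$ performs the \emph{Compute} completing its \textit{LP} phase---there $r$ writes some ``\textit{LP}-finished'' colour $\hat c$ and commits an away step---and then have the adversary \emph{hold} $r$'s \emph{Move}, so that $r$ remains at its unmoved position while displaying $\hat c$. The adversary now repeatedly activates $q$ and splits into two cases. If observing $r$ at $\hat c$ makes $q$ move toward $r$, then $q$ starts converging although $r$ has \emph{not} actually carried out its \textit{LP} step; since each such move shrinks the gap by at least a factor $\beta$, a bounded number of them take $q$ strictly past $q(0)$, so at some time $t$ we have $q(0)\notin\overline{r(t)\,q(t)}$ while the \textit{LP} phase is still unfinished, contradicting the \textit{LP--MLCv} predicate. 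Otherwise observing $\hat c$ never makes $q$ move toward $r$, so $q$ cannot begin \textit{MLCv} while $r$ shows $\hat c$; and after re-normalising distances $q$'s snapshot is the same whether $r$ is frozen at its initial position or has since taken its away step, so $\calA$ forces the identical non-converging response, and by iterating and re-applying the held-\emph{Move} device over the finitely many colours one drives the execution into a full contradiction---either some colour that triggers a toward-move is made to appear on a robot whose partner has not finished \textit{LP}, or the held partner's late step (of amplitude a fixed fraction of a now-stale large distance) makes the converging robot overshoot past it, which is a collision and violates \textit{MLCv}.

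The delicate part, I expect, is this second branch: ruling out algorithms that sidestep the ``$q$ sees $\hat c$'' impasse by later steering $r$'s or $q$'s colour elsewhere and only then converging. The tool should be a pigeonhole over the constant-size colour set combined with the $A_{M}$-decoupling of light and position---no bounded-colour protocol can certify ``both robots have finished \textit{LP}'' robustly against a \emph{Move} being delayed past the instant its certificate is read. As a consistency check one verifies that none of these schedules exists under \SSY: there every activated robot performs \emph{Look}, \emph{Compute}, and \emph{Move} inside a single round, so $q$ never sees $r$'s ``\textit{LP}-finished'' colour before $r$ has moved---which is precisely why \textit{LP--MLCv} is solvable in $\FC^{S}$ (Lemma~\ref{lem:17}) yet, as claimed, not in $\FC^{A}$.
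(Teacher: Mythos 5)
Your overall strategy (defeat $\calA$ by exploiting the $\FC^{A}\equiv\FC^{A_M}$ decoupling of the colour write at \emph{Compute} from the delayed \emph{Move}) is in the right spirit, but as written the argument has a genuine gap, and it is exactly the one you flag yourself. In your second branch (the colour $\hat c$ never triggers a toward-move by $q$) you do not actually derive a contradiction: you appeal to ``a pigeonhole over the constant-size colour set'' and assert that ``no bounded-colour protocol can certify that both robots have finished \textit{LP} robustly against a \emph{Move} being delayed past the instant its certificate is read.'' That assertion is the whole content of the lemma for handshake-style protocols in which convergence is triggered only after a sequence of colour exchanges, and nothing in the proposal constructs the adversarial schedule that defeats such protocols; fairness forces the adversary to eventually release the held \emph{Move}, after which the execution continues with fresh colours and your case analysis has to restart without any measure of progress. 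A secondary problem sits in your first branch: the claimed violation of the \textit{LP} constraints needs $r$ to be still at $r(0)$ when its \emph{Move} is held, which tacitly assumes the \textit{LP} displacement is a single move; if $\calA$ lets $r$ make an earlier (possibly tiny) away step before writing $\hat c$, the existential time $T$ in the predicate can be placed before your freeze, and $q$ drifting toward a frozen $r$ is then consistent with the \textit{MLCv} clauses, so no contradiction follows without further work.

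The paper avoids both issues by arguing quantitatively on the convergence dynamics rather than on certifying the phase transition. It first shows that every activated robot must move (otherwise a fully synchronous execution at the critical distance deadlocks) and that the computed step length is a symmetric function $f$ of the observed distance. The adversarial execution then has $q$ take a snapshot at distance $d$ and delay the resulting pending move of length $f(d)$, while $r$ alone is activated repeatedly; the auxiliary Lemma~\ref{lem:19} compares this with a \RR\ execution (under which $\calA$ must still converge) to conclude that after finitely many activations the inter-robot distance drops strictly below $f(d)$. Releasing $q$'s stale move then produces a collision (overshoot), contradicting the \textit{LP--MLCv} specification. If you want to salvage your approach, you would need to replace the pigeonhole hand-wave by a concrete argument of this kind—e.g.\ reduce your second branch to a stale-move/overshoot contradiction with an explicit bound showing the gap eventually falls below the pending step length—rather than asserting that bounded colours cannot encode the needed certificate.
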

\begin{proof}
    Assume, for contradiction, that there exists an algorithm $\calA$ that solves \textit{LP--MLCv} under the $\FC^A$ model.  
    Consider two robots $r$ and $q$ which share the same unit of distance, initially facing each other along the positive direction of the $X$-axis, and assume their local coordinate systems do not change during the execution of $\calA$.  
    We make three observations:
    
    \begin{enumerate}
        \item By the predicate defining \textit{LP--MLCv}, if a robot moves, it must move toward the other robot.  
        Moreover, in this specific setting, the robots must remain on the $X$-axis.
        
        \item Every time a robot is activated and executes $\calA$, it must move.  
        Conversely, if $\calA$ specifies that a robot activated at a distance $d$ from the other robot must not move, then assuming $\calA$ executes in full synchrony and both robots initially are at distance $d$, neither robot will move, and thus convergence will never occur.
        
        \item After robot $r$ observes $q$ at distance $d$, when $r$ moves toward $q$ on the $X$-axis, the computed movement length $f(d)$ is the same as the length $f(d)$ computed when $q$ observes $r$ at distance $d$.
    \end{enumerate}
    
    Now consider the following execution $\epsilon$ under $\calA$: initially, both robots are activated simultaneously and are at distance $d$ from each other.  
    Robot $r$ completes its computation and executes the move instantaneously (note that $r$ operates under $\calA$), while robot $q$ is still performing the $\LK$ and $\CP$ operations and continues to be activated, executing $\calA$ repeatedly.
    
    Each movement of $r$ obviously reduces the distance between the two robots. More precisely, by observation (3), after $k \geq 0$ moves, the distance decreases from $d$ to $d_k$, where
    
    \[
    d_0 = d, \quad d_{k > 0} = d_{k-1} - f(d_{k-1}) = d - \sum_{0 \leq i < k} f(d_i).
    \]
    
    Here, by the problem definition, $f(d_i) - f(d_{i+1})$ is always positive.
\end{proof}

\begin{lemma}
    After a finite number of movements of $r$, the distance between the two robots becomes smaller than $f(d)$.
    \label{lem:19}
\end{lemma}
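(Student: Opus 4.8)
The plan is to argue by contradiction using exactly the data already assembled in the proof of Lemma~\ref{lem:18}: in the execution $\epsilon$, robot $r$ performs moves with $d_0 = d$ and $d_{k+1} = d_k - f(d_k)$, where $f(x) > 0$ for every $x > 0$ (observation (2)), $f(d_k) < d_k$ because the convergence is collision-free, and the values $f(d_k)$ are strictly decreasing in $k$ (the ``problem-definition'' fact already invoked, i.e.\ the \emph{MLCv} monotonicity of the mutual distance); moreover $d_k > 0$ for every $k$. Suppose, toward a contradiction, that $d_k \ge f(d)$ for every $k \ge 0$. Since $f(d) = f(d_0) > 0$, observation (2) gives that $c := f(f(d))$ is again strictly positive. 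Using that the move-length function is monotone non-decreasing in the observed distance — the very property underlying ``$f(d_i) - f(d_{i+1})$ is always positive'' — the assumption $d_k \ge f(d)$ yields $f(d_k) \ge f(f(d)) = c$ for all $k$. Substituting into $d_k = d - \sum_{0 \le i < k} f(d_i)$ gives $d_k \le d - kc$, which is negative once $k > d/c$, contradicting $d_k > 0$. Hence the assumption fails and there is a finite $k$ with $d_k < f(d)$.

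An equivalent packaging, which I would probably prefer to write out, avoids the explicit constant: the sequence $(d_k)$ is strictly decreasing and bounded below by $0$, hence converges to some $L \ge 0$, so $f(d_k) = d_k - d_{k+1} \to 0$; if $L > 0$, monotonicity of $f$ gives $f(d_k) \ge f(L) > 0$ for all $k$, contradicting $f(d_k) \to 0$, so $L = 0$; since $f(d) > 0$ and $d_k \to 0$, some $d_k < f(d)$, and the least such index is finite. Either route is a few lines of bookkeeping on the recurrence $d_{k+1} = d_k - f(d_k)$ together with the positivity and monotonicity of $f$.

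The step I expect to be the crux is the legitimacy of the comparison $f(d_k) \ge f(f(d))$ (respectively $f(d_k) \ge f(L)$), i.e.\ that $f$ is monotone non-decreasing in the observed distance: $x \ge y > 0 \Rightarrow f(x) \ge f(y)$. This is precisely where the \emph{MLCv} requirement that the two robots never increase their mutual distance, combined with the ``move only toward the other robot'' clause of \textit{LP--MLCv}, has to be turned into a statement about the function $f$. I would take care to phrase this so that Lemma~\ref{lem:19} relies on exactly the same fact already used in the proof of Lemma~\ref{lem:18} (``$f(d_i) - f(d_{i+1})$ is always positive''), introducing no new hypothesis; everything else in the argument is routine.
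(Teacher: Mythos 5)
There is a genuine gap: your argument hinges on $f$ being monotone non-decreasing as a function of the observed distance ($x \ge y > 0 \Rightarrow f(x) \ge f(y)$), and this is neither part of the problem specification nor the same fact as the one invoked in the proof of Lemma~\ref{lem:18}. The statement there, ``$f(d_i)-f(d_{i+1})$ is always positive,'' only compares consecutive terms of the particular sequence $(d_k)$ produced by the execution $\epsilon$; it does not license the comparisons you need, namely $f(d_k) \ge f(f(d))$ in your first route or $f(d_k) \ge f(L)$ in your second, since $f(d)$ and $L$ need not be members of that sequence. More fundamentally, an impossibility proof must cover \emph{every} candidate algorithm $\calA$, and nothing in the \textit{LP--MLCv} predicate forces the step function $f$ of an arbitrary algorithm to be monotone (correctness under simultaneous activation only forces $f(x)\le x/2$). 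Without monotonicity, the recurrence $d_{k+1}=d_k-f(d_k)$ with $f>0$ does not exclude the bad scenario your contradiction is supposed to rule out: e.g.\ an algorithm with $f(x)=(x-1)/2$ for $x>1$ and $f(x)=x/2$ for $x\le 1$ yields $d_k \downarrow 1$ and $f(d_k)\to 0$, so both of your routes stall exactly at the step you identified as the crux.

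The paper closes this hole by a different mechanism: it does not reason about $f$ alone, but uses the assumed correctness of $\calA$ under the scheduler class in question. Since a \RR\ (round-robin) execution is a legal execution for $\calA$, the robots must there get closer than any fixed $d'>0$ within some finite number of rounds $m(d')$; and by observation (3) the distance sequence under \RR\ obeys the same recurrence $d_{k+1}=d_k-f(d_k)$ as in $\epsilon$. Taking $d'=f(d)$ immediately produces a finite index with $d_{m(f(d))}<f(d)$, contradicting the assumption $d_k>f(d)$ for all $k$. Note that in the example above this is exactly what fails: such an algorithm is simply not a correct convergence algorithm under \RR, which is why the paper's argument succeeds where a purely analytic argument on the recurrence cannot. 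To repair your proof, replace the monotonicity step by this appeal to correctness under a round-robin schedule (or, equivalently, prove that $d_k\to 0$ using the algorithm's correctness rather than properties of $f$).
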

\begin{proof}
    Assume for contradiction that $r$ never approaches $q$ closer than $f(d)$.  
    That is, for all $k > 0$, we have $d_k > f(d)$.  
    Consider the execution $\hat{\epsilon}$ of $\calA$ under a \RR\ synchronous scheduler, where initially the robots are at distance $d$ and are activated alternately one by one in each round.  
    Since $\calA$ is assumed to be correct under $\FC^A$, it should also be correct under \RR.  
    This implies that starting from the initial distance $d$, the two robots eventually get closer than any fixed positive distance $d'$.  
    Let $m(d')$ denote the number of rounds until this happens.

    After $m(f(d))$ rounds, the distance between the robots must become smaller than $f(d)$.  
    Furthermore, after round $i$, the distance $d_i$ decreases by $f(d_i)$. Summarizing,

    \[
    d_{m(f(d))} = d - \sum_{0 \leq i < m(f(d))} f(d_i) < f(d),
    \]

    which contradicts the assumption that $d_k > f(d)$ for all $k > 0$.  
    \qed

    Now consider the execution $\epsilon$ when the distance becomes smaller than $f(d)$.  
    Suppose robot $q$ completes its computation at that time and moves a distance $f(d)$ toward $r$.  
    This move causes a collision, contradicting the correctness of $\calA$.
\end{proof}

\begin{figure}[H]
    \centering
    \includegraphics[width=12cm]{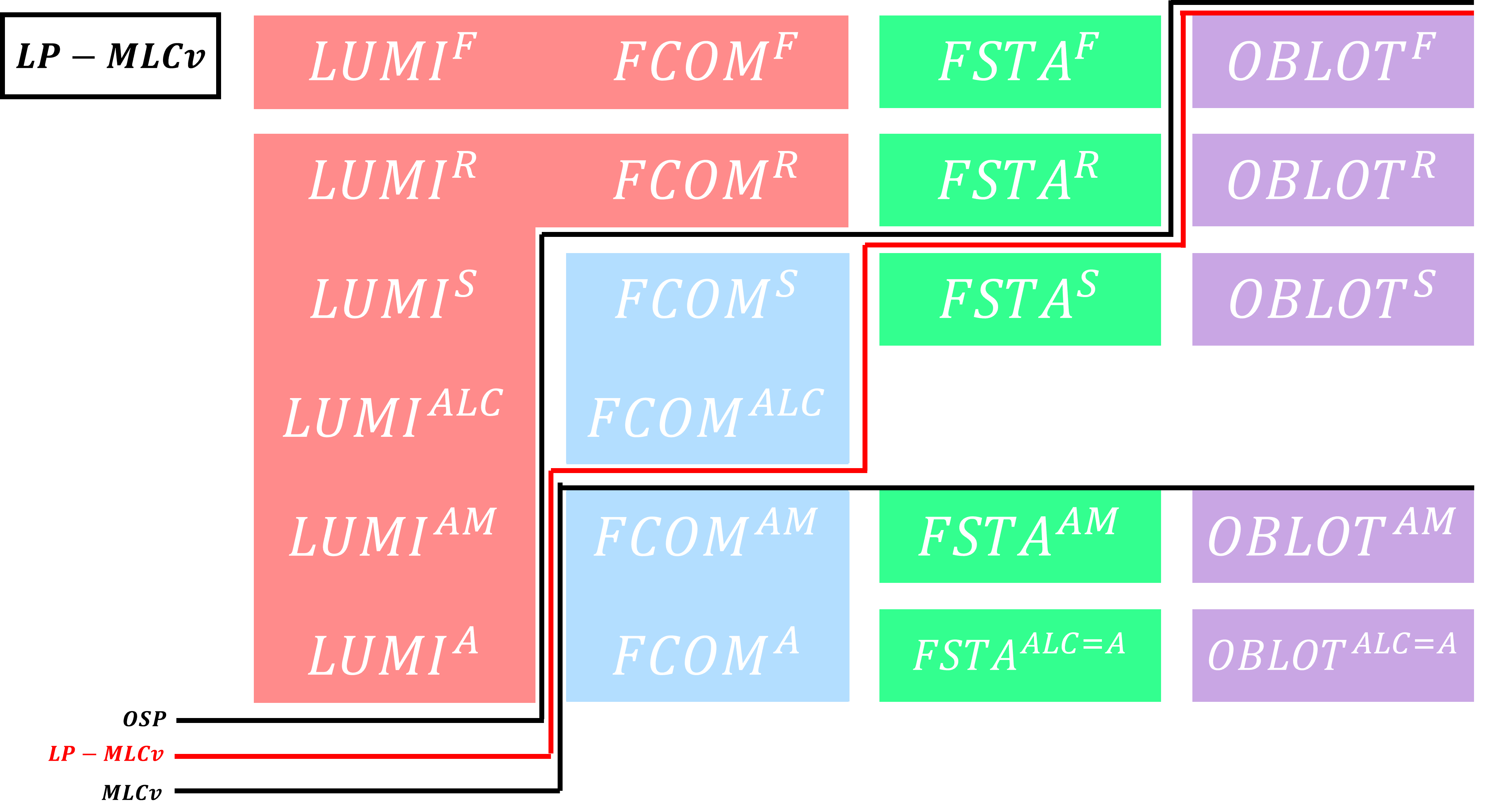}
    \caption{Boundary of LP--MLCv}
    \label{fig:LP-MLCv_boundary}
\end{figure}

\begin{theorem}
    $\textit{LP--MLCv}$ is solved in $\FS^R$ and $\FC^S$, but not in any of $\OB^F$, $\FS^S$ or $\FC^S$.
    \label{th:lpmlcvt}
\end{theorem}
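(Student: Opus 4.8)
The plan is to obtain the theorem by assembling the six lemmas of this subsection, splitting the claim into a solvability half and an impossibility half and verifying that nothing beyond those lemmas is required. (I read the last model in the displayed statement as $\FC^A$ rather than $\FC^S$, since Lemma~\ref{lem:17} already places the problem in $\FC^S$ and the asynchronous impossibility is exactly Lemma~\ref{lem:18}.)

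For the solvability half I would cite Lemma~\ref{lem:16} and Lemma~\ref{lem:17}: Algorithm~\ref{alg:LPMLCv_FSTA_R} witnesses $\textit{LP--MLCv}\in\FS^R$, and Algorithm~\ref{alg:LPMLCv_FCOM_S} witnesses $\textit{LP--MLCv}\in\FC^S$. The two facts worth re-stating in the write-up are: under \RSY\ the second activation of either robot is necessarily preceded by an activation of its partner, so a single persistent bit lets an $\FS$ robot tell whether its own \textit{LP} move is still pending; and under \SSY\ three communicated colors let each $\FC$ robot detect whether the partner has already left its starting point, after which the execution is precisely \textit{MLCv}, which is known to be solvable in $\FC^S$.

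For the impossibility half I would cite Lemma~\ref{lem:14} ($\textit{LP--MLCv}\notin\OB^F$), Lemma~\ref{lem:15} ($\textit{LP--MLCv}\notin\FS^S$), and Lemma~\ref{lem:18} together with its auxiliary Lemma~\ref{lem:19} ($\textit{LP--MLCv}\notin\FC^A$). Each is proved directly against the intended model, so no propagation along the basic hierarchy is needed; it is worth recording only that $\OB^F\geq\OB^K$ for every scheduler $K$, so the oblivious impossibility is as strong as it can be. Combined with the solvability half, these lemmas localize $\textit{LP--MLCv}$ exactly where the section heading advertises: it lives in $\FS^R$ and in $\FC^S$ but escapes $\OB^F$, $\FS^S$, and $\FC^A$, displaying the \textit{LP}/\textit{MLCv} trade-off among persistent memory, mutual lights, and scheduler strength.

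The step I expect to be the genuine obstacle, were the supporting lemmas not already available, is the $\FC^A$ impossibility of Lemmas~\ref{lem:18}--\ref{lem:19}. One fixes the initial distance $d$, argues that no correct algorithm may ever prescribe a null move (else a fully synchronous run from distance $d$ never converges), then lets the adversary run one robot through an unbounded sequence of instantaneous Look-Compute-Move cycles while its partner is frozen mid-cycle, uses the monotone per-step contraction $f$ (via the \RR scheduler of Lemma~\ref{lem:19}) to drive the inter-robot distance below $f(d)$ in finitely many steps, and finally releases the frozen robot, whose pending move of length $f(d)$ now forces a collision and contradicts collision-freeness. Reconciling the ``must always move'' requirement with the monotone-contraction argument is the delicate point; everything else in the theorem is bookkeeping over the already-established lemmas.
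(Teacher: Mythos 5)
Your proposal is correct and matches the paper's approach exactly: the theorem is simply the aggregation of Lemmas~\ref{lem:14}--\ref{lem:19}, with the solvability half given by Lemmas~\ref{lem:16} and~\ref{lem:17} and the impossibility half by Lemmas~\ref{lem:14}, \ref{lem:15}, and~\ref{lem:18} (with~\ref{lem:19} as its auxiliary step). You also correctly read the final ``$\FC^S$'' in the statement as a typo for $\FC^A$, which is what the lemmas and the accompanying boundary figure actually support.
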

  \label{arc:VEC}
\begin{Definition}
    \textit{VEC} (Vertical Configuration)
    \\Two robots, $r$ and $q$, are placed at arbitrary, distinct points (initial configuration $\calC_0$). These two robots uniquely identify a square $\mathcal{Q}_0$ whose diagonal is the segment connecting the two robots. Let $r_0$ and $q_0$ denote the initial positions of the robots, and let $d_0$ be the segment between the robots, with $\mathrm{length}(d_0)$ denoting its length. Let $r_i$ and $q_i$ be the positions of $r$ and $q$ in configuration $\calC_i$ ($i \geq 0$). The robots move from configuration $\calC_i$ to $\calC{i+1}$ so as to satisfy one of the conditions 1, 2, or 3, and in all cases, condition 4 must be satisfied. However, condition 5 must hold.
    \begin{enumerate}[label=\ \ Condition \arabic*, leftmargin=*, labelsep=1em]
        \item $d_{i+1}$ is equal to $d_i$, and $\mathrm{length}(d_{i+1}) = \mathrm{length}(d_i)$
        
        \item $d_{i+1}$ is obtained by rotating $d_i$ $90^\circ$ clockwise, and hence $\mathrm{length}(d_{i+1}) = \mathrm{length}(d_i)$
        
        \item $d_{i+1}$ is obtained by rotating $d_i$ $45^\circ$ clockwise, and $\mathrm{length}(d_{i+1}) = \frac{\mathrm{length}(d_i)}{\sqrt{2}}$
        
        \item $r_i$ and $q_i$ must be contained within the square $\mathcal{Q}_{i-1}$
        
        \item If $d_0$ and $d_i$ are orthogonal, the robots must remain stationary
    \end{enumerate}
\end{Definition}                

\begin{figure}[H]
    \centering
    \includegraphics[width=8cm]{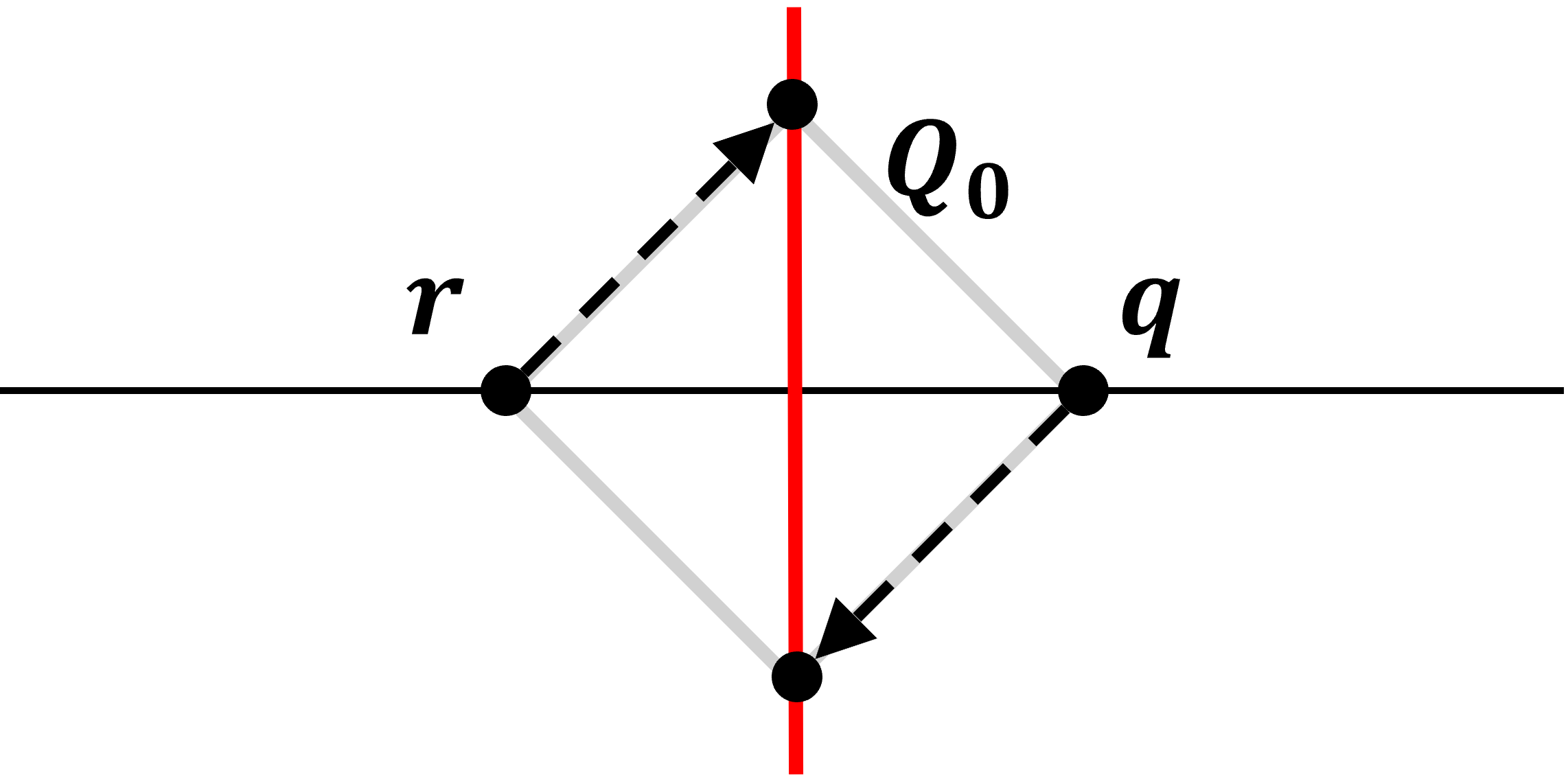}
    \caption{Movement of the \textit{VEC} problem.}
    \label{fig:vec}
\end{figure}

\begin{lemma}
    $\textit{VEC} \notin \FC^S$.
    \label{lem:20}
\end{lemma}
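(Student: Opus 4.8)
The plan is to argue by contradiction in the spirit of the proof of Lemma~\ref{lem:12}. Assume a deterministic algorithm $\calA$ solves \textit{VEC} in $\FC^S$. Two features of $\FC^S$ will be exploited together: under \SSY\ the adversary may activate a single robot for arbitrarily many (but finitely many) consecutive rounds, and an $\FC$ robot is oblivious and cannot read its own light, so what it does is a pure function of the current snapshot (the positions and colours of the \emph{other} robots).

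First I would isolate the only snapshot that matters. With $n=2$ a robot that looks sees a single pair (relative position, colour). Since the robots are disoriented under \emph{variable disorientation} and have no absolute length unit, the adversary can always present this as one fixed normalised snapshot $\sigma$: ``the other robot at unit distance, due east, carrying the initial colour.'' This is exactly the device used in Lemmas~\ref{lem:6}, \ref{lem:8}, and~\ref{lem:12}, and I would state it once and reuse it. The point is that $\sigma$ is the snapshot seen (i) by both robots in the initial configuration $\calC_0$, and (ii) by $r$ in \emph{any} configuration reached from $\calC_0$ by activating only $r$, because then $q$ is still at $q(0)$ with the initial colour, and the (shrinking) distance conveys nothing. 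Hence $\calA$'s entire behaviour on these configurations is determined by the single value $\calA(\sigma)$.

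Then I would split into two cases. Case 1: $\calA(\sigma)$ keeps the robot still. Consider the (fair) execution in which the adversary always presents $\sigma$ and therefore every activated robot stays put; the configuration never leaves $\calC_0$, so $d_i=d_0$ for all $i$, and since $d_0$ is not orthogonal to itself, \textit{VEC} is never achieved — a contradiction. Case 2: $\calA(\sigma)$ is a genuine move. Activate only $r$; the resulting configuration $\calC_1=\{r\text{ moved},\,q=q(0)\}$ must be a legal \textit{VEC} configuration relative to $\calC_0$. Since $q$ has not moved, Condition~1 would force $r$ back to $r(0)$ (no move), and Condition~2 would require $r$ at distance $\mathrm{length}(d_0)$ from $q(0)$ while staying inside $\mathcal Q_0$, but the only such point of $\mathcal Q_0$ is the opposite vertex $r(0)$ — again no move. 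Hence Condition~3 must hold, and together with Condition~4 this pins the new position of $r$ to the clockwise-neighbouring vertex of $\mathcal Q_0$, so $d_1$ is $d_0$ rotated $45^\circ$ clockwise. Activating only $r$ a second time and repeating the argument from $\calC_1$ gives $d_2=d_1$ rotated $45^\circ$ clockwise $=d_0$ rotated $90^\circ$ clockwise, i.e.\ $d_2\perp d_0$. By Condition~5 both robots must now remain stationary forever; yet $r$ at $\calC_2$ still sees $\sigma$, and $\calA(\sigma)$ is a move, so activating $r$ once more changes the configuration — contradicting Condition~5. In every case $\calA$ is incorrect, so $\textit{VEC}\notin\FC^S$.

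I expect the main obstacle to be the geometric core of Case 2: showing that a move performed by one robot alone must be a $45^\circ$ clockwise shrink-and-rotate about the stationary robot's position. This needs Condition~4 to rule out the endpoint-rotation a lone robot could otherwise realise, the observation that the square's diagonal is its unique diameter to kill Condition~2, and a check that two such steps land \emph{exactly} on the orthogonal configuration where Condition~5 bites. A secondary point requiring care is justifying that, under variable disorientation, a two-robot snapshot carries no absolute scale or orientation, so that $\sigma$ is genuinely the same input at $\calC_0$, $\calC_1$, and $\calC_2$; this is the standard informal device from the impossibility arguments above and I would invoke it explicitly rather than re-derive it each time.
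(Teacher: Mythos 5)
Your overall strategy is the same as the paper's: starve $q$, observe that an $\FC$ robot cannot see its own light while $q$'s light never changes, so $r$ keeps receiving the same normalised snapshot $\sigma$ and repeats the same action until Condition~5 is violated. Your Case~2 is in fact a sharper version of the paper's argument: the paper never spells out why a lone mover's step must be the $45^\circ$ shrink-and-rotate of Condition~3, whereas you derive this from Conditions~1, 2 and~4 (using that the square's diagonal is its unique diameter) and then count two lone steps to orthogonality and a third activation to contradict Condition~5. That part is correct and more rigorous than the published proof, which silently assumes the activated robot moves.

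The gap is in Case~1. You claim the adversary can run a \emph{fair} execution in which every snapshot is $\sigma$, so nobody ever moves. But "stay still" is only half of $\calA(\sigma)$: the robot may also set its own (invisible-to-itself) light to a colour $c_1$ different from the initial one. Fairness forces both robots to be activated eventually, and from then on each robot sees the other carrying $c_1$ — a snapshot different from $\sigma$ — so your dichotomy on $\calA(\sigma)$ alone no longer controls the execution; the robots could in principle exploit this colour change to coordinate correct moves later. To close the branch you must follow the induced colour dynamics: alternate single activations, noting that each robot's action is a function of the other's current colour only. Since the colour set is finite, either no robot ever moves in this fair execution (so, under the liveness reading of \textit{VEC} that you and the paper both implicitly adopt, the task is never achieved), or some robot first moves upon seeing some colour $c^{\ast}$ on the other while the positions are still those of $\calC_0$; at that point freeze the other robot and rerun your Case~2 argument verbatim with $c^{\ast}$ in place of the initial colour. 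With that repair the proof is complete and subsumes the paper's.
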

\begin{proof}
    Suppose there exists an algorithm $\calA$ that solves \textit{VEC} under any \SSY\ scheduler. Consider a case where an adversary continuously activates only robot $r$ from the initial configuration. Since we consider the $\FC$ model, robots determine their state based on the color of the other robot, not their own. Therefore, each time $r$ is activated, it may interpret its current observation as the initial configuration. In this case, the adversary can continue to activate only $r$ until condition 5 is violated. This behavior contradicts the specification of \textit{VEC} and the assumption that $\calA$ solves \textit{VEC} under any \SSY\ scheduler. Consequently, there exists no valid algorithm that solves \textit{VEC} under $\FC^S$.
\end{proof}
\begin{figure}[H]
        \centering
        \includegraphics[width=8cm]{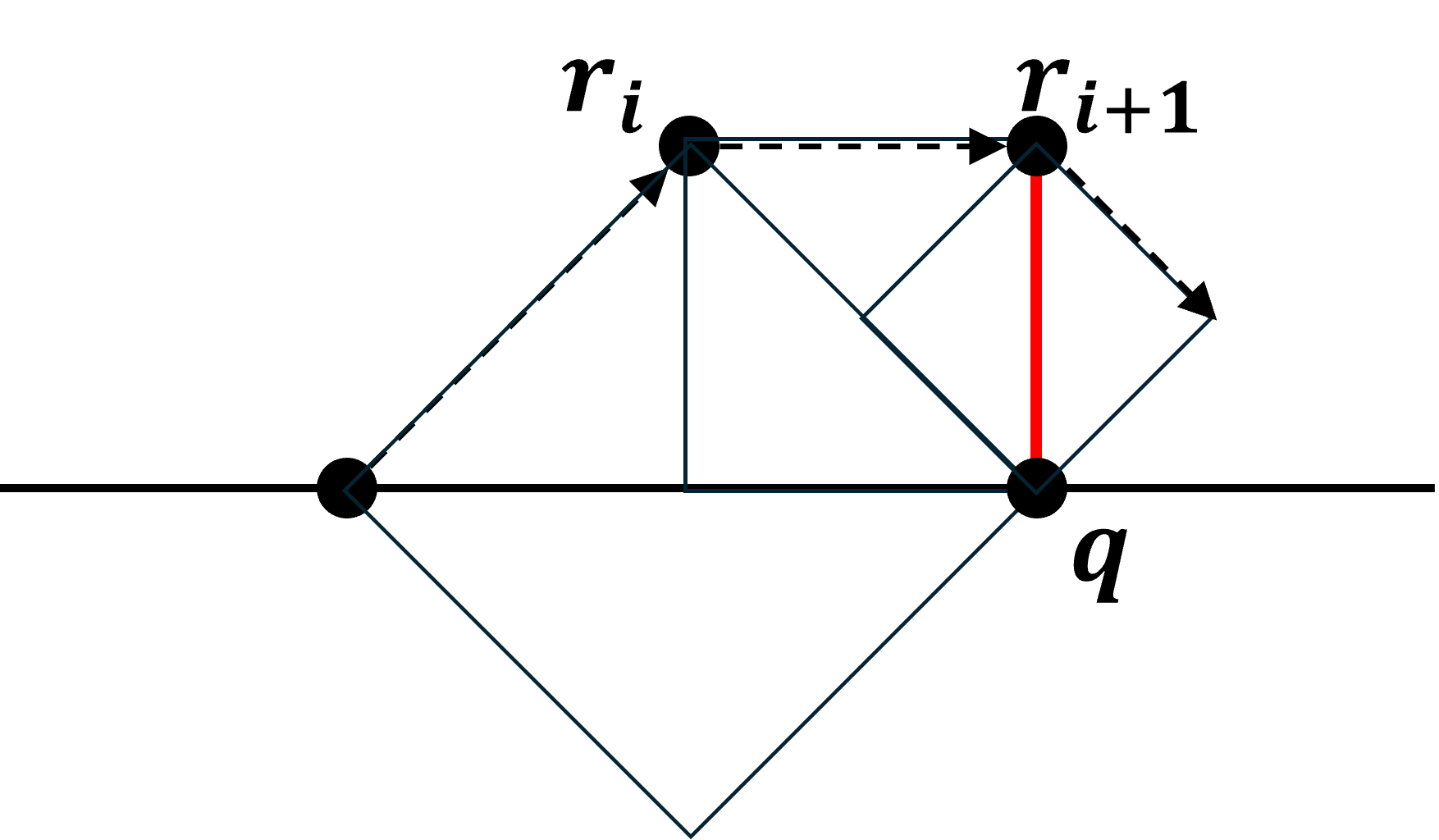}
        \caption{Movement of $r$ by $\calA$.}
        \label{fig:VEC_FCOM}
\end{figure}
    
\begin{lemma}
    $\textit{VEC} \notin \FS^{A_{LC}}$.
    \label{lem:21}
\end{lemma}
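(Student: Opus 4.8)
The plan is to collapse the scheduler and then run a ``race one robot ahead'' argument in the spirit of Lemmas~\ref{lem:15} and~\ref{lem:20}. Since $\FS$ robots never see the lights of others, $\FS^{A_{LC}}\equiv\FS^{A}$ (the equivalence $X^{A_{LC}}\equiv X^{A}$ for $X\in\{\FS,\OB\}$ is already recorded), and any asynchronous schedule that activates one robot at a time — so that the instantaneous $\LK$ and $\Compute$ of distinct robots never overlap — is trivially $A_{LC}$-valid; hence it suffices to build, for an arbitrary candidate algorithm $\calA$, an asynchronous schedule on which $\calA$ fails. Assume for contradiction that $\calA$ solves \textit{VEC} in $\FS$, and fix an initial configuration $\calC_0$ with $r$ at $r_0$ and $q$ at $q_0$, $r_0\ne q_0$. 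The adversary activates only $r$, repeatedly, and exploits variable disorientation to hand $r$ on \emph{every} activation the same normalized snapshot --- ``the other robot lies at unit distance along my positive $x$-axis'' (both the direction and the distance of $q$ can be normalized, exactly as in the proof of Lemma~\ref{lem:2}). The positional part of $r$'s input is then frozen, and $r$ has only a constant number of colors, so the sequence of colors $r$ steps through --- and hence the local destination it computes at each step --- depends only on the previous color and is eventually periodic. I split on whether $r$ ever computes a non-null move along this sequence.

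If $r$ never moves, then by homogeneity the symmetric execution that activates only $q$ from $\calC_0$ also never moves $q$ (it sees the identical normalized snapshot with the identical initial color), and this persists no matter how far the colors have advanced, since in this case every color reachable from the initial one under this view is a null-move color. Hence the fair, $A_{LC}$-valid schedule alternating long blocks of ``$r$ only'' and ``$q$ only'' produces an execution whose configuration equals $\calC_0$ at all times. But \textit{VEC} implicitly requires the robots to eventually reach a terminal configuration --- by Condition~5 one whose diagonal is orthogonal to $d_0$ --- which is why the do-nothing algorithm fails to solve \textit{VEC}, consistent with Lemma~\ref{lem:20}; since no segment is orthogonal to itself, $\calC_0$ is not terminal, so this execution violates liveness, a contradiction. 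Otherwise $r$ computes a non-null move at some activation; take the first such, the $k$-th. All earlier moves were null, so just before it $r$ is still at $r_0$ and $q$ --- never activated --- is still at $q_0$; the configuration has therefore been $\calC_0$ throughout, and right after the $k$-th activation we reach a configuration $\calC'$ with $r$ at some $r'\ne r_0$ and $q$ still at $q_0$. The transition $\calC_0\to\calC'$ cannot obey \textit{VEC}: Condition~1 would force $r'=r_0$; Conditions~2 and~3 rotate the diagonal (Condition~3 also scaling it) about its midpoint --- the centre of the associated square $\mathcal{Q}$ --- so that midpoint is invariant, whereas the midpoint of $\overline{r'q_0}$ differs from that of $d_0=\overline{r_0q_0}$ because $r'\ne r_0$; and reading the rotation instead as centred at the unmoved endpoint $q_0$ is excluded by Condition~4, since a $90^\circ$ rotation about $q_0$ already carries $r'$ strictly outside $\mathcal{Q}_0$. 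Thus a correct $\calA$ can never produce $\calC'$, a contradiction; with the previous case this gives $\textit{VEC}\notin\FS^{A_{LC}}$.

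The main obstacle will be precisely the last step: one must be certain that ``only $r$ has moved'' is genuinely irreconcilable with \textit{VEC}'s transition rules under their intended geometry --- rotation of the diagonal about the centre of the associated square, which is what makes the midpoint an invariant of the whole sequence --- because under an endpoint-centred reading of Condition~3 a single robot could in fact realize a legal step, so the impossibility really hinges on this interpretation. One must also check that committing to the fixed normalized snapshot above (and the midpoint-invariance it exposes) prevents the adversary from inadvertently steering $r'$ onto one of the finitely many positions for which $\overline{r'q_0}$ would be a legal successor of $d_0$; granting that, the rest is the by-now-standard finite-memory-plus-variable-disorientation bookkeeping.
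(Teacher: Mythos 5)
There is a genuine gap, and it is fatal. Your adversary only ever activates one robot at a time with complete, non-overlapping $\LK$--$\Compute$--$\M$ cycles. Such a schedule is $M$-atomic (indeed \SSY-compatible), and by Lemma~\ref{lem:22} \textit{VEC} \emph{is} solvable in $\FS^{A_M}$: the paper's Algorithm~\ref{alg:VEC_FSTA_AM} (on first activation rotate yourself $90^\circ$ clockwise about the midpoint of the two observed positions, then stay forever) survives your adversary unharmed — $r$ makes one move to a corner of $\mathcal{Q}_0$ and then ignores every further activation, and when fairness eventually activates $q$ the rotation is completed. So no argument confined to schedules in which no $\LK$ overlaps another robot's $\M$ can prove $\textit{VEC}\notin\FS^{A_{LC}}$; the impossibility must exploit exactly the freedom that $A_{LC}$ (equivalently $A$, for $\FS$) has over $A_M$, namely that $q$ may take its snapshot \emph{while $r$ is in transit}. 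That is what the paper's proof does: $q$ observes an intermediate position of $r$, cannot remember the original distance or angle and cannot see $r$'s light, hence treats the observed segment as an initial diagonal and moves with respect to a square of the wrong size, violating Condition~3.

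The concrete failure point in your write-up is the Case-2 ``contradiction''. You assume Conditions~2 and~3 rotate the diagonal about its midpoint, so that a configuration in which only one robot has moved can never be a legal successor of $\calC_0$. But the conditions constrain only the segment $d_{i+1}$ relative to $d_i$ (direction and length) together with containment in $\mathcal{Q}_i$, and the intended semantics — witnessed by the paper's own $\FS^{A_M}$ algorithm being correct — is precisely that the one-robot-moved configuration is legal: with $r'$ the corner of $\mathcal{Q}_0$ obtained by rotating $r_0$ by $90^\circ$ about the midpoint, the segment $\overline{r'q_0}$ is $d_0$ rotated $45^\circ$ clockwise with length $\mathrm{length}(d_0)/\sqrt{2}$ and lies inside $\mathcal{Q}_0$, so Conditions~3 and~4 hold; your attempt to exclude the endpoint-centred reading via Condition~4 only addresses the $90^\circ$ case, not this one. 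You yourself flag this interpretation as the crux, and under the paper's semantics it goes the wrong way, so no contradiction is reached. (Your Case~1 is also shaky — the paper's predicate, via Condition~1, never forbids the robots from standing still, so the liveness violation you invoke is not in the stated specification — but the decisive defect is that your schedule cannot separate $A_{LC}$ from $A_M$ at all.)
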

\begin{proof}
    By contradiction, assume there exists an algorithm $\calA$ that always solves \textit{VEC} for two robots under the $A_{LC}$ scheduler. Consider an execution of $\calA$ where robot $r$ is activated and moves at time $t$. While $r$ is moving, the remaining robot $q$ becomes active at time $t' > t$. Under the $\FS$ model, $q$ cannot access the internal state of $r$, nor can it remember previously observed angles or distances. Therefore, $q$ cannot detect that the observed configuration is not the initial configuration. As a result, $q$ may move in a square of a different size than $\mathcal{Q}_0$, violating Condition 3. This contradicts the assumption that algorithm $\calA$ is correct.
\end{proof}
\begin{figure}[H]
        \centering
        \includegraphics[width=5cm]{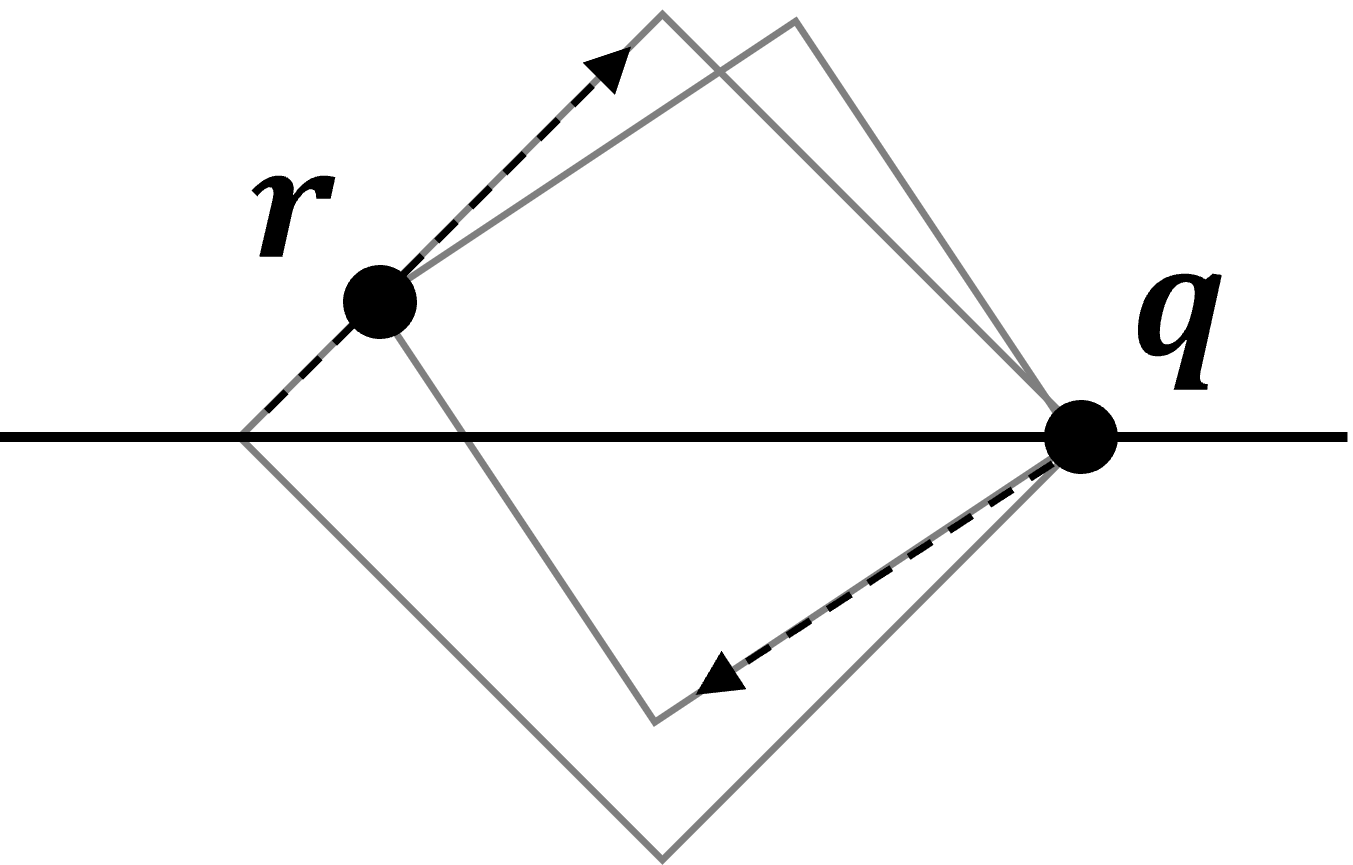}
        \caption{Two possible squares.}
        \label{fig:VEC_FSTA}
\end{figure}
    
\begin{lemma}
    $\textit{VEC} \in \FS^{A_{M}}$.
    \label{lem:22}
\end{lemma}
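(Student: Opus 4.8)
The plan is to give an explicit two-color algorithm and prove it correct under $A_M$. Each robot keeps a light $\mathrm{color}\in\{0,\mathrm{done}\}$, initially $0$. A robot activated with $\mathrm{color}=\mathrm{done}$ keeps its position; a robot $r$ activated with $\mathrm{color}=0$ reads the position $p$ of the other robot, lets $m$ be the midpoint of its own position and $p$, moves to the image of its own position under the $90^{\circ}$ clockwise rotation about $m$, and sets $\mathrm{color}\gets\mathrm{done}$. Thus each robot makes at most one genuine move during the whole run, and by fairness both eventually do.

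First I would prove a geometric lemma about one such move. Let $R$ be the $90^{\circ}$ clockwise rotation of the plane fixing the origin; let $a$ be the mover's position, $b$ the other robot's, and $v=a-b$. Since $a-m=\tfrac12 v$, the mover goes to $a'=m+R(a-m)=\tfrac12(a+b)+\tfrac12 Rv$, so, measuring the segment as the vector $v'=a'-b$ from the stationary robot to the mover, $v'=\tfrac12(I+R)v$. As $Rv\perp v$ and $|Rv|=|v|$, this gives $|v'|=|v|/\sqrt{2}$ with $v'$ pointing $45^{\circ}$ clockwise from $v$; moreover $a'=m+R(a-m)$ is one of the two off-diagonal vertices of the square whose diagonal is the segment $ba$, so $a'$ lies in that square (and $b$ trivially does). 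Hence a single move turns a configuration into one obtained by a $45^{\circ}$ clockwise rotation with a $1/\sqrt{2}$ shrink, placing the mover on a vertex of the current square: exactly Condition~3, and Condition~4 holds because in each step the mover lands on a vertex of the previous square while the non-mover sits at an endpoint of its diagonal. In particular the first move takes $\calC_0$ to a valid $\calC_1$, the second move (the lemma again, now for the other robot) takes $\calC_1$ to $\calC_2$, and afterwards both lights are $\mathrm{done}$, so the configuration is frozen.

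Next I would check the orthogonality required by Condition~5. Composing the lemma with itself, the second mover sees the vector $v_1=\tfrac12(I+R)v_0$ and produces $v_2=\tfrac12(I+R)v_1=\tfrac14(I+R)^2 v_0$; using $R^2=-I$ we get $(I+R)^2=2R$, hence $v_2=\tfrac12 R v_0$, that is, $d_2$ is $d_0$ turned $90^{\circ}$ and therefore orthogonal to $d_0$. So once both robots are $\mathrm{done}$ they sit in an orthogonal configuration and correctly stay there, satisfying Condition~5; at the intermediate $\calC_1$ the segment is only $45^{\circ}$ off $d_0$, so Condition~5 imposes nothing and the not-yet-moved robot is free to act. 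The one remaining interleaving is when both robots look at $\calC_0$ (simultaneously, or before either one has moved): each then lands on one of the two free vertices of $\mathcal{Q}_0$, which realizes Condition~2 (a $90^{\circ}$ rotation of $d_0$ of the same length) and is already orthogonal to $d_0$, after which both stay. Since at most two moves ever occur, this exhausts all executions.

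The only place $M$-atomicity is used is the geometric lemma: it needs that whenever a robot performs its $\Look$, the other robot is at a genuine resting position (a vertex, not a point strictly inside an edge), so that the $1/\sqrt{2}$ scaling of the computed destination is exact. Under $A_{LC}$ the other robot may be caught mid-move, and a destination computed from a point at the wrong scale yields a segment that is a $90^{\circ}$ rotation of $d_0$ of the wrong length, satisfying none of Conditions~1--3 --- precisely the obstruction behind Lemma~\ref{lem:21}. Accordingly, I expect the main obstacle to be the case analysis showing that under $A_M$ every reachable snapshot is one of the few clean configurations $\calC_0,\calC_1,\calC_2$ (up to the mirror choice of which free vertex the first mover selects) across all interleavings of the at-most-two moves; once that is pinned down, the two rotation identities and the vertex-containment observations finish the proof.
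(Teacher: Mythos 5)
Your algorithm is exactly the paper's: each robot, on its first activation, rotates its own position $90^\circ$ clockwise about the midpoint of the two positions, records this in its private light, and thereafter stays put; and your justification follows the same line as the paper's (which consists of only two sentences, asserting that under $A_{M}$ an activated robot always observes the current square and that the light lets it stop afterwards). Your geometric bookkeeping with $v'=\tfrac12(I+R)v$ and $(I+R)^2=2R$ is a welcome fleshing-out of what the paper leaves implicit.

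There is, however, one interleaving that your claim ``this exhausts all executions'' passes over, and it is exactly the delicate point. Under $A_{M}$ only a $\Look$ during another robot's $\Move$ is forbidden, so both robots may take their snapshots of $\calC_0$ and then execute their moves at \emph{disjoint} times. The run then passes through the intermediate rest configuration $\calC_1$ (segment rotated $45^\circ$, shrunk by $1/\sqrt2$) before the second robot, acting on its stale $\calC_0$ snapshot, lands on the far free vertex of $\mathcal{Q}_0$. Judged as a transition out of $\calC_1$ --- the reading you yourself use for the sequential case --- that last step rotates the segment $45^\circ$ but \emph{expands} it by $\sqrt2$ and takes the mover outside $\mathcal{Q}_1$: with $q=(0,0)$, $r=(2,0)$, robot $r$ first moves to $(1,-1)$, so $\mathcal{Q}_1=[0,1]\times[-1,0]$, and then $q$ moves to $(1,1)\notin\mathcal{Q}_1$, matching none of Conditions 1--3 and violating Condition 4. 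You account for this scenario only as a single Condition-2 transition from $\calC_0$, which is accurate when the two moves overlap or coincide but not when they are separated by a rest configuration. To close this you must either argue that the \textit{VEC} specification evaluates each move against the configuration its mover observed (the reading the paper's own terse proof implicitly adopts with ``the square observed by an activated robot is always $Q_i$''), or refine the case analysis/algorithm for this interleaving; as written, the exhaustiveness claim has a hole at precisely the spot where $M$-atomicity alone does not save you.
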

\begin{proof}
    The problem is solved by \textbf{Algorithm \ref{alg:VEC_FSTA_AM}}. Under the $M$-{\bf atomic}-\ASY\ scheduler, the square observed by an activated robot is always $Q_i$. Furthermore, due to the $\FS$ model, the robot can remain stationary when Condition 4 is satisfied.
\end{proof}   
\clearpage
\begin{algorithm}
    \caption{Alg \textit{VEC} for robot $r$}
    \label{alg:VEC_FSTA_AM}
    \textbf{Assumptions:} $\FS$, $M$-{\bf atomic}-\ASY\\
    \textbf{Light:} $r.\text{state} \in \{1, 2\}$, with the initial value set to $1$.\\
    
    $Phase Look:$ $r.\text{pos},\ \text{other.pos} \gets$ the positions of robot $r$ and the other robot, respectively.\\
    
    $Phase Compute:$
   \begin{algorithmic}[1]
        \State \textbf{case} $r.\text{state}$ \textbf{of}
        \State \hspace{1em} $1$:
        \State \hspace{2em} $r.\text{state} \gets 2$
        \State \hspace{2em} $r.\text{des} \gets$ the point obtained by rotating $r.\text{pos}$ $90^\circ$ clockwise around the midpoint of $r.\text{pos}$ and $\text{other.pos}$
        \State \hspace{1em} $2$:
        \State \hspace{2em} $r.\text{des}$ \Comment{Stay at the current location}
    \end{algorithmic}

    \vspace{1em}
    $Phase Move:$\\
    Move to $r_i.des$.
\end{algorithm}

\begin{figure}[H]
    \centering
    \includegraphics[width=12cm]{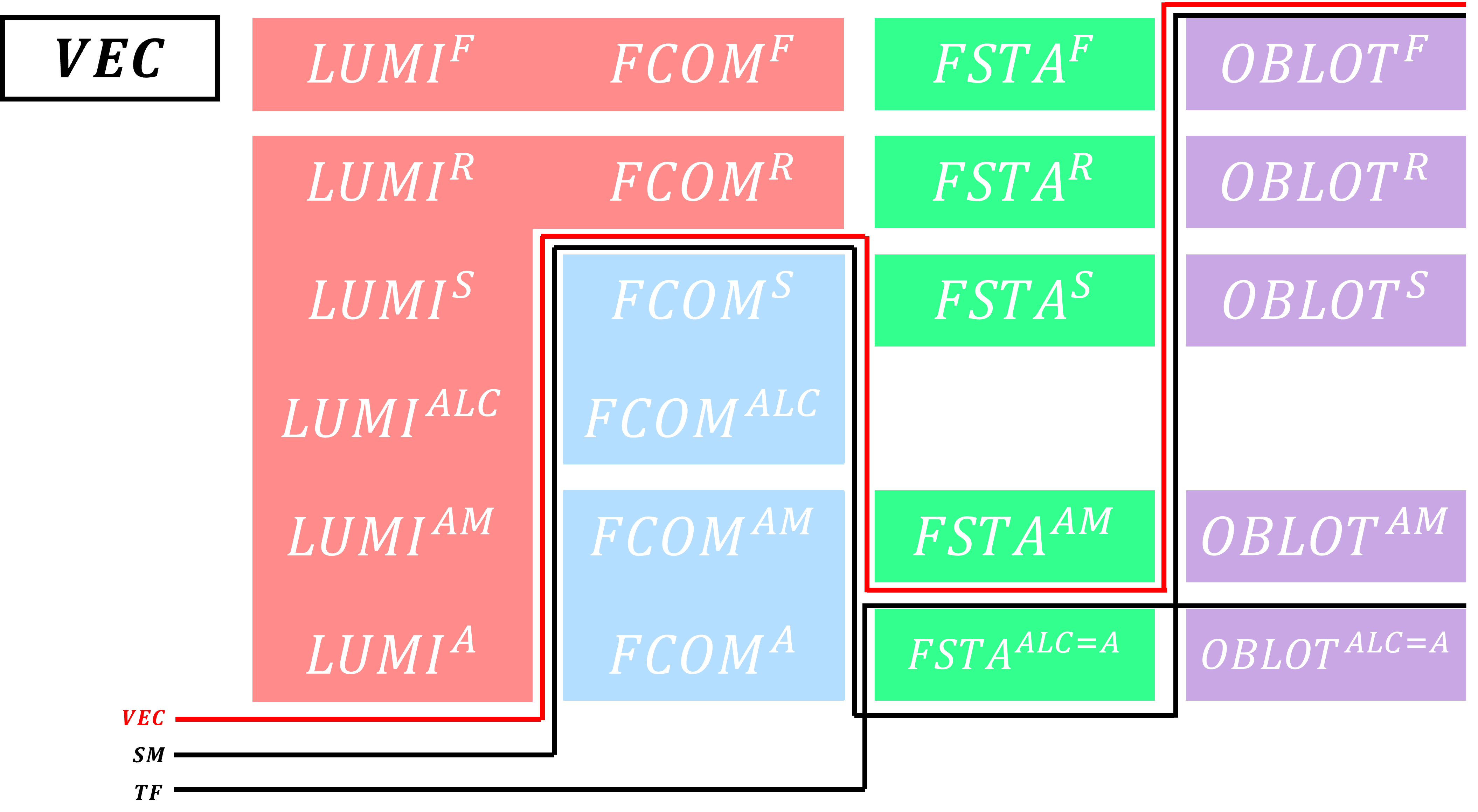}
    \caption{Boundary of VEC}
    \label{fig:VEC_boundary}
\end{figure}

\begin{theorem}
    \textit{VEC} is solved in $\FS^{A_{M}}$, but not in any of $\OB^F$ or $\FC^S$ or $\FS^{A_{LC}}$.
    \label{th:vect}
\end{theorem}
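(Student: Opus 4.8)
The plan is to obtain the theorem as a direct consequence of the three results already proved in this subsection, plus one further impossibility argument for $\OB^F$. Concretely, $\textit{VEC} \in \FS^{A_M}$ is exactly Lemma~\ref{lem:22} (realized by Algorithm~\ref{alg:VEC_FSTA_AM}), $\textit{VEC} \notin \FC^S$ is Lemma~\ref{lem:20}, and $\textit{VEC} \notin \FS^{A_{LC}}$ is Lemma~\ref{lem:21}. What is not yet covered is $\textit{VEC} \notin \OB^F$: this does not follow from the previous two impossibilities, since there is no established inclusion $\OB^F \leq \FS^{A_{LC}}$ (full synchrony and $\LC$-atomic asynchrony pull in opposite directions), so an independent argument is needed.

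For $\textit{VEC} \notin \OB^F$ I would argue directly, mirroring Lemma~\ref{lem:7} and Lemma~\ref{lem:14}. Assume a deterministic algorithm $\calA$ solves \textit{VEC} in $\OB^F$. Because the robots are oblivious, each activated robot computes its destination only from the current (uncolored) snapshot, that is, from the single observed position of the other robot in its local frame. Using variable disorientation, the adversary makes every such snapshot look identical --- ``the other robot at unit distance in a fixed relative direction'' --- so along a fully synchronous execution started from $\calC_0$ both robots recompute the \emph{same} relative destination in every round. There are two cases. If that destination is ``stay put'', the segment $d_i$ never changes, so the configuration in which $d_i$ is orthogonal to $d_0$ is never formed, contradicting the (implicit) liveness requirement of \textit{VEC} that the orthogonal configuration of Condition~5 must eventually be reached. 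If the destination is a nontrivial relative move, the pair keeps performing the same rotation/contraction step forever; hence either the exactly-orthogonal configuration is still never reached (liveness fails again), or it is reached but the robots move out of it rather than staying put (Condition~5 fails). In every case $\calA$ is incorrect, so $\textit{VEC} \notin \OB^F$.

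Combining these four facts gives the theorem: $\textit{VEC}$ is solvable in $\FS^{A_M}$ and in none of $\OB^F$, $\FC^S$, $\FS^{A_{LC}}$. I expect the $\OB^F$ step to be the only delicate one: it requires being explicit about the liveness content of \textit{VEC} (that orthogonality to $d_0$ must be \emph{achieved}, not merely respected when encountered --- otherwise ``stay put forever'' would trivially solve it), and checking that variable disorientation really does erase every distinguishing feature of a two-robot snapshot, so that even full synchrony gives oblivious robots no additional leverage. The remaining three components are immediate invocations of Lemmas~\ref{lem:22}, \ref{lem:20}, and \ref{lem:21}.
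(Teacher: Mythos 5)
Your decomposition matches the paper's: $\textit{VEC}\in\FS^{A_{M}}$ is Lemma~\ref{lem:22}, $\textit{VEC}\notin\FC^{S}$ is Lemma~\ref{lem:20}, and $\textit{VEC}\notin\FS^{A_{LC}}$ is Lemma~\ref{lem:21}, and the theorem is stated in the paper as the combination of exactly these lemmas. The difference is your fourth component: the paper provides \emph{no} lemma for $\textit{VEC}\notin\OB^{F}$, and you are right that this claim does not follow from the other two impossibilities, since $\OB^{F}$ is incomparable with both $\FC^{S}$ and $\FS^{A_{LC}}$ (full synchrony is not dominated by those weaker-scheduler models; indeed the known landscape lists $\OB^{F}\bot\FC^{S}$). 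So on this point your proposal is more complete than the paper's own treatment. Your added argument is in the same style as the paper's other $\OB^{F}$ impossibilities (Lemmas~\ref{lem:7}, \ref{lem:14}, \ref{lem:27}): variable disorientation lets the adversary normalize every two-robot snapshot, so oblivious fully-synchronous robots repeat the same relative step forever, and the case analysis (null step versus a repeated rotation/contraction step) kills either liveness or Condition~5. Two caveats you already flag are worth keeping explicit: the argument needs the liveness reading of \textit{VEC} (the orthogonal configuration must eventually be reached and then held, otherwise the trivial ``never move'' algorithm would satisfy Conditions~1, 4, 5 and all the impossibility lemmas would collapse), a reading the paper's Lemmas~\ref{lem:20} and \ref{lem:21} also tacitly assume; and under full synchrony with chirality the two identically-observing robots make point-symmetric moves about the midpoint, which is what makes your case split (segment rotates by a fixed angle $0^{\circ}$, $45^{\circ}$, $90^{\circ}$, or an immediately illegal one, per round) exhaustive. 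With those made explicit, your proof is correct and fills a genuine gap in the paper's argument for the $\OB^{F}$ part of the theorem.
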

  \label{arc:ZCC}
\begin{Definition}
    \textit{ZCC} (Zig-Zag Collision-free Convergence)
    \\Two robots $r$ and $q$ are placed arbitrarily at distinct locations such that they do not overlap.  
    Each robot must solve the collision-free linear convergence problem while oscillating along the straight line connecting $r$ and $q$.  
    There exists an infinite, strictly increasing sequence of time points $t_0, t_1, t_2, \ldots$, and the problem \textit{ZCC} is defined by the following logical formula:
    \begin{align*}
        \textit{ZCC} \equiv
        \big[ &\{\exists l \in \mathbb{R}^2, \forall \epsilon \geq 0, \exists T \geq 0, \forall t \geq T : 
            |r(t) - l| + |q(t) - l| \leq \epsilon \}, \\
            &\text{and } \{\forall t \geq 0 : \text{dis}(r(0), r(t)) \leq \text{dis}(r(0), q(t)), 
            \text{dis}(q(0), q(t)) \leq \text{dis}(q(0), r(t)) \}, \\
            &\text{and } \{\text{dis}(r(0), r(t_{2i+1})) < \text{dis}(r(0), r(t_{2i})), \\
            &\quad \forall h', h'' \in [t_{2i}, t_{2i+1}], h' < h'' : 
            \text{dis}(r(t_{2i+1}), r(h'')) \leq \text{dis}(r(t_{2i+1}), r(h')) \}, \\
            &\text{and } \{\text{dis}(r(0), r(t_{2i})) > \text{dis}(r(0), r(t_{2i-1})), \\
            &\quad \forall h', h'' \in [t_{2i-1}, t_{2i}], h' < h'' : 
            \text{dis}(r(t_{2i-1}), r(h'')) \geq \text{dis}(r(t_{2i-1}), r(h')) \}, \\
            &\text{and } \{\text{dis}(r(0), r(t_{2i+1})) > \text{dis}(r(0), r(t_{2i-1}))\}
        \big]
    \end{align*}
\end{Definition}
    
\begin{figure}[H]
    \centering
    \includegraphics[width=5cm]{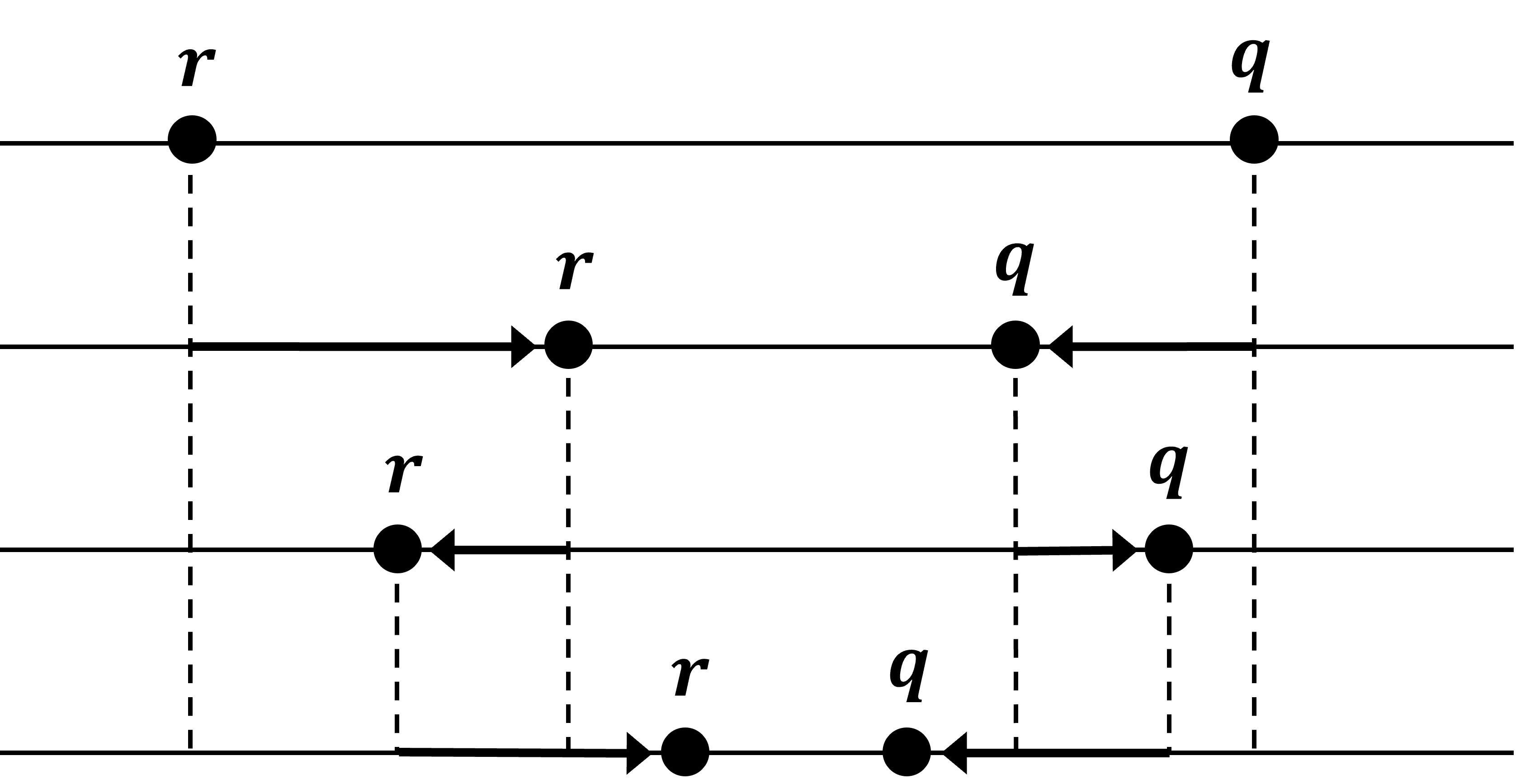}
    \caption{Movement of the \textit{ZCC} problem.}
    \label{fig:os}
\end{figure}

\begin{lemma}
    $\textit{ZCC}\notin \FS^{A_{M}}$.
    \label{lem:23}
\end{lemma}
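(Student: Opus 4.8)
The plan is to argue by contradiction in the style of the impossibility proofs above, combining the finite-memory limitation of $\FS$ with variable disorientation and a one-robot-at-a-time activation schedule. Suppose an algorithm $\calA$ solves \textit{ZCC} in $\FS^{A_M}$. Since for a silent model that cannot observe the colours of others $\FS^{A_M}\equiv\FS^{A_{CM}}$, we may assume no robot takes a snapshot while another robot is computing or moving.

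First I would isolate the behaviour of a single active robot. Fix an initial placement with $r$ and $q$ at distance $\delta_0>0$, and let the adversary activate only $r$ for a long (but finite) stretch while keeping $q$ at $q(0)$; by variable disorientation the adversary scales and rotates $r$'s local frame at each activation so that $r$ always observes $q$ at the fixed point $(1,0)$. An $\FS$ robot sees only its own colour, so $r$'s computed destination depends on its current colour alone; with a constant observed snapshot the colour sequence, hence the sequence of (colour, local destination) pairs, is eventually periodic, say with period $P$. A short scaling computation shows that if $r$ sits at distance $d$ from $q(0)$ when it acts with colour $s$, it moves to distance $\lambda_s d$ from $q(0)$, where $\lambda_s=\text{dis}((1,0),X_s)>0$ depends only on $s$; hence over one period the distance to $q(0)$ is multiplied by a fixed constant $c=\prod_j\lambda_j>0$, and once the behaviour is periodic, $r$'s global trajectory becomes an iterated similarity: geometrically scaled, equally rotated copies of one fixed finite path, accumulating at $q(0)$ if $c<1$ and escaping from it if $c>1$.

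Next I would split on $c$. If $c>1$, then $\text{dis}(r(0),r(t))$ grows without bound while $\text{dis}(r(0),q(t))=\text{dis}(r(0),q(0))$ stays constant, violating the side constraint of \textit{ZCC}. If $c=1$, the trajectory relative to the frozen $q$ is eventually periodic, so $\text{dis}(r(0),r(t_{2i+1}))$ is eventually periodic and cannot be strictly increasing, contradicting the last conjunct of \textit{ZCC}. The remaining case $c<1$ is where fairness must be used: a shrinking geometric zigzag accumulating at a stationary $q$ actually satisfies every conjunct that mentions only $r$, so I would argue that $q$ must eventually be activated, and that by the same scaling its snapshot is then exactly the initial one ``$r$ at $(1,0)$''; hence $q$ repeats its initial action, which a symmetry argument (two robots started with symmetric frames must move, or they never converge) shows is a non-trivial move along the line. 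On a fair schedule that interleaves rare $q$-activations between long bursts of $r$-activations, $q$ then keeps jumping a fixed fraction of the then-tiny gap while $r$ is already within $\varepsilon$ of it; the goal is to show this forces either a collision, or a failure of the strictly increasing near-point condition, or the non-existence of the common limit $\ell$.

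The step I expect to be the main obstacle is exactly this $c<1$ case: excluding a finite-state robot that merely performs a shrinking geometric zigzag toward a momentarily stationary partner. The crux is that under variable disorientation neither robot can ever detect how far the execution has progressed, so every activation of the lagging robot effectively restarts it against a configuration indistinguishable from the initial one; turning this observation into a contradiction uniformly over all algorithms $\calA$ and all choices of the period data $(P,\{\lambda_j\})$ is where the real work lies. A secondary technical point is making the ``iterated similarity'' description precise under chirality (rotations, no reflections) and checking that it survives the reduction $\FS^{A_M}\equiv\FS^{A_{CM}}$.
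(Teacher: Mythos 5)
Your case analysis collapses to the $c<1$ case, and there the gap is not just ``the main obstacle'' but fatal to the approach as it stands. The adversary you construct --- finite solo bursts of $r$, each activation being a complete atomic cycle, with variable disorientation making every snapshot look like ``the other robot at $(1,0)$'' --- is a legal \SSY\ adversary. But \textit{ZCC} \emph{is} solvable in $\FS^{S}$ (Lemma~\ref{lem:26}, Algorithm~\ref{alg:ZCC_FSTA_S}), and for that algorithm the per-period contraction ratio is exactly $c=5/8<1$: a shrinking geometric zigzag toward a momentarily frozen partner is precisely what the correct algorithm produces and it satisfies every conjunct, since fairness eventually lets $q$ act and the same halving/quartering rule prevents crossing and yields convergence. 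Hence no argument that uses only full-cycle activation patterns plus variable disorientation can separate $\FS^{A_{M}}$ from $\FS^{S}$ here; your $c<1$ case cannot be closed with the tools you allow yourself, and your sketch for it (``$q$ repeats its initial action on a tiny gap'') does not produce a collision, because with a scaled local frame $q$'s move length scales with the true (tiny) distance.

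The missing idea is the defining power of the $A_{M}$ (equivalently $A_{CM}$) adversary: decoupling a robot's snapshot/computation from its move. The paper's proof exploits exactly this. Both robots are activated at initial distance $d$; robot $q$ performs $\LK$ and $\CP$ (so its pending move has the \emph{absolute} length $f(d)$ determined by the stale snapshot) but its $\M$ is delayed, which $M$-atomicity permits as long as $r$ never looks while $q$ is actually moving. Meanwhile $r$ runs full cycles and, by the auxiliary argument (every activation must produce a move, and move lengths are symmetric in the observed distance), it keeps reducing the gap; correctness of $\calA$ under a \RR\ schedule is then used (Lemma~\ref{lem:24}) to show the gap must drop below $f(d)$ after finitely many of $r$'s moves. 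Releasing $q$'s stale move of length $f(d)$ at that moment forces a collision/overshoot, violating the distance constraints of \textit{ZCC}. So to repair your proof you would have to replace your solo-burst adversary in the $c<1$ case by this stale-move (or stale-snapshot, under $A_{CM}$) construction --- at which point you have essentially reproduced the paper's argument, and the $c>1$, $c=1$ branches become unnecessary.
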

\begin{proof}
    Assume, for the sake of contradiction, that there exists an algorithm $\calA$ that solves the \textit{ZCC} problem under the $\FS^{A_{M}}$ model. Consider two robots $r$ and $q$ that share the same unit distance and initially observe each other along the positive direction of their local $X$-axes. We assume that their local coordinate systems do not change during the execution of $\calA$. The following three observations are made:

    \begin{enumerate}
        \item By the predicate that defines the \textit{ZCC} problem, a robot must move toward the other whenever it moves. In this particular setup, we observe that the robots must remain on their respective $X$-axes.
        
        \item Every time a robot is activated and executes $\calA$, it must move. Conversely, if $\calA$ specifies that a robot activated at a certain distance $d$ from the other robot must not move, then when $\calA$ is executed in perfect synchrony and the robots are initially at distance $d$, neither robot moves. Therefore, convergence does not occur.
        
        \item When robot $r$ observes $q$ at distance $d$ and moves toward $q$ along the $X$-axis, the computed move length $f(d)$ is the same as the length that $q$ would compute when it observes $r$ at distance $d$ and moves toward $r$.
    \end{enumerate}
    
    Now consider an execution $\epsilon$ under the $A_{M}$ scheduler: initially, both robots are activated simultaneously and are at distance $d$ from each other.  
    Robot $r$ completes its computation and moves instantaneously (note that it is operating under the $A_{M}$ scheduler), while robot $q$ continues its activation and executes $\calA$ during its $\LK$ and $\CP$ phases.
    
    Clearly, each move by $r$ reduces the distance between the two robots. More precisely, from observation (3), after $k = 2l \geq 0$ moves, the distance has decreased from $d$ to $d_k$, where:
    
    \[
    d_0 = d, \quad 
    d_{k>0} = d_{k-1} - f(d_{k-1}) = (d_{k-2} + f(d_{k-2})) - f(d_{k-1}) 
    = d - \sum_{0 \leq i < l} \left(f(d_{2i}) - f(d_{2i+1})\right)
    \]
    
    By the definition of the problem, we know that $f(d_i) - f(d_{i+1}) > 0$ always holds.
\end{proof}

\begin{lemma}
    After a finite number of movements of $r$, the distance between the two robots becomes smaller than $f(d)$.
    \label{lem:24}
\end{lemma}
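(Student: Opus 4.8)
The plan is to mirror the proof of Lemma~\ref{lem:19}, replacing the $\FC^A$ setting there by the $\FS^{A_M}$ one. Assume, for contradiction, that in the execution $\epsilon$ constructed in the proof of Lemma~\ref{lem:23} the solo moves of $r$ never bring the two robots within distance $f(d)$; that is, $d_k \ge f(d)$ for every $k$. I will derive a contradiction by comparing $\epsilon$ with a purely sequential execution of $\calA$.

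First, I would consider the execution $\hat{\epsilon}$ of $\calA$ under a \RR\ synchronous scheduler that starts from inter-robot distance $d$ and activates $r$ and $q$ alternately, one per round, each performing a complete atomic $\LCM$ cycle. Such a schedule activates no robot while another robot is moving and activates every robot infinitely often, so it is an admissible $M$-{\bf atomic}-\ASY\ schedule; hence the assumed correctness of $\calA$ under $\FS^{A_M}$ forces $\calA$ to solve \textit{ZCC} along $\hat{\epsilon}$ as well. By the convergence clause of \textit{ZCC}, the inter-robot distance along $\hat{\epsilon}$ eventually drops below any prescribed $d' > 0$; taking $d' = f(d)$, let $m(f(d))$ denote a round by which this has happened.

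Second, I would argue that the inter-robot distance evolves through the \emph{same} numerical sequence in $\hat{\epsilon}$ and in $\epsilon$, so that which of the two robots performs a given move is immaterial. This uses observations (1) and (3) from the proof of Lemma~\ref{lem:23}: both robots keep to the common $X$-axis, and a robot that sees the other at distance $\delta$ always computes a move of length $f(\delta)$, regardless of which of the two it is and of its internal ($\FS$) state history. Consequently the distance after $i$ activations in $\hat{\epsilon}$ equals the $d_i$ of Lemma~\ref{lem:23}, following $d_0 = d$ and the oscillation recurrence $d_{2l} = d - \sum_{0 \le i < l}\bigl(f(d_{2i}) - f(d_{2i+1})\bigr)$ with each term positive. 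Therefore $d_{m(f(d))} < f(d)$, contradicting the assumption that $d_k \ge f(d)$ for all $k$. Hence finitely many moves of $r$ suffice to bring the distance below $f(d)$, which is the claim. (This then closes the proof of Lemma~\ref{lem:23}: once the distance is below $f(d)$, letting the lagging robot $q$ finally execute the move of length $f(d)$ it had computed from its initial observation of distance $d$ makes $q$ cross $r$, violating the collision-freedom clause of \textit{ZCC} and contradicting the correctness of $\calA$.)

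The main obstacle I anticipate is the second step: one must pin down precisely why the sequence of inter-robot distances is insensitive to which robot moves, and in particular that the oscillation bookkeeping — which activations correspond to ``toward'' moves and which to ``away'' moves — lines up between the two-sided alternating execution $\hat{\epsilon}$ and the one-sided execution $\epsilon$. This rests on the symmetry of the two robots' local situations (shared unit of length, fixed opposite-facing coordinate systems, and the $\FS$ assumption forbidding a robot from reading the other's state) together with observation (3); I would also be careful to verify explicitly that \RR\ is genuinely admissible (fair and $M$-atomic) so that the convergence clause of \textit{ZCC} legitimately applies to $\hat{\epsilon}$.
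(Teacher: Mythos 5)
Your proposal follows essentially the same route as the paper: run $\calA$ under a \RR\ scheduler (admissible for $\FS^{A_M}$), invoke the convergence clause of \textit{ZCC} to get the distance below $f(d)$ within some finite number $m(f(d))$ of rounds, identify that distance sequence with the $d_i$ of Lemma~\ref{lem:23} via the symmetry/observation (3), derive the contradiction with $d_k \ge f(d)$, and finish with the collision of $q$'s pending move of length $f(d)$. The only difference is bookkeeping (the paper indexes the oscillating moves as $k = 4n+2$ and writes the telescoping sum in groups of four), which does not change the argument.
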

\begin{proof}
    Assume, for the sake of contradiction, that $r$ never comes within a distance less than $f(d)$ from $q$.
    That is, for all $k = 4n + 2 > 0$, we have $d_k > f(d)$.
    
    Now consider the execution $\hat{\epsilon}$ of $\calA$ under a Round-Robin synchronous scheduler.  
    Initially, the two robots are at distance $d$, and in each round, only one robot is activated in an alternating fashion.
    
    If $\calA$ is correct under the $A_{M}$ scheduler, it must also be correct under the Round-Robin scheduler.
    This means that, starting from distance $d$, the two robots must eventually become closer than any arbitrary positive constant $d' > 0$.
    Let $m(d')$ be the number of rounds required for this to happen.
    
    Suppose $m(f(d)) = 4n + 2$ rounds are sufficient to reduce the distance to less than $f(d)$.
    Furthermore, after round $i = 4j\ (j > 0)$, the distance $d_i$ is reduced by $f(d_i)$. Then we have:
    
    \[
    d_{m(f(d))} = d - f(d_0) - f(d_1) - \sum_{0 < i < n} \left( f(d_{4i}) + f(d_{4i+1}) - f(d_{4i-2}) - f(d_{4i-1}) \right) < f(d)
    \]
    
    Since, by the definition of the problem, each $f(d_{4i}) + f(d_{4i+1}) - f(d_{4i-2}) - f(d_{4i-1})$ is positive, this contradicts the assumption that $d_k > f(d)$ for all $k > 0$. $\square$
    
    Now consider the execution $\epsilon$ when the distance between the two robots becomes less than $f(d)$.  
    If robot $q$ completes its computation at this point and moves a distance $f(d)$ toward $r$, a collision will occur, contradicting the correctness of $\calA$.
\end{proof}

\begin{lemma}
    $\textit{ZCC} \notin \FC^{S}$.
    \label{lem:25}
\end{lemma}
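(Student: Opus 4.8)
The plan is to argue by contradiction: suppose an algorithm $\calA$ solves \textit{ZCC} in $\FC^S$. The feature of $\FC$ I want to exploit is obliviousness: a robot's $\Compute$ depends only on the snapshot it takes — the other robot's position and colour — and never on its own colour. Hence, if the scheduler keeps one robot inactive, that robot's colour is frozen, and (using variable disorientation to normalise the perceived inter-robot distance to $1$) the active robot is presented with exactly the same coloured snapshot at every activation, so it recomputes exactly the same destination and the same new colour. I would therefore use a scheduler that activates the two robots in alternating finite single-robot "stretches'' (never both at once); this is what plays the role of the literal "activate only $r$'' argument used for $\textit{VEC}\notin\FC^S$, while respecting fairness.

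Inside a single-robot stretch the active robot, say $r$, can thus only repeat one fixed action: stay put, move a fixed fraction of the way toward $q$ along $\overline{r q}$, or move a fixed fraction away from $q$ along that line (the \textit{ZCC} predicate keeps $r$ on this line). If the action is "move away'', then $\text{dis}(r(0),r(t))$ grows geometrically within the stretch, eventually violating the collision-free side condition $\text{dis}(r(0),r(t))\le\text{dis}(r(0),q(t))$ and precluding convergence of $r$ — so an "away'' stretch already yields a contradiction and the adversary never needs to leave it. Since the colour set is finite and the colour pair evolves deterministically under this schedule, after finitely many stretches the pair of colours — and hence each robot's per-stretch behaviour — becomes periodic, so from some point on every stretch is either "stay'' or "move toward the other''. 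Now let the adversary tune lengths: "move-toward'' stretches are made long enough to halve the inter-robot distance (one such stretch shrinks the distance by an arbitrary factor, since $r$ keeps moving a fixed fraction of the current distance toward the frozen $q$), while "stay'' stretches get length one to preserve fairness. If every eventual stretch is "stay'', the distance stops shrinking and convergence fails; otherwise the distance tends to $0$ and we obtain a fair, collision-free, convergent execution. In that execution $r$ only ever stays or moves toward $q$, and because the side conditions force $q$ to remain on the far side of $r$ along the line, "toward $q$'' always means "away from $r(0)$''; hence $\text{dis}(r(0),r(t))$ is monotone non-decreasing for all $t$. This contradicts the \textit{ZCC} requirement that there be an infinite sequence $t_0<t_1<\cdots$ with $\text{dis}(r(0),r(t_{2i+1}))<\text{dis}(r(0),r(t_{2i}))$ for every $i$, so no such $\calA$ exists.

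I expect the main obstacle to be the bookkeeping around fairness together with the colour dynamics: the scheduler may not freeze a robot forever, so the freezing must be realised through alternating finite stretches, and one must check that the colour pair really does become periodic and that, along the whole constructed execution, the geometric invariant "$q$ stays on the far side of $r$'' holds — this is exactly where the collision-free side conditions of \textit{ZCC} are used to rule out $r$ wandering past $r(0)$. A secondary point to pin down is that within one "move-toward'' stretch the distance can be driven down by an arbitrary factor, which is what lets the adversary force convergence while keeping $\text{dis}(r(0),r(t))$ monotone and thereby killing the required zig-zag.
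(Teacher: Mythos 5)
Your proposal is correct and rests on the same core mechanism as the paper's proof: in $\FC^{S}$ a starved robot's light is frozen and, by variable disorientation, the active robot can be fed the identical coloured snapshot at each activation, so it must repeat the same move, which destroys the required zig-zag. Your write-up is in fact more complete than the paper's brief two-round replay (which stops at asserting that the repeated toward-move yields an invalid configuration): the alternating finite stretches, the eventual periodicity of the colour dynamics, and the case analysis forcing $\mathrm{dis}(r(0),r(t))$ to be monotone non-decreasing explicitly exhibit a fair \SSY\ execution violating the oscillation predicate, with the only small caveat that an ``away'' stretch need only be maintained for finitely many rounds until the collision-freeness condition is falsified, after which the schedule can be continued fairly (staying in it forever would violate fairness).
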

\begin{proof}
    Assume that there exists an algorithm $A$ that solves the \textit{ZCC} problem under an arbitrary \SSY\ scheduler.  
    In particular, suppose that it solves the problem under a specific scheduler $S$.  
    Now, consider another scheduler $S'$ under which \textit{ZCC} cannot be solved.
    In order for algorithm $A$ to solve the problem under scheduler $S$, robot $r$ must approach robot $q$ in round $i$.  
    Let scheduler $S'$ behave identically to $S$ up to round $i$, but in round $i+1$, let $S'$ activate only robot $r$.  
    In this case, robot $q$ is not activated and thus cannot communicate to $r$ that $r$ had moved in the previous round.                
        Due to the assumption of variable disorientation, an adversary can force robot $r$ to obtain the same observation in round $i+1$ as in round $i$.  
    As a result, $r$ repeats the same move toward $q$ in consecutive rounds.                
    This leads to an invalid configuration. Hence, algorithm $A$ cannot solve \textit{ZCC} under scheduler $S'$.  
    Since $S'$ is a valid \SSY\ scheduler, and $A$ does not solve the problem under it, this contradicts the assumption that $A$ solves \textit{ZCC} under **any** \SSY\ scheduler.               
    Therefore, we conclude that no valid algorithm can solve \textit{ZCC} under the $\FC^{S}$ model with a \SSY\ scheduler.
\end{proof}

\begin{lemma}
    $\textit{ZCC} \in \FS^{S}$.
    \label{lem:26}
\end{lemma}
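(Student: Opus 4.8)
The plan is to solve \textit{ZCC} with an explicit finite-state algorithm using only two colors, $A$ (``about to advance'') and $B$ (``about to retreat''), both robots starting in state $A$. When a robot is activated in state $A$ it moves toward the other robot by $\frac14$ of the current distance between them and switches to $B$; when activated in state $B$ it moves away from the other robot by a small fixed fraction $\beta<\frac14$ of that distance (say $\beta=\frac1{100}$) and switches back to $A$. Both destinations are frame-independent functions of the snapshot, so this is a legal $\FS$ rule, and each robot's move sequence alternates advances and retreats, beginning with an advance. For a fixed robot, let $a_k$ (resp.\ $b_k$) be the distance between the two robots at the start of the round in which it performs its $k$-th advance (resp.\ $k$-th retreat), let $u_k$ (resp.\ $v_k$) be its distance from its initial position immediately after its $k$-th advance (resp.\ $k$-th retreat), and put $v_0=0$ and $d(t)=\text{dis}(r(0),r(t))$.

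First I would record the per-round invariants. Because the schedule is semi-synchronous, every activated robot computes from the true start-of-round configuration and all moves of a round happen simultaneously; no robot ever observes a configuration in which another robot is mid-move. A short case analysis over the four possibilities (both advance, both retreat, one of each, only one robot acts) shows that after every round the robots keep their original order on the line $\overline{r(0)q(0)}$ and their mutual distance stays strictly positive (it contracts by at most a factor $\frac12$, in the ``both advance'' case). Hence the robots remain on $\overline{r(0)q(0)}$ and never collide. This is the one place where semi-synchrony is essential and why the argument fails for $A_M$: the stale-snapshot collision exploited in Lemmas~\ref{lem:23}--\ref{lem:24} cannot occur when cycles are atomic.

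The step I expect to be the main obstacle is the ``each robot stays on its own side'' clause, $\text{dis}(r(0),r(t))\le\text{dis}(r(0),q(t))$ and its mirror. The danger is that a retreat length is a fraction of the current gap, and the partner's retreats can inflate that gap between a robot's advance and its next retreat, conceivably driving the robot past its own starting point. The decisive observation is that a robot is idle between its $k$-th advance and its $k$-th retreat, during which the partner can only alternate advances and retreats, and any maximal block of such moves multiplies the gap by at most $1+\beta$ (a lone retreat is the worst case, since every complete advance--retreat block strictly contracts the gap). Combined with the contraction caused by the robot's own $k$-th advance, this yields $b_k<a_k$, hence $v_k=u_k-\beta b_k>u_k-\frac14 a_k=v_{k-1}$ and $u_k=v_{k-1}+\frac14 a_k>v_{k-1}$. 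Starting from $v_0=0$ this forces $0=v_0<v_1<v_2<\cdots$ and $0<u_1<u_2<\cdots$, so the robot is always on its own side of $\overline{r(0)q(0)}$ (strictly so after its first move); by symmetry the same holds for the partner, every visited position therefore lies in the segment $\overline{r(0)q(0)}$, and all these sequences are bounded.

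For convergence, take logarithms of the gap: over one round, $\log(\text{gap})$ changes by at most $-\frac14\cdot(\#\text{advances in the round})+\beta\cdot(\#\text{retreats in the round})$, using $\log(1-x)\le -x$ and $\log(1+x)\le x$. Summing over all rounds, the alternation guarantees that the total number of retreats never exceeds the total number of advances, and fairness guarantees that the number of advances tends to infinity, so $\log(\text{gap})\to-\infty$; together with the monotonicity and boundedness of $(u_k)$ and $(v_k)$ and the fact that within-block excursions are at most $\frac14\cdot(\text{gap})\to 0$, both trajectories converge to a common point $l$. Finally, for the oscillation clause, fix a robot and set $t_{2k}$ to the end of its $(k+1)$-th advance move and $t_{2k+1}$ to the end of its $(k+1)$-th retreat move; these times strictly interleave, on $[t_{2k},t_{2k+1}]$ the robot is first idle and then performs one straight-line retreat toward $r(t_{2k+1})$ (so its distance to $r(t_{2k+1})$ is non-increasing), and symmetrically on $[t_{2k-1},t_{2k}]$ its distance to $r(t_{2k-1})$ is non-decreasing. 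The strict relations $d(t_{2k+1})=v_{k+1}<u_{k+1}=d(t_{2k})$, $d(t_{2k})=u_{k+1}>v_k=d(t_{2k-1})$ and $d(t_{2k+1})=v_{k+1}>v_k=d(t_{2k-1})$ supply the remaining distance-from-origin conditions, and the analogous choice works for the other robot. All clauses of \textit{ZCC} are then satisfied, so $\textit{ZCC}\in\FS^S$.
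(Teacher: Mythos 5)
Your proposal is correct and follows essentially the same route as the paper: a two-color $\FS$ state machine in which each robot alternates moving toward the other by a fixed fraction of the current distance and retreating by a smaller fraction, with the oscillation times read off each robot's own advance/retreat endpoints. The only differences are your choice of fractions ($\tfrac{1}{4}$ and $\tfrac{1}{100}$ versus the paper's $\tfrac{1}{2}$ and $\tfrac{1}{4}$, which incidentally sidesteps the paper's ``both activated simultaneously meet at the midpoint'' corner case) and the considerably more detailed invariant and convergence analysis you supply, which the paper only sketches.
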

\begin{proof}
    The problem is solved by \textbf{Algorithm~\ref{alg:ZCC_FSTA_S}}.  
    Each robot is equipped with two light colors, and the initial color is assumed to be $A$.  
    When a robot is activated, if its own light is $A$, it switches the light to $B$ and moves toward the other robot.  
    If the light is $B$, it switches back to $A$ and moves away from the other robot.  
    This behavior enables the robot to oscillate.
    Furthermore, let $d$ be the distance between robots $r$ and $q$ at the time of activation.  
    If the robot moves toward the other robot by $d/2$ and away by $d/4$, the two robots can converge.  
    If both robots are activated simultaneously, they will meet at the midpoint of distance $d$.
    Even when one robot is activated continuously, it will not move further away from the other robot than its position two activations ago. This is also the case where chirality is undefined.
\end{proof}   
\clearpage
\begin{algorithm}
    \caption{Alg \textit{ZCC} for robot $r$}
    \label{alg:ZCC_FSTA_S}
    \textbf{Assumptions:} $\FS$, \SSY \\
    \textbf{Light:} $r.\text{color} \in \{A, B\}$, initial color is $A$\\
    
    $Phase Look:$$\text{other.pos} \gets$ the position of the other robot in $r$'s local coordinate system.\\
    
    $Phase Compute:$
    \begin{algorithmic}[1]
        \State \textbf{case} $r.\text{color}$ \textbf{of}
        \State \hspace{1em} $A$:
        \State \hspace{2em} $r.\text{color} \gets B$
        \State \hspace{2em} \parbox[t]{0.8\linewidth}{$r.\text{des} \gets$ the point in the direction of $other.\text{pos}$ at distance $|\text{other.pos}|/2$ from $r$}
        \State \hspace{1em} $B$:
        \State \hspace{2em} $r.\text{color} \gets A$
        \State \hspace{2em} \parbox[t]{0.8\linewidth}{$r.\text{des} \gets$ the point in the opposite direction from $other.\text{pos}$ at distance $|\text{other.pos}|/4$ from $r$}
    \end{algorithmic}

    \vspace{1em}
    $Phase Move:$\\
    Move to $r_i.des$.
\end{algorithm}

\begin{figure}[H]
    \centering
    \includegraphics[width=12cm]{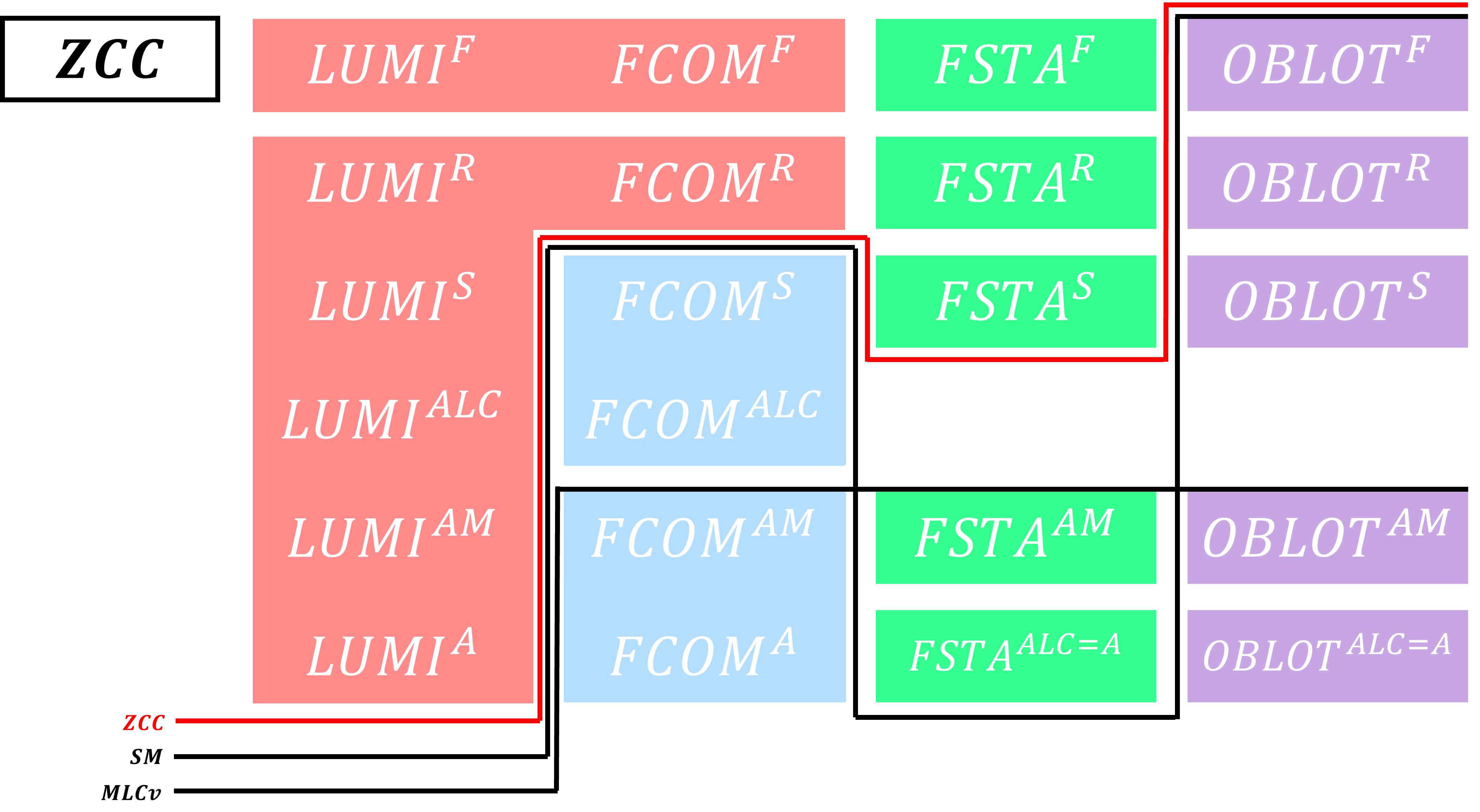}
    \caption{Boundary of ZCC}
    \label{fig:ZCC_boundary}
\end{figure}

\begin{theorem}
    \textit{ZCC} is solved in $\FS^S$, but not in any of $\FS^{A_{M}}$ or $\FC^S$.
    \label{th:zcct}
\end{theorem}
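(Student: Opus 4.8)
The statement is a collation of three facts, each already isolated in a lemma above, so the plan is to assemble them and check that nothing is left over. The positive half is Lemma~\ref{lem:26}; the two impossibilities are Lemma~\ref{lem:23} (with its companion Lemma~\ref{lem:24}) and Lemma~\ref{lem:25}. I would present the proof in exactly that order.

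\textbf{Positive direction ($\textit{ZCC}\in\FS^{S}$).} I would invoke Algorithm~\ref{alg:ZCC_FSTA_S}: a robot with light $A$ switches to $B$ and moves a distance $|{\rm other}|/2$ toward the other robot, and a robot with light $B$ switches to $A$ and moves $|{\rm other}|/4$ away. One then verifies the three clauses of the \textit{ZCC} predicate. Collision-freeness and convergence hold because the net effect of a toward-then-away pair is a strict contraction ($d/2>d/4$) and even a robot run in isolation never overshoots, so $\mathrm{dis}(r(0),r(t))\le\mathrm{dis}(r(0),q(t))$ is maintained. The zig-zag (oscillation) clause follows from the strict alternation of the two colors. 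The ``no worse than two activations ago'' bound follows because after one toward move and one away move the robot is strictly closer to its starting side than it was two activations earlier. Since $\FS$ lets a robot read its own light, the color bookkeeping is legal under \SSY, and when both robots act together they simply meet at the midpoint.

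\textbf{Impossibility for $\FS^{A_M}$.} I would cite Lemmas~\ref{lem:23} and~\ref{lem:24}. Fix two robots sharing a unit distance on a common $X$-axis with static coordinate systems. A correct algorithm $\calA$ must move a robot whenever activated (otherwise a perfectly synchronous start at the same distance stalls convergence), and its move length $f(d)$ depends only on the observed distance $d$. Repeatedly activating $r$ alone under $A_M$ drives the distance strictly down through $d_0=d$, $d_{k}=d_{k-1}-f(d_{k-1})$. Lemma~\ref{lem:24} then shows this sequence drops below $f(d)$: comparing with a Round-Robin execution, which $\calA$ must also solve, the two robots provably come closer than $f(d)$ in finitely many rounds, so the telescoping sum $d-\sum f(d_i)$ is forced below $f(d)$. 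At that moment the adversary lets the frozen robot $q$ — still holding the stale observation of distance $d$ — finish its pending $\Compute$ and move $f(d)$ toward $r$, producing a collision and contradicting correctness.

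\textbf{Impossibility for $\FC^{S}$.} I would use Lemma~\ref{lem:25}. In $\FC$ a robot sees only the \emph{other} robot's color, so if $\calA$ has $r$ approach $q$ in some round $i$, the adversary activates only $r$ in round $i+1$; $q$'s color is unchanged and, by variable disorientation, the observed configuration is indistinguishable from that of round $i$, so $r$ repeats the identical toward move. Iterating this yields a strictly monotone collapse with no zig-zag and eventually a collision, violating the \textit{ZCC} predicate, so $\calA$ fails under this valid \SSY\ scheduler.

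\textbf{Main obstacle.} Everything reduces to direct citation except Lemma~\ref{lem:24}, the one genuinely non-routine step: one must rule out that the isolated-robot distances $d_0,d_1,\dots$ converge to a positive limit above $f(d)$. The crux is transferring convergence from the Round-Robin schedule — where correctness of $\calA$ is guaranteed — to the isolated-robot execution, exploiting that each round's displacement is a function of the current distance alone so that the partial sums behave identically; this forces $d-\sum f(d_i)$ below any target, in particular below $f(d)$, which is exactly what the collision argument needs.
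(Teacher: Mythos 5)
Your proposal matches the paper's treatment: the theorem is obtained exactly by assembling Lemma~\ref{lem:26} for the positive direction, Lemmas~\ref{lem:23} and~\ref{lem:24} for the $\FS^{A_{M}}$ impossibility (the isolated-activation distance sequence compared against a Round-Robin execution, leading to a collision), and Lemma~\ref{lem:25} for the $\FC^{S}$ impossibility (single-robot activation plus variable disorientation forcing a repeated toward move). The reasoning you give for each lemma is essentially the paper's own argument, so no further comment is needed.
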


  \subsection{Synchrony Is Necessary for FSTA (LP-Cv, VTR)}\label{sec:5-2}

  \label{arc:LCM-M}
\begin{Definition}
    \textit{LP--Cv} (Leave the Place -- Convergence)
    \\We place two robots, $r$ and $q$, at arbitrary, non-overlapping positions. Each robot then moves along the line connecting $r$ and $q$, away from the other robot (this phase is referred to as $LP$). After LP, the robots must solve the convergence problem (referred to as $Cv$). In this context, when $Cv$ is executed after completing $LP$, both robots must have moved from their initial positions. Assuming robot $r$ is the one located in the negative direction and robot $q$ is in the positive direction, the \textit{LP--Cv} behavior is defined by the following logical expression.
    \begin{align*}
        \textit{LP--Cv} \equiv 
        \big[ &\exists T \geq 0 : \big[
            \{ \forall t \leq T : r(0), q(0) \in \overline{r(t)q(t)} \} \\
            &\land \{ r(T) \neq r(0), \, q(T) \neq q(0) \} \\
            &\land \{ \forall t \leq T : r(t) - r(0) \leq 0, \, q(t) - q(0) \geq 0 \}
        \big] \\
        &\land \big[
            \exists l \in \mathbb{R}^2, \, \epsilon \geq 0, \, \exists T' \geq T, \, \forall t \geq T' : 
            |r(t) - l| + |q(t) - l| \leq \epsilon
        \big] \big]
    \end{align*}
\end{Definition}

\begin{lemma}
    $\textit{LP--Cv} \notin \OB^F$.
    \label{lem:27}
\end{lemma}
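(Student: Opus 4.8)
The plan is to argue by contradiction, in the spirit of Lemma~\ref{lem:14} and Lemma~\ref{lem:7}: two oblivious, variably disoriented robots under full synchrony cannot tell whether the \textit{LP} phase has been completed, so no fixed algorithm can decide when to switch to \textit{Cv}. Concretely, I would fix an alleged algorithm $\calA$ solving \textit{LP--Cv} in $\OB^F$ for $n=2$ robots $r,q$ and exhibit one admissible execution under which the robots never converge.

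The heart of the construction is an adversary that makes every snapshot identical. Starting from an arbitrary placement with $r,q$ at distance $d_0$, in each round the adversary hands each robot a local coordinate system in which the other robot sits at a fixed point, say $(1,0)$, using the scale and rotation freedom of variable disorientation --- the same normalization trick used in the proofs of Lemmas~\ref{lem:2} and~\ref{lem:7}. Moreover I would keep the two frames symmetric under the $180^{\circ}$ rotation about the midpoint of the two robots, which is orientation preserving and hence chirality consistent. The consequence is that each robot always receives literally the same input, so it always computes the same local destination $D=\calA((1,0))$, and therefore performs the same move in every round, up to the current inter-robot distance (and up to the midpoint symmetry between $r$ and $q$).

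Then I would run a short case analysis on $D=(\delta,\eta)$. Because $r(T)\neq r(0)$ forces $T>0$, and a robot's first move happens already in round~$1$, that move must lie inside the \textit{LP} phase, hence must keep each robot on the line $\overline{r(0)q(0)}$ and on its own side of the midpoint. This rules out $\eta\neq 0$ (the move leaves the line, so $r(0)\notin\overline{r(t)q(t)}$ at round~$1$, using the symmetric adversary) and $\delta>0$ (the move is toward the partner, so $r(t)-r(0)>0$ with $q$ on the positive side). It also rules out $\delta=\eta=0$, since then no robot ever moves and $r(T)=r(0)$ forever. In the only remaining case, $\delta<0$, the two robots strictly separate in every round, so the distance grows by a fixed factor larger than~$1$ and tends to infinity, and the convergence clause of \textit{LP--Cv} is violated. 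Every case contradicts correctness of $\calA$, giving $\textit{LP--Cv}\notin\OB^F$.

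The step I expect to be the main obstacle is the adversary construction: arguing rigorously that every snapshot can be made identical while staying inside the model. Fairness is free under $F$ (every robot acts in every round), so the real points to justify are (a) that uniform scaling is part of the disorientation the adversary controls --- which is exactly what is assumed in the proofs of Lemmas~\ref{lem:2} and~\ref{lem:7} --- and (b) that with a single visible robot there is no relative-angle or handedness information left, so normalizing to $(1,0)$ discards nothing $\calA$ could have used. Once that is pinned down, the informal ``the robot cannot detect that \textit{LP} is finished'' of Lemma~\ref{lem:14} becomes the precise statement ``$\calA$ computes the same displacement in every round'', and the remaining case check is routine geometry.
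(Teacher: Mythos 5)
Your proof is correct and rests on the same idea as the paper's: oblivious robots cannot tell from a snapshot whether the \textit{LP} phase has already been performed, so under full synchrony with variable disorientation they repeat the same move forever. The paper's own proof is exactly that one informal sentence, so your identical-snapshot adversary (normalizing the partner to $(1,0)$ with midpoint-symmetric, chirality-consistent frames) and the case analysis on the computed destination ($\eta \neq 0$, $\delta > 0$, $\delta = \eta = 0$, $\delta < 0$) are a faithful, fully detailed version of the argument the paper only sketches, and each case indeed violates either the \textit{LP} clause or the convergence clause of the predicate.
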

\begin{proof}
    Regardless of full-synchrony, $\OB$ robots cannot determine whether they have moved from their initial positions based solely on observation.
\end{proof}

\begin{lemma}
    $\textit{LP--Cv} \notin \FS^S$.
    \label{lem:28}
\end{lemma}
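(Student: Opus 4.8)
The plan is to reuse, almost verbatim, the impossibility argument for $\textit{LP--MLCv}$ in $\FS^S$ (Lemma~\ref{lem:15}); the obstruction is identical. Under $\FS$ a robot may record in its constant-size light whether \emph{it} has already performed its $LP$ move, but it can neither observe the other robot's light nor — because of variable disorientation — recover from the snapshot whether the \emph{other} robot has left its initial position. Note that since the convergence phase $Cv$ is strictly weaker than $\textit{MLCv}$, this does not follow for free from Lemma~\ref{lem:15}: a separate, parallel argument is required.

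Assume for contradiction that an algorithm $\calA$ solves $\textit{LP--Cv}$ in $\FS^S$, and fix a placement of the two robots with $r$ in the negative and $q$ in the positive direction of the line through them. Consider the execution in which the adversary first activates only $r$, repeatedly; since there are only two robots, at each activation the adversary can pick $r$'s local frame so that $q$ appears at a fixed unit distance, so $r$ takes the same snapshot every time and therefore runs through exactly the state sequence it would in a fully synchronous execution. In any fully synchronous execution the robots must eventually move — otherwise the $LP$ clause $r(T)\neq r(0)$, $q(T)\neq q(0)$ is never satisfied — so along this schedule $r$ must eventually make its $LP$ move, i.e.\ move away from $q$, and afterwards reach some ``post-$LP$'' light state. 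Now look at what $\calA$ does with $r$ from that moment on, still under the all-$r$ schedule, with $q$ fixed at $q(0)$. If $\calA$ ever has $r$ move toward $q$, this is a $Cv$ move executed while $q$ has not moved from its initial position, violating the clause of $\textit{LP--Cv}$ that both robots must have moved before $Cv$. If $\calA$ never has $r$ move toward $q$, then from then on $r$ can only stay put or retreat; it cannot retreat forever, since any nonzero retreat keeps $r$ at distance at least $|r(0)-q(0)|>0$ from the stationary $q$ and convergence would fail. Hence $r$ becomes permanently stationary. Ending this finite all-$r$ prefix (fairness is preserved) and then activating only $q$ drives $q$, by symmetry, through the same evolution: $q$ makes its $LP$ move and then likewise becomes permanently stationary. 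Now both robots have completed $LP$ but neither ever moves again, so convergence never occurs — again a contradiction with the specification. Either way $\calA$ is incorrect, so $\textit{LP--Cv}\notin\FS^S$.

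The step I expect to need the most care is justifying that, under the all-$r$ (respectively all-$q$) schedule, the active robot really sees an invariant snapshot and hence faithfully traces its fully synchronous state trajectory: this requires spelling out that with $n=2$ a single similarity of the local frame normalizes the configuration at every activation, and that this remains true after $r$'s own $LP$ move (and, if $LP$ is realized gradually, at each of $r$'s intermediate positions). The small companion claim — that ``never moves toward $q$'' forces $r$ to stop rather than retreat indefinitely — also needs the one-line distance argument sketched above. Once these are pinned down, the dichotomy closes exactly as in Lemma~\ref{lem:15}; the resulting statement is in fact formally stronger, since $Cv$ constrains the second phase less than $\textit{MLCv}$ does.
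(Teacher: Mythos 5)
Your proposal follows essentially the same route as the paper's proof: starve one robot, use variable disorientation to keep its snapshot identical to the (normalized) initial view so that it replays its fully synchronous behaviour, and then split on whether it ever moves toward the other robot — if it does, the adversary has made the approach happen before the other robot's \textit{LP}, violating the specification exactly as in the paper's first branch; if it never does, convergence is made to fail, as in the paper's second branch. You are also right, and more explicit than the paper, that this does not follow from Lemma~\ref{lem:15}, since \textit{Cv} is weaker than \textit{MLCv}, and you treat the ``retreat further'' option that the paper's dichotomy silently omits.

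The one sub-step that does not hold as written is the exclusion of perpetual retreat: you argue that $r$ ``cannot retreat forever, since \dots convergence would fail,'' but the schedule in which only $r$ is ever activated is unfair, so a correct algorithm owes no convergence along it, and hence ``$r$ becomes permanently stationary'' is not justified; the subsequent construction (finite all-$r$ prefix after stationarity, then all-$q$, then ``neither ever moves again'') leans on that claim. The repair is simple and makes the stationarity claim unnecessary: in branch (b) the hypothesis is that $r$, seeing the fixed normalized snapshot, never approaches at any activation count; since a robot's decision in $\FS$ depends only on its own light (hence on its activation count) and on that snapshot, and the two robots are anonymous, under \emph{any} fair schedule in which the adversary normalizes both robots' views neither robot ever approaches the other, so the distance never drops below its initial positive value and convergence fails — a contradiction in a legitimate (fair) execution. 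With that adjustment your argument closes cleanly and matches the paper's proof in substance.
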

\begin{proof}
    Suppose that there exists an algorithm $\calA$ that solves \textit{LP--Cv} under the $\FS^S$ model. Consider a scenario in the \textit{LP} problem where a robot $r$, having moved away from robot $q$, is activated in round $i$. In this round, robot $r$ has two possible actions: either move toward $q$ to execute \textit{Cv}, or remain stationary until $q$ completes \textit{LP}.

    Assume that round $j$ is the first round in which $r$ performs \textit{MLCv}. By this round, both $r$ and $q$ must have completed the LP phase. Now, consider a round $k$ before $q$ has performed its LP phase. Due to variable disorientation, the adversary can activate only $r$ from round $k$ until round $j$ to bring the system to an identical state from $r$'s perspective. In this scenario, $r$ would initiate \textit{MLCv} without $q$ having performed the LP phase, which means algorithm $\calA$ is incorrect.

    Hence, in the $\FS$ model, robot $r$ cannot determine, with its finite memory, whether $q$ is still at its initial position or has already moved away. Therefore, if algorithm $\calA$ dictates that $r$ should move toward $q$, it may do so even when $q$ remains at its initial position.

    On the other hand, if $\calA$ instructs $r$ to stay put, then $r$ will never observe that $q$ has moved away from its initial position, and will continue to remain stationary even if activated infinitely many times after round $i$. Both of these behaviors contradict the problem specification and violate the correctness of algorithm $\calA$.
\end{proof}

\begin{lemma}
    $\textit{LP--Cv} \in \FS^R$.
    \label{lem:29}
\end{lemma}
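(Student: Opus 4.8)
The plan is to reuse, almost verbatim, the algorithm and argument behind Lemma~\ref{lem:16} ($\textit{LP--MLCv}\in\FS^R$), since $\textit{LP--Cv}$ only asks for the weaker \textit{Cv} phase after \textit{LP} rather than the monotone, distance-non-increasing \textit{MLCv} phase. The structural fact to exploit is the one used there: under \RSY\ the (optional) fully-synchronous prefix activates both robots together, and in every subsequent round any two consecutive activation sets are disjoint; hence for two robots, by the time a robot is activated for the second time the other robot has necessarily been activated---and has moved---at least once. Consequently a single persistent bit per robot, a light $r.\mathrm{color}\in\{A,B\}$ initialised to $A$, suffices to separate the \textit{LP} phase from the \textit{Cv} phase.

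First I would give the algorithm: when activated with $r.\mathrm{color}=A$, the robot sets $r.\mathrm{color}\gets B$ and performs its \textit{LP} move, i.e.\ it moves along the line through $r$ and $q$ \emph{away} from $q$ by (say) half of its current distance to $q$; when activated with $r.\mathrm{color}=B$, it keeps its light and performs a \textit{Cv} move, moving \emph{toward} $q$ along that line by a fixed fraction $\alpha<\tfrac12$ of the current distance. Then I would check the two temporal predicates. For the \textit{LP} part, each robot's first activation is the only round with light $A$, so it makes exactly one away-move; by the \RSY\ disjointness (or by the fully-synchronous prefix) both robots have completed their away-move before any robot makes its first toward-move, so taking $T$ just after the second robot's first activation yields $r(T)\neq r(0)$, $q(T)\neq q(0)$, all intermediate positions on $\overline{r(0)q(0)}$, and displacements pointing away from the other robot, as required. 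For the \textit{Cv} part, from $T$ on every activation is a toward-move by factor $\alpha<\tfrac12$: a single activation multiplies the distance by $1-\alpha$ and a simultaneous activation by $1-2\alpha$, so in all cases the distance shrinks geometrically to $0$ while staying positive, giving the limit point $l$ and no collision; non-rigid interruptions only slow this down without affecting correctness or the already-recorded ``past-\textit{LP}'' flag, so chirality and rigidity are not needed, exactly as in Lemma~\ref{lem:16}.

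The only genuine subtlety---the step I would be most careful about---is the interaction of the fully-synchronous prefix of \RSY\ with the \textit{Cv} move: one must ensure the ``toward'' move is collision-free even when both robots are activated in the same round, which is precisely why a contraction factor strictly below $\tfrac12$ is the safe choice here. Everything else is a direct transcription of the $\textit{LP--MLCv}$ argument; in fact one could alternatively dispense with a fresh algorithm entirely and simply observe that $\textit{LP--MLCv}\in\FS^R$ (Lemma~\ref{lem:16}) already implies $\textit{LP--Cv}\in\FS^R$, because any algorithm solving $\textit{LP--MLCv}$ solves the strictly weaker problem $\textit{LP--Cv}$.
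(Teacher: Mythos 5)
Your proposal is correct and matches the paper's own argument almost exactly: the paper also uses a single two-valued light ($A$/$B$), performs the away-move on the first activation and the toward-move (by half the current distance) on all later activations, and justifies correctness by the same \RSY\ observation that by any robot's second activation the other robot has already been activated once and hence completed \textit{LP}. Your added contraction factor $\alpha<\tfrac12$ and the alternative reduction from Lemma~\ref{lem:16} are harmless refinements of the same approach.
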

\begin{proof}
   The problem is solved by \textbf{Algorithm~\ref{alg:LPCv_FSTA_R}}.  
    Due to the property of the \RSY\ scheduler, when any robot is activated for the second time, the other robot must have already been activated at least once.  
    This implies that the other robot has completed the movement away from the initial position (\textit{LP}).  
    Therefore, even without being able to recognize the state of the other robot via lights, each robot can solve the problem by performing \textit{LP} on its first activation and executing \textit{Cv} on subsequent activations.
\end{proof}

\begin{algorithm}
    \caption{Alg \textit{LP--Cv} for robot $r$}
    \label{alg:LPCv_FSTA_R}
    \textbf{Assumptions:} $\FS$, \text{R\textsc{synch}}\\
    \textbf{Light:} $r.\text{state} \in \{A, B\}$, initial value is $A$\\
    
    $Phase Look:$ $other.pos \gets$ the position of the other robot in $r$'s local coordinate system\\
    
    $Phase Compute:$
    \begin{algorithmic}[1]
        \State \textbf{case} $r.\text{state}$ \textbf{of}
        \State \hspace{1em} $A$:
        \State \hspace{2em} $r.\text{state} \gets B$
        \State \hspace{2em} $r.\text{des} \gets$ the point at distance $|\texttt{other.pos}|/2$ from $r$ in the direction opposite to \texttt{other.pos}
        \State \hspace{1em} $B$:
        \State \hspace{2em} $r.\text{des} \gets$ the point at distance $|\texttt{other.pos}|/2$ from $r$ toward \texttt{other.pos}
    \end{algorithmic}

    \vspace{1em}
    $Phase Move:$\\
    Move to $r_i.des$.
\end{algorithm}

\begin{lemma}
    $\textit{LP--Cv} \in \FC^{A_{M}}$.
    \label{lem:30}
\end{lemma}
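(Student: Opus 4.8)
The plan is to exhibit an explicit protocol for the two robots $r,q$ in the $\FC$ model under the $A_{M}$ scheduler. Because an $\FC$ robot cannot read its own light, the new color and the destination it computes in a cycle may depend only on the color of the other robot and on the two observed positions. I would use three colors $A$ (initial), $B$, $C$ with the following rule: if the other robot's color is $A$, set own color to $B$ and perform the \textit{LP} move (move away from the other along the connecting segment); if the other robot's color is $B$, set own color to $C$ and perform the \textit{LP} move; if the other robot's color is $C$, keep color $C$ and execute one step of a collision-free convergence rule toward the other robot. The intuition is that a robot's color becomes $\ge B$ exactly after it has been activated once, so color $C$ acts as a broadcast ``I have already left my place''.

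First I would determine which color profiles $(\mathrm{col}(r),\mathrm{col}(q))$ can arise along an execution: starting from $(A,A)$, only $(A,A),(A,B),(B,A),(B,B),(B,C),(C,B),(C,C)$ are reachable, since a robot still bearing color $A$ has never been seen by the other in color $B$ or $C$, so the other cannot have reached $C$; in particular $(A,C)$ and $(C,A)$ never occur. This gives the two invariants that drive the proof: (i) a robot reaches color $C$ only after it has performed at least one outward \textit{LP} move; and (ii) when a robot executes a convergence step it has just observed the other in color $C$, so by (i) both robots have by then performed an \textit{LP} move, and up to that instant every move of either robot was an outward \textit{LP} move. Taking $T$ to be the instant just before the first convergence step then yields $r(T)\neq r(0)$, $q(T)\neq q(0)$ and $r(t)\le r(0)\le q(0)\le q(t)$ for all $t\le T$, i.e.\ $r(0),q(0)\in\overline{r(t)q(t)}$, which is exactly the \textit{LP} conjunct of the \textit{LP--Cv} predicate. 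After $T$ both robots remain in color $C$ forever, so every subsequent activation runs the convergence rule; fairness together with the correctness of that rule gives the convergence conjunct, and combining the two halves proves $\textit{LP--Cv}\in\FC^{A_{M}}$.

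The hard part, and the point where $M$-atomicity is essential, is invariant (i) together with the timing of the first convergence step: I must exclude the interleaving in which one robot starts (or repeats) a convergence step while the other robot, though already committed to its \textit{LP} move and already displaying color $C$, has not yet physically executed that move --- the interval between a robot's \Compute operation (where its light changes) and the start of its \Move operation. Under $A_{M}$ no robot may take a snapshot during another robot's \Move, and I would exploit this, together with the fact that the first cycle in which a robot's light is set to $C$ is strictly later than the cycle containing its \textit{LP} move, to show that observing color $C$ on the other robot really does certify that it has already left its place; should a direct argument turn out too fragile, the fallback is to refine the color scheme --- splitting $B$ into ``committed to \textit{LP}'' and ``\textit{LP} completed'', the latter emitted only in a cycle following the move. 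This is precisely the slack that distinguishes \textit{LP--Cv} from \textit{LP--MLCv}: by Lemma~\ref{lem:18} and $\FC^{A_{M}}\equiv\FC^{A}$ (Theorem~\ref{th:equality}(4)) the latter is \emph{not} in $\FC^{A_{M}}$, its monotone-distance requirement being unable to absorb the stale-snapshot displacement that plain \textit{Cv} tolerates.
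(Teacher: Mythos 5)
Your protocol is exactly the paper's (the same three colors with the same transitions keyed to the other robot's color), and your reachable-profile analysis matches the paper's brief justification; where you go further is in trying to prove the step the paper leaves implicit, and that is where the argument breaks. The fact you invoke --- that the first cycle in which a robot's light is set to $C$ is strictly later than the cycle containing its \textit{LP} move --- is false for your own rules: the robot that observes the other in color $B$ sets its light to $C$ in the very cycle in which it performs its \textit{LP} move. Hence, under $A_{M}$, observing $C$ does not certify that the observed robot has already left its place. Concretely, let $r$ act first (sees $A$, turns $B$, moves outward); let $q$ then Look (outside $r$'s Move, as $A_{M}$ requires), set its light to $C$ at its Compute, and let the adversary stretch the gap before $q$'s Move-begin. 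In that gap $r$ may Look ($A_{M}$ only forbids Looks inside $[t_B,t_E]$ of another robot), see $q$ in color $C$ still at $q(0)$, and execute one or more convergence steps toward $q(0)$; with halving steps, two such activations already carry $r$ past $r(0)$ while $q$ has not moved. Then for every $T$ with $q(T)\neq q(0)$ there is an earlier $t$ with $r(t)-r(0)>0$ and with $r(0)\notin\overline{r(t)q(t)}$, so the \textit{LP--Cv} predicate fails. This is precisely the ``hard part'' you flagged, and the fact you planned to use to dispose of it does not hold.

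Your fallback (splitting $B$ into ``committed to \textit{LP}'' and ``\textit{LP} completed'') does not obviously repair this in $\FC$: a robot cannot read its own light and has no memory, so it cannot know that it is in a cycle following its own move; any rule emitting the ``completed'' color must be triggered by the other robot's color, and that new color is again set at a Compute that may precede the corresponding Move, reproducing the same race one level down. To be fair, the paper's own proof of this lemma is silent on exactly this interleaving, so your attempt is more scrupulous than the text it is measured against; but as written it is not a proof. A clean way to close the gap is to establish correctness of the three-color protocol under the $CM$-atomic scheduler, where by definition no Look falls in $[t_C,t_E]$ and hence color $C$ is only ever observed after the corresponding move has ended, and then transfer the result to $A_{M}$ via $\FC^{A_{CM}}\equiv\FC^{A_{M}}$ (Theorem~\ref{th:equality}(4)).
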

\begin{proof}
    By using three colors of lights, it is possible to determine whether the other robot has moved away from the initial position in the \textit{LP} phase.  
    Specifically, each robot makes the following decisions:
    
    \begin{enumerate}
        \item If the other robot's light is $A$, then the other robot has not been activated even once, and this robot itself may have been activated either once before or not at all.  
        Therefore, this robot changes its own light to $B$ and moves away from the other robot.
        
        \item If the other robot's light is $B$, then the other robot has been activated once before, and this robot itself may have been activated either once before or not at all.  
        Therefore, this robot changes its own light to $C$ and moves away from the other robot.
        
        \item If the other robot's light is $C$, then both robots have been activated at least once before.  
        Therefore, this robot changes its own light to $C$ and executes the \textit{Cv} movement.
    \end{enumerate}
    
    In the \textit{Cv} phase, the process of (3) is repeated, and previous research has shown that \textit{Cv} $\in \FC^{A_{M}}$.
\end{proof}
\clearpage
\begin{algorithm}
    \caption{Alg \textit{LP--Cv} for robot $r$}
    \label{alg:LPCv_FCOM_AM}
    \textbf{Assumptions:} $\FC$, $M$-{\bf atomic}-\ASY\\
    \textbf{Light:} $r.\text{state}, \text{other.state} \in \{A, B, C\}$, initial value is $A$\\
    
    $Phase Look:$ $other.pos \gets$ position of the other robot in $r$'s local coordinate system.\\
    
    $Phase Compute:$
    \begin{algorithmic}[1]
        \State \textbf{case} other.state \textbf{of}
        \State \hspace{1em} $A$:
        \State \hspace{2em} $r.\text{state} \gets B$
        \State \hspace{2em} $r.\text{des} \gets$ the point at distance $|$other.pos$|/2$ from $r$ in the opposite direction to other.pos
        \State \hspace{1em} $B$:
        \State \hspace{2em} $r.\text{state} \gets C$
        \State \hspace{2em} $r.\text{des} \gets$ the point at distance $|$other.pos$|/2$ from $r$ in the opposite direction to other.pos
        \State \hspace{1em} $C$:
        \State \hspace{2em} $r.\text{state} \gets C$
        \State \hspace{2em} $r.\text{des} \gets$ the point at distance $|$other.pos$|/2$ from $r$ toward other.pos
    \end{algorithmic}

    \vspace{1em}
    $Phase Move:$\\
    Move to $r_i.des$.
\end{algorithm}

\begin{figure}[H]
    \centering
    \includegraphics[width=11cm]{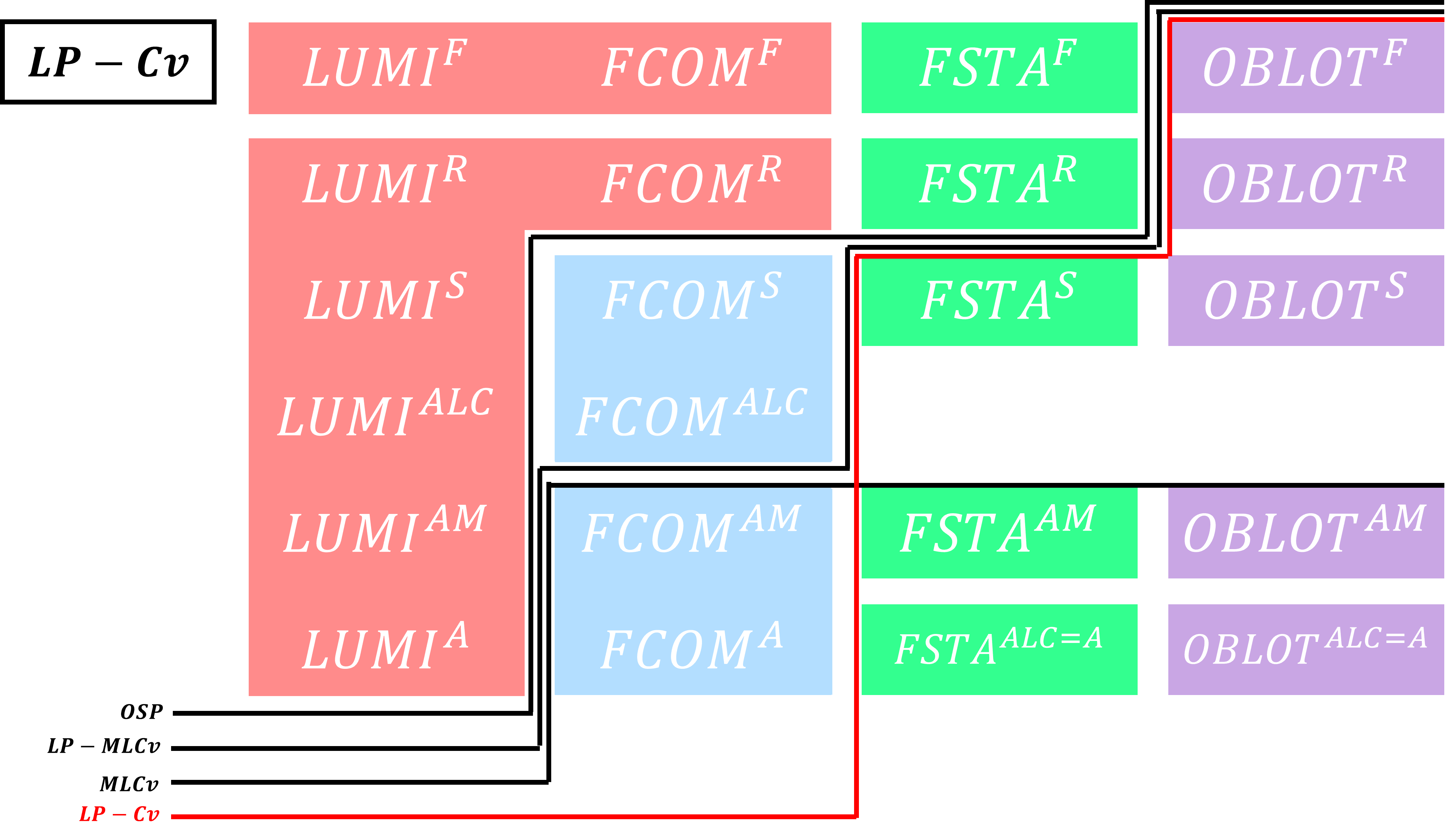}
    \caption{Boundary of LP--Cv}
    \label{fig:LP-Cv_boundary}
\end{figure}

\begin{theorem}
    \textit{LP--Cv} is solved in both of $\FS^R$ and $\FC^{A_{M}}$, but not in any of $\OB^F$ or $\FS^S$.
    \label{th:lpcvt}
\end{theorem}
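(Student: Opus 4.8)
The plan is to obtain Theorem~\ref{th:lpcvt} by assembling the four preceding lemmas, which between them pin down the status of \textit{LP--Cv} in each model named in the statement: Lemma~\ref{lem:29} gives $\textit{LP--Cv}\in\FS^R$, Lemma~\ref{lem:30} gives $\textit{LP--Cv}\in\FC^{A_{M}}$, Lemma~\ref{lem:27} gives $\textit{LP--Cv}\notin\OB^F$, and Lemma~\ref{lem:28} gives $\textit{LP--Cv}\notin\FS^S$. The theorem is exactly the conjunction of these four facts, so the remaining task is to check that the hypotheses each lemma invokes are consistent with the paper's standing conventions (variable disorientation, chirality, rigidity, relaxed to ``chirality and rigidity undefined'' where the membership proofs explicitly note it), so that the four statements can be quoted verbatim.

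For the membership half I would point to the two explicit algorithms. Algorithm~\ref{alg:LPCv_FSTA_R} solves \textit{LP--Cv} in $\FS^R$ by performing the outward \textit{LP} step on a robot's first activation (colour $A\to B$) and the convergence step on every later activation (colour $B$); correctness rests on the defining property of \RSY, namely that by a robot's second activation the disjointness-of-consecutive-subsets constraint forces the other robot to have been activated at least once, hence to have finished its own \textit{LP}. Algorithm~\ref{alg:LPCv_FCOM_AM} solves \textit{LP--Cv} in $\FC^{A_{M}}$ by introducing a third light colour $C$: colour $C$ is reached only after both robots have left their initial positions, so a robot seeing $C$ on the other safely switches to the convergence phase (which is itself in $\FC^{A_{M}}$ by the cited \textit{Cv} result), while colours $A,B$ encode how many \textit{LP} steps have occurred. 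Here I would only verify that these two routines respect the \textit{LP--Cv} predicate, i.e.\ the segment-containment and monotone-outward constraints during \textit{LP} and the convergence constraint afterwards.

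For the impossibility half, Lemma~\ref{lem:27} is immediate: an $\OB$ robot decides from the current snapshot alone and cannot tell whether the \textit{LP} phase is complete, so no deterministic rule can choose correctly between ``move outward'' and ``converge,'' even under full synchrony. The substantive part is Lemma~\ref{lem:28}: given a purported algorithm $\calA$ for $\FS^S$, let $j$ be the first round in which $r$ executes a convergence move; exploiting variable disorientation the adversary rescales every snapshot so that, by activating $r$ alone from some round $k$ preceding $q$'s \textit{LP} step up to round $j$, it drives $r$ into the same (finite internal state, observation) pair it would have had at round $j$, whereupon $r$ converges although $q$ never performed \textit{LP}; the complementary branch, in which $\calA$ keeps $r$ stationary, leaves $r$ forever unable to witness $q$'s departure and hence stuck. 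I expect this case split to be the only delicate point --- specifically, confirming that the ``stay put'' branch is genuinely fatal and that the constructed activation sequence is a legal \SSY\ schedule --- and I would also remark that the pair $\textit{LP--Cv}\in\FS^R$, $\textit{LP--Cv}\notin\FS^S$ is consistent with, and indeed re-witnesses, the known strict separation $\FS^R>\FS^S$, so no hierarchy constraint is violated and the assembled statement is coherent.
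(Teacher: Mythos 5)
Your proposal is correct and follows essentially the same route as the paper: Theorem~\ref{th:lpcvt} is obtained there exactly as the conjunction of Lemma~\ref{lem:27} ($\notin\OB^F$), Lemma~\ref{lem:28} ($\notin\FS^S$), Lemma~\ref{lem:29} ($\in\FS^R$ via Algorithm~\ref{alg:LPCv_FSTA_R}), and Lemma~\ref{lem:30} ($\in\FC^{A_{M}}$ via the three-colour Algorithm~\ref{alg:LPCv_FCOM_AM}). Your sketches of the individual lemma arguments (the \RSY\ disjointness property, the colour-$C$ handshake, the oblivious indistinguishability, and the variable-disorientation adversary with the move/stay case split) match the paper's own proofs.
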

  \label{arc:VTR}
\begin{Definition}
    \textit{VTR} (Vertex Traversal)
    \\Three robots $r$, $q$, and $s$ are placed at arbitrary vertices of a regular hexagon, arranged so that the vertices on the diagonals remain unoccupied. In this problem, each robot is required to perform the operation of moving to a vertex on a diagonal and then returning to its initial position exactly once.
    Let the vertices of the hexagon be denoted as $v_0, v_1, v_2, v_3, v_4, v_5$, and suppose that the initial position of each robot is $v_i$, the \textit{VTR} behavior is defined by the following logical expression.

    \begin{align*}
        \textit{VTR} \equiv 
        \quad & (0 \leq \forall t \leq t_1 : r(t) = v_i) \land (t_2 \leq \forall t \leq t_3 : r(t) = v_{i+3 \bmod 6}) \\
        & \land (\exists t,t' : t_1 \leq t < t'\leq t_2 \rightarrow r(t),r(t') \in \overline{v_i v_{i+3 \bmod 6}}, r(t) \ne r(t'), \\
        & \text{dis}(r(t), v_i) < \text{dis}(r(t'), v_i), \text{dis}(r(t), v_{i+3 \bmod 6}) > \text{dis}(r(t'), v_{i+3 \bmod 6}))\\
        & \land (t_4 \leq \forall t \leq t_5 : r(t) = v_i) \\
        & \land (\exists t,t' : t_3 \leq t < t'\leq t_4 \rightarrow r(t),r(t') \in \overline{v_{i+3 \bmod 6} v_i}, r(t) \ne r(t')), \\
        & \text{dis}(r(t), v_i) > \text{dis}(r(t'), v_i), \text{dis}(r(t), v_{i+3 \bmod 6}) < \text{dis}(r(t'), v_{i+3 \bmod 6}))\\
    \end{align*}
\end{Definition}

\begin{figure}[H]
    \centering
    \includegraphics[width=3cm]{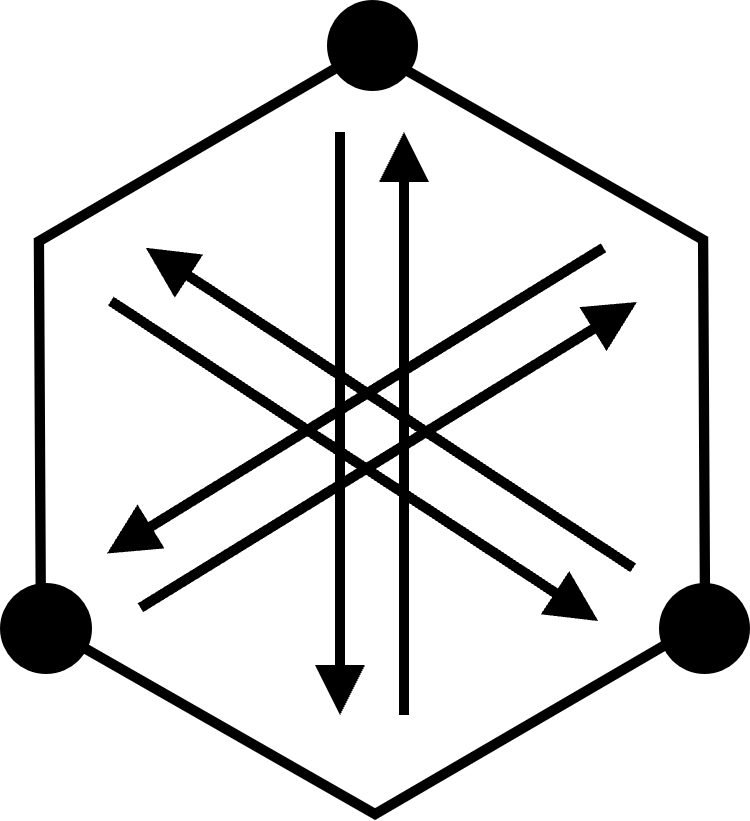}
    \caption{Movement of the \textit{VTR} problem.}
    \label{fig:VTRapd}
\end{figure}

\begin{lemma}
    $\textit{VTR} \notin \OB^F$.
    \label{lem:31}
\end{lemma}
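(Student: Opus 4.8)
The plan is a short indistinguishability argument by contradiction, exploiting obliviousness together with the fact that \textit{VTR} asks each robot to traverse its diagonal \emph{exactly once} and then come to rest. Assume some algorithm $\calA$ solves \textit{VTR} in $\OB^{F}$. I would fix a concrete valid initial placement $\calC_0$ --- say the three robots on alternate vertices $v_0,v_2,v_4$ of the hexagon, so that the diagonal vertices $v_1,v_3,v_5$ are unoccupied --- and study the execution of $\calA$ from $\calC_0$ under an arbitrary variable-disorientation schedule $\mathcal{E}$.

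The first key step is to observe that correctness forces this execution to be \emph{eventually frozen} at $\calC_0$. Each robot must actually occupy its diagonal vertex throughout a positive-length time interval; since movements are rigid, computed destinations are reached, so the outbound leg and the return leg are each completed in finitely many activations; and the ``exactly once'' requirement means that after returning, robot $r_j$ remains at its initial vertex $v_j$ forever. Hence there is a round $N$ with $\calC(t)=\calC_0$, every robot back at its initial vertex, for all $t\ge N$. In particular, at round $N$ each robot $r_j$, reading its snapshot through whatever local frame $\phi_j$ the adversary assigned, must compute the destination ``stay'': since $\calC(N{+}1)=\calC_0$ and the predicate pins $r_j$ to $v_j$, no permutation of positions is possible, so $r_j$'s computed destination is its current position.

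The second step is to replay those ``freezing'' frames from time zero. I would define a new schedule $\mathcal{E}'$ that hands robot $r_j$ the frame $\phi_j$ in \emph{every} round; this is a legitimate $\OB^{F}$ schedule, since full synchrony makes it fair and variable disorientation permits arbitrary frames. Under $\mathcal{E}'$ started from $\calC_0$, round $0$ presents each robot with exactly the local situation it faced at round $N$ of the previous execution --- same configuration $\calC_0$, same position, same frame --- so by obliviousness each robot again decides to stay, and by induction $\calC(t)=\calC_0$ for all $t$. Thus no robot ever visits its diagonal vertex, so $\calA$ fails under the valid schedule $\mathcal{E}'$, contradicting its correctness; therefore $\textit{VTR}\notin\OB^{F}$. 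I would close by remarking that full synchrony was never invoked beyond guaranteeing fairness, which is precisely why the impossibility persists even in the strongest oblivious model.

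The step I expect to need the most care is the ``eventually frozen'' claim: I must rule out that a correct run drifts forever --- for instance by computing an infinite sequence of destinations converging to the diagonal vertex without ever reaching it, or by making cosmetic moves after the traversal --- which is where I would lean explicitly on the requirement that $r(t)=v_{i+3}$ holds throughout a positive-length interval, on the rigidity of moves, and on the ``exactly once, then rest'' reading of the specification. Once that is pinned down, the frame-replay and the induction are routine.
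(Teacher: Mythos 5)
Your proposal is correct and is essentially the paper's argument made rigorous: the paper's (very brief) proof rests on exactly the same observation that an oblivious robot cannot tell the initial configuration, where it must start a traversal, apart from the identical post-return configuration, where it must stay. Your frozen-tail-plus-frame-replay construction just formalizes that indistinguishability, including the "exactly once, then rest" reading that the paper's proof also implicitly relies on.
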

\begin{proof}
    Even if perfect synchrony is guaranteed, $\OB$ robots cannot determine from their observations whether they should move toward a vertex on the diagonal or remain at that vertex. Therefore, it is impossible to solve the \textit{VTR} problem.
\end{proof}

\begin{lemma}
    $\textit{VTR} \notin \FS^{A_{LC}}$.
    \label{lem:32}
\end{lemma}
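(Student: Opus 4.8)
The plan is to argue by contradiction in the style of Lemma~\ref{lem:21} and Lemma~\ref{lem:6}: assume a deterministic algorithm $\calA$ solves \textit{VTR} in $\FS^{A_{LC}}$, and exploit the one feature that distinguishes $A_{LC}$ from the $M$-atomic schedulers, namely that a passive robot may obtain a snapshot while another robot is partway through its $\Move$ along a diagonal. In the $\FS$ model such an observer learns nothing from the traversing robot's light (it sees only its own), and under \emph{variable disorientation} it cannot recover the absolute scale of the configuration---only the similarity class of what it sees. These two blind spots are what I intend to turn into an indistinguishability argument; at the end I will use $\FS^{A_{LC}}\equiv\FS^{A}$ to phrase the statement for the stated scheduler.

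First I would observe that, from the initial placement---three robots at alternate vertices of a regular hexagon, i.e.\ an equilateral triangle with all lights at their initial value---$\calA$ must, under some activation pattern, order an activated robot to leave its vertex toward its diagonal vertex; otherwise no robot ever performs its required traversal. So I activate a single robot $r$ and, using the $A_{LC}$ freedom to make $r$'s $\Move$ arbitrarily slow, let the adversary stop the global clock with $r$ at a chosen point $P$ of the open segment $\overline{v_i v_{i+3}}$ (note that all three diagonals pass through the hexagon center, a convenient such $P$), while $q$ and $s$ still sit at their home vertices.

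Next I would activate $q$ and pair this execution $E$ with a companion execution $E'$ of $\calA$, also started from a legal initial placement, in which at the matching instant it is a \emph{different} robot that has been driven partway (or all the way) along \emph{its} diagonal while the remaining robots are home. The aim is to choose the geometric parameters so that the triangle $q$ observes in $E$ and in $E'$ is the same up to a chirality-respecting rotation and a scaling---the isosceles triangle with side ratios $1:1:\sqrt3$, which recurs both when a robot sits at the center and when a robot sits at its diagonal vertex, is the natural candidate---while $q$'s light is still the initial value in both. Determinism then forces $q$ to compute the same light update and the same destination in its own frame in $E$ and $E'$; mapped back, that single move is toward the true diagonal vertex (or is the correct ``stay put'') in at most one of the two executions, because the hexagon $q$ implicitly reconstructs, and hence the vertex it should target or whether it should move at all, differs between $E$ and $E'$. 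So in one of them $\calA$ makes a robot violate the \textit{VTR} predicate (it moves when it must be stationary, reaches a point that is not a vertex of the true hexagon, or later fails to return exactly once), contradicting correctness.

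The main obstacle is this last step: I must make both executions genuinely \emph{legal}, i.e.\ producible by $\calA$ itself from valid initial configurations under some $A_{LC}$ schedule rather than hand-placed, while guaranteeing (a) that $q$'s snapshots coincide in the robots' disoriented, chiral view and (b) that the \textit{VTR} predicate provably demands different behaviour of $q$. Part (a) is a finite geometric check on which similarity classes of triangles recur among the in-progress configurations of the hexagon; legality and part (b) are where the rotational symmetry of the equilateral initial placement and the scheduling freedom of the $A_{LC}$ adversary must be used carefully. If a fully rigorous companion execution proves unwieldy, the fallback---at the level of detail of Lemma~\ref{lem:21}---is to argue directly that while $r$ traverses its diagonal, $q$ cannot tell the in-progress configuration from a fresh initial configuration of a differently scaled and oriented hexagon, so $\calA$ must either keep $q$ idle forever (and \textit{VTR} is unsolved) or let $q$ begin a traversal relative to the wrong hexagon (violating the predicate); either way $\calA$ fails.
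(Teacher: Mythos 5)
Your overall strategy—an indistinguishability argument exploiting that an $\FS$ robot sees no lights of others and, under $A_{LC}$, may take a snapshot while another robot is mid-$\Move$ along its diagonal—is the right one, and you correctly identify the feature that separates $A_{LC}$ from $A_M$ (cf.\ Lemma~\ref{lem:33}). But the concrete construction has genuine gaps. Your main line compares two \emph{mid-execution} worlds ($r$ paused at the centre versus a robot driven to, or partway to, its diagonal vertex), and, as you concede, nothing guarantees the companion world is ever produced by the algorithm: $\calA$ may never place a robot at its far vertex while the other two are home and the observer's light is still untouched, so the legality burden is not discharged. Your fallback does not repair this, because it is geometrically wrong as stated: with only \emph{one} robot displaced along its diagonal, the observed triangle is isosceles with base the chord joining the two stationary robots and is \emph{never} similar to the equilateral initial placement; a coincidence with a legal initial configuration occurs only at two special instants (the mover at the hexagon centre, or at the far vertex), and the initial configuration it then matches is the \emph{other} admissible one—three consecutive vertices of a hexagon (the $\calC_2$ placement used in the proof of Lemma~\ref{lem:34})—not ``a differently scaled'' copy of the equilateral start. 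Since you never invoke $\calC_2$, the fallback as written would fail; and even with it, the dichotomy ``$q$ idles forever or traverses the wrong hexagon'' needs more care, since $q$ may idle at the ambiguous instant and act later.

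The paper avoids both difficulties by always comparing against a \emph{fresh initial configuration}, whose legality is free: it activates two robots simultaneously; since each traversal runs through the centre, the adversary (controlling speeds and the third robot's $\LK$ time, which $A_{LC}$ permits during others' moves) can arrange a moment when both movers are three quarters of the way along their diagonals, at which point the three robots again form an equilateral triangle, half the original size. Under variable disorientation and with no access to the movers' lights or to any history, the stationary robot cannot distinguish this from a $\calC_1$ initial placement of a smaller hexagon, reconstructs the wrong hexagon, and violates \textit{VTR}. If you prefer your one-mover variant, the repair is to compare the centre-crossing snapshot with a legal $\calC_2$ initial placement (choosing which end of the consecutive triple the observer occupies so that the chirality-signed view matches); either way the companion world must be an initial configuration, not another run of $\calA$.
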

\begin{proof}
    By contradiction, suppose that under the $\mathcal{A_{LC}}$ scheduler there exists an algorithm $\calA$ by which three robots always solve the \textit{VTR} problem.  
    Consider the initial configuration where the three robots form an equilateral triangle.  
    At time $t$, robots $r$ and $q$ are activated and execute $\calA$ to move toward vertices on the diagonals.  
    
    Suppose that robot $s$ is activated at time $t' > t$. The configuration observed by $s$ may still be the same equilateral triangle as the initial configuration.  
    However, in the $\FS$ model, $s$ cannot access the internal states of $r$ and $q$, nor can it retain observation history, so it cannot detect that the configuration is an equilateral triangle of different size from the initial one. As a result, $s$ moves to a vertex of a different hexagon, violating the specification of \textit{VTR}.
\end{proof}
\begin{figure}[H]
    \centering
    \includegraphics[width=4cm]{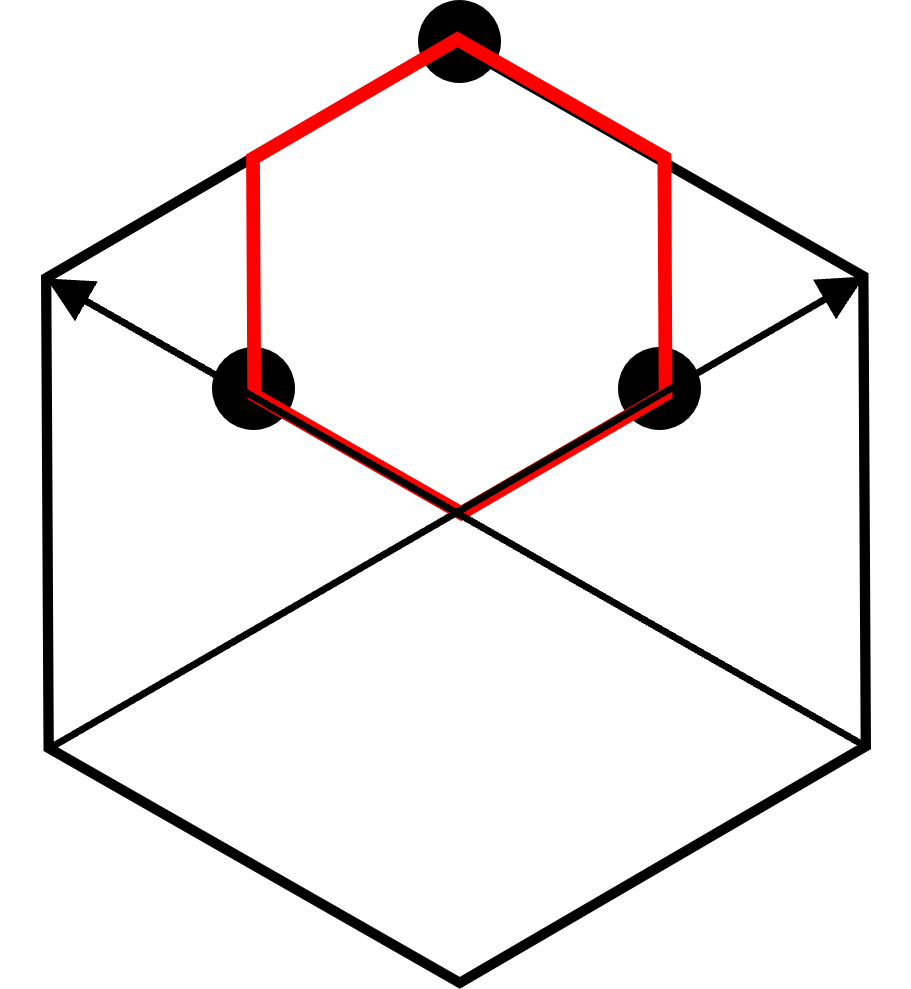}
    \caption{Another possible regular hexagon.}
    \label{fig:vtrp}
\end{figure}

\begin{lemma}
    $\textit{VTR} \in \FS^{A_{M}}$.
    \label{lem:33}
\end{lemma}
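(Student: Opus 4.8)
The plan is to exhibit an explicit $\FS$ algorithm with three colors and to prove it correct by exploiting the move-atomicity guaranteed by $A_M$. Each of the three robots carries a light with values \texttt{home}, \texttt{away}, \texttt{done}, initially \texttt{home}. On activation, a robot $x$ performs $\Look$, obtaining (in its own frame) its own position at the origin together with the other two robots' positions; it then computes the unique circle $\Gamma$ through these three points and the antipode $p^{\ast}$ of its own position on $\Gamma$ (the reflection of $x$ through the centre of $\Gamma$). If its light is \texttt{home}, it moves to $p^{\ast}$ and sets the light to \texttt{away}; if the light is \texttt{away}, it moves to $p^{\ast}$ and sets the light to \texttt{done}; if the light is \texttt{done}, it stays. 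The destination is computed the same way in the first two cases, and rigidity makes each $\Move$ reach $p^{\ast}$ in one step.

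First I would establish the invariant: \emph{whenever any robot performs $\Look$, all three robots occupy vertices of the original hexagon $H$}. I would prove this by induction over the execution, the inductive quantity being the number of $\Move$ operations completed so far; the base case is the hypothesis on the initial placement. For the step, let $x$ perform $\Look$ at time $\tau$. The $A_M$ condition forbids any robot from being mid-move at $\tau$, so every robot rests at the endpoint of its last completed move (or at its start position), which by the induction hypothesis is a vertex of $H$. Moreover no two robots ever share a vertex: the three home vertices are pairwise non-antipodal, hence together with their antipodes they partition $V(H)$ into three antipodal pairs, and each robot occupies either its own home or its own antipode; picking one point from each of three disjoint pairs yields three distinct points. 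Since three distinct vertices of a regular hexagon are never collinear, $\Gamma$ is exactly the circumscribed circle of $H$, and $p^{\ast}$ --- the antipode on $\Gamma$ of the vertex currently occupied by $x$ --- is again a vertex of $H$ (the diagonal vertex). Thus $x$'s move preserves the invariant.

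Granting the invariant, correctness is immediate. A robot starting at $v_j$ stays there until its first activation, then moves along the diameter from $v_j$ to $v_{(j+3)\bmod 6}$, rests there until its next activation, moves back along the same segment to $v_j$, and remains at $v_j$ forever (light \texttt{done}). The clauses of the \textit{VTR} predicate describing the motion between consecutive rest positions hold because each $\Move$ is a straight continuous motion along the relevant diameter satisfying the variable-speed condition; ``exactly once'' follows because a \texttt{done} robot never moves again; and fairness guarantees each robot is activated at least twice, so every excursion completes. The finite state of $\FS$ is used only to record whether a robot is about to perform its outbound move, its return move, or neither; no robot ever reads another robot's light, so $\FS$ (rather than $\FC$ or $\LU$) suffices, exactly as in the proof of Lemma~\ref{lem:22}.

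I expect the invariant to be the only real obstacle, and it is precisely where the $A_M$ assumption is indispensable. Under the weaker $A_{LC}$ scheduler a robot can obtain a snapshot while another is sliding along a diameter, so the three observed points need not be hexagon vertices, the reconstructed circle may be a strictly smaller one, and --- owing to variable disorientation --- the observer cannot detect the discrepancy; this is exactly the failure exploited in Lemma~\ref{lem:32}. Under $A_M$ that scenario cannot occur, and the remaining verification (checking that every combination of light values that can appear in a snapshot reduces to ``three vertices of $H$, go to the antipode'', and reading off the predicate's timing constraints from rigidity and fairness) is routine.
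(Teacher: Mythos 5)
Your proposal is correct and follows essentially the same route as the paper: a three-valued $\FS$ light recording the phase (outbound, return, done), with correctness resting on the observation that $M$-atomicity guarantees every snapshot shows three distinct, non-collinear vertices of the original hexagon, so the circumscribed circle and hence the diagonal vertex are uniquely reconstructible despite variable disorientation. Your inductive invariant and the distinctness/non-collinearity details merely spell out what the paper's proof of Lemma~\ref{lem:33} states more briefly.
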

\begin{proof}
    The problem is solved by \textbf{Algorithm \ref{alg:VTR_FSTA_AM}}.  
    Under the $M$-{\bf atomic}-\ASY\ scheduler, when a robot is activated and performs its observation, the other robots are also positioned at vertices of a regular hexagon. Therefore, the circumscribed circle corresponding to the robots' configuration is uniquely determined, and the regular hexagon based on it is also uniquely determined. Thus, movement to the appropriate vertices is possible, enabling the \textit{VTR} problem to be solved.
\end{proof}

\begin{algorithm}
    \caption{Alg \textit{VTR} for robot $r$}
    \label{alg:VTR_FSTA_AM}
    \textbf{Assumptions:} $\FS$, $M$-{\bf atomic}-\ASY\\
    \textbf{Light:} $r.\text{state} \in \{A, B, C\}$, initial value is $A$\\
    
    $Phase Look:$ Observe the positions of itself and other robots (other robots' lights are not visible)\\
    
    $Phase Compute:$
    \begin{algorithmic}[1]
        \State \textbf{case} $r.\text{state}$ \textbf{of}
        \State \hspace{1em} $A$:
        \State \hspace{2em} $r.\text{state} \gets B$
        \State \hspace{2em} $r.\text{des} \gets$ the vertex on the diagonal
        \State \hspace{1em} $B$:
        \State \hspace{2em} $r.\text{state} \gets C$
        \State \hspace{2em} $r.\text{des} \gets$ the vertex on the diagonal
    \end{algorithmic}

    \vspace{1em}
    $Phase Move:$\\
    Move to $r_i.des$.
\end{algorithm}

\begin{lemma}
    $\textit{VTR} \in \FC^S$.
    \label{lem:34}
\end{lemma}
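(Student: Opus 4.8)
The plan is to exhibit an explicit algorithm for the three robots in $\FC$ running under \SSY, and to argue its correctness. The starting point is that \SSY\ executes each $\LCM$ cycle atomically: no robot ever takes a snapshot while another is in its \Move\ phase. Hence in every snapshot the three robots occupy three of the six vertices of a regular hexagon, and any such triple of vertices --- whether three pairwise non-adjacent ones or three consecutive ones --- determines the hexagon, and therefore the full vertex set, uniquely. So an activated robot can always reconstruct the hexagon, locate itself and the other two on it, and in particular try to decide whether its current vertex is a ``home corner'' (one of the three pairwise non-adjacent starting vertices) or a ``diagonal vertex'' $v_{i+3}$; when the positions alone do not force this, the colours of the other two robots will disambiguate it (e.g.\ a robot that has not yet left is necessarily on a home corner, which anchors the home set).

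First I would fix a constant-size colour set of ``phase'' colours recording a robot's progress: roughly \emph{pristine} (not yet left), \emph{acknowledged} (still at home, but has already observed another robot in transit), \emph{out} (currently at its diagonal vertex), and \emph{done} (has returned home). The transition rules would be: a home robot in the \emph{pristine} phase that sees the configuration is still all-pristine moves to its diagonal vertex and becomes \emph{out}; a \emph{pristine} home robot that sees some other robot already \emph{out} or \emph{acknowledged} becomes \emph{acknowledged} without moving; an \emph{acknowledged} home robot moves to its diagonal vertex and becomes \emph{out}; a robot at its diagonal vertex returns home and becomes \emph{done} once it observes that every other robot has already left (so that no partner is still \emph{pristine}); a \emph{done} robot stays. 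The point of the \emph{acknowledged} phase is that, once a robot has returned to its home corner, the other robots' colours are no longer all \emph{pristine}; hence the returned robot --- which, being in $\FC$, cannot read its own light --- can still tell from the others' colours that the configuration is not the initial one and must not leave again. Note that several robots being \emph{out} simultaneously causes no trouble, precisely because under \SSY\ their mid-segment passage through the centre is never observed.

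Then I would verify correctness in three parts. (i) The geometric content of the \textit{VTR} predicate --- each robot rests at $v_i$, moves monotonically along $\overline{v_i v_{i+3}}$ to $v_{i+3}$, rests there, and moves monotonically back --- is immediate, since every destination the algorithm computes is a hexagon vertex and motion is rigid along the straight segment. (ii) Liveness: by fairness every robot is activated infinitely often, and one checks that no reachable configuration is a deadlock --- in each one some activated robot makes progress (a \emph{pristine} robot leaves or acknowledges, an \emph{acknowledged} robot leaves, an \emph{out} robot whose partners have all left returns) --- so every robot eventually reaches \emph{done} and the all-\emph{done} configuration is terminal. (iii) Uniqueness of the traversal: one enumerates the small set of reachable configurations and shows that in each of them the action prescribed to an activated robot by its view --- its vertex, its inferred vertex type, and the two other colours --- is the correct one; in particular a \emph{done} robot always recognizes that it is done and stays.

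The main obstacle is exactly part (iii), and more precisely the $\FC$ restriction that a robot cannot observe its own light: the real danger is a pair of reachable configurations indistinguishable to some robot but demanding different actions (``leave'' versus ``stay''), which would make \textit{VTR} unsolvable in $\FC^S$ outright. The entire point of the colour design --- separating \emph{pristine} from \emph{acknowledged}, and allowing a robot to return only after \emph{all} partners have left --- is to rule such coincidences out (e.g.\ monotonicity of colours makes ``a \emph{done} robot sees two \emph{pristine} robots'' unreachable, and a returned robot always sees non-\emph{pristine} colours), and the technical heart of the proof is checking that no such coincidence survives. This is also where \SSY\ is essential: under a weaker, non-atomic scheduler a robot could observe a partner in mid-traversal, the snapshot would cease to be a clean hexagon-vertex configuration, and the kind of indistinguishability argument used in the impossibility proofs above would defeat any finite-colour protocol.
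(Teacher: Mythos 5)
Your overall strategy matches the paper's in spirit (an explicit constant-colour algorithm in $\FC$ under \SSY, plus a case analysis over the reachable hexagon-vertex configurations), but the colour design you propose has a concrete hole, and it sits exactly in the part~(iii) you defer. Since an $\FC$ robot can neither read its own light nor remember anything, its action must be a function of the partners' positions and colours alone; with phase-only colours (\emph{pristine}/\emph{acknowledged}/\emph{out}/\emph{done}) this is not enough. Start from the triangle placement $\{v_0,v_2,v_4\}$ and compare two executions that are both generated by your own rules: (A) $r$ is activated alone, goes out to $v_3$; then $q$ and $s$ (after acknowledging) go out to $v_5$ and $v_1$; then $r$, seeing no pristine partner, returns to $v_0$ as \emph{done}; versus (B) $q$ and $s$ are activated first, see all-pristine and go out to $v_5$ and $v_1$, while $r$ is still \emph{pristine} at $v_0$. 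In both, $r$'s view is identical: it sits at $v_0$ with partners at $v_5$ and $v_1$ whose lights read \emph{out}. Yet in (A) $r$ must never move again, while in (B) $r$ must still perform its traversal --- and, under your rule that an out robot returns only when no partner is pristine, $q$ and $s$ cannot return until $r$ changes colour. Any single prescription for this view therefore yields either a second traversal in (A) or a deadlock in (B); and relaxing the return rule so that $q,s$ come home anyway just reproduces the same clash one step later ($r$ at $v_0$ facing two \emph{done} partners at $v_2,v_4$ in both runs). So the claim that ``the colours of the other two robots will disambiguate'' is precisely what fails for your colour set: the correctness core is not merely unverified, it is false for the rules as stated.

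The paper avoids this by enriching the lights beyond a phase flag: in Algorithm~\ref{alg:VTR_FCOM_S} each robot also publishes its most recently observed copies of the other robots' states (the components written as $r.\text{state}(q)$ and $r.\text{state}(s)$), so a robot can recover its own progress indirectly by reading the records its partners hold about it. In scenario (A) above the partners have recorded having seen $r$ in an advanced state, while in (B) they have only ever recorded $r$ as fresh, which breaks the indistinguishability. Your proposal contains no such mirroring mechanism, and some device of this kind (letting a robot learn its own history through the others' visible lights) is the missing idea; the remaining parts of your outline --- hexagon reconstruction from any snapshot under \SSY, rigid straight-line moves giving the geometric part of the \textit{VTR} predicate, liveness by fairness --- are fine but do not compensate for this gap.
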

\begin{proof}
    The problem is solved by \textbf{Algorithm \ref{alg:VTR_FCOM_S}}. The possible initial configurations are limited to two types: the regular triangle configuration $\calC_1$, and the configuration in which three robots are placed on three consecutive vertices of a regular hexagon, denoted by $\calC_2$. In the $\FC$ model, robots must observe the lights of other robots in order to indirectly recognize their own number of activations. However, in order to ensure the reliability of light-based communication, it is necessary to strictly correspond to situations of simultaneous activation. Under such conditions, if the robots correctly update their states, the \textit{VTR} problem can be solved.
\end{proof}
\begin{algorithm}
    \caption{Alg \textit{VTR} for robot $r$}
    \label{alg:VTR_FCOM_S}
    \textbf{Assumptions:} $\FC$, \SSY \\
    \textbf{Light:}$r.\text{state} \in \{0, 1, 2\}$ \\
    
    $Phase Look:$ Observe one's own position, and the positions and lights of the other robots $q$ and $s$, where $q$ is the robot located at the vertex adjacent to $r$ in the clockwise direction, and $s$ is the robot located at the vertex adjacent to $r$ in the counterclockwise direction.

    $Phase Compute:$
    \begin{algorithmic}[1]
        \If{the configuration is $\calC_1$}
            \If{$q.\text{state} \neq 2$ or $s.\text{state} \neq 2$}
                \State $r.\text{state} \gets r.\text{state} + 1$
                \State $r.\text{des} \gets$ the vertex on the diagonal
            \EndIf
        \ElsIf{the configuration is $\calC_2$}
            \If{$\overline{rq} > \overline{rs}$ and 
            ($q.\text{state} \neq 2$ or $s.\text{state} \neq 1$ or 
            ($q.\text{state}(r) \neq 2$ and $q.\text{state}(r) = s.\text{state}(r)$))}
                \State $r.\text{state} \gets r.\text{state} + 1$
                \State $r.\text{des} \gets$ the vertex on the diagonal
            \ElsIf{$\overline{rq} = \overline{rs}$ and $q.\text{state} = s.\text{state} = 2$ and 
            $q.\text{state}(r) \neq 2$ and $s.\text{state}(r) \neq 2$}
                \State $r.\text{state} \gets r.\text{state} + 1$
                \State $r.\text{des} \gets$ the vertex on the diagonal
            \EndIf
        \EndIf
        \State $r.\text{state}(q) \gets q.\text{state}$
        \State $r.\text{state}(s) \gets s.\text{state}$
    \end{algorithmic}

    \vspace{1em}
    $Phase Move:$\\
    Move to $r_i.des$.
\end{algorithm}

\clearpage
\begin{figure}[H]
    \centering
    \includegraphics[width=12cm]{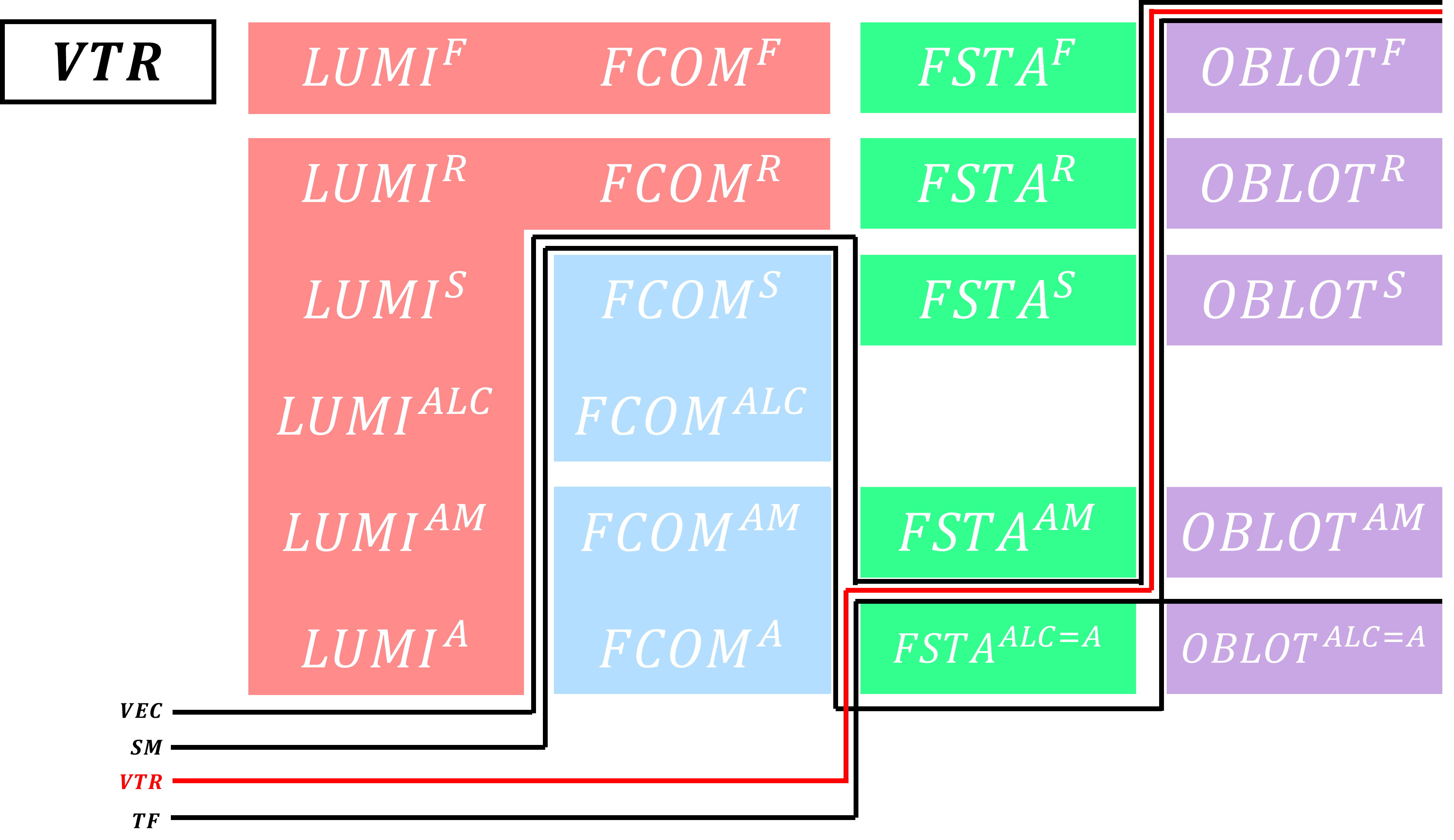}
    \caption{Boundary of VTR}
    \label{fig:VTR_boundary}
\end{figure}
        
\begin{theorem}
    \textit{VTR} is solved in both of $\FS^{A_{M}}$ and $\FC^S$, but not in any of $\OB^F$ or $\FS^{A_{ALC}}$.
    \label{th:vtrt}
\end{theorem}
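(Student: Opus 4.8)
The plan is to obtain Theorem~\ref{th:vtrt} by simply conjoining the four lemmas established immediately above, one per clause of the statement, and then to record the structural separations this conjunction produces. Concretely: the membership $\textit{VTR}\in\FS^{A_M}$ is Lemma~\ref{lem:33}; the membership $\textit{VTR}\in\FC^{S}$ is Lemma~\ref{lem:34}; the impossibility $\textit{VTR}\notin\OB^{F}$ is Lemma~\ref{lem:31}; and the impossibility $\textit{VTR}\notin\FS^{A_{LC}}$ is Lemma~\ref{lem:32} (the subscript written $A_{ALC}$ in the theorem is $A_{LC}$). So there is no fresh argument to run; the proof is the sentence ``combine Lemmas~\ref{lem:31}--\ref{lem:34}.''

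For completeness I would recall, in one line each, why the two algorithmic results hold, since they carry the weight. Under $M$-atomic-\ASY\ an activated robot always observes the other two robots seated on the vertices of a common regular hexagon, so the circumscribed circle and hence the target diagonal vertex are unambiguously determined; the three-valued $\FS$ light then merely counts how many traversals the robot itself has already performed, which is exactly what Algorithm~\ref{alg:VTR_FSTA_AM} uses. For $\FC^{S}$, the two admissible initial placements --- the equilateral triangle $\calC_1$ and the three consecutive hexagon vertices $\calC_2$ --- are separated geometrically, and with lights in $\{0,1,2\}$ read off the two neighbours a robot can infer its own activation count, provided (as in Algorithm~\ref{alg:VTR_FCOM_S}) the update rules are written to be consistent under simultaneous activations. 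On the negative side, $\OB^{F}$ fails because an oblivious robot on a hexagon vertex cannot tell a ``go to the diagonal'' snapshot from a ``just returned'' snapshot, and $\FS^{A_{LC}}$ fails because a robot activated while the others are mid-move sees an equilateral triangle of a different scale and, having neither the others' lights nor any memory of past snapshots, commits to the wrong hexagon.

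The genuinely useful content --- and the only place requiring any care --- is reading off which separations are \emph{new}. Since $X^{A_{LC}}\equiv X^{A}$ for $X\in\{\FS,\OB\}$ and $\FS^{A_M}\ge\FS^{A}$ by the inclusion hierarchy, Lemmas~\ref{lem:32} and~\ref{lem:33} upgrade this inclusion to the strict separation $\FS^{A_M}>\FS^{A}$ witnessed by \textit{VTR}; and Lemma~\ref{lem:31} against Lemmas~\ref{lem:33}--\ref{lem:34} shows \textit{VTR} sits outside $\OB^{F}$ yet inside both $\FS^{A_M}$ and $\FC^{S}$, so $\OB^{F}$ dominates neither. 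I expect the ``obstacle'' to be nothing more than this bookkeeping: one must check each asserted non-solvability is not already forced by some inclusion $M_1^{K_1}\ge M_2^{K_2}$, which is precisely why the statement lists only the maximal models $\OB^{F}$ and $\FS^{A_{LC}}$ on the impossibility side rather than enumerating every weaker model.
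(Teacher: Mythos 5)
Your proposal matches the paper exactly: the theorem is stated as the direct conjunction of Lemmas~\ref{lem:31}--\ref{lem:34} (with $A_{ALC}$ being a typo for $A_{LC}$), and no further argument is given or needed. Your one-line recaps of the lemmas' reasoning also accurately reflect the paper's proofs, so this is correct and essentially the same route.
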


  \subsection{Full Synchrony or Light Is Necessary (LCM-M)}\label{sec:5-3}
  \label{arc:LCM-M}
\begin{Definition}
    \textit{LCM--M} (Least Common Multiple -- Movement)
    \\Place three robots $A, B, C$ in this order on a straight line. In the initial configuration, the ratio of the distances between the robots is $AB:BC = 2:1$. In this problem, the following behavior is required of robots $B$ and $C$.:
    \begin{itemize}
        \item Robot $B$ shall perform the action "move away from $A$ by the initial distance $d_{AB}$ between $A$ and $B$" two times.
        \item Robot $C$ shall perform the action "move away from $A$ by the initial distance $d_{AC}$ between $A$ and $C$" once.
    \end{itemize}
    Here, let $0 < t_b, 0 < t'_b < t_c$. Let $t_b$ be the time when robot B arrives at a location $2d_{AB}$ away from robot A, and $t'_b$ be the time when robot B arrives at a location $3d_{AB}$ away from robot A. Let $t_c$ be the time when robot C arrives at a location $2d_{AC}$ away from robot A. At this time, \textit{LCM-M} is defined by the following logical expression.
    \begin{align*}
        \textit{LCM--M} \equiv
        \big[ &\{ \forall t \ge 0 : |b(0) - a(0)| \le |b(t) - a(t)| \} \\
        &\land \{ \forall t \ge 0 : |c(0) - a(0)| \le |c(t) - a(t)| \} \\
        &\land \{ \forall t''_b \ge t'_b, \forall t'_c \ge t_c : b(t''_b) = c(t'_c) \} \\
        &\land \{ \forall t''_b \ge t'_b, \forall t \ge 0 : a(t), b(t), c(t) \in \overline{a(0)b(t''_b)} \} \\
        &\land \{ \mathrm{dis}(b(t_b), a(t_b)) = 2 \, \mathrm{dis}(b(0), a(0)) \} \\
        &\land \{ \mathrm{dis}(b(t'_b), a(t'_b)) = 3 \, \mathrm{dis}(b(0), a(0)) \} \\
        &\land \{ \mathrm{dis}(c(t_c), a(t_c)) = 2 \, \mathrm{dis}(c(0), a(0)) \}
        \big]
    \end{align*}
\end{Definition}
    
\begin{figure}[H]
    \centering
    \includegraphics[width=12cm]{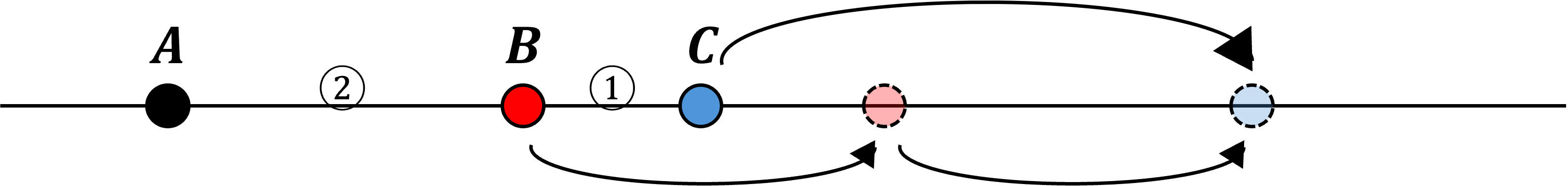}
    \caption{Movement of the \textit{LCM--M} problem.}
    \label{fig:lcmm}
\end{figure}

\begin{lemma}
    $\textit{LCM--M} \notin \OB^R$.
    \label{lem:35}
\end{lemma}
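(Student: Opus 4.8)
The plan is to argue by contradiction, in the same spirit as Lemmas~\ref{lem:6} and~\ref{lem:11}. Assume a deterministic algorithm $\calA$ solves $\textit{LCM--M}$ in $\OB^R$. Since $\OB$ robots are oblivious and carry no light, the destination $\calA$ computes for a robot depends only on the snapshot it currently holds, and by \textit{Variable Disorientation} this dependence is invariant under similarities of the plane; in particular no robot can record how many of its prescribed moves it has already performed, nor recover an absolute length such as the initial $d_{AB}$ except from the geometry it currently sees. Recall that a solution must have $B$ perform the move ``away from $A$ by $d_{AB}$'' exactly twice and $C$ perform ``away from $A$ by $d_{AC}$'' exactly once, with the constraint $t'_b<t_c$ forcing $C$'s move to occur only after both of $B$'s moves, and with $B$ and $C$ ultimately coinciding at distance $3d_{AB}$ from $A$.

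First I would fix a reference execution: let $\calS$ be an \RSY\ schedule on which $\calA$ is correct from the canonical initial placement $\calC_0$ (three collinear robots with $|AB|:|BC|=2:1$), and let round $i$ be the last round of the finite fully synchronous prefix of $\calS$. From round $i$ on I would replace $\calS$ by a schedule $\calS'$ that is identical through round $i$ and then, for a finite number of rounds, activates only $B$, interleaved with activations of $A$ (which never moves), before resuming a fair pattern; $\calS'$ is a legal \RSY\ schedule since successive activated sets are disjoint and every robot is activated infinitely often. Under $\calS'$ robot $B$ must decide, using only its current similarity-class snapshot, whether to make a prescribed move or to remain; the heart of the proof is to show that the adversary can, using its control of each robot's local frame and the scheduling freedom of $\calS'$, present $B$ on two of its activations with snapshots lying in the same similarity class while the correct responses differ --- so that $B$ either performs a third move (violating ``exactly twice'') or halts before the configuration reaches the placement from which $C$ can legally act, destroying the ordering constraint $t'_b<t_c$ and the terminal coincidence $b=c$. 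A symmetric argument is then run for $C$, which under a schedule that delays $B$ cannot tell whether $B$ has already completed its two moves.

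I expect the matching step just described to be the main obstacle: a first pass over the placements actually reachable from $\calC_0$ (namely $\calC_0$; the two ``$A$--$C$--$B$'' placements with ratios $3{:}1$ and $1{:}1$; and the terminal one in which $B,C$ coincide) shows them to be pairwise non-similar, so the required indistinguishability cannot come from that progression alone. The argument must instead exploit (i) that the full-synchrony prefix of an \RSY\ schedule is only finite, so $\calA$ cannot rely on all three robots moving in lockstep the way the $\OB^F$ solution does, and (ii) that, because there is no multiplicity detection, once $B$ and $C$ approach distance $3d_{AB}$ a robot sees only one other point and its snapshot degenerates to a two-point one --- both effects being unavailable to the adversary under \FSY, which is precisely why $\textit{LCM--M}$ is meant to separate $\OB^F$ from $\OB^R$. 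Once the contradiction is obtained at the level of a single robot's observation, verifying that the schedules built are fair \RSY\ schedules and that the remaining conjuncts of the $\textit{LCM--M}$ predicate (collinearity, monotone non-decreasing distances from $A$, and $A$ staying on the segment $\overline{a(0)\,b(t''_b)}$) are respected is routine.
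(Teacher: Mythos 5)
Your proposal correctly sets up the adversary framework but stops short of the single concrete observation that drives the paper's proof, and you acknowledge as much when you write that ``the required indistinguishability cannot come from that progression alone.'' The gap is in your enumeration of reachable placements: you list only those arising when $C$ waits for $B$ to finish ($\calC_0$, the two $A$--$C$--$B$ placements with ratios $3{:}1$ and $1{:}1$, and the terminal one). The paper instead looks at the placement reached after \emph{each of $B$ and $C$ has executed its first prescribed move}: normalizing $A$ to $0$, $B$ to $2$, $C$ to $3$, robot $B$ moves to $4$ (distance $2d_{AB}$ from $A$) and $C$ moves to $6$ (distance $2d_{AC}$ from $A$), yielding the placement $0,4,6$ whose segment ratio is again $2{:}1$ and whose ordering is again $A,B,C$. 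This placement is similar to $\calC_0$, so an oblivious, variably disoriented robot cannot distinguish it from the initial configuration; under a round-robin schedule (a legal, fair \RSY\ schedule) the robots therefore repeat their first-activation behaviour, the configuration is rescaled by a factor of $2$ each cycle, and the predicate is never satisfied. That scaled $2{:}1$ placement is exactly the indistinguishable witness you were searching for; the auxiliary effects you invoke instead (absence of multiplicity detection near the terminal placement, finiteness of the fully synchronous prefix) do not produce a contradiction on their own.

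For completeness, note that both your write-up and the paper's proof leave one branch open: an algorithm may have $C$ stay put on its first activation in the $2{:}1$ placement (the constraint $t'_b<t_c$ in fact forces this), in which case the placement $0,4,6$ never arises. To close that branch one argues that such an algorithm also fails: $C$ must reach distance $2d_{AC}$ only after $B$ reaches $3d_{AB}$, but once $B$ sits at $3d_{AB}$ the placement is $0,3,6$ and $C$ is the exact midpoint of two robots it cannot tell apart under variable disorientation, so it cannot deterministically move toward $B$ rather than $A$; moving any earlier (from $0,3,4$) makes $C$ arrive no later than $B$, violating $t'_b<t_c$. Adding this case split to your framework, together with the $2{:}1$ rescaling observation, yields a complete proof; without the rescaling observation your argument does not yet reach a contradiction.
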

\begin{proof}
    Assume for the sake of contradiction that there exists an algorithm $\mathcal{A}$ that correctly solves the problem.
    Under a round-robin scheduler, a configuration $C'$ may appear in which each of robots $B$ and $C$ has been activated exactly once, regardless of the activation order.
    In this configuration, the ratio of distances between the robots remains $AB:BC=2:1$.
    In the $\OB$ model without lights, this configuration $C'$ cannot be distinguished from the initial configuration.
    For algorithm $\mathcal{A}$ to be correct, the robots' actions in configuration $C'$ must differ from their actions in the initial configuration (since it is not their first activation).
    However, since the robots cannot distinguish $C'$ from the initial configuration, they are forced to perform the same actions as they would on their first activation. This contradicts the correctness of algorithm $\mathcal{A}$.
    Therefore, the initial assumption is false, and no such algorithm $\mathcal{A}$ can exist.
\end{proof}

\begin{lemma}
    $\textit{LCM--M} \in \OB^F$.
    \label{lem:36}
\end{lemma}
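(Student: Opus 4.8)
The plan is to exhibit an explicit oblivious algorithm and verify that it solves $\textit{LCM--M}$ under \FSY. Put $A$ at the origin; the valid initial placement $C_0$ can then be written (in some global frame) as $0,\,2d,\,3d$, where $d=d_{BC}^{init}$, so $d_{AB}^{init}=2d$ and $d_{AC}^{init}=3d$. The three robots stay collinear throughout, and under the algorithm only three shapes ever occur, each recognized up to similarity: the initial shape $T_1$ (three collinear points with gap ratio $2{:}1$; the middle robot is $B$, and the endpoint farther from the middle is $A$); the shape $T_2$ reached after $B$ alone has stepped outward by $d_{AB}^{init}$ and thereby overtaken $C$ (gap ratio $3{:}1$; the middle robot is now $C$, the far endpoint is $A$, the near endpoint is $B$); and the terminal shape $T_3$, with only two occupied points, one of them holding the coinciding $B$ and $C$. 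An activated robot reads its role off the observed shape: in $T_1$, $A$ and $C$ stay while $B$ moves away from $A$ by its observed distance to $A$ (doubling $d_{AB}$); in $T_2$, $A$ stays while $B$ and $C$ both move to the point $P$ on the ray from $A$ through $C$ with $\mathrm{dis}(A,P)=2\,\mathrm{dis}(A,C)$ --- this is one outward step of length $d_{AC}^{init}$ for $C$ and the second outward step of length $d_{AB}^{init}$ for $B$, so both land on the common point at distance $6d$ from $A$; in $T_3$, every robot stays.

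The reason \FSY{} makes this correct is that it forces the execution from $C_0$ to be exactly $T_1\to T_2\to T_3\to T_3\to\cdots$: in round~1 the rule moves only $B$ (yielding $T_2$); in round~2 it moves $B$ and $C$ together onto their shared target (yielding $T_3$); thereafter nothing moves. Having $C$ wait in round~1 is precisely what keeps the phases distinguishable for oblivious robots: a single outward step of $B$ carries it past $C$, turning the $2{:}1$ gap ratio into a $3{:}1$ one, so $T_2$ is a configuration that cannot arise from $C_0$ under any relabelling of the robots. This is exactly where the problem fails for $\OB^R$ (Lemma~\ref{lem:35}): under round robin the configuration in which $B$ and $C$ have each moved once has gap ratio $2{:}1$ again and is indistinguishable from $C_0$, whereas \FSY{} never produces that configuration.

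The remaining work is verification: (i) check that $T_1$, $T_2$, $T_3$ lie in pairwise distinct similarity classes --- they differ in gap ratio or in the number of occupied points --- and that gap ratios and collinear order are invariant under the translations, rotations, and rescalings that variable disorientation may apply to a robot's local frame, so each robot always identifies its phase and role correctly and the adversary's control of orientation and scale is neutralized; (ii) check, clause by clause, that the trajectory $0\mapsto 0$, $2d\mapsto 4d\mapsto 6d$, $3d\mapsto 3d\mapsto 6d$ satisfies the defining predicate of $\textit{LCM--M}$ --- $A$ never leaves $a(0)$; $|b(t)-a(t)|$ and $|c(t)-a(t)|$ are non-decreasing; $B$ reaches $2d_{AB}^{init}$ at $t_b$ (end of round~1) and $3d_{AB}^{init}$ at $t'_b$ (end of round~2); $C$ reaches $2d_{AC}^{init}$ at $t_c$ (end of round~2); $b(t)=c(t)=6d=3d_{AB}^{init}=2d_{AC}^{init}$ for every $t$ past round~2; and every visited position lies on $\overline{a(0)\,b(t''_b)}=[0,6d]$.

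I expect step~(i) to be the main obstacle: one must be certain that the intermediate $3{:}1$ shape produced by the solo move of $B$ is not similar to the initial shape nor to any relabelling of the robots within it, so that obliviousness never drives a robot to repeat the wrong action --- this is exactly the $\OB^F$ versus $\OB^R$ asymmetry that the surrounding lemmas are meant to expose. Everything else is routine rigid-motion bookkeeping in the fully synchronous schedule.
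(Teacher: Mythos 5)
Your proposal is correct and follows essentially the same route as the paper: an explicit oblivious algorithm that, under \FSY, lets every robot infer the current phase and its role purely from the distance ratios of the observed collinear configuration (the $2{:}1$ shape, the $3{:}1$ shape after $B$'s first step, then the two-point terminal shape), exactly as in the paper's Algorithm~\ref{alg:LCMM_OBLOT_F} and Lemma~\ref{lem:36}. If anything, your formulation is tighter than the paper's pseudocode: by keying the rules to the global gap ratio and the middle/endpoint roles rather than to each robot's own farthest/nearest distances, you avoid the ambiguity in Algorithm~\ref{alg:LCMM_OBLOT_F} whereby robot $C$'s $3{:}1$ test already fires in the initial configuration.
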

\begin{proof}
    The problem is solved by \textbf{Algorithm \ref{alg:LCMM_OBLOT_F}}.
    In the $\OB$ model, robots do not possess lights, and therefore cannot record or recognize the number of times they have been activated. However, under the \FSY\ scheduler, all robots are activated simultaneously and can observe the ratio of distances between robots in the current configuration.  
    By utilizing this distance ratio, each robot can determine whether the current configuration corresponds to the initial state or to one resulting from prior movement, and can act accordingly.  
    Therefore, under full synchrony, robots can execute the specified number of movements correctly.
\end{proof}

\begin{algorithm}
    \caption{Alg \textit{LCM--M} for robot $r$}
    \label{alg:LCMM_OBLOT_F}
    \textbf{Assumptions:} $\OB$, \FSY \\
    
    $Phase Look:$ $A.\text{pos} \gets$ the position of the farthest robot in $r$'s local coordinate system.\\$other.\text{pos} \gets$ the position of the nearest robot in $r$'s local coordinate system.\\
    
    $Phase Compute:$
    \begin{algorithmic}[1]
       \If{$(A\,r) : (r\,other) = 2 : 1$}
            \State $r.\text{des} \gets$ the point at distance $Ar$ away from $A$
        \ElsIf{$(A\,other) : (other\,r) = 3 : 1$}
            \State $r.\text{des} \gets$ the point at distance $Ar/2$ away from $A$
        \ElsIf{$(A\,r) : (r\,other) = 3 : 1$}
            \State $r.\text{des} \gets$ the point at distance $Ar$ away from $A$
        \EndIf 
    \end{algorithmic}

    \vspace{1em}
    $Phase Move:$\\
    Move to $r_i.des$.
\end{algorithm}

\begin{lemma}
    $\textit{LCM--M} \in \FS^{A_{LC}}$.
    \label{lem:37}
\end{lemma}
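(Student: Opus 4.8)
The plan is to exhibit an explicit $\FS$ algorithm, analyse it activation-by-activation under the $A_{LC}$ scheduler, and check it against the \textit{LCM--M} predicate. The first ingredient is \emph{role identification}. In the pristine initial placement the three robots are collinear with $AB:BC=2:1$, so $B$ is the unique robot that sees the other two on opposite sides of itself, and the two endpoints are separated by the near/far distance ratio of the other two robots ($A$ sees $2:3$, $C$ sees $1:3$). On its first activation a robot reads this, writes its role into its light, and from then on is driven by its light rather than by the geometric shape. I also have to check that this still works on a first activation that happens after $B$ has already moved: the two extreme robots remain extreme, and the observed ratios still separate $A$ from $C$, so role identification survives.

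Given the roles, $A$ simply freezes after marking its light. $B$ carries light values $b_0,b_1,b_2$: activated with $b_0$ it doubles its distance to $A$ (reaching $2d_{AB}$) and sets its light to $b_1$; activated with $b_1$ it adds $d_{AB}$ to its distance to $A$ (reaching $3d_{AB}$) and sets its light to $b_2$; with $b_2$ it does nothing. Both target distances are recoverable from the current configuration precisely because $A$ has not moved. $C$ carries values $c_0$ (idle) and $c_1$ (done); it must make exactly one move, onto $B$'s final point $3d_{AB}=2d_{AC}$, and by the predicate that move must finish after $B$'s second move. While idle, $C$ repeatedly looks, uses the fact that it is still at its start point to identify $A$ as the robot at distance $d_{AC}$ from it, tests whether $B$ has reached $2d_{AC}$, and once it has, performs its single move onto $B$ and sets its light to $c_1$.

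The reason the phase bookkeeping is sound despite asynchrony is exactly the $A_{LC}$ atomicity: no robot performs its $\LK$ during the $\LK$--$\Compute$ interval of another, so every robot that has already looked has already committed its destination and updated its light; hence a snapshot never catches a robot ``mid-decision''. Combined with the facts that $A$ never moves and that $B$ and $C$ only ever increase their distance to $A$, this makes the multiset of distances-to-$A$ in any snapshot a faithful, monotone record of each robot's phase — the feature that plain \ASY\ lacks, and the reason the statement is phrased for $A_{LC}$. The routine part of the proof is then a case analysis over which robot is activated when, verifying each conjunct of the \textit{LCM--M} predicate, and confirming that $A$'s inactivity makes the whole sequence collinear on $\overline{a(0)b(t'_b)}$.

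The step I expect to be the main obstacle is pinning down $C$'s move. It must finish strictly after $B$ reaches $3d_{AB}$, yet at the instant $B$ is exactly at $2d_{AC}$ the placement is symmetric from $C$'s viewpoint — $A$ and $B$ are both at distance $d_{AC}$ from $C$ — so from that snapshot alone $C$ cannot choose the direction of its move, and chirality is of no use on a line. The avenue I would pursue is to have $C$ fix its target during the \emph{previous} observation, when $B$ has moved once but not twice and $A,B$ still sit at distinct distances from $C$, recording in its light (via an intermediate value) that it has committed, and then use the $A_{LC}$ atomicity to control the interleaving of $C$'s deferred move with $B$'s second move; whether this (or some variant) genuinely delivers $t'_b<t_c$ against a worst-case $A_{LC}$ adversary under variable disorientation is the delicate core that I would need to nail down.
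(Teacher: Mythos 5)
Your write-up leaves exactly the load-bearing step unproved, and the design you committed to before that step makes it unfillable. Your $B$ is driven purely by its own light: first activation takes it to $2d_{AB}$, second to $3d_{AB}$, with no dependence on $C$. Under $A_{LC}$, fairness only guarantees that $C$ is activated \emph{eventually}, so the adversary may give $B$ both of its activations (and $A$ its freezing one) before $C$ ever performs a Look. Moreover your own rule for $C$ is to wait until $B$ has reached $2d_{AC}$. Either way, the snapshot from which $C$ must act is precisely the symmetric one you identified: $A$ and $B$ both at distance $d_{AC}$ on opposite sides of $C$, with $A$ frozen and $B$ finished, so nothing will ever change again. $C$ can neither wait for a disambiguating view nor pick the direction of its single move: variable disorientation lets the adversary flip the local frame, chirality gives no orientation along a line, and a constant-size $\FS$ light cannot store a direction in a way that survives a change of frame; a wrong guess moves $C$ toward $A$ and violates the predicate. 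Your proposed rescue---commit during ``the previous observation, when $B$ has moved once but not twice''---presupposes that the scheduler grants $C$ an activation in that window, which $A_{LC}$ fairness does not. So this is not a finishing detail; the surrounding design reaches a configuration from which no correct continuation exists. (Separately, your claim that role identification survives a late first activation because ``the two extreme robots remain extreme'' is false: after $B$'s first move, $B$ is an extreme robot and $C$ is the middle one, so a robot whose first Look occurs then would misclassify itself under your rule; that part is repairable, but not as stated.)

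The paper's proof takes the opposite, and essentially forced, ordering. Its algorithm uses one bit of light only to count a robot's own activations and gates the moves geometrically by the far\,:\,near distance ratio: $B$ moves on ratio $2{:}1$ (with state $0$, then with state $1$), while $C$ makes its single move on ratio $3{:}1$, i.e.\ \emph{early}, while it can still identify $A$ as the strictly farthest robot; $B$'s second trigger ($2{:}1$ with state $1$) only becomes true after $C$ has reached $2d_{AC}$. In particular, in the paper's execution $C$ arrives at $2d_{AC}$ before $B$ completes its second move, so the prose constraint $t'_b < t_c$ that you treated as binding is evidently not the operative specification---insisting on it would force $C$'s triggering Look to occur after $B$ is at $3d_{AB}$, i.e.\ exactly from the symmetric snapshot, which is hopeless for $\FS$ robots. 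If you rebuild along the paper's lines, note that the gating must go both ways: $C$ should also not move before $B$ has, since after a $C$-first move $B$'s $2{:}1$ view is indistinguishable (under variable disorientation and unknown scale) from the initial one and $B$ can no longer tell which visible robot is $A$; the paper's printed algorithm is itself careless on this point and on resetting $C$'s state, so the correct proof needs exactly the kind of case analysis over reachable snapshots that your proposal postponed.
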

\begin{proof}
    The problem is solved by \textbf{Algorithm \ref{alg:LCMM_FSTA_ALC}}.
    In the $\FS$ model, robots can recognize the number of times they have been activated using their lights, and by utilizing the ratio of distances between robots, they can execute the specified actions.
\end{proof}

\begin{algorithm}
    \caption{Alg \textit{LCM--M} for robot $r$}
    \label{alg:LCMM_FSTA_ALC}
    \textbf{Assumptions:} $\FS$, $LC$-{\bf atomic}-\ASY \\
    \textbf{Light:} $r.\text{state} \in \{0, 1\} $, initially 0\\
    
    $Phase Look:$ $A.\text{pos} \gets$ the position of the farthest robot in $r$'s local coordinate system.\\
    $other.\text{pos} \gets$ the position of the nearest robot in $r$'s local coordinate system.\\
    
    $Phase Compute:$
    \begin{algorithmic}[1]
       \If{$(A\,r):(r\,other) = 2:1 \land r.\text{state} = 0 $}
            \State $r.\text{state} \gets 1 $
            \State $r.\text{des} \gets $ the point at distance $Ar$ away from $A$
        \ElsIf{$(A\,r):(r\,other) = 2:1 \land r.\text{state} = 1 $}
            \State $r.\text{des} \gets $ the point at distance $Ar/2$ away from $A$
        \ElsIf{$(A\,r):(r\,other) = 3:1 \land r.\text{state} = 0 $}
            \State $r.\text{des} \gets $ the point at distance $Ar$ away from $A$
        \EndIf
    \end{algorithmic}

    \vspace{1em}
    $Phase Move:$\\
    Move to $r_i.des$.
\end{algorithm}

\begin{lemma}
    $\textit{LCM--M} \in \FC^{A_{M}}$.
    \label{lem:38}
\end{lemma}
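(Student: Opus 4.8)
The plan is to exhibit a memoryless $\FC$ algorithm and prove it correct under the $M$-atomic-\ASY\ scheduler, using the fact that atomicity of moves forces every Look to return one of only four ``clean'' collinear configurations. Write $C_0$ for the initial $2{:}1$ line (scale the three robots to positions $0,2,3$), $C_1$ for the line after $B$'s first step ($0,3,4$), $C_2$ for the line after $B$'s second step ($0,3,6$), and $C_3$ for the terminal configuration in which $B$ and $C$ have merged ($0,6,6$). Under the $A_M$ scheduler the only further states an observer can see are these four, possibly with one robot that has recoloured during its compute-to-move gap but not yet moved; part of the argument is that this decoration is harmless.

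I would use three colours $\{0,1,2\}$, with $0$ the initial colour. Robot $A$ never moves and never recolours. Robot $B$, on recognizing itself as the middle robot with its two companions on opposite sides at near/far ratio $1{:}2$ (exactly $C_0$), recolours to $1$ and moves away from the farther companion by its distance to it, reaching $2d_{AB}$; on recognizing itself as an endpoint with both companions on the same side at ratio $1{:}4$ (exactly $C_1$), it recolours to $2$ and moves away from the farther companion by half its distance to it, reaching $3d_{AB}$; otherwise it stays. Robot $C$ stays while it sees ``same side, ratio $1{:}3$'' ($C_0$) or ``opposite sides, ratio $1{:}3$'' ($C_1$); on seeing ``opposite sides, equidistant'' ($C_2$) it recolours to $2$ and moves away from the unique remaining colour-$0$ robot by its distance to it, reaching $2d_{AC}=3d_{AB}$. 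A terminal rule makes any robot that sees a foreign robot at its own position (or, for $A$, its two companions at one position) stay; this fires only in $C_3$. The single place where communication is indispensable is the colour-$0$ test in $C_2$: there $C$ sits at the midpoint of $A$ and $B$, so the two are geometrically indistinguishable from $C$, and a deterministic purely geometric rule would be forced to act symmetrically (off the line, or not at all), which the predicate forbids; since $A$ is the only robot never scheduled to recolour, ``colour $0$'' is a reliable certificate for ``this is $A$'', even under $A_M$.

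Correctness then has two parts. The first is a finite case check over the five scaled positions $0,2,3,4,6$: the local views above are pairwise non-confusable, so no move-trigger of any robot in any $C_i$ coincides with the view of a (possibly different) robot in any $C_j$---concretely, $A$'s views in $C_0,C_1,C_2$ and $B$'s view in $C_2$ match no trigger, $B$'s two triggers (``opposite, $1{:}2$'' versus ``same side, $1{:}4$'') are distinct, and $C$'s trigger (``opposite, equidistant'') occurs nowhere else, being separated from $B$'s ``opposite, $1{:}2$'' by the ratio. This enumeration carries essentially all of the bookkeeping. The second part is that every reachable state is one of $C_0,\ldots,C_3$ decorated by at most one recoloured-but-not-yet-moved robot, and that the decoration changes no decision: in $C_0$ and $C_1$ only $B$'s colour ever changes and no move-rule there consults a colour, while in $C_2$ the only colour consulted is ``is this robot still $0$?'', which is stable since $A$ never recolours. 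From fairness, $B$ is eventually activated in $C_0$ (nothing else moves there), producing $C_1$; then eventually in $C_1$, producing $C_2$; then $C$ is eventually activated in $C_2$, producing the terminal $C_3$; every other activation is a null move. Collinearity survives since every move is along the line, and the distance-from-$A$ monotonicity and the three arrival-time equalities in the predicate are read straight off $C_0,\ldots,C_3$.

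The main obstacle is the symmetric configuration $C_2$: it is the only point where the construction genuinely needs $\FC$ over $\OB$, and two things must be pinned down there---that no earlier view (in particular $B$'s ``opposite, $1{:}2$'' view in $C_0$) can be confused with $C$'s ``opposite, equidistant'' view, and that the colour-$0$ certificate for $A$ survives the compute-to-move gap, which it does precisely because $A$ is never asked to recolour. The remainder is the tedious but routine verification that the five positions give pairwise-distinguishable views for every role.
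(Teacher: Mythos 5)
Your proposal is correct, but it takes a genuinely different route from the paper. The paper's proof of this lemma is essentially a pointer to its Algorithm~\ref{alg:LCMM_FCOM_AM}: a two-colour scheme whose rules are keyed only to the distance ratio towards the farthest robot ($2{:}1$ or $3{:}1$) together with the nearest robot's light, accompanied by a one-sentence claim that lights let each robot infer the others' activation counts; there is no invariant, no analysis of $A_{M}$-specific interleavings, and, as written, $C$'s move is triggered by $B$'s light already after $B$'s first move while $B$'s second move is keyed to $C$'s light afterwards. You instead build your own algorithm: a third colour, triggers that are full local views (sides as well as ratios) pinned to the four clean configurations $(0,2,3)$, $(0,3,4)$, $(0,3,6)$, $(0,6,6)$, and the move order $B$, $B$, $C$, which is exactly the order the predicate's constraint $t'_b < t_c$ (robot $C$ must finish last) demands. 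You also make explicit the two ingredients the paper leaves implicit: the pairwise non-confusability of all views arising in reachable configurations, so that each trigger fires for exactly one robot in exactly one configuration, and the observation that $M$-atomicity confines every snapshot to a clean configuration possibly decorated by one recoloured-but-not-yet-moved robot, where the only colour ever consulted is the stable ``still colour $0$'' certificate identifying $A$ at the symmetric configuration $(0,3,6)$ --- the one place where $\FC$ genuinely beats $\OB$. What each approach buys: the paper's algorithm is shorter and uses fewer colours, but its correctness (including consistency with the required arrival order) rests on an unverified assertion; your version pays one extra colour and a finite case check and in return gives a complete, mechanically checkable argument under $A_{M}$, including progress by fairness and the verification of every conjunct of the \textit{LCM--M} predicate.
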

\begin{proof}
    The problem is solved by \textbf{Algorithm \ref{alg:LCMM_FCOM_AM}}.
    In the $\FC$ model, each robot can recognize the activation counts of other robots via lights, and can perform the designated actions by utilizing the ratio of distances between robots.
\end{proof}

\begin{algorithm}
    \caption{Alg LCM--M for robot $r$}
    \label{alg:LCMM_FCOM_AM}
    \textbf{Assumptions:} $\FC$, $M$-{\bf atomic}-\ASY \\
    \textbf{Light:} $r.\text{state},\ \text{other.state} \in \{0, 1\}$, initially 0\\
    
    $Phase Look:$ $A.\text{pos} \gets$ the position of the farthest robot in $r$'s local coordinate system.\\
    $other.\text{pos} \gets$ the position of the nearest robot in $r$'s local coordinate system.\\
    
    $Phase Compute:$
    \begin{algorithmic}[1]
        \If{$(A r): (r\,other) = 2:1$ and $\text{other.state} = 0$}
            \State $r.\text{state} \gets 1$
            \State $r.\text{des} \gets$ the point at distance $(Ar)$ away from $A$
        \ElsIf{$(A r): (r\,other) = 2:1$ and $\text{other.state} = 1$}
            \State $r.\text{des} \gets$ the point at distance $(Ar)/2$ away from $A$
        \ElsIf{$(A r): (r\,other) = 3:1$ and $\text{other.state} = 1$}
            \State $r.\text{des} \gets$ the point at distance $(Ar)$ away from $A$
        \EndIf
    \end{algorithmic}

    \vspace{1em}
    $Phase Move:$\\
    Move to $r_i.des$.
\end{algorithm}

\clearpage
\begin{figure}[H]
    \centering
    \includegraphics[width=12cm]{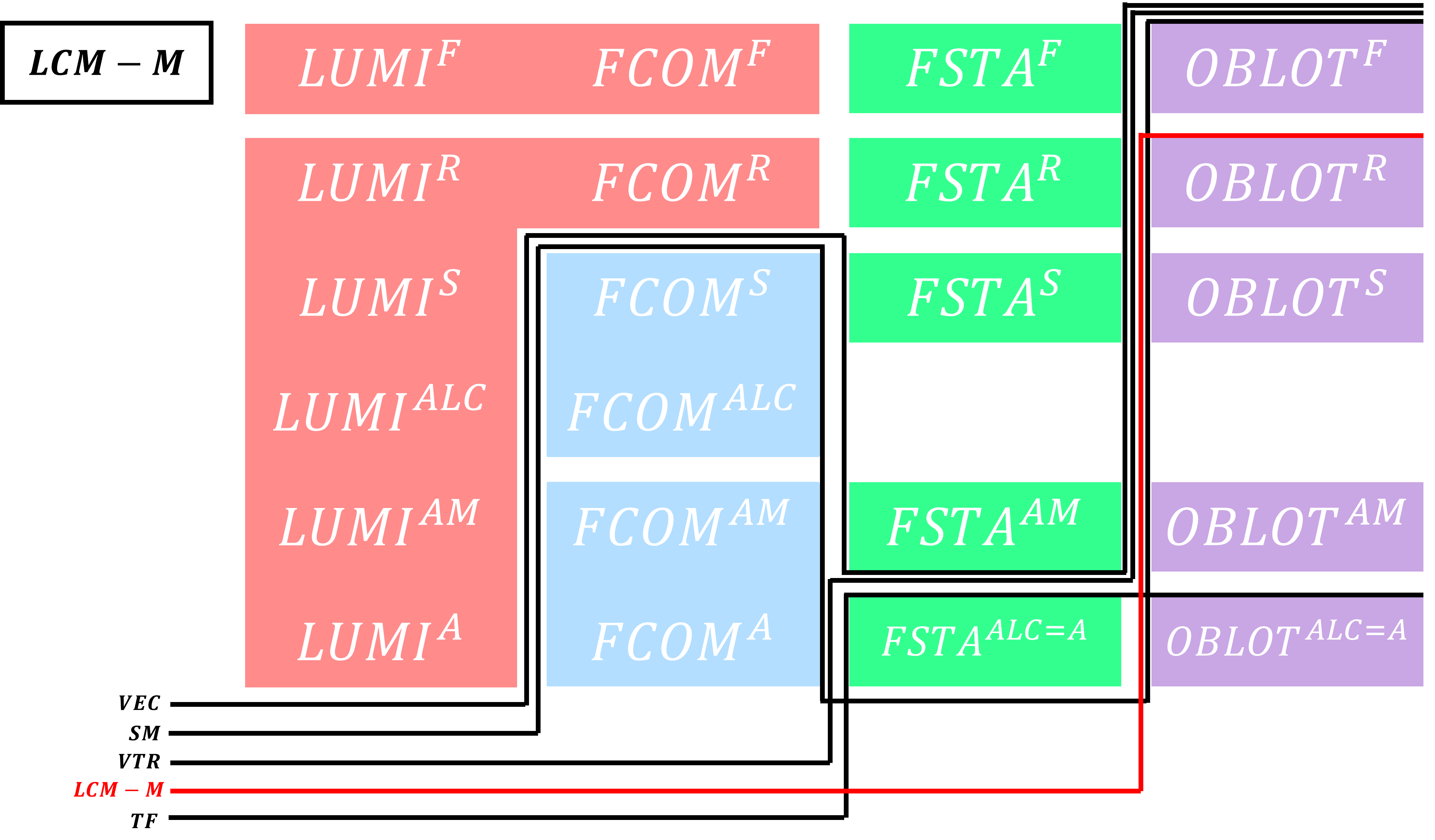}
    \caption{Boundary of LCM--M}
    \label{fig:LCM-M_boundary}
\end{figure}

\begin{theorem}
    \textit{LCM--M} is solved in $\OB^F$, $\FS^{A_{LC}}$ and $\FC^{A_{M}}$, but not in $\OB^R$.
    \label{th:lcmmt}
\end{theorem}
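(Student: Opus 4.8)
The plan is to obtain Theorem~\ref{th:lcmmt} as the conjunction of Lemmas~\ref{lem:35}--\ref{lem:38}. The three positive statements $\textit{LCM--M} \in \OB^F$, $\textit{LCM--M} \in \FS^{A_{LC}}$, and $\textit{LCM--M} \in \FC^{A_M}$ are exactly Lemmas~\ref{lem:36}, \ref{lem:37}, and~\ref{lem:38}, each established by an explicit protocol (Algorithms~\ref{alg:LCMM_OBLOT_F}, \ref{alg:LCMM_FSTA_ALC}, and~\ref{alg:LCMM_FCOM_AM}); here I would only re-check that, along every legal execution, the guard conditions based on the distance ratios $2:1$ and $3:1$ --- augmented in the $\FS$ and $\FC$ cases by the one-bit light and in the $\OB^F$ case by the full-synchrony lock-step --- let an activated robot correctly decide which of the three prescribed moves (``$B$'s first move'', ``$B$'s second move'', ``$C$'s move'') it is to perform, and that it never moves after exhausting its quota. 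The negative statement $\textit{LCM--M} \notin \OB^R$ is exactly Lemma~\ref{lem:35}.

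The substantive direction, and the one I expect to be the main obstacle, is the impossibility for $\OB^R$. The key idea is a \emph{decoy configuration}: when any candidate algorithm is run under a round-robin \RSY\ schedule --- which is legal, since consecutive activation sets may be disjoint singletons --- then after robots $B$ and $C$ have each been activated once, in either order, the system reaches a placement $C'$ whose distance ratio is again $AB:BC = 2:1$ (from $A = 0$, $B = 2$, $C = 3$ one reaches $A = 0$, $B = 4$, $C = 6$). Because $\OB$ robots are oblivious and disoriented, $C'$ is similar to, hence indistinguishable from, the initial configuration. Correctness nonetheless demands different behaviour at $C'$: robot $C$ has already performed its unique move and must now stay, whereas $B$ still owes a move. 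Since a deterministic algorithm must act identically at $C'$ and at the start, it either makes $C$ move a second time --- violating ``exactly once'' --- or never moves $C$ at all, contradicting the existence of the instant $t_c$ at which $C$ reaches distance $2\,\mathrm{dis}(c(0), a(0))$ from $A$. Either branch contradicts the assumed correctness, so $\textit{LCM--M} \notin \OB^R$.

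Finally I would record the structural reading that motivates Section~\ref{sec:5-3}. Since $\OB^F \ge \OB^R$, Lemmas~\ref{lem:35} and~\ref{lem:36} give the strict separation $\OB^F > \OB^R$, and because $\OB^R$ dominates every weaker oblivious scheduler, \textit{LCM--M} is in fact unsolvable for oblivious robots under anything short of full synchrony; yet Lemmas~\ref{lem:37} and~\ref{lem:38} show that a single bit of persistent internal state, or of externally visible light, already suffices under the comparatively weak schedulers $A_{LC}$ and $A_M$. Thus \textit{LCM--M} exhibits the promised dichotomy: on this task, full synchrony is necessary for oblivious robots but can be traded for one bit of memory or communication.
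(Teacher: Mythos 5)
Your proposal matches the paper's route exactly: the theorem is obtained as the conjunction of Lemmas~\ref{lem:35}--\ref{lem:38}, with the positive directions given by Algorithms~\ref{alg:LCMM_OBLOT_F}, \ref{alg:LCMM_FSTA_ALC} and~\ref{alg:LCMM_FCOM_AM}, and the $\OB^R$ impossibility argued via the round-robin ``decoy'' configuration whose $2{:}1$ ratio makes it indistinguishable from the initial one for oblivious robots --- which is precisely the paper's proof of Lemma~\ref{lem:35}, down to the same implicit assumption that $B$ and $C$ each execute their prescribed move upon first activation. So the proposal is correct in the paper's sense and takes essentially the same approach.
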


\section{Concluding Remarks}
This paper has extended the known landscape of mobile robot computation by identifying novel triadic separations and structural phenomena that emerge only when comparing three or more robot models simultaneously. 

Fig.~\ref{fig:separation-map} summarizes illustration of the problem landscape for the 14 major robot models, each defined by a pair of internal capability ($\OB$, $\FS$, $\FC$, $\LU$) and scheduler class (\FSY, \RSY, \SSY, \ASY\ or its variants).
Each problem is positioned within the minimal model class that can solve it. 
The grey-shaded region represents the union of two areas: (1) the regions for which separation results were already known in previous work, and (2) the additional regions where new problems were introduced in this paper to complete the landscape.
Problems highlighted in yellow (e.g., \textit{ETE}, \textit{TAR($d$)*}, \textit{LP--MLCv}) are newly defined in this work and serve to establish previously unknown separations. 
White regions remain unexplored or unresolved, and represent potential directions for future work.




\clearpage
\bibliographystyle{plainurl}
\bibliography{referenceorg}

\begin{thebibliography}{10}

\bibitem{AP}
N.~Agmon and D.~Peleg.
\newblock Fault-tolerant gathering algorithms for autonomous mobile robots.
\newblock {\em SIAM Journal on Computing}, 36(1):56--82, 2006.

\bibitem{AOSY}
H.~Ando, Y.~Osawa, I.~Suzuki, and M.~Yamashita.
\newblock A distributed memoryless point convergence algorithm for mobile robots with limited visivility.
\newblock {\em IEEE Transactions on Robotics and Automation}, 15(5):818--828, 1999.

\bibitem{BDT}
Z.~Bouzid, S.~Das, and S.~Tixeuil.
\newblock Gathering of mobile robots tolerating multiple crash faults.
\newblock In {\em the 33rd Int. Conf. on Distributed Computing Systems}, pages 334--346, 2013.

\bibitem{apdcm}
K.~Buchin, P.~Flocchini, I.~Kostitsyna, T.~Peters, N.~Santoro, and K.~Wada.
\newblock Autonomous mobile robots: Refining the computational landscape.
\newblock In {\em APDCM 2021}, pages 576--585, 2021.

\bibitem{BFKPSW-IandC}
K.~Buchin, P.~Flocchini, I.~Kostitsyna, T.~Peters, N.~Santoro, and K.~Wada.
\newblock On the computational power of energy-constrained mobile robots: Algorithms and cross-model analysis.
\newblock {\em Information and Computation}, 303(105280), 2025.

\bibitem{CG}
D.~Canepa and M.~Potop-Butucaru.
\newblock Stabilizing flocking via leader election in robot networks.
\newblock In {\em Proc. 10th Int. Symp. on Stabilization, Safety, and Security of Distributed Systems (SSS)}, pages 52--66, 2007.

\bibitem{CDN}
S.~Cicerone, Di~Stefano, and A.~Navarra.
\newblock Gathering of robots on meeting-points.
\newblock {\em Distributed Computing}, 31(1):1--50, 2018.

\bibitem{CFPS}
M.~Cieliebak, P.~Flocchini, G.~Prencipe, and N.~Santoro.
\newblock Distributed computing by mobile robots: Gathering.
\newblock {\em SIAM Journal on Computing}, 41(4):829--879, 2012.

\bibitem{CP}
R.~Cohen and D.~Peleg.
\newblock Convergence properties of the gravitational algorithms in asynchronous robot systems.
\newblock {\em {SIAM} J. on Computing}, 34(15):1516--1528, 2005.

\bibitem{DFPSY}
S.~Das, P.~Flocchini, G.~Prencipe, N.~Santoro, and M.~Yamashita.
\newblock Autonomous mobile robots with lights.
\newblock {\em Theoretical Computer Science}, 609:171--184, 2016.

\bibitem{DKKOW19}
S.~Dolev, S.~Kamei, Y.~Katayama, F.~Ooshita, and K.~Wada.
\newblock Brief announcement: Neighborhood mutual remainder and its self-stabilizing implementation of look-compute-move robots.
\newblock In {\em 33rd International Symposium on Distributed Computing}, pages 43:1--43:3, 2019.

\bibitem{FPS}
P.~Flocchini, G.~Prencipe, and N.~Santoro.
\newblock {\em Distributed Computing by Oblivious Mobile Robots}.
\newblock Morgan \& Claypool, 2012.

\bibitem{FPSW99}
P.~Flocchini, G.~Prencipe, N.~Santoro, and P.~Widmayer.
\newblock Hard tasks for weak robots: the role of common knowledge in pattern formation by autonomous mobile robots.
\newblock In {\em 10th Int. Symp. on Algorithms and Computation (ISAAC)}, pages 93--102, 1999.

\bibitem{FPSW05}
P.~Flocchini, G.~Prencipe, N.~Santoro, and P.~Widmayer.
\newblock Gathering of asynchronous robots with limited visibility.
\newblock {\em Theoretical Computer Science}, 337(1--3):147--169, 2005.

\bibitem{FPSW08}
P.~Flocchini, G.~Prencipe, N.~Santoro, and P.~Widmayer.
\newblock Arbitrary pattern formation by asynchronous oblivious robots.
\newblock {\em Theoretical Computer Science}, 407:412--447, 2008.

\bibitem{FSSW23-arxiv}
P.~Flocchini, N.~Santoro, Y.~Sudo, and K.~Wada.
\newblock On asynchrony, memory, and communication: Separations and landscapes.
\newblock CoRR abs/2311.03328, arXiv, 2023.

\bibitem{FSW19}
P.~{Flocchini}, N.~{Santoro}, and K.~{Wada}.
\newblock On memory, communication, and synchronous schedulers when moving and computing.
\newblock In {\em Proc. 23rd Int. Conference on Principles of Distributed Systems (OPODIS)}, pages 25:1--25:17, 2019.

\bibitem{FSSW23}
Paola Flocchini, Nicola Santoro, Yuichi Sudo, and Koichi Wada.
\newblock {On Asynchrony, Memory, and Communication: Separations and Landscapes}.
\newblock In {\em 27th International Conference on Principles of Distributed Systems (OPODIS 2023)}, volume 286 of {\em Leibniz International Proceedings in Informatics (LIPIcs)}, pages 28:1--28:23, 2024.

\bibitem{FYOKY}
N.~Fujinaga, Y.~Yamauchi, H.~Ono, S.~Kijima, and M.~Yamashita.
\newblock Pattern formation by oblivious asynchronous mobile robots.
\newblock {\em SIAM Journal on Computing}, 44(3):740--785, 2015.

\bibitem{GP}
V.~Gervasi and G.~Prencipe.
\newblock Coordination without communication: The case of the flocking problem.
\newblock {\em Discrete Applied Mathematics}, 144(3):324--344, 2004.

\bibitem{ISKIDWY}
T.~Izumi, S.~Souissi, Y.~Katayama, N.~Inuzuka, X.~D\'{e}fago, K.~Wada, and M.~Yamashita.
\newblock The gathering problem for two oblivious robots with unreliable compasses.
\newblock {\em SIAM Journal on Computing}, 41(1):26--46, 2012.

\bibitem{OWD}
T.~Okumura, K.~Wada, and X.~D\'{e}fago.
\newblock Optimal rendezvous $\mathcal{L}$-algorithms for asynchronous mobile robots with external-lights.
\newblock In {\em Proc. 22nd Int. Conference on Principles of Distributed Systems (OPODIS)}, pages 24:1--24:16, 2018.

\bibitem{SIW}
S.~Souissi, T.~Izumi, and K.~Wada.
\newblock Oracle-based flocking of mobile robots in crash-recovery model.
\newblock In {\em Proc. 11th Int. Symp. on Stabilization, Safety, and Security of Distributed Systems (SSS)}, pages 683--697, 2009.

\bibitem{SY}
I.~Suzuki and M.~Yamashita.
\newblock Distributed anonymous mobile robots: Formation of geometric patterns.
\newblock {\em SIAM Journal on Computing}, 28:1347--1363, 1999.

\bibitem{YS}
M.~Yamashita and I.~Suzuki.
\newblock Characterizing geometric patterns formable by oblivious anonymous mobile robots.
\newblock {\em Theoretical Computer Science}, 411(26--28):2433--2453, 2010.

\bibitem{YUKY}
Y.~Yamauchi, T.~Uehara, S.~Kijima, and M.~Yamashita.
\newblock Plane formation by synchronous mobile robots in the three-dimensional euclidean space.
\newblock {\em J. ACM}, 64:3(16):16:1--16:43, 2017.

\end{thebibliography}

\end{document}